\newcommand{\head}{\mathrm{head}}
\newcommand{\tail}{\mathrm{tail}}
\theoremstyle{definition}
\newtheorem{theorem}{Theorem}[section]
\newtheorem{lemma}{Lemma}[section]
\newtheorem{corollary}{Corollary}[section]
\newtheorem{proposition}{Proposition}[section]
\theoremstyle{definition}
\newtheorem{definition}{Definition}[section]
\newtheorem{example}{Example}[section]
\newcommand{\myspan}{\textnormal{span}}
\newcommand{\myrank}{\textnormal{rank}}
\newcommand{\mygcd}{\textnormal{gcd}}
\newcommand{\mylcm}{\textnormal{lcm}}
\newcommand{\myqed}{\hfill $\blacksquare$}
\newcommand{\B}[1]{\mathbf{#1}}
\newcommand{\BF}{\mathbb{F}}
\newcommand{\C}[1]{\mathcal{#1}}
\newcommand{\SR}[1]{\mathscr{#1}}
\newcommand{\DEF}{\triangleq}
\newcommand{\BB}[1]{\mathbb{#1}}
\newcommand{\ie}{{\em i.e., }}
\newcommand{\eg}{{\em e.g., }}
\newcommand{\inedges}{\mathrm{In}}
\newcommand{\outedges}{\mathrm{Out}}
\newcommand{\blue}[1]{\textcolor{black}{#1}}
\begin{document}
\title{Precoding-Based Network Alignment \\For Three Unicast Sessions}
\author{Chun Meng, \IEEEmembership{Student Member, IEEE,} Abhik Kumar Das, \IEEEmembership{Student Member, IEEE,} \\
Abinesh  Ramakrishnan, \IEEEmembership{Student Member, IEEE,} Syed Ali Jafar, \IEEEmembership{Senior Member, IEEE,}  \\
 Athina Markopoulou, \IEEEmembership{Senior Member, IEEE,}
 and Sriram Vishwanath, \IEEEmembership{Senior Member, IEEE.}

\thanks{\blue{Preliminary versions of parts of this paper were presented at IEEE International Symposium on Information Theory (ISIT), Austin, TX, U.S.A., June 2010 \cite{Das2010}, 48th Allerton Conference on Communication, Control, and Computing, Monticello, IL, U.S.A., Sept. 2010 \cite{Ramakrishnan2010}, and IEEE International Symposium on Information Theory (ISIT), Boston, MA, U.S.A., July 2012 \cite{ChunISIT2012}.} Chun Meng, Abinesh Ramakrishnan, Athina Markopoulou and Syed Ali Jafar are with the EECS Department and with the Center for Pervasive Communication and Computing (CPCC), at the University of California, Irvine, CA -- 92697 USA, e-mail: \{cmeng1, abinesh.r, athina, syed\}@uci.edu. Abhik Kumar Das and Sriram Vishwanath are with the Department of ECE, University of Texas at Austin, TX -- 78712 USA, e-mail: akdas@mail.utexas.edu, sriram@ece.utexas.edu.}}

\maketitle

% abstract
\begin{abstract}
We consider the problem of network coding across three unicast sessions over a directed acyclic graph, where the sender and the receiver of each unicast session are both connected to the network via a single edge of unit capacity.
\blue{We consider a network model in which the middle of the network can only perform random linear network coding, and we restrict our approaches to precoding-based linear schemes, where the senders use precoding matrices to encode source symbols.}
We adapt a precoding-based interference alignment technique, originally developed for the wireless interference channel, to construct a precoding-based linear scheme, which we refer to as {\em precoding-based network alignment scheme (PBNA).}  
A primary difference between this setting and the wireless interference channel is that the network topology can introduce dependencies among the elements of the transfer matrix, which we refer to as coupling relations, and can potentially affect the achievable rate of PBNA.
We identify all these coupling relations and we interpret them in terms of network topology. 
We then present polynomial-time algorithms to check the presence of these coupling relations in a particular network.
Finally, we show that, \blue{depending on the coupling relations present in the network, the optimal symmetric rate achieved by precoding-based linear scheme can take only three possible values, all of which can be achieved by PBNA.}
\end{abstract}

\begin{IEEEkeywords}
network coding, multiple unicasts, interference alignment.
\end{IEEEkeywords}

% introduction
\section{Introduction}\label{sec_intro}
%\IEEEPARstart{U}{nicast} flows are the dominant form of  traffic in most wired and wireless networks today. 
%The problem of designing and adopting communication protocols, that utilize network resources in an efficient way and result in throughput close to network capacity, has been an important topic of research and speculation. 
Ever since the development of  \emph{network coding} and its success in characterizing the achievable throughput for single multicast scenario \cite{Ahlswede2000}\cite{Li2003}, there has been hope that the framework can be extended to characterize network capacity in other scenarios, namely inter-session network coding. Of particular practical interest is network coding across {\em multiple unicast sessions}, as unicast is the dominant type of traffic in today's networks.  There have been some successes in this domain, such as the derivation of a sufficient condition for linear network coding to achieve the maximal throughput in networks with multiple unicast sessions \cite{Koetter2003}\cite{zongpeng04}. 
However, finding linear network codes for guaranteeing rates for multiple unicasts is known to be NP-hard  \cite{April2004}. Only sub-optimal and heuristic methods are known today, including methods based on linear optimization \cite{ratnakar2005linear}\cite{Traskov2006} and evolutionary approaches \cite{MinkyuKim2009}.
Moreover, scalar or even vector linear network coding \cite{April2004}\cite{medard2003coding} alone has been shown to be insufficient for  achieving the limits of inter-session network coding \cite{Traskov2005}. 
%Losing the linear coding formulation leaves the problem somewhat unstructured, and that has stunted the progress in obtaining improved rates or even guarantees for a broad class of networks.

In this paper, we consider the problem of linear network coding across {\em three}  unicast sessions over a network represented by a directed acyclic graph (DAG), where the sender and the receiver of each unicast session are both connected to the network via a single edge of unit capacity.
We refer to this communication scenario as a \textit{Single-Input Single-Output} scenario or \textit{SISO} scenario for short (Fig. \ref{fig_analogy}a). 
This is the smallest, yet highly non-trivial, instance of the problem. 
Furthermore, \blue{we consider a network model, in which the middle of the network only performs random linear network coding, and restrict our approaches to {\em precoding-based linear schemes}, where the senders use precoding matrices to encode source symbols}\footnote{The precise definition of precoding-based linear scheme is presented in Section \ref{sec_problem}.}. Apart from being of interest on its own right, we hope that this can be used as a building block and for better understanding of the general network coding problem across multiple unicasts.

\begin{figure}[t]
\centering
\includegraphics[scale=0.4]{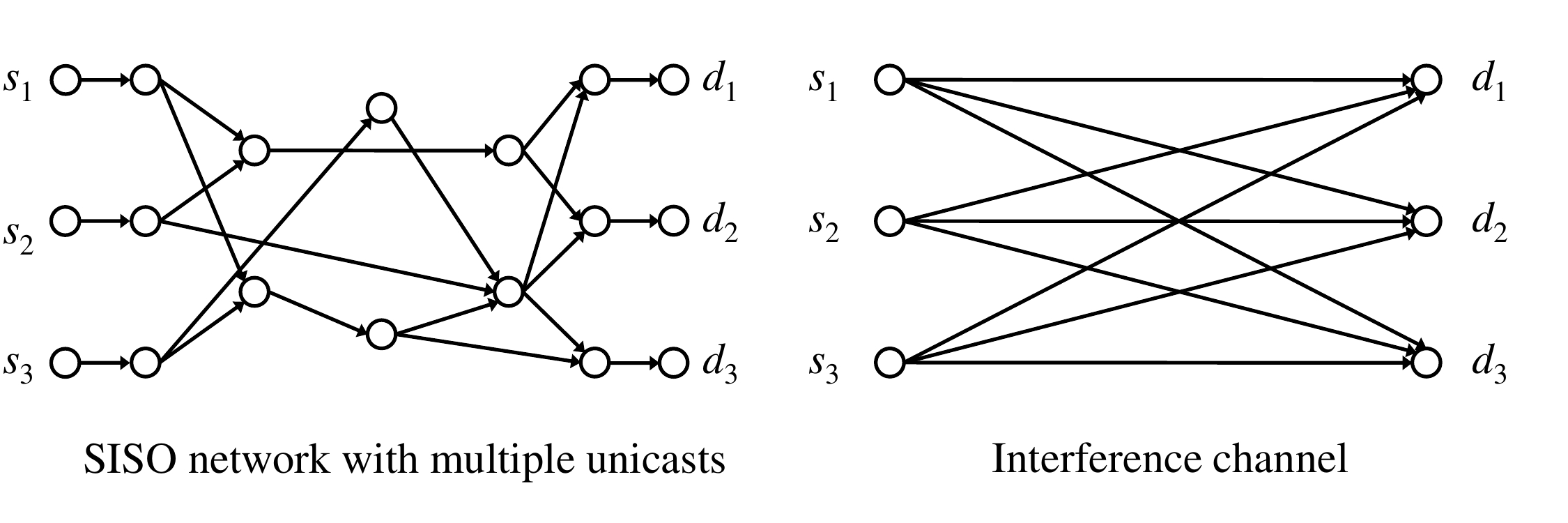}
\caption{Analogy between a SISO scenario employing linear network coding and a wireless interference channel, each with three unicast sessions $(s_i,d_i)$, $i=1,2,3$.
Both these systems can be treated as linear transform systems and are amenable to interference alignment techniques.}
\label{fig_analogy}
\end{figure}

%that applies precoding at the sources and \textcolor{red}{random network coding} in the middle of the network. 
 Our approach is motivated by the observation that under the linear network coding framework, a SISO scenario behaves roughly like a wireless interference channel. As shown in  Fig. \ref{fig_analogy}, the entire network can be viewed as a channel with a linear transfer function, albeit this function is  no longer given by nature, as it is the case in wireless, but is determined by the network topology, routing and coding coefficients. 
This analogy  %between networks and wireless interference channel 
 enables us to apply the technique of precoding-based interference alignment,  designed by Cadambe and Jafar \cite{Cadambe2008} for wireless interference channels. We adapt this technique to our problem and refer to it  as \emph{precoding-based network alignment}, or {\em PBNA} for short: precoding occurs only at source nodes, and all the intermediate nodes in the network perform random network coding. One advantage of PBNA is complexity: it significantly simplifies network code design since the nodes in the middle of the network perform  random network coding. 
Another advantage is that \blue{PBNA can achieve the optimal symmetrical rate achieved by any precoding-based linear schemes}.

An important difference between the SISO scenario and the wireless interference channel is that there may be algebraic dependencies, which we refer to as \textit{coupling relations}, between elements of the transfer matrix, which we refer to as \textit{transfer functions}.  
These are introduced by the network topology and may affect the achievable rate of PBNA  \cite{Ramakrishnan2010}.  Such algebraic dependencies are not present in the wireless interference channel, where channel gains are independent from each other such that the precoding-based interference alignment scheme of \cite{Cadambe2008} can achieve 1/2 rate per session almost surely. 
Therefore, traditional interference alignment techniques, developed for the wireless interference channel, cannot be directly applied to networks with network coding but (i) they need to be properly adapted in the new setting, and (ii) their achievability conditions need to be characterized in terms of the network topology.  
Towards the second goal, we identify graph-related properties of the transfer functions, which together with a degree-counting technique, enable us to identify the minimal set of coupling relations that might affect the achievable rate of PBNA. 

Our main contributions in this paper are the followings:
\begin{itemize}
\item {\em PBNA Design:} We design the first precoding-based interference-alignment scheme for the SISO scenario, in which the senders use precoding matrices to encode source symbols, and the intermediate nodes in the middle of the network perform random linear network coding. 
The scheme is inspired by the Cadambe and Jafar scheme in \cite{Cadambe2008}.

\item {\em Achievability Conditions:} We identify the minimal set of coupling relations between transfer functions,  the presence of which will potentially affect the achievable rate of PBNA.
We further interpret these coupling relations in terms of network topology, and present polynomial-time algorithms for checking the existence of these coupling relations.

\item {\em Rate Optimality:} We show that \blue{for the SISO scenarios where all senders are connected to all receivers via directed paths}, depending on the coupling relations present in the network, there are only three possible optimal symmetric rates achieved by any precoding-based linear scheme (namely $1/3$, $2/5$ and $1/2$), all of which are achievable through PBNA.
\end{itemize}

The rest of the paper is organized as follows. 
In Section \ref{sec_related}, we review related work. 
In Section \ref{sec_problem}, we present the problem setup and formulation.
In Section \ref{sec_pbna}, we present our proposed precoding-based interference alignment  (PBNA) scheme for the network setting. 
In Section \ref{sec_main}, we present an overview of our main results.
In Section \ref{sec_feasibility}, we discuss in depth the achievability conditions of PBNA.
In Section \ref{sec_check}, we provide polynomial-time algorithms to check the presence of the coupling relations that may affect the achievable rate of PBNA.
In Section \ref{sec_rates}, we prove the optimal symmetric rates achieved by any linear precoding-based scheme.
Section \ref{sec_conclusion} concludes the paper and outlines future directions.
In Appendices \ref{app_proof_graph}-\ref{app_proof_polynomial}, we present detailed proofs for the lemmas and the theorems presented in this paper.
In Appendix \ref{app_pbna_vs_routing}, we present a comparison between routing and PBNA.

\section{Related Work}\label{sec_related}

\subsection{Network Coding}

Network coding was first proposed to achieve optimal throughput for single multicast scenario \cite{Ahlswede2000}\cite{Li2003}\cite{Koetter2003}, which is a special case of intra-session network coding.
The rate region for this setting can be easily calculated by using linear programming techniques \cite{Li2006}.
Moreover, the code design for this scenario is fairly simple: Either a polynomial-time algorithm \cite{Jaggi2005} can be used to achieve the optimal throughput in a deterministic manner, or a random network coding scheme \cite{TraceyHo2006} can be used to achieve the optimal throughput with high probability. 

\blue{One case, which is best understood up to now, is network coding across two unicasts.
Wang and Shroff provided a graph-theoretical characterization of sufficient and necessary condition for the achievability of symmetrical rate of one for two multicast sessions, of which two unicasts is a special case, over networks with integer edge capacities \cite{wang2010pairwise}.
They showed that linear network code is sufficient to achieve this symmetrical rate.
Wang et al. \cite{wang2011two} further pointed out that there are only two possible capacity regions for the network studied in \cite{wang2010pairwise}.
They also showed that for layered linear deterministic networks, there are exactly five possible capacity regions.
Kamath et al. \cite{kamath2011generalized} provided a edge-cut outer bound for the capacity region of two unicasts over networks with arbitrary edge capacities.}

For network coding across more than two unicasts, there is only limited progress.
It is known that there exist networks in which network coding significantly outperforms routing schemes in terms of transmission rate \cite{zongpeng04}.
However, there exist only approximation methods to characterize the rate region for this setting \cite{Harvey2006}.
Moreover, it is known that finding linear network codes for this setting is NP-hard \cite{April2004}.
Therefore, only sub-optimal and heuristic methods exist to construct linear network code for this setting.
For example, Ratnakar et al. \cite{ratnakar2005linear} considered coding pairs of flows using poison-antidote butterfly structures and packing a network using these butterflies to improve throughput;
Traskov et al. \cite{Traskov2006} further presented a linear programming-based method to find butterfly substructures in the network;
Ho et al. \cite{tracey06} developed online and offline back pressure algorithms for finding approximately throughput-optimal network codes within the class of network codes restricted to XOR coding between pairs of flows;
Effros et al. \cite{Effros2006} described a tiling approach for designing network codes for wireless networks with multiple unicast sessions on a triangular lattice;
Kim et al. \cite{MinkyuKim2009} presented an evolutionary approach to construct linear code.
Unfortunately, most of these approaches don't provide any guarantee in terms of performance.
Moreover, most of these approaches are concerned about finding network codes by jointly considering code assignment and network topology at the same time.
In contrast, our approach is oblivious to network topology in the sense that the design of encoding/decoding schemes is separated from network topology, and is predetermined regardless of network topology.
The separation of code design from network topology greatly simplifies the code design of PBNA.

The part of our work that identifies coupling relations is related to some recent work on network coding. Ebrahimi and Fragouli \cite{networkpoly2012} found that the structure of a network polynomial, which is the product of the determinants of all transfer matrices, can be described in terms of certain subgraph structures; 
Zeng et al. \cite{edgereduction2012} proposed the Edge-Reduction Lemma which makes connections  between cut sets and the row and column spans of the transfer matrices.

\subsection{Interference Alignment}

The original concept of precoding-based interference alignment was first proposed by Cadambe and Jafar \cite{Cadambe2008} to achieve the optimal degree of freedom (DoF) for K-user wireless interference channel.
After that, various approaches to interference alignment have been proposed.
For example, Nazer et al. proposed ergodic interference alignment \cite{Nazer2009};
Bresler, Parekh and Tse proposed lattice alignment \cite{bresler2010approximate};
Jafar introduced blind alignment \cite{jafar2010exploiting} for the scenarios where the actual channel coefficient values are entirely unknown to the transmitters;
Maddah-Ali and Tse proposed retrospective interference alignment \cite{maddah2010degrees} which exploits only delayed CSIT.
Interference alignment has been applied to a wide variety of scenarios, including K-user wireless interference channel \cite{Cadambe2008}, compound broadcast channel \cite{weingarten2007compound}, cellular networks \cite{suh2008interference}, relay networks \cite{lee2009novel}, and wireless networks supported by a wired backbone \cite{gollakota2009interference}.
Recently,  it was shown that interference alignment can be used to achieve exact repair in distributed storage systems \cite{suh2011exact} \cite{cadambe2013asymptotic}.

\subsection{Network Alignment}

\blue{The idea of PBNA was first proposed by Das et al., who also proposed a sufficient condition for PBNA to asymptotically achieve a symmetrical rate of 1/2 per session \cite{Das2010}.
However, the sufficient achievability condition proposed in \cite{Das2010} contains an exponential number of constraints, and is very difficult to verify in practice.
Later, Ramakrishnan et al. observed that whether PBNA can achieve a symmetrical rate of 1/2 per session depends on network topology \cite{Ramakrishnan2010}, and conjectured that the condition proposed in \cite{Das2010} can be reduced to just six constraints.
Han et al. \cite{Han2011} proved that this conjecture is true for the special case of three symbol extensions.
They also identified some important properties of transfer functions, which are used in this paper.
In \cite{ChunISIT2012}, Meng et al. showed that the conjecture in \cite{Ramakrishnan2010} is false for more than three symbol extensions, and reduced the condition proposed in \cite{Das2010} to just 12 constraints by using two graph-related properties of transfer functions.
Later, Meng et al. reduced the 12 constraints to a set of 9 constraints \cite{meng2013precoding} by using a result from \cite{Han2011}, and proved that they are also necessary conditions for PBNA to achieve 1/2 rate per session.
They also provided an interpretation of all the constraints in terms of graph structure.
At the same time and independently, a technical report by Han et al. \cite{han2013graph} also provided a similar characterization.}

\blue{This journal paper combines our previous work in \cite{Das2010, Ramakrishnan2010, ChunISIT2012,meng2013precoding}, and extends them by finding the optimal symmetrical rates achieved by precoding-based linear schemes, of which PBNA is a special case.
Compared to the most closely related work, namely \cite{han2013graph}, our work addresses a more general setting: (i) it considers the use of any precoding-matrix, not only the one proposed by Cadambe and Jafar \cite{Cadambe2008} and (ii) it applies to all network topologies, which subsume the cases considered in \cite{han2013graph}. 
In addition, we prove that PBNA can achieve all the optimal symmetric rates achieved by precoding-based linear schemes.}

\section{Problem Formulation \label{sec_problem}}

\subsection{Network Model}
A network is represented by a directed acyclic graph $\C{G}=(V,E)$, where $V$ is the set of nodes and $E$ the set of edges.
We consider the simplest non-trivial communication scenario where there are three unicast sessions in the network.
The $i$th ($i=1,2,3$) unicast session is represented by a tuple $\omega_i=(s_i,d_i,\B{X}_i)$, where $s_i$ and $d_i$ are the sender and the receiver of the $i$th unicast session, respectively;
$\B{X}_i=(X^{(1)}_i,X^{(2)}_i,\cdots,X^{(k_i)}_i)^T$ is a vector of independent random variables, each of which represents a packet that $s_i$ sends to $d_i$.
Each sender $s_i$ is connected to the network via a single edge $\sigma_i$, called a sender edge, and each receiver node $d_i$ via a single edge $\tau_i$, called a receiver edge.
Each edge has unit capacity, \ie can carry one symbol of $\BF_{2^m}$ in a time slot, and represents an error-free and delay-free channel.
We group these unicast sessions into a set $\Omega=\{\omega_1,\omega_2,\omega_3\}$.
We refer to the tuple $(\C{G},\Omega)$ as a single-input and single-output communication scenario, or a SISO scenario for short.
An example of SISO scenario is shown in Fig. \ref{fig_analogy}a.
Clearly, in a SISO scenario, each sender can transmit at most one symbol to its corresponding receiver node in a time slot.

Given an edge $e=(u,v)\in E$, let $u=\head(e)$ and $v=\tail(e)$ denote the head and the tail of $e$, respectively.
Given a node $v\in V$, let $\inedges(v)=\{e\in E: \head(e)=v\}$ denote the set of incoming edges at $v$, and $\outedges(v)=\{e\in E: \tail(e)=v\}$ the set of outgoing edges at $v$.
Given two distinct edges $e,e'\in E$, a directed path from $e$ to $e'$ is a subset of edges $P=\{e_1,e_2,\cdots,e_k\}$ such that $e_1=e$, $e_k=e'$, and $\head(e_i)=\tail(e_{i+1})$ for $i\in \{1,2,\cdots,k-1\}$.
The set of directed paths from $e$ to $e'$ is denoted by $\C{P}_{ee'}$.
For $i,j\in \{1,2,3\}$, we also use $\C{P}_{ij}$ to represent $\C{P}_{\sigma_i\tau_j}$.

Each node in the network performs scalar linear network coding operations on the incoming symbols \cite{Li2003}\cite{Koetter2003}.
The symbols transmitted in the network are elements of a finite field $\BF_{2^m}$.
Let $\hat{X}_i$ be the symbol injected at the sender node $s_i$.
Thus, for an edge $e=(u,v)\in E$, the symbol transmitted along $e$, denoted by $Y_e$, is a linear combination of the incoming symbols at $u$:
\begin{flalign}
\label{eq_lnc}
Y_e = \begin{cases}
\hat{X}_i & \text{if } e = \sigma_i; \\
\sum_{e'\in \inedges(u)} x_{e'e} Y_{e'} & \text{otherwise.}
\end{cases}
\end{flalign}
where $x_{e'e}$ denotes the coding coefficient that is used to combine the incoming symbol $Y_{e'}$ into $Y_e$.
Following the algebraic framework of \cite{Koetter2003}, we treat the coding coefficients as variables.
Let $\B{x}$ denote the vector consisting of all the coding coefficients in the network, \ie $\B{x}=(x_{e'e}: e',e\in E, \head(e')=\tail(e))$.

Due to the linear operations at each node, the network functions like a linear system such that
the received symbol at $\tau_i$ is a linear combination of the symbols injected at sender nodes:
\begin{flalign}
\label{eq_linear_sys}
Y_{\tau_i} = m_{1i}(\B{x}) \hat{X}_1 + m_{2i}(\B{x}) \hat{X}_2 + m_{3i}(\B{x}) \hat{X}_3
\end{flalign}
In the above formula, $m_{ji}(\B{x})$ ($j=1,2,3$) is a multivariate polynomial in the ring $\BF_2[\B{x}]$, and is defined as follows \cite{Koetter2003}:
\begin{flalign}
\label{eq_transfer_func}
m_{ji}(\B{x}) = \sum_{P\in \C{P}_{ji}} t_P(\B{x})
\end{flalign}
Each $t_P(\B{x})$ denotes a monomial in $m_{ji}(\B{x})$, and is the product of all the coding coefficients along path $P$, \ie for a given path $P=\{e_1,e_2,\cdots,e_k\}$, 
\begin{flalign}
t_P(\B{x})=\prod^{k-1}_{i=1} x_{e_ie_{i+1}}
\end{flalign}
Thus, $t_P(\B{x})$ represents the signal gain along a path $P$, and $m_{ji}(\B{x})$ is simply the summation of the signal gains along all possible paths from $\sigma_j$ to $\tau_i$.
We refer to $m_{ji}(\B{x})$ as the \textit{transfer function} from $\sigma_j$ to $\tau_i$.

We make the following assumptions:
\begin{enumerate}
\item \blue{The nodes in $V-\{s_i,d_i:1\le i\le 3\}$ can only perform random linear network coding, i.e., there is no intelligence in the middle of the network. 
The variables in $\B{x}$ all take values independently and uniformly at random from $\BF_{2^m}$}.

\item Except for the senders and the receivers, all other nodes in the network have zero memory, and therefore cannot store any received data.

\item The senders have no incoming edges, and the receivers have no outgoing edges.

\item The random variables in all $\B{X}_i$'s are mutually independent.
Each element of $\B{X}_i$ has an entropy of $m$ bits.

\item The transmissions within the network are all synchronized with respect to the symbol timing.
\end{enumerate}

\subsection{Transmission Process}

The transmission process in the network continues for $N\in \BB{Z}_{>0}$ time slots, where $N\ge \max\{k_1,k_2,k_3\}$.
Both $N$ and $k_i$ are parameters of the transmission scheme.
We will show how to set these parameters in Section \ref{sec_pbna}.
\blue{Let $\B{x}^{(t)}=(x^{(t)}_{e'e}: e',e\in E, \head(e')=\tail(e))$ denote the vector of coding coefficients for time slot $t$, where $x^{(t)}_{e'e}$ represents the coding coefficient used to combine the incoming symbol along $e'$ into the symbol along $e$ for time slot $t$.}
For an edge $e$, let $Y^{(t)}_e$ denote the symbol transmitted along $e$ during time slot $t$, and $\B{Y}_e=(Y^{(1)}_e,Y^{(2)}_e,\cdots,Y^{(N)}_e)^T$ the vector of all the symbols transmitted along $e$ during the $N$ time slots.
\blue{Define a vector of variables, $\xi=(\B{x}^{(1)}, \B{x}^{(2)},\cdots,\B{x}^{(N)}, \theta_1, \theta_2, \cdots, \theta_k)$, where $\theta_1,\cdots,\theta_k$ are variables, which take values from $\BF_{2^m}$, and are used in the encoding process at the senders.}

Each sender $s_i$ first encode $\B{X}_i$ into a vector $\hat{\B{X}}_i$ of $N$ symbols:
\begin{flalign}
\label{eq_precod}
\hat{\B{X}}_i = \B{V}_i\B{X}_i
\end{flalign}
\blue{where $\B{V}_i$ is an $N \times k_i$ matrix, each element of which is a rational function in $\BF_{2^m}(\xi)$\footnote{Given a field $\BF$, $\BF(x_1,\cdots,x_k)$ denotes the field consisting of all multivariate rational functions in terms of $(x_1,\cdots,x_k)$ over $\BF$.}, and is called the \textit{precoding matrix} at $s_i$.}
Define the following $N \times N$ diagonal matrix which includes all the transfer functions $m_{ji}(\B{x}^{(t)})$ for the $N$ time slots:
\begin{flalign}
\label{eq_M}
\blue{\B{M}_{ji} = \begin{pmatrix}
m_{ji}(\B{x}^{(1)}) & 0 & \cdots & 0 \\
0 & m_{ji}(\B{x}^{(2)}) & \cdots & 0 \\
\vdots & \vdots & \ddots & \vdots \\
0 & 0 & \cdots & m_{ji}(\B{x}^{(N)})
\end{pmatrix}}
\end{flalign}
Hence, the input-output equation of the network can be formulated in a matrix form as follows:
\begin{flalign}
\label{eq_input_output}
\begin{split}
\B{Y}_{\tau_i} &= \B{M}_{1i} \hat{\B{X}}_1 + \B{M}_{2i} \hat{\B{X}}_2 + \B{M}_{3i} \hat{\B{X}}_3 \\
&= \B{M}_{1i}\B{V}_1\B{X}_1 + \B{M}_{2i}\B{V}_2\B{X}_2 + \B{M}_{3i}\B{V}_3\B{X}_3 \\
&= \B{M}_i\B{X}
\end{split}
\end{flalign}
where $\B{M}_i = (\B{M}_{1i}\B{V}_1 \quad \B{M}_{2i}\B{V}_2 \quad \B{M}_{3i}\B{V}_3)$, and $\B{X}=(\B{X}^T_1 \quad \B{X}^T_2 \quad \B{X}^T_3)^T$.
\blue{Since the elements of $\B{M}_{ji}$ ($1\le j \le 3$) and $\B{V}_j$ are rational functions in $\BF_{2^m}(\xi)$, the elements of $\B{M}_i$ are also rational functions in terms of $\xi$.}

\subsection{Precoding-Based Linear Scheme}

\blue{In this paper, we consider the following transmission scheme, called precoding-based linear scheme:
\begin{definition}
Given a SISO scenario $(\C{G},\Omega)$, a \textit{precoding-based linear scheme} for $(\C{G},\Omega)$ is a transmission scheme, where each sender $s_i$ ($1\le i \le 3$) uses a precoding matrix $\B{V}_i$ to encode source symbols, and the variables in $\xi$ all take values independently and uniformly at random from $\BF_{2^m}$.
We use a tuple $\lambda=(\xi,\B{V}_i:1\le i\le 3)$ to denote a precoding-based linear scheme.
\end{definition}
From the above definition, it can be seen that a precoding-based linear scheme is a random linear network coding scheme.
Given a precoding-based linear scheme, let $P_{succ}$ denote the probability that the denominators of the precoding matrices are all evaluated to non-zero values, and all receivers can successfully decode their required source symbols from received symbols.
\begin{definition}
Given a precoding-based linear scheme $\lambda=(\xi,\B{V}_i:1\le i\le 3)$, we say that it \textit{achieves} the rate tuple $(\frac{k_1}{N},\frac{k_2}{N},\frac{k_3}{N})$, if $\lim_{m\rightarrow \infty} P_{succ} = 1$.
\end{definition}
Given a precoding-based linear scheme, if the conditions of the above definition is satisfied, by choosing sufficiently large finite field $\BF_{2^m}$, a random assignment of values to $\xi$ will enable each receiver to successfully decode its required source symbols with high probability.
In this sense, given sufficiently large $\BF_{2^m}$, a precoding-based linear scheme works for most random realizations of $\xi$, but not all realizations.}

Before proceeding, we introduce the following Schwartz-Zippel Theorem \cite{Motwani1995}.
\begin{theorem}[Schwartz-Zippel Theorem]
\label{th_schwartz}
Let $Q(x_1,x_2,\cdots,x_n)$ be a non-zero multivariate polynomial of total degree $d$ in the ring $\BF[x_1,x_2,\cdots,x_n]$, where $\BF$ is a field.
Fix a finite set $S\subseteq \BF$.
Let $r_1,r_2,\cdots,r_n$ be chosen independently and uniformly at random from $S$.
Then, 
\begin{flalign*}
Pr(Q(r_1,r_2,\cdots,r_n)= 0) \le \frac{d}{|S|}
\end{flalign*}
\end{theorem}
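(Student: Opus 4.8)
The statement to prove is a standard result (see \cite{Motwani1995}), and I would follow the classical argument: induction on the number of variables $n$.

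For the base case $n=1$, the polynomial $Q(x_1)$ is a non-zero univariate polynomial of degree at most $d$ over the field $\BF$, hence it has at most $d$ roots; since $r_1$ is drawn uniformly from $S$, the probability that $r_1$ is one of these roots is at most $d/|S|$. For the inductive step, assume the claim holds for polynomials in fewer than $n$ variables. I would single out the variable $x_1$ and write $Q(x_1,\ldots,x_n)=\sum_{i=0}^{k} x_1^i\, Q_i(x_2,\ldots,x_n)$, where $k$ is the highest power of $x_1$ appearing in $Q$, so that $Q_k$ is a non-zero polynomial of total degree at most $d-k$. The proof then splits on whether $Q_k(r_2,\ldots,r_n)$ vanishes: on the event $\{Q_k(r_2,\ldots,r_n)=0\}$ the induction hypothesis applied to the $n-1$ variables $x_2,\ldots,x_n$ bounds the probability by $(d-k)/|S|$; on the complementary event, once $r_2,\ldots,r_n$ are fixed with $Q_k(r_2,\ldots,r_n)\ne 0$, the specialization $Q(x_1,r_2,\ldots,r_n)$ is a non-zero univariate polynomial in $x_1$ of degree exactly $k$, and since $r_1$ is independent and uniform on $S$, the conditional probability that it equals zero is at most $k/|S|$ by the base case.

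Combining the two cases by the law of total probability (or a simple union bound over the two events) gives
\begin{flalign*}
Pr\big(Q(r_1,\ldots,r_n)= 0\big) \le \frac{d-k}{|S|} + \frac{k}{|S|} = \frac{d}{|S|},
\end{flalign*}
which completes the induction.

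The only delicate point is the use of independence in the second case: the event $\{Q_k(r_2,\ldots,r_n)\ne 0\}$ is a function of $r_2,\ldots,r_n$ only, so conditioning on it — and in fact on any fixed admissible values of $r_2,\ldots,r_n$ — leaves $r_1$ uniformly distributed on $S$, which is exactly what is needed to invoke the univariate root bound. The remaining bookkeeping, namely that $\deg Q_k \le d-k$ and that the specialized polynomial has $x_1$-degree $k$, is routine and is where one must simply be careful rather than clever.
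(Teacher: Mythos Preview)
Your proof is correct and is the standard inductive argument for the Schwartz--Zippel lemma. Note that the paper does not actually prove this statement: it is quoted as a known result with a citation to \cite{Motwani1995}, so there is no ``paper's own proof'' to compare against --- your argument is essentially the one found in that reference.
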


\begin{figure}
\centering
\includegraphics[width=4.5cm]{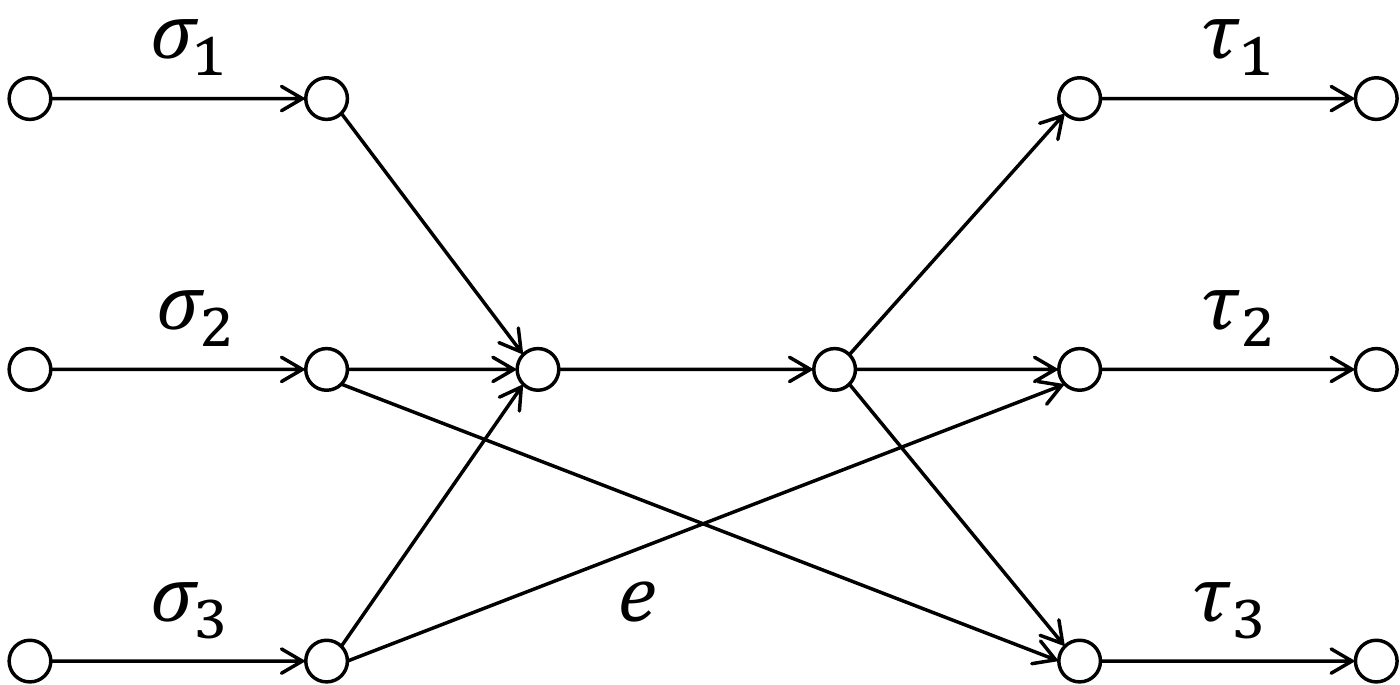}
\caption{An illustrative example for precoding-based linear scheme.\label{fig_ex0}}
\end{figure}

\begin{example}
\blue{We use an example to illustrate the above concepts.
Consider the network in Fig. \ref{fig_ex0}.
Note that under the network model considered in the paper, interference is almost unavoidable at the receivers.
Consider a receiver $d_i$.
Without loss of generality, assume that the $(1,1)$ element of $\B{V}_j$ ($i\neq j$)  is a non-zero rational function $f_{11}(\xi)$. 
Thus, the $(1,1)$ element of $\B{M}_{ji}\B{V}_j$ is a non-zero rational function $m_{ji}(\B{x}^{(1)})f_{11}(\xi)$.
Due to Theorem \ref{th_schwartz}, the probability that $m_{ji}(\B{x}^{(1)})f_{11}(\xi)$ is evaluated to zero under a random assignment of values to $\xi$ approaches to zero as $m\rightarrow \infty$.
Hence, the probability that $\B{M}_{ji}\B{V}_j=\B{0}$ approaches zero as $m\rightarrow \infty$.
This means that interference is almost unavoidable at $d_i$.}

\blue{Next, we present a precoding-based linear scheme that achieves a symmetric rate of $\frac{1}{3}$ per unicast session.
Let $N=3$, and $k_1=k_2=k_3=1$.
Consider the following precoding matrix $\B{V}_1=(\theta^{(1)}_1 \; \theta^{(2)}_1 \; \theta^{(3)}_1)$.
According to Eq. (\ref{eq_input_output}), the output vector at $d_i$ is $\B{Y}_{\tau_i} = \B{M}_i\B{X}$, where $\B{M}_i$ is as follows:
\begin{flalign*}
\B{M}_i = \begin{pmatrix}
m_{1i}(\B{x}^{(1)})\theta^{(1)}_1 & m_{2i}(\B{x}^{(1)})\theta^{(1)}_2 & m_{3i}(\B{x}^{(1)})\theta^{(1)}_3 \\
m_{1i}(\B{x}^{(2)})\theta^{(2)}_1 & m_{2i}(\B{x}^{(2)})\theta^{(2)}_2 & m_{3i}(\B{x}^{(2)})\theta^{(2)}_3 \\
m_{1i}(\B{x}^{(3)})\theta^{(3)}_1 & m_{2i}(\B{x}^{(3)})\theta^{(3)}_2 & m_{3i}(\B{x}^{(3)})\theta^{(3)}_3 \\
\end{pmatrix}
\end{flalign*}
It can be verified that $\det(\B{M}_i)$ is a non-zero polynomial in $\BF_{2^m}(\xi)$\footnote{It can be seen that each row of $\B{M}_i$ is of the form $(m_{1i}(\B{x})\theta_1 \quad m_{2i}(\B{x})\theta_2 \quad m_{3i}(\B{x})\theta_3)$. Since $m_{1i}(\B{x})\theta_1$, $m_{2i}(\B{x})\theta_2$ and $m_{3i}(\B{x})\theta_3$ are linearly independent, according to Lemma \ref{lemma_rational_independent} (see Subsection \ref{subsec_algebraic_pbna}), $\det(\B{M}_i)$ is a non-zero polynomial.}. 
Let $d$ be the total degree of $\det(\B{M}_i)$.
Due to Theorem \ref{th_schwartz}, we have:
\begin{flalign*}
P_{succ} \ge & Pr(\det(\B{M}_i) \neq 0) \\
= & 1 - Pr(\det(\B{M}_i) = 0) \ge 1 - \frac{d}{2^m}
\end{flalign*}
Since $\lim_{m\rightarrow \infty} (1 - \frac{d}{2^m}) = 1$, it follows that $\lim_{m\rightarrow \infty}P_{succ} = 1$.
Hence, the above precoding-based linear scheme achieves a symmetric rate $\frac{1}{3}$ per unicast session.
As we will show in Section \ref{sec_feasibility}, using precoding-based alignment scheme, which is a special case of precoding-based linear scheme, each unicast session can achieve a symmetric rate $\frac{1}{2}$ per unicast session, which is the optimal symmetric rate achieved by any precoding-based linear schemes.}
\myqed
\end{example}

\blue{Table \ref{tab_notation} summarizes the notations used in this paper, in which $e',e\in E$ and $1\le i,j,k\le 3$.}
\begin{table*}[t]
\caption{Summary of Notations \label{tab_notation}}
\centering
\footnotesize{\linespread{0.96}{
\begin{tabular}{| l | p{12cm} |}
\hline
Notations & Meanings \\ \hline
$\omega_i=(s_i,d_i)$ & The $i$th unicast session, where $s_i$ and $d_i$ are the sender and receiver of $\omega_i$ respectively. \\ \hline
$(\C{G},\Omega)$ & A SISO scenario, where $\C{G}$ represents the network, and $\Omega$ the set of unicast sessions. \\ \hline
$\sigma_i$, $\tau_i$ & The sender edge and the receiver edge for $\omega_i$. \\ \hline
$\B{X}_i$ & A vector that holds all the source symbols transmitted from $s_i$ to $d_i$. \\ \hline
$\BF_{2^m}$ & The finite field which forms the support for all the symbols transmitted in the network. \\ \hline
$x_{e'e}$ & The coding coefficient used to combine the incoming symbol along $e'$ to the symbol along $e$. \\ \hline
$\B{x}$ & The vector consisting of all the coding coefficients in the network. \\ \hline
$\C{P}_{e'e}$ & The set of directed paths from $e'$ to $e$. \\ \hline
$\C{P}_{ji}$ & The set of directed paths from $\sigma_j$ to $\tau_i$. \\ \hline
$t_P(\B{x})$ & The product of coding coefficients along path $P$. It represents a monomial in a transfer function. \\ \hline
$m_{ji}(\B{x})$ & The transfer function from $\sigma_j$ to $\tau_i$. \\ \hline
$\B{x}^{(t)}$ & The vector consisting of all the coding coefficients in the network for time slot $t$. \\ \hline
$\xi$ & A vector that holds all the coding coefficients in the network for the whole transmission process, and the variables used in the encoding process at all the senders. \\ \hline
$\B{V}_i$ & The precoding matrix used to encode the symbols sent by $s_i$. \\ \hline
$\B{M}_{ji}$ & A diagonal matrix, in which the element at coordinate $(l,l)$ is the transfer function $m_{ji}(\B{x}^{(l)})$. \\ \hline
$P_{succ}$ & The probability that the denominators of the elements in the precoding matrices are evaluated to non-zero values, and all receivers can decode their required source symbols. \\ \hline
$\lambda=(\xi,\B{V}_i:1\le i\le 3)$ & A precoding-based linear scheme for $(\C{G},\Omega)$. \\ \hline
$\SR{A}_i$, $\SR{B}_i$ & The alignment condition and the rank condition for $\omega_i$. \\ \hline
$\B{V}^*_i$ & The precoding matrix proposed in \cite{Cadambe2008} (see Eq. (\ref{eq_v1})-(\ref{eq_v3})). \\ \hline
$\B{P}_i$, $\B{T}$ & The diagonal matrices used in the reformulated alignment conditions Eq. (\ref{eq_v1_align}) and the reformulated rank conditions $\SR{B}'_1\sim\SR{B}'_3$. \\ \hline
$\B{I}_n$ & The $n\times n$ identity matrix. \\ \hline
$p_i(\B{x})$, $\eta(\B{x})$ & The rational functions that form the elements along the diagonals of $\B{P}_i$ and $\B{T}$ respectively \\ \hline
$\alpha_{ijk}$ & The last edge that forms a cut-set between $\sigma_i$ and $\{\tau_j,\tau_k\}$ in a topological ordering of the edges in the network. \\ \hline
$\beta_{ijk}$ & The first edge that forms a cut-set between $\{\sigma_j,\alpha_{ijk}\}$ and $\tau_k$ in a topological ordering of the edges in the network. \\ \hline
$\C{C}_{e'e}$ & The set of edges that forms a cut-set between $e'$ and $e$. \\ \hline
$\C{C}_{ij}$ & The set of edges that forms a cut-set between $\sigma_i$ and $\tau_j$. \\ \hline
$\mygcd(f(x),g(x))$ & The greatest common divisor of two polynomials $f(x)$ and $g(x)$. \\ \hline
\end{tabular}}}
\end{table*}

\section{Applying Precoding-Based Network Alignment to Networks \label{sec_pbna}} 

In this section, we first present how to utilize precoding-based interference alignment technique to find a precoding-based linear scheme for $(\C{G},\Omega)$.
Then, we present achievability conditions for PBNA.
We then introduce the concept of ``coupling relations,'' which are essential in determining the achievability of PBNA.

\blue{Throughout this section, we assume that all the senders are connected to all the receivers via directed paths, \ie $m_{ij}(\B{x})$ is a non-zero polynomial for all $1\le i, j\le 3$.
This is the most challenging case, since each receiver may suffer interference from the other two senders.
This case also models most practical communication scenarios, in which it is common that all the senders are connected to all the receivers.
The other setting, where some sender $s_i$ is disconnected from some receiver $d_j$ ($i\neq j$), \ie $m_{ij}(\B{x})$ is a zero polynomial, is easier to deal with, since there is less interference at receivers.
We defer the later case to Section \ref{sec_feasibility}, where we show that this case can be handled similarly as the first case.}

\subsection{Precoding-Based Network Alignment Scheme \label{subsec_pbna}}

In this section, we present how to apply interference alignment to networks to construct a precoding-based linear scheme for $(\C{G},\Omega)$.
The basic idea is that under linear network coding, the network behaves like a wireless interference channel\footnote{The wireless interference channel that we consider here has only one sub-channel.}, which is shown below:
\begin{flalign}
\label{eq_output_wireless}
U_i = H_{1i}W_1 + H_{2i}W_2 + H_{3i}W_3 + N_i \quad i=1,2,3
\end{flalign}
where $W_j$, $H_{ji}$, $U_i$, and $N_i$ ($j=1,2,3$) are all complex numbers, representing the transmitted signal at sender $j$, the channel gain from sender $j$ to receiver $i$, the received signal at receiver $j$, and the noise term respectively.
As we can see from Eq. (\ref{eq_linear_sys}), in a network equipped with linear network coding,  $\hat{X}_j$'s ($j\neq i$) play the roles of interfering signals, and transfer functions the roles of channel gains.
This analogy enables us to borrow some techniques, such as precoding-based interference alignment \cite{Cadambe2008}, which is originally developed for the wireless interference channel, to the network setting.

A precoding-based network alignment scheme is defined as follows:
\blue{\begin{definition}
\label{def_pbna}
Given a SISO scenario $(\C{G}, \Omega)$, $n\in \BB{Z}_{>0}$, and $s\in \{0,1\}$, a precoding-based network alignment scheme with $2n+s$ symbol extensions, or a PBNA for short, is a precoding-based linear scheme $\lambda=(\xi,\B{V}_i:1\le i\le 3)$, which satisfies the following conditions:
\begin{enumerate}
\item $\B{V}_1$ is a $(2n+s)\times (n+s)$ matrix with rank $n+s$ on $\BF_{2^m}(\xi)$, and $\B{V}_2,\B{V}_3$ are both $(2n+s)\times n$ matrices with rank $n$ on $\BF_{2^m}(\xi)$.
\item The following equations are satisfied \cite{Cadambe2008}:
\begin{align*}
& \SR{A}_1: \, \myspan(\B{M}_{21}\B{V}_2) = \myspan(\B{M}_{31}\B{V}_3) \\
& \SR{A}_2: \, \myspan(\B{M}_{32}\B{V}_3) \subseteq \myspan(\B{M}_{12}\B{V}_1) \\
& \SR{A}_3: \, \myspan(\B{M}_{23}\B{V}_2) \subseteq \myspan(\B{M}_{13}\B{V}_1)
\end{align*}
where for a matrix $\B{E}$, $\myspan(\B{E})$ denotes the linear space spanned by the column vectors contained in $\B{E}$.
\item The variables in $\xi$ all take values independently and uniformly at random from $\BF_{2^m}$.
\end{enumerate}
\end{definition}
\begin{definition}
Given a SISO scenario $(\C{G}, \Omega)$, and a rate tuple $(R_1,R_2,R_3)\in \BB{Q}^3_{>0}$, we say that $(R_1,R_2,R_3)$ is asymptotically achievable through PBNA, if there exists a sequence $(\lambda_n)^{\infty}_{n=1}$, where each $\lambda_n$ is a PBNA for $(\C{G},\Omega)$, such that each $\lambda_n$ achieves a rate tuple $\B{r}_n\in \BB{Q}^3_{>0}$, and $\lim_{n\rightarrow \infty} \B{r}_n = (R_1,R_2,R_3)$.
\end{definition}}

In the above definition, $\SR{A}_i$ ($1\le i\le 3$) is called the alignment condition for $\omega_i$.
It guarantees that the undesired symbols or interferences at each receiver are mapped into a single linear space, such that the dimension of received symbols or the number of unknowns is decreased. 

\subsection{Achievability Conditions of PBNA \label{subsec_algebraic_pbna}}

\begin{figure}
\centering
\includegraphics[width=8cm]{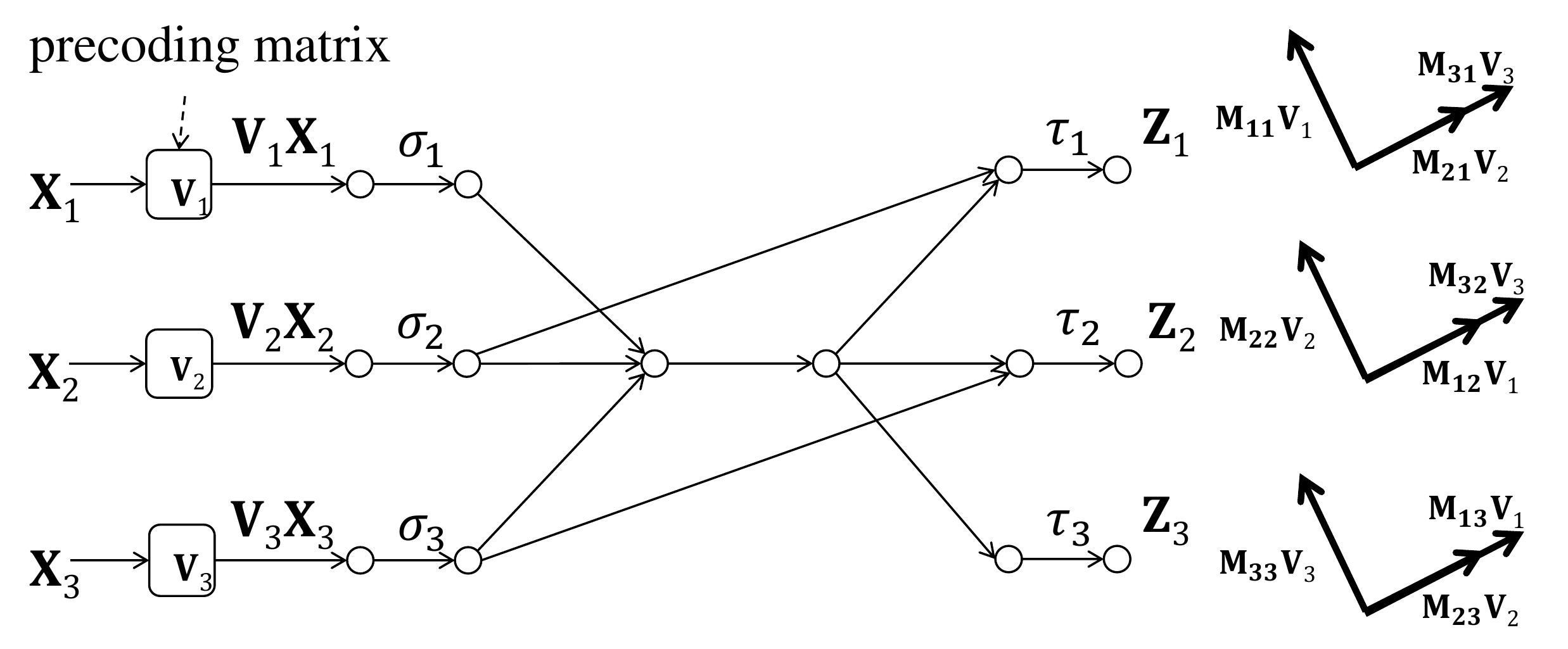}
\caption{Applying precoding-based interference alignment to a network which satisfies the rank conditions of PBNA as per Lemma \ref{lemma_na}. 
At each sender edge $\sigma_i$ ($i=1,2,3$), the input vector $\B{X}_i$ is first encoded into $2n+s$ symbols through the precoding matrix $\B{V}_i$; 
then the encoded symbols are transmitted through the network in $2n+s$ time slots via random linear network coding in the middle of the network;
at each receiver edge $\tau_i$, the undesired symbols are aligned into a single linear space, which is linearly indepdent from the linear space spanned by the desired signals, such that the receiver can decode all the desired symbols.
\label{fig_pbna}}
\end{figure}

The following lemma provides sufficient conditions for PBNA schemes to achieve the rate tuple $(\frac{n+s}{2n+s},\frac{n}{2n+s},\frac{n}{2n+s})$.
\begin{lemma}
\label{lemma_na}
\blue{Assume that all the senders and all the receivers are connected via directed paths.}
Consider a PBNA $\lambda=(\xi,\B{V}_i:1\le i\le 3)$.
It achieves the rate tuple $(\frac{n+2}{2n+s},\frac{n}{2n+s},\frac{n}{2n+s})$, if the following conditions are satisfied \cite{Cadambe2008}:
\begin{flalign*}
& \SR{B}_1: \, \myrank(\B{M}_{11}\B{V}_1 \hspace*{8pt} \B{M}_{21}\B{V}_2) = 2n+s \\
& \SR{B}_2: \, \myrank(\B{M}_{12}\B{V}_1 \hspace*{8pt} \B{M}_{22}\B{V}_2) = 2n+s \\
& \SR{B}_3: \, \myrank(\B{M}_{13}\B{V}_1 \hspace*{8pt} \B{M}_{33}\B{V}_3) = 2n+s
\end{flalign*}
\end{lemma}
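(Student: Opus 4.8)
The plan is to show that once the alignment conditions $\SR{A}_1$--$\SR{A}_3$ hold and $\xi$ is generic, the rank conditions $\SR{B}_1$--$\SR{B}_3$ are exactly what is needed to invert the received system at each receiver, so that decoding succeeds with probability tending to $1$ as $m\to\infty$. I would first record the dimension bookkeeping: $\B{V}_1$ contributes $n+s$ columns and $\B{V}_2$, $\B{V}_3$ contribute $n$ columns each, for a total of $3n+2s$ source symbols transmitted over $2n+s$ time slots, matching the claimed rate tuple $(\tfrac{n+s}{2n+s},\tfrac{n}{2n+s},\tfrac{n}{2n+s})$. At receiver $d_1$, equation (\ref{eq_input_output}) gives $\B{Y}_{\tau_1}=\B{M}_{11}\B{V}_1\B{X}_1+\B{M}_{21}\B{V}_2\B{X}_2+\B{M}_{31}\B{V}_3\B{X}_3$; by $\SR{A}_1$ the last two terms live in the common column space $\myspan(\B{M}_{21}\B{V}_2)$, which has dimension at most $n$, so the interference occupies at most $n$ of the $2n+s$ dimensions. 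Condition $\SR{B}_1$ asserts that the $n+s$ columns of $\B{M}_{11}\B{V}_1$ together with the (at most) $n$ interference dimensions span the full $(2n+s)$-dimensional space; hence they are linearly independent and $d_1$ can project onto the complement of the interference subspace to recover $\B{X}_1$. The arguments for $d_2$ and $d_3$ are symmetric, using the one-sided containments $\SR{A}_2$, $\SR{A}_3$ (so the interference from $\omega_2$, respectively $\omega_3$, is absorbed into $\myspan(\B{M}_{12}\B{V}_1)$ / $\myspan(\B{M}_{13}\B{V}_1)$) together with $\SR{B}_2$, $\SR{B}_3$.

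Next I would address the probabilistic step. All of the matrices $\B{M}_{ji}$ and $\B{V}_i$ have entries that are rational functions in $\xi$, so each of the three $(2n+s)\times(2n+s)$ minors appearing in $\SR{B}_1$--$\SR{B}_3$ is a rational function of $\xi$; the hypothesis of the lemma is precisely that these three rational functions are not identically zero. Clearing denominators turns each into a nonzero polynomial in the variables of $\xi$ of some bounded total degree $d$ (bounded by a function of the network size and $n$, independent of $m$). The product of these three numerators, times the product of all denominators of the entries of the $\B{V}_i$, is a single nonzero polynomial $Q(\xi)$; applying the Schwartz--Zippel Theorem (Theorem \ref{th_schwartz}) with $S=\BF_{2^m}$ gives $\Pr(Q(\xi)=0)\le \deg Q / 2^m$. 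On the event $Q(\xi)\neq 0$ the precoding matrices are well-defined and all three decoding inversions are valid, so $P_{succ}\ge 1-\deg Q/2^m \to 1$ as $m\to\infty$, which is the definition of achieving the rate tuple.

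The one genuinely delicate point — and the place where the ``senders connected to all receivers'' hypothesis is used — is establishing that the interference subspace at each receiver has dimension exactly $n$ (equivalently, that $\B{M}_{21}\B{V}_2$ has full column rank $n$ generically, and similarly $\B{M}_{31}\B{V}_3$, $\B{M}_{32}\B{V}_3$, $\B{M}_{23}\B{V}_2$), so that the alignment does not accidentally collapse desired dimensions or leave the interference occupying fewer than $n$ dimensions in a way that breaks the rank count. Since each $\B{M}_{ji}$ is diagonal with diagonal entries $m_{ji}(\B{x}^{(t)})$, and connectivity guarantees each $m_{ji}$ is a nonzero polynomial, left-multiplication by $\B{M}_{ji}$ preserves the rank of $\B{V}_j$ on the function field $\BF_{2^m}(\xi)$ — the determinant of any maximal square submatrix just picks up a product of distinct, algebraically independent transfer-function factors and stays nonzero. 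This is exactly the kind of statement that Lemma \ref{lemma_rational_independent} (invoked already in the Example) is designed to handle, and I expect the formal proof to reduce $\SR{B}_i$ to a nonvanishing-determinant claim and then cite that lemma. The rest is the bookkeeping above; the subtlety of actually \emph{constructing} $\B{V}_i$ satisfying $\SR{A}_1$--$\SR{A}_3$ is not needed here, since the lemma takes the alignment conditions as given and only asserts sufficiency of the rank conditions for decoding.
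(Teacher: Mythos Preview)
Your first two paragraphs match the paper's proof essentially line for line: the paper sets $\B{D}_i$ to be the inverse of the $(2n+s)\times(2n+s)$ matrix in $\SR{B}_i$ (well-defined over $\BF_{2^m}(\xi)$ precisely because the rank condition is a \emph{hypothesis}), observes that the alignment conditions force $\B{Y}_{\tau_i}$ to equal that matrix times a vector whose appropriate block is exactly $\B{X}_i$, forms the single polynomial $q(\xi)=\prod_i f_i(\xi)g_i(\xi)$ (denominators of the $\B{V}_i$'s times denominators of the $\B{D}_i$'s), and applies Schwartz--Zippel. Your ``project onto the complement'' and the paper's ``left-multiply by $\B{D}_i$'' are the same operation.

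Your third paragraph, however, is misdirected. Once $\SR{B}_1$ is assumed, $\B{M}_{21}\B{V}_2$ automatically has full column rank $n$ (a $(2n+s)\times(2n+s)$ matrix of rank $2n+s$ cannot have a rank-deficient block), so there is no separate ``delicate point'' about the interference dimension to establish. The paper's proof of this lemma never invokes the connectivity hypothesis explicitly and never touches Lemma~\ref{lemma_rational_independent}; both of those enter only later, in Lemma~\ref{lemma_big_cond_0} and Theorem~\ref{th_big_cond_1}, where a \emph{specific} choice of $\B{V}_i$ is made and one must \emph{verify} that $\SR{B}_i$ hold. You have conflated the role of the present lemma (given the rank conditions, decoding succeeds) with the subsequent task (for the Cadambe--Jafar precoders, the rank conditions do hold). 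Drop the third paragraph and your argument is complete and identical to the paper's.
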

\begin{proof}
Suppose $\SR{B}_1\sim\SR{B}_3$ are satisfied.
Define the following matrices:
\begin{flalign*}
& \B{D}_1 = (\B{M}_{11}\B{V}_1 \quad \B{M}_{21}\B{V}_2)^{-1} \\
& \B{D}_2 = (\B{M}_{12}\B{V}_1 \quad \B{M}_{22}\B{V}_2)^{-1} \\
& \B{D}_3 = (\B{M}_{13}\B{V}_1 \quad \B{M}_{33}\B{V}_3)^{-1}
\end{flalign*}
Let $f_i(\xi)$ denote the product of the denominators of all the elements in $\B{V}_i$, and $g_i(\xi)$ the product of the denominators of all the elements in $\B{D}_i$.
Thus, $f_i(\xi),g_i(\xi)$ are both non-zero polynomials in $\BF_{2^m}[\xi]$.
Define $q(\xi)=\prod^3_{i=1}f_i(\xi)g_i(\xi)$.
Let $d$ denote the total degree of $q(\xi)$.
Suppose $\xi_0$ is an assignment of values to $\xi$ such that $q(\xi_0) \neq 0$.
Hence, the denominators of the elements in $\B{V}_i$'s and $\B{D}_i$'s are evaluated to non-zeros.
Moreover, $\B{X}_i$ is a sub-vector of $\B{D}_i|_{\xi_0}\B{Y}_{\tau_i}$, where $\B{D}_i|_{\xi_0}$ is a matrix acquired through evaluating each element of $\B{D}_i$ under the assignment $\xi=\xi_0$.
Thus, all the receivers can decode their required source symbols.
Hence, the probability $P_{succ}$ that all the receivers can decoded their required source symbols satisfies the following inequalities:
\begin{flalign*}
& P_{succ} \ge Pr(q(\xi) \neq 0) = 1 - Pr(q(\xi) = 0) \ge 1 - \frac{d}{2^m}
\end{flalign*}
where the last inequality follows from Theorem \ref{th_schwartz}.
Since $\lim_{m\rightarrow \infty} (1 - \frac{d}{2^m}) = 1$, we have $\lim_{m\rightarrow \infty} P_{succ} = 1$.
Hence, $\lambda$ achieves $(\frac{n+s}{2n+s}, \frac{n}{2n+s}, \frac{n}{2n+s})$.
\end{proof}

In Lemma \ref{lemma_na}, $\SR{B}_i$ ($1\le i \le 3$) are called the rank condition for $\omega_i$.
$\SR{B}_i$ guarantees that $d_i$ can decode its required source symbol with high probability when the the size of $\BF_{2^m}$ is sufficiently large.
In Fig. \ref{fig_pbna}, we use a figure to illustrate how to apply PBNA to a network which satisfies the rank conditions.

We can further simplify the alignment conditions as follows.
First, we reformulate $\SR{A}_1 \sim \SR{A}_3$ as follows:
\begin{flalign*}
& \SR{A}'_1:\, \B{M}_{21}\B{V}_2 = \B{M}_{31}\B{V}_3\B{A} \\
& \SR{A}'_2:\, \B{M}_{32}\B{V}_3 = \B{M}_{12}\B{V}_1\B{B} \\
& \SR{A}'_3:\, \B{M}_{23}\B{V}_2 = \B{M}_{13}\B{V}_1\B{C}
\end{flalign*}
where $\B{A}$ is an $n\times n$ invertible matrix, and $\B{B}$, $\B{C}$ are both $(n+s)\times n$ matrices with rank $n$.
A direct consequence of $\SR{A}'_2$ and $\SR{A}'_3$ is that the precoding matrices are not independent from each other: Both $\B{V}_2$ and $\B{V}_3$ are determined by $\B{V}_1$ through the following equations:
\begin{flalign}
\label{eq_v2_v3_v1}
\B{V}_2 = \B{M}_{13}\B{M}^{-1}_{23}\B{V}_1\B{C} \hspace*{10pt} \B{V}_3 = \B{M}_{12}\B{M}^{-1}_{32}\B{V}_1\B{B}
\end{flalign}
Substituting the above equations into $\SR{A}'_1$, the three alignment conditions can be further consolidated into a single equation:
\begin{flalign}
\label{eq_v1_align}
\B{T}\B{V}_1\B{C} = \B{V}_1\B{BA}
\end{flalign}
where $\B{T}= \B{M}_{13}\B{M}_{21}\B{M}_{32}\B{M}^{-1}_{12}\B{M}^{-1}_{23}\B{M}^{-1}_{31}$.
\blue{Eq. (\ref{eq_v1_align}) suggests that alignment conditions introduce constraint on $\B{V}_1$.}
Thus, in general, we cannot choose $\B{V}_1$ freely.

Finally, using Eq. (\ref{eq_v2_v3_v1}) and Eq. (\ref{eq_v1_align}), the rank conditions are transformed into the following equivalent equations:
\begin{flalign*}
& \SR{B}'_1: \hspace{8pt} \myrank(\B{V}_1 \quad \B{P}_1\B{V}_1\B{C}) = 2n+s \\
& \SR{B}'_2: \hspace{8pt} \myrank(\B{V}_1 \quad \B{P}_2\B{V}_1\B{C}) = 2n+s \\
& \SR{B}'_3: \hspace{8pt} \myrank(\B{V}_1 \quad \B{P}_3\B{V}_1\B{C}\B{A}^{-1}) = 2n+s
\end{flalign*}
where $\B{P}_1= \B{M}_{13}\B{M}_{21}\B{M}^{-1}_{11}\B{M}^{-1}_{23}$, $\B{P}_2= \B{M}_{13}\B{M}_{22}\B{M}^{-1}_{12}\B{M}^{-1}_{23}$, and $\B{P}_3=\B{M}_{21}\B{M}_{33}\B{M}^{-1}_{23}\B{M}^{-1}_{31}$.
\blue{Recalling each $\B{M}_{kl}$ ($1\le k,l\le 3$) is a diagonal matrix (see Eq. (\ref{eq_M})) with the elements along the diagonal being of the form $m_{kl}(\B{x})$,  $\B{P}_i$ and $\B{T}$ are both diagonal matrices.}
Define the following functions:
\begin{flalign}
\label{eq_p_eta}
\begin{split}
& p_1(\B{x})= \frac{m_{13}(\B{x})m_{21}(\B{x})}{m_{11}(\B{x})m_{23}(\B{x})} \hspace*{5pt}
p_2(\B{x})= \frac{m_{13}(\B{x})m_{22}(\B{x})}{m_{12}(\B{x})m_{23}(\B{x})} \\
& p_3(\B{x})= \frac{m_{21}(\B{x})m_{33}(\B{x})}{m_{23}(\B{x})m_{31}(\B{x})} \hspace*{5pt}
\eta(\B{x})= \frac{m_{13}(\B{x})m_{21}(\B{x})m_{32}(\B{x})}{m_{12}(\B{x})m_{23}(\B{x})m_{31}(\B{x})}
\end{split}
\end{flalign}
It can been seen that $p_i(\B{x})$ and $\eta(\B{x})$ form the elements along the diagonals of $\B{P}_i$ and $\B{T}$ respectively.

Next, we reformulate the rank conditions in terms of $p_i(\B{x})$ and $\eta(\B{x})$.
To this end, we need to know the internal structure of $\B{V}_1$.
We distinguish the following two cases:

\textit{Case I}: $\eta(\B{x})$ is non-constant, and thus $\B{T}$ is not an identity matrix.
For this case, Eq. (\ref{eq_v1_align}) becomes non-trivial, and we cannot choose $\B{V}_1$ freely.
We use the following precoding matrices proposed by Cadambe and Jafar \cite{Cadambe2008}:
\begin{flalign}
\B{V}^*_1 &= (\B{w} \quad \B{T}\B{w} \quad \cdots \quad \B{T}^n\B{w}) \label{eq_v1} \\
\B{V}^*_2 &= \B{M}_{13}\B{M}^{-1}_{23}(\B{w} \quad \B{T}\B{w} \quad \cdots \quad \B{T}^{n-1}\B{w}) \label{eq_v2} \\
\B{V}^*_3 &= \B{M}_{12}\B{M}^{-1}_{32}(\B{T}\B{w} \quad \B{T}^2\B{w} \quad \cdots \quad \B{T}^n\B{w}) \label{eq_v3}
\end{flalign}
where $\B{w}$ is a column vector of $2n+1$ ones.
The above precoding matrices correspond to the configuration where $s=1$, $\B{A}=\B{I}_n$, $\B{C}$ consists of the left $n$ columns of $\B{I}_{n+1}$, and $\B{B}$ the right $n$ columns of $\B{I}_{n+1}$.
It is straightforward to verify that the above precoding matrices satisfy the alignment conditions.

We consider the following matrix, 
\begin{flalign*}
\B{H} = \begin{pmatrix}
f_1(\B{y}_1) & f_2(\B{y}_1) & \cdots & f_r(\B{y}_1) \\
f_1(\B{y}_2) & f_2(\B{y}_2) & \cdots & f_r(\B{y}_2) \\
\cdots & \cdots & \cdots & \cdots \\
f_1(\B{y}_r) & f_2(\B{y}_r) & \cdots & f_r(\B{y}_r)
\end{pmatrix}
\end{flalign*}
\blue{where $f_i(\B{y})$ ($i=1,2,\cdots,r$) is a rational function in terms of a vector of variables $\B{y}=(y_1,\cdots,y_k)$ in $\BF_{2^m}(\B{y})$, and the $j$th row of $\B{H}$ is simply a repetition of the vector $(f_1(\B{y}), \cdots, f_r(\B{y}))$, with $\B{y}$ being replaced by a vector of variables $\B{y}_j=(y_{j1},\cdots,y_{jk})$.}
Due to the particular structure of $\B{H}$, the problem of checking whether $\B{H}$ is full rank can be simplified to checking whether $f_1(\B{y}),\cdots,f_r(\B{y})$ are linearly independent, as stated in the following lemma.
Here, $f_1(\B{y}),\cdots, f_r(\B{y})$ are said to be linearly independent, if for any scalars $a_1,\cdots, a_r\in \BF_q$, which are not all zeros, $a_1f_1(\B{y}) + \cdots + a_rf_r(\B{y})\neq 0$.

\begin{lemma}
\label{lemma_rational_independent}
$\det(\B{H})\neq 0$ if and only if $f_1(\B{y}),\cdots, f_r(\B{y})$ are linearly independent.
\end{lemma}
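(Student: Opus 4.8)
The plan is to prove the two directions separately, with the easy direction first. For the "only if" direction, I would prove the contrapositive: if $f_1(\B{y}),\cdots,f_r(\B{y})$ are linearly \emph{dependent}, say $a_1f_1(\B{y})+\cdots+a_rf_r(\B{y})=0$ identically for scalars $a_i\in\BF_{2^m}$ not all zero, then the same linear relation holds with $\B{y}$ replaced by each $\B{y}_j$, since it is an identity of rational functions. Hence the column vector $(a_1,\ldots,a_r)^T$ lies in the right kernel of $\B{H}$, so $\det(\B{H})=0$. This part is immediate and needs no real work.

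For the "if" direction, assume $f_1(\B{y}),\cdots,f_r(\B{y})$ are linearly independent over $\BF_{2^m}$; I must show $\det(\B{H})$ is a nonzero rational function in the variables $\bigl\{y_{jl} : 1\le j\le r,\ 1\le l\le k\bigr\}$. The key point is that the blocks of variables $\B{y}_1,\ldots,\B{y}_r$ are \emph{disjoint} — row $j$ involves only $\B{y}_j$. I would argue by induction on $r$. Expand $\det(\B{H})$ along the last row:
\begin{flalign*}
\det(\B{H}) = \sum_{i=1}^{r} (-1)^{r+i}\, f_i(\B{y}_r)\, \det(\B{H}_i),
\end{flalign*}
where $\B{H}_i$ is the $(r-1)\times(r-1)$ minor obtained by deleting the last row and the $i$th column; crucially $\B{H}_i$ depends only on $\B{y}_1,\ldots,\B{y}_{r-1}$, not on $\B{y}_r$. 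Since the $f_i(\B{y}_r)$ are themselves $\BF_{2^m}$-linearly independent rational functions of the fresh variable block $\B{y}_r$ — here I use the hypothesis, but I must be slightly careful, see below — the sum above vanishes identically only if every cofactor $\det(\B{H}_i)$ vanishes identically. So it suffices to exhibit one $i$ with $\det(\B{H}_i)\not\equiv 0$. By the induction hypothesis (applied to the $r-1$ functions $\{f_\ell : \ell\ne i\}$ on $r-1$ variable blocks), $\det(\B{H}_i)\not\equiv 0$ provided $\{f_\ell : \ell\ne i\}$ are linearly independent; since any $r-1$ of $r$ linearly independent functions are again linearly independent, this holds for every $i$. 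The base case $r=1$ is trivial: $\det(\B{H})=f_1(\B{y}_1)\ne 0$ because a single linearly independent function is by definition nonzero.

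The one subtlety — and the main obstacle — is the step asserting that $\sum_i f_i(\B{y}_r)\det(\B{H}_i)\not\equiv 0$ when not all $\det(\B{H}_i)$ vanish. Here the coefficients $\det(\B{H}_i)$ are rational functions in the \emph{other} variables, not scalars, so "linear independence of the $f_i$ over $\BF_{2^m}$" does not literally apply. The clean fix is to view everything over the field $\BF_{2^m}(\B{y}_1,\ldots,\B{y}_{r-1})$: the $f_i(\B{y}_r)$, being linearly independent over $\BF_{2^m}$ and functions of a variable block $\B{y}_r$ \emph{algebraically independent} from $\B{y}_1,\ldots,\B{y}_{r-1}$, remain linearly independent over the larger field $\BF_{2^m}(\B{y}_1,\ldots,\B{y}_{r-1})$ — this is a standard fact (a $\BF_{2^m}$-linear relation among them, with coefficients in the larger field, can be cleared of denominators and then, by comparing coefficients of monomials in $\B{y}_r$, forces a $\BF_{2^m}$-linear relation, contradiction). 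With that lemma in hand the cofactor argument goes through verbatim, and the induction closes. I would state this field-extension fact explicitly as the pivotal sublemma, since it is the only place where real content (beyond bookkeeping) enters.
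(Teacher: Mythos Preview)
Your proof is correct. The paper does not actually prove this lemma; it simply cites Theorem~1 of \cite{Han2011}, so there is no in-paper argument to compare against. Your self-contained argument is sound: the ``only if'' direction is immediate, and for the ``if'' direction the cofactor expansion plus induction works once you justify the field-extension step. That step is indeed the only nontrivial point, and your justification can be made fully rigorous: after clearing a common denominator the $f_i(\B{y}_r)$ become polynomials in $\BF_{2^m}[\B{y}_r]$, and their $\BF_{2^m}$-linear independence means the matrix of their monomial coefficients has rank $r$; since matrix rank is invariant under field extension, they remain linearly independent over $\BF_{2^m}(\B{y}_1,\ldots,\B{y}_{r-1})$. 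With that in place the induction closes exactly as you describe.
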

\begin{proof}
See Theorem 1 of \cite{Han2011}.
\end{proof}

An important observation is that using the precoding matrices defined in Eq. (\ref{eq_v1})-(\ref{eq_v3}), all of the matrices involved in $\SR{B}'_1,\SR{B}'_2,\SR{B}'_3$ have the same form as $\B{H}$.
Specifically, each row of the matrix in $\SR{B}'_i$ is of the form: 
\begin{flalign}
\label{eq_row_bi}
\blue{(1 \quad \quad \quad \eta(\B{x}) \quad \quad \quad  \cdots \quad \quad \quad  \eta^n(\B{x}) \quad \quad \quad  p_i(\B{x}) \quad \quad \quad  \cdots \quad \quad \quad  p_i(\B{x})\eta^{n-1}(\B{x}))}
\end{flalign}
\blue{where for $1\le j \le n+1$, the $j$th element is $\eta^{j-1}(\B{x})$, and for $n+2\le j \le 2n+1$, the $j$th element is $p_i(\B{x})\eta^{j-n-2}(\B{x})$}.
Hence, using Lemma \ref{lemma_rational_independent}, we can quickly derive:

\begin{lemma}
\label{lemma_big_cond_0}
\blue{Assume that all the senders are connected to all the receivers via directed paths, and $\eta(\B{x})$ is non-constant.} 
Consider a PBNA $\lambda_n=(\xi,\B{V}_i:1\le i\le 3)$, where $\B{V}_i$ is defined in Eq. (\ref{eq_v1})-(\ref{eq_v3}).
$\lambda_n$ achieves the rate tuple $(\frac{n+1}{2n+1},\frac{n}{2n+1},\frac{n}{2n+1})$, if for each $1\le i\le 3$, the following condition is satisfied:\footnote{Notation: For two polynomials $f(x)$ and $g(x)$, let $\mygcd(f(x),g(x))$ denote their greatest common divisor, and $d_f$ the degree of $f(x)$.}
\begin{flalign}
\begin{split}
\label{eq_big_cond_0}
p_i(\B{x}) \notin \C{S}_n &= \bigg\{ \frac{f(\eta(\B{x}))}{g(\eta(\B{x}))}: f(z),g(z) \in \BF_q[z], f(z)g(z)\neq 0, \\
& \mygcd(f(z),g(z))=1, d_f\le n, d_g \le n-1 \bigg\}
\end{split}
\end{flalign}
\end{lemma}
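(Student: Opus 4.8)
The plan is to reduce the three rank conditions $\SR{B}'_1,\SR{B}'_2,\SR{B}'_3$ to the single scalar non-membership condition in Eq.~(\ref{eq_big_cond_0}) by exploiting the very special structure of the matrices involved. First I would recall that, under the precoding matrices of Eq.~(\ref{eq_v1})--(\ref{eq_v3}), every row of the $(2n+1)\times(2n+1)$ matrix in $\SR{B}'_i$ has exactly the form displayed in Eq.~(\ref{eq_row_bi}), i.e.\ it is a repetition of the fixed vector of rational functions
\begin{flalign*}
\big(1,\ \eta(\B{x}),\ \ldots,\ \eta^n(\B{x}),\ p_i(\B{x}),\ p_i(\B{x})\eta(\B{x}),\ \ldots,\ p_i(\B{x})\eta^{n-1}(\B{x})\big)
\end{flalign*}
evaluated at the independent coding-coefficient vectors $\B{x}^{(1)},\ldots,\B{x}^{(2n+1)}$ of the $2n+1$ time slots. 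Hence each such matrix is precisely an instance of the matrix $\B{H}$ of Lemma~\ref{lemma_rational_independent}, with $r=2n+1$ and $f_j$ being the $j$th entry of the above vector. By Lemma~\ref{lemma_rational_independent}, $\det$ of the matrix in $\SR{B}'_i$ is non-zero (as a rational function in $\xi$) if and only if the $2n+1$ functions $1,\eta,\ldots,\eta^n,p_i,p_i\eta,\ldots,p_i\eta^{n-1}$ are linearly independent over $\BF_q$. Combined with the Schwartz--Zippel argument already used in the proof of Lemma~\ref{lemma_na}, a non-zero determinant gives $\lim_{m\to\infty}P_{succ}=1$ and thus achievability of the rate tuple $(\frac{n+1}{2n+1},\frac{n}{2n+1},\frac{n}{2n+1})$; so it suffices to establish the linear independence of these $2n+1$ functions under the hypothesis $p_i(\B{x})\notin\C{S}_n$.

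The core step is therefore a purely algebraic claim about rational functions of $\B{x}$: the functions $\{\eta^j : 0\le j\le n\}\cup\{p_i\eta^j : 0\le j\le n-1\}$ are linearly independent over $\BF_q$ if and only if $p_i\notin\C{S}_n$. For the ``only if'' direction I would argue contrapositively: if $p_i = f(\eta)/g(\eta)$ with $\deg f\le n$, $\deg g\le n-1$, $fg\neq 0$, then clearing denominators yields $g(\eta)\,p_i = f(\eta)$, i.e.\ $\sum_{j=0}^{n-1} g_j (p_i\eta^j) - \sum_{j=0}^{n} f_j \eta^j = 0$, a non-trivial linear dependence among exactly the listed functions (non-trivial because $f\neq 0$ and $g\neq 0$). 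For the ``if'' direction, suppose a non-trivial relation $\sum_{j=0}^n a_j \eta^j + p_i\sum_{j=0}^{n-1} b_j \eta^j = 0$ holds. Since $\eta$ is non-constant (this is where the hypothesis on $\eta$ enters — it guarantees that $1,\eta,\ldots,\eta^n$ are themselves linearly independent, because $\eta$ is transcendental-like in the sense that no non-trivial polynomial in one indeterminate vanishes on it when $\eta$ is a non-constant rational function), the polynomial $B(z)=\sum b_j z^j$ is not identically zero — otherwise $A(\eta)=0$ would force $A\equiv 0$ too, contradicting non-triviality. Then $p_i = -A(\eta)/B(\eta)$ exhibits $p_i$ as a rational function of $\eta$; after dividing out $\gcd(A,B)$ and checking the degree bounds $\deg A\le n$, $\deg B\le n-1$ survive the reduction, we land exactly in $\C{S}_n$, contradiction. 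One subtlety to handle carefully is the degenerate case where the reduced numerator or denominator becomes zero or constant; I would note that $p_i(\B{x})\neq 0$ (it is a ratio of non-zero transfer functions, by the standing assumption that all senders reach all receivers) rules out the pathological subcases.

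I expect the main obstacle to be the rigorous justification that ``$\eta$ non-constant'' implies $1,\eta,\ldots,\eta^n$ are $\BF_q$-linearly independent and, more generally, that $p_i = A(\eta)/B(\eta)$ can be put in lowest terms \emph{without} violating the degree caps $d_f\le n$, $d_g\le n-1$ — in particular ensuring that when we cancel common factors of $A$ and $B$ we do not accidentally exit the regime $fg\neq 0$. This is essentially a statement about the field $\BF_q(\eta)$ sitting inside $\BF_q(\B{x})$ and uniqueness of the reduced-fraction representation of an element of $\BF_q(\eta)$; the degree bounds are preserved because cancellation only lowers degrees. Everything else — the identification with $\B{H}$, the invocation of Lemma~\ref{lemma_rational_independent}, and the Schwartz--Zippel/probability bookkeeping — is routine and parallels the proof of Lemma~\ref{lemma_na} verbatim.
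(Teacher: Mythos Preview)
Your approach is essentially the same as the paper's: identify the matrix in $\SR{B}'_i$ as an instance of $\B{H}$, invoke Lemma~\ref{lemma_rational_independent} to reduce full rank to linear independence of the functions in Eq.~(\ref{eq_row_bi}), and then apply Lemma~\ref{lemma_na}. The paper's proof is a two-line sketch that simply asserts ``if Eq.~(\ref{eq_big_cond_0}) is satisfied, the rational functions in Eq.~(\ref{eq_row_bi}) are linearly independent'' and cites the two lemmas; you have correctly unpacked the contrapositive argument (a non-trivial dependence forces $p_i=A(\eta)/B(\eta)$ with the right degree bounds, and $\eta$ non-constant makes $1,\eta,\ldots,\eta^n$ independent) that the paper leaves implicit. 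Your discussion of the ``only if'' direction and the degenerate cases is more than what the lemma actually requires, but it is correct and does no harm.
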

\begin{proof}
If Eq. (\ref{eq_big_cond_0}) is satisfied, the rational functions in Eq. (\ref{eq_row_bi}) are linearly independent.
Therefore, due to Lemma \ref{lemma_rational_independent}, condition $\SR{B}'_i$ is satisfied.
Hence, due to Lemma \ref{lemma_na}, $(\frac{n+1}{2n+1},\frac{n}{2n+1},\frac{n}{2n+1})$ is achieved by $\lambda_n$.
\end{proof}

Note that each rational function $\frac{f(\eta(\B{x}))}{g(\eta(\B{x}))}\in \C{S}_n$ represents a constraint on $p_i(\B{x})$, i.e., $p_i(\B{x})\neq \frac{f(\eta(\B{x}))}{g(\eta(\B{x}))}$, the violation of which invalidates the use of the PBNA for achieving the rate tuple $(\frac{n+1}{2n+1},\frac{n}{2n+1},\frac{n}{2n+1})$ through the precoding matrices defined in Eq. (\ref{eq_v1})-(\ref{eq_v3}).
Also note that Eq. (\ref{eq_big_cond_0}) only guarantees that PBNA achieves a symmetrical rate close to one half.
In order for each unicast session to asymptotically achieve a transmission rate of one half, we simply combine the conditions of Lemma \ref{lemma_big_cond_0} for all possible values of $n$, and get the following result:

\begin{theorem}
\label{th_big_cond_1}
\blue{Assume that all the senders are connected to all the receivers via directed paths}, and $\eta(\B{x})$ is non-constant. 
The symmetrical rate $\frac{1}{2}$ is asymptotically achievable through PBNA, if for each $1\le i\le 3$,
\begin{flalign}
\label{eq_big_cond_1}
\begin{split}
p_i(\B{x})\notin \C{S}' &= \bigg\{\frac{f(\eta(\B{x}))}{g(\eta(\B{x}))}: f(z),g(z)\in \BF_q[z], f(z)g(z)\neq 0, \\
& \quad \mygcd(f(z),g(z))=1 \bigg\}
\end{split}
\end{flalign}
\end{theorem}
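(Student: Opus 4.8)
The plan is to derive Theorem~\ref{th_big_cond_1} from Lemma~\ref{lemma_big_cond_0} by letting the number of symbol extensions tend to infinity. The first step is to show that the set $\C{S}'$ in Eq.~(\ref{eq_big_cond_1}) is exactly the increasing union $\bigcup_{n\ge 1}\C{S}_n$ of the sets of Lemma~\ref{lemma_big_cond_0}. One inclusion is trivial: each $\C{S}_n$ is obtained from $\C{S}'$ by imposing the extra degree constraints $d_f\le n$ and $d_g\le n-1$, so $\C{S}_n\subseteq\C{S}'$. For the reverse inclusion, take any element $\tfrac{f(\eta(\B{x}))}{g(\eta(\B{x}))}\in\C{S}'$ with $f,g$ nonzero and coprime, of degrees $a=d_f$ and $b=d_g$, and set $n=\max\{a,\,b+1\}$; then $d_f\le n$ and $d_g\le n-1$, so the element lies in $\C{S}_n$. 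Hence $\C{S}'=\bigcup_{n\ge1}\C{S}_n$, and the hypothesis ``$p_i(\B{x})\notin\C{S}'$ for $i=1,2,3$'' is equivalent to ``$p_i(\B{x})\notin\C{S}_n$ for every $n\ge1$ and every $i$.''

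With that equivalence in hand, I would apply Lemma~\ref{lemma_big_cond_0} for each fixed $n$: under the standing assumptions (all senders connected to all receivers via directed paths, $\eta(\B{x})$ non-constant) together with $p_i(\B{x})\notin\C{S}_n$ for $i=1,2,3$, the PBNA $\lambda_n=(\xi,\B{V}_i:1\le i\le 3)$ built from the precoding matrices of Eq.~(\ref{eq_v1})--(\ref{eq_v3}) achieves the rate tuple $\B{r}_n=\big(\tfrac{n+1}{2n+1},\tfrac{n}{2n+1},\tfrac{n}{2n+1}\big)$. This produces a sequence $(\lambda_n)_{n=1}^{\infty}$ of PBNA schemes for $(\C{G},\Omega)$, with $\lambda_n$ achieving $\B{r}_n$.

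To finish, I would observe that $\lim_{n\to\infty}\B{r}_n=\big(\tfrac12,\tfrac12,\tfrac12\big)$, so by the definition of asymptotic achievability through PBNA the rate tuple $\big(\tfrac12,\tfrac12,\tfrac12\big)$ --- i.e.\ the symmetric rate $\tfrac12$ --- is asymptotically achievable through PBNA, which is the assertion of the theorem.

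I do not expect a genuine obstacle here, since all the substantive work has already been packaged into Lemma~\ref{lemma_big_cond_0} (and, below it, Lemma~\ref{lemma_na}, Lemma~\ref{lemma_rational_independent}, and the fact that the matrices of Eq.~(\ref{eq_v1})--(\ref{eq_v3}) satisfy the alignment conditions and are full rank when $\eta(\B{x})$ is non-constant). The only point that needs a moment's care is the degree bookkeeping in the union identity --- in particular the off-by-one mismatch between the numerator bound $d_f\le n$ and the denominator bound $d_g\le n-1$, which is why the choice $n=\max\{a,b+1\}$ rather than $n=\max\{a,b\}$ is needed --- together with checking that every $\lambda_n$ in the constructed sequence is indeed a legitimate PBNA in the sense of Definition~\ref{def_pbna}, which holds by the remarks following Eq.~(\ref{eq_v3}).
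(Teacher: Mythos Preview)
Your proposal is correct and follows essentially the same route as the paper: invoke Lemma~\ref{lemma_big_cond_0} for each $n$ and pass to the limit. The paper's proof is even terser---it only uses the trivial inclusion $\C{S}_n\subseteq\C{S}'$ (so $p_i\notin\C{S}'$ forces $p_i\notin\C{S}_n$ for every $n$), whereas you additionally establish the reverse inclusion $\C{S}'=\bigcup_{n\ge1}\C{S}_n$; that equality is true but not needed for the argument to go through.
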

\begin{proof}
Consider the PBNA scheme $\lambda_n$ defined in Lemma \ref{lemma_big_cond_0}.
If Eq. (\ref{eq_big_cond_1}) is satisfied, Eq. (\ref{eq_big_cond_0}) is satisfied, and thus $\lambda_n$ achieves the rate tuple $(\frac{n+1}{2n+1},\frac{n}{2n+1},\frac{n}{2n+1})$.
Since $\lim_{n\rightarrow\infty}(\frac{n+1}{2n+1},\frac{n}{2n+1},\frac{n}{2n+1})=(\frac{1}{2}, \frac{1}{2},\frac{1}{2})$.
This implies that the symmetrical rate $\frac{1}{2}$ is asymptotically achievable through PBNA.
\end{proof}

\textit{Case II:} $\eta(\B{x})$ is constant, and thus $\B{T}$ is an identity matrix.
For this case, Eq. (\ref{eq_v1_align}) becomes trivial.
In fact, we set $n=1$, $s=0$, and $\B{BA}=\B{C}$, and hence Eq. (\ref{eq_v1_align}) can be satisfied by any \textit{arbitrary} $\B{V}_1$.
Specifically, we use the following precoding matrices:
\begin{flalign}
& \B{V}_1 = (\theta_1 \quad \theta_2)^T \label{eq_v1_1} \\
& \B{V}_2 = \B{M}_{13}\B{M}^{-1}_{23}(\theta_1 \quad \theta_2)^T \label{eq_v2_1} \\
& \B{V}_3 = \B{M}_{12}\B{M}^{-1}_{32}(\theta_1 \quad \theta_2)^T \label{eq_v3_1}
\end{flalign}
where $\theta_1,\theta_2$ are variables.
The above precoding matrices correspond to the configuration where $\B{A}=\B{B}=\B{C}=\B{I}_2$.
Using the above precoding matrices, $\SR{A}_1\sim \SR{A}_3$ all become equalities, \ie the interfering signals are perfectly aligned into a single linear space.
Meanwhile, using these precoding matrices, each row of the matrix in $\SR{B}'_i$ is of the following form:
\begin{flalign}
\label{eq_row_bi_1}
(\theta \quad p_i(\B{x})\theta)
\end{flalign}
Hence, using Lemma \ref{lemma_rational_independent}, we can quickly derive:

\begin{theorem}
\label{th_pbna_eta_constant_1}
\blue{Assume that all the senders are connected to all the receivers via directed paths}, and $\eta(\B{x})$ is constant.
Consider the PBNA scheme $\lambda=(\xi,\B{V}_i:1\le i\le 3)$, where the precoding matrices are defined in Eq. (\ref{eq_v1_1})-(\ref{eq_v3_1}).
Then $\lambda$ achieves the symmetrical rate $\frac{1}{2}$, if for each $1\le i\le 3$, $p_i(\B{x})$ is non-constant.
\end{theorem}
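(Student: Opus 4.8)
The plan is to verify the two families of conditions in Lemma~\ref{lemma_na} for the explicit scheme of Eq.~(\ref{eq_v1_1})--(\ref{eq_v3_1}) with $n=1$, $s=0$: first that the alignment conditions $\SR{A}_1\sim\SR{A}_3$ are met automatically, and then that the rank conditions $\SR{B}_1\sim\SR{B}_3$ reduce exactly to the non-constancy of $p_1(\B{x})$, $p_2(\B{x})$, $p_3(\B{x})$. As a preliminary remark: each $m_{ij}(\B{x})$ is a nonzero polynomial over $\BF_2$, so $\eta(\B{x})$ is a nonzero element of $\BF_2(\B{x})$; being constant, it must equal $1$. Hence the diagonal matrix $\B{T}$ is the $2\times 2$ identity, Eq.~(\ref{eq_v1_align}) becomes the trivial identity $\B{V}_1=\B{V}_1$, and an arbitrary $\B{V}_1=(\theta_1\ \ \theta_2)^T$ is admissible --- this is the configuration $\B{A}=\B{B}=\B{C}=\B{I}$.

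For the alignment conditions, I would substitute $\B{V}_2=\B{M}_{13}\B{M}_{23}^{-1}\B{V}_1$ and $\B{V}_3=\B{M}_{12}\B{M}_{32}^{-1}\B{V}_1$ into $\SR{A}_1\sim\SR{A}_3$ and use that the $\B{M}_{kl}$ are diagonal, hence pairwise commuting and invertible over $\BF_{2^m}(\xi)$. A direct computation gives $\B{M}_{21}\B{V}_2=\B{M}_{31}\B{V}_3$ (using $\B{T}=\B{I}$ here), $\B{M}_{32}\B{V}_3=\B{M}_{12}\B{V}_1$, and $\B{M}_{23}\B{V}_2=\B{M}_{13}\B{V}_1$; in each case the two $2\times 1$ vectors are literally equal, so $\SR{A}_1\sim\SR{A}_3$ hold (with equality) for every $\B{V}_1$. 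Since the alignment conditions hold in the standard form with $\B{A}=\B{B}=\B{C}=\B{I}$, the equivalence $\SR{B}_i\Leftrightarrow\SR{B}'_i$ derived before the theorem applies here verbatim.

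It remains to handle $\SR{B}'_1\sim\SR{B}'_3$. With $\B{A}=\B{C}=\B{I}$, the matrix in $\SR{B}'_i$ is $(\B{V}_1\ \ \B{P}_i\B{V}_1)$, a $2\times 2$ matrix whose $l$-th row is $(\theta_l\ \ p_i(\B{x}^{(l)})\theta_l)$, as recorded in Eq.~(\ref{eq_row_bi_1}); equivalently it is an instance of the matrix $\B{H}$ of Lemma~\ref{lemma_rational_independent} built from the pair of rational functions $(\theta,\ p_i(\B{x})\theta)$, its two rows using the disjoint variable sets $(\theta_1,\B{x}^{(1)})$ and $(\theta_2,\B{x}^{(2)})$. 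By Lemma~\ref{lemma_rational_independent}, its determinant is a nonzero element of $\BF_{2^m}(\xi)$ iff $\theta$ and $p_i(\B{x})\theta$ are linearly independent over $\BF_{2^m}$; cancelling the common factor $\theta$, this is equivalent to $1$ and $p_i(\B{x})$ being linearly independent, i.e. to $p_i(\B{x})$ being non-constant. Therefore, if every $p_i(\B{x})$ is non-constant, then $\SR{B}'_1\sim\SR{B}'_3$ --- hence $\SR{B}_1\sim\SR{B}_3$ --- hold, and Lemma~\ref{lemma_na} with $n=1$, $s=0$ yields that $\lambda$ achieves the rate tuple $(\tfrac12,\tfrac12,\tfrac12)$.

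There is no deep obstacle here; the effort is bookkeeping. The two points requiring care are (a) confirming that the alignment conditions hold with \emph{equality}, so that the earlier reduction of $\SR{B}_i$ to $\SR{B}'_i$ (which presupposes the alignment conditions) is legitimate in this case, and (b) correctly cancelling the invertible diagonal matrices from the column blocks of the rank-condition matrices so as to bring them into the row-form of Eq.~(\ref{eq_row_bi_1}); once this is done, Lemma~\ref{lemma_rational_independent} converts the rank question into the elementary statement that $p_i(\B{x})$ is non-constant.
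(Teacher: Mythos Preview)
Your proposal is correct and follows essentially the same approach as the paper: reduce the rank conditions $\SR{B}_i$ to the reformulated conditions $\SR{B}'_i$, observe that the rows of the relevant matrix have the form $(\theta,\ p_i(\B{x})\theta)$ as in Eq.~(\ref{eq_row_bi_1}), apply Lemma~\ref{lemma_rational_independent} to convert the full-rank question into linear independence of $\theta$ and $p_i(\B{x})\theta$, and conclude via Lemma~\ref{lemma_na}. The paper's proof is a two-line version of this; your write-up simply makes explicit the verification of the alignment conditions and the reason $\eta(\B{x})=1$, which the paper leaves implicit in the surrounding discussion.
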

\begin{proof}
If $p_i(\B{x})$ is not constant, the functions in Eq. (\ref{eq_row_bi_1}) are linearly independent, and therefore $\SR{B}'_i$ is satisfied due to Lemma \ref{lemma_rational_independent}.
Thus, $(\frac{1}{2}, \frac{1}{2}, \frac{1}{2})$ is achieved by $\lambda$ according to Lemma \ref{lemma_na}.
\end{proof}

As shown in the above theorem, if $\eta(\B{x})$ is constant, each unicast session can achieve one half rate in exactly two time slots by using PBNA.

\subsection{Coupling Relations and Achievability of PBNA \label{subsec_coupling_relation}} 

\begin{figure}[t]
\centering
\subfloat[$p_1(\B{x})=p_2(\B{x})=p_3(\B{x})=\eta(\B{x})=1$) \label{fig_coupling_relation_1}]{\includegraphics[width=4cm]{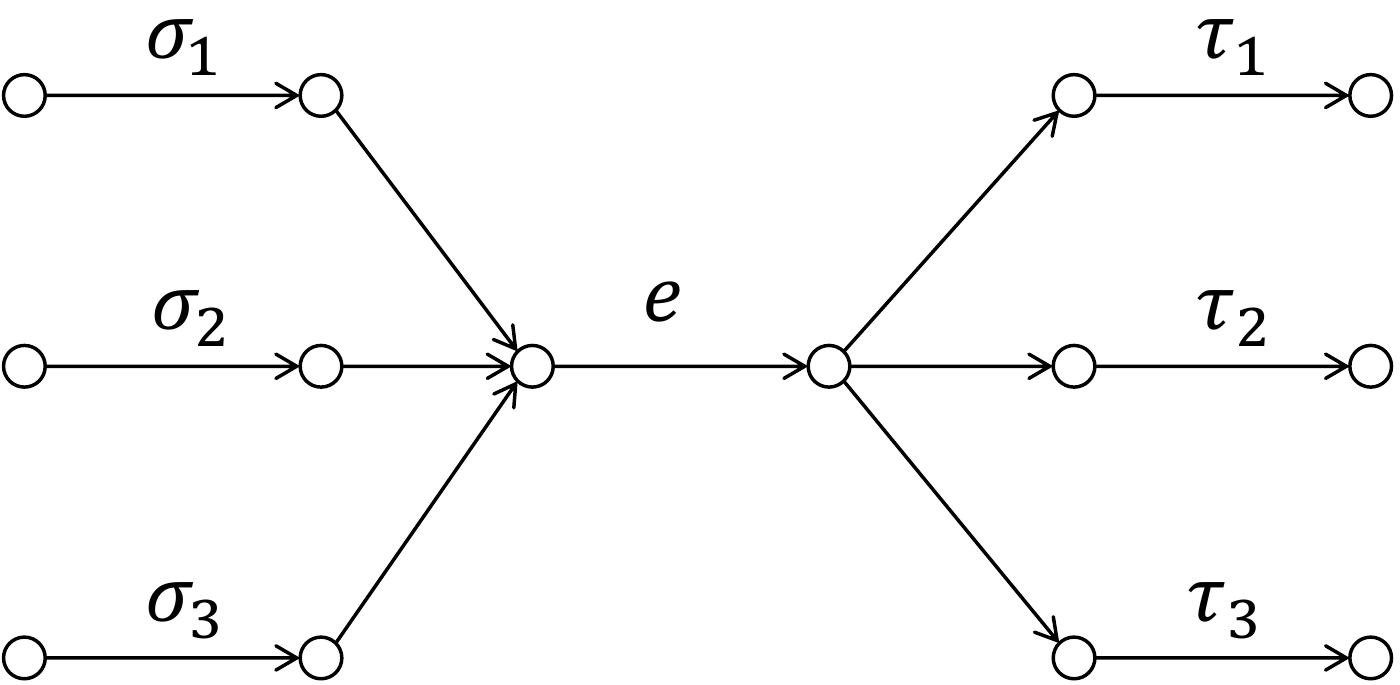}} \hspace*{10pt}
\subfloat[$p_1(\B{x})=\frac{\eta(\B{x})}{1+\eta(\B{x})}$ \label{fig_coupling_relation_2}]{\includegraphics[width=4cm]{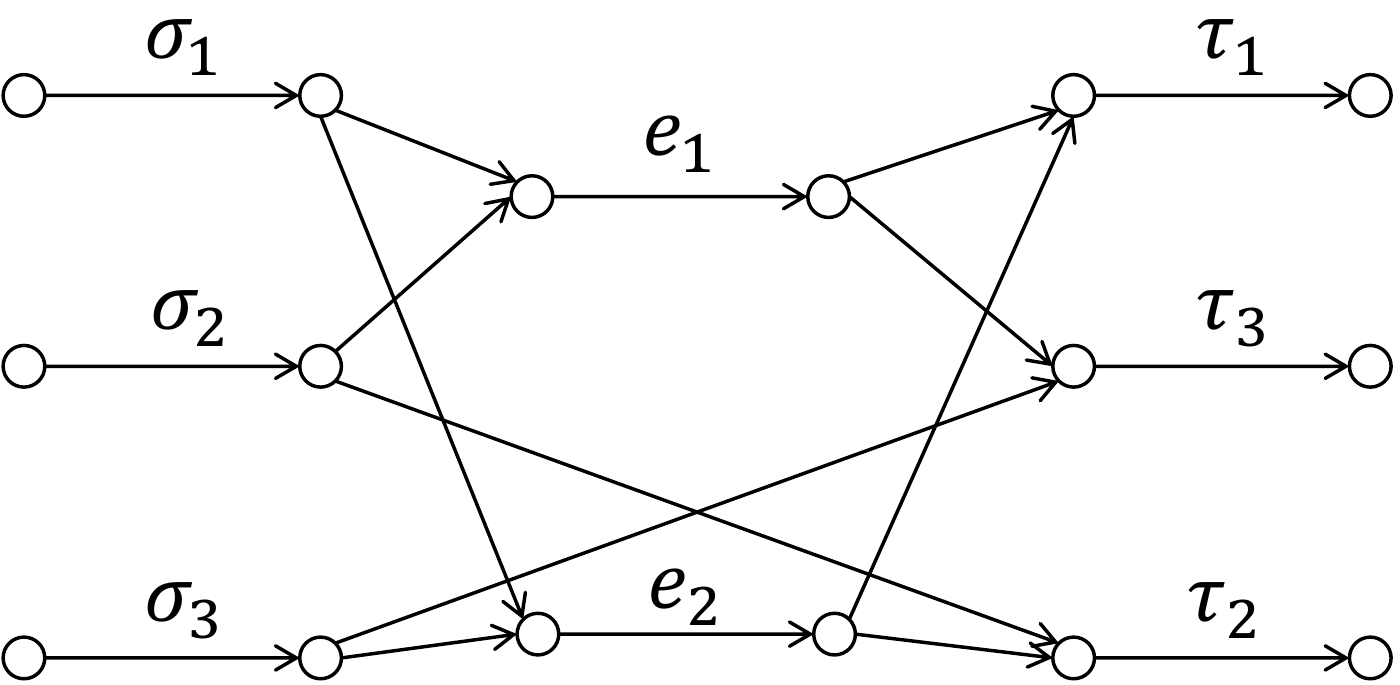}}
\caption{Examples of realizable coupling relations: The left network realizes the coupling relations $p_i(\B{x})=\eta(\B{x})=1$ such that the conditions of Theorem \ref{th_pbna_eta_constant_1} are violated; in the right network, $\eta(\B{x})\neq 1$, but $p_1(\B{x})=\frac{\eta(\B{x})}{1+\eta(\B{x})}$, which violates the conditions of Theorem \ref{th_big_cond_1}. \label{fig_pbna_coupled_relations}}
\end{figure}

In the previous section, we reformulated the achievability conditions of PBNA in terms of the functions $p_i(\B{x})$ and $\eta(\B{x})$.
One critical question is: What is the connection between the reformulated conditions and network topology?
We start by illustrating that through examples of networks whose structure violates these conditions.
Let's first consider the network shown in Fig. \ref{fig_coupling_relation_1}.
Due to the bottleneck $e$, it can be easily verified that $p_1(\B{x})=p_2(\B{x})=p_3(\B{x})=\eta(\B{x})=1$, and thus the conditions of Theorem \ref{th_pbna_eta_constant_1} are violated.
Moreover, consider the network shown in Fig. \ref{fig_coupling_relation_2}.
It can be easily verified that for this network, $\eta(\B{x})\neq 1$, and $p_1(\B{x})=\frac{\eta(\B{x})}{1+\eta(\B{x})}$.
Thus the conditions of Theorem \ref{th_big_cond_1} are violated.
Moreover, by exchanging $\sigma_1\leftrightarrow \sigma_2$ and $\tau_1\leftrightarrow \tau_2$, we obtain another example, where $p_2(\B{x})=1+\eta(\B{x})$, and thus the conditions of Theorem \ref{th_big_cond_1} are again violated.
While the key feature of the first example can be easily identified, it is not obvious what are the defining features of the second example.
Nevertheless, both examples demonstrate an important difference between networks and wireless interference channel: 
In networks, due to the internal structure of transfer functions, network topology might introduce dependence between different transfer functions, e.g., $p_1(\B{x})=1$ or $p_1(\B{x})=\frac{\eta(\B{x})}{1+\eta(\B{x})}$; 
in contrast, in wireless channel, channel gains are algebraically independent almost surely.

The above dependence relations can be seen as special cases of coupling relations, as defined below.
\blue{\begin{definition}
A \textit{coupling relation} is an equation in the following form:
\begin{flalign}
\label{eq_coupled_relation}
f(m_{i_1j_1}(\B{x}), m_{i_2j_2}(\B{x}), \cdots, m_{i_kj_k}(\B{x})) = 0
\end{flalign}
where $f(z_1, z_2, \cdots, z_k)$ is a polynomial in $\BF_{2^m}[z_1,\cdots,z_k]$, $1\le i_l,j_l\le 3$ for $1\le l\le k$.
If there exists a network $\C{G}$ such that the transfer functions $m_{i_1j_1}(\B{x}), m_{i_2j_2}(\B{x}), \cdots, m_{i_kj_k}(\B{x})$ satisfy the above equation, we say that the coupling relation Eq. (\ref{eq_coupled_relation}) is \textit{realizable}, or $\C{G}$ \textit{realizes} the coupling relation Eq. (\ref{eq_coupled_relation}).
\end{definition}}

As shown in Theorem \ref{th_big_cond_1}, each rational function $\frac{f(\eta(\B{x}))}{g(\eta(\B{x}))}\in \C{S}'$ represents a coupling relation $p_i(\B{x})=\frac{f(\eta(\B{x}))}{g(\eta(\B{x}))}$.

The existence of coupling relations greatly complicates the achievability problem of PBNA.
As shown previously, most of the coupling relations, such as $p_1(\B{x})=1$ and $p_1(\B{x})=\frac{\eta(\B{x})}{1+\eta(\B{x})}$, are harmful to PBNA, because their presence violates the conditions of Theorems \ref{th_big_cond_1} and \ref{th_pbna_eta_constant_1}.
The only exception is $\eta(\B{x})=1$, which does help simplify the construction of precoding matrices, and thus is beneficial to PBNA.
Indeed, as shown in Theorem \ref{th_pbna_eta_constant_1}, this coupling relation allows interferences to be perfectly aligned at each receiver, and each unicast session can achieve one half rate in exactly two time slots.
Unfortunately, as we will see in Section \ref{sec_check}, this coupling relation requires that the network possesses particular structures, which are absent in most networks.
For this reason, we will mainly focus on the case $\eta(\B{x})\neq 1$, which is applicable for most networks.

One interesting observation is that not all coupling relations are realizable.
For example, consider the coupling relation $p_1(\B{x})=\eta^3(\B{x})$, where both $p_1(\B{x})$ and $\eta(\B{x})$ are non-constants.
Let $p_1(\B{x})=\frac{u(\B{x})}{v(\B{x})}$, $\eta(\B{x})=\frac{s(\B{x})}{t(\B{x})}$ denote the \textit{unique forms} \footnote{For a non-zero rational function $h(\B{y})\in \BF_q(\B{y})$, its unique form is defined as $h(\B{y})=\frac{f(\B{y})}{g(\B{y})}$, where $f(\B{y}),g(\B{y})\in \BF_q[\B{y}]$ and $\mygcd(f(\B{y}),g(\B{y}))=1$.} of $p_1(\B{x})$ and $\eta(\B{x})$ respectively.
Consider a coding variable $x_{ee'}$ that appears in both $\frac{u(\B{x})}{v(\B{x})}$ and $\frac{s(\B{x})}{t(\B{x})}$.
Because the maximum degree of each coding variable in a transfer function is at most one, according to Eq. (\ref{eq_p_eta}), the maximum of the degrees of $x_{ee'}$ in $u(\B{x})$ and $v(\B{x})$ is at most two.
However, it can be easily seen that the maximum of the degrees of $x_{ee'}$ in $s^3(\B{x})$ and $t^3(\B{x})$ is at least three.
Therefore, it is impossible that $p_1(\B{x})=\eta^3(\B{x})$.
This example suggests that there exists significant redundancy in the conditions of Theorem \ref{th_big_cond_1}.
More formally, it raises the following important question:

\textbf{Q1}: Which coupling relations $p_i(\B{x})=\frac{f(\eta(\B{x}))}{g(\eta(\B{x}))}\in \C{S}'$ are realizable?

The answer to this question allows us to reduce the set $\C{S}'$ defined in Theorem \ref{th_big_cond_1} to its minimal size.
For $i=1,2,3$, we define the following set, which represents the minimal set of coupling relations we need to consider:
\begin{flalign}
\begin{split}
\C{S}'_i = & \bigg\{ \frac{f(\eta(\B{x}))}{g(\eta(\B{x}))} \in \C{S}' : \text{$p_i(\B{x})=\frac{f(\eta(\B{x}))}{g(\eta(\B{x}))}$ is realizable} \bigg\}
\end{split}
\end{flalign}
Then the next important question is:

\textbf{Q2}: Given $p_i(\B{x})=\frac{f(\eta(\B{x}))}{g(\eta(\B{x}))}\in \C{S}'_i$, what are the defining features of the networks for which this coupling relation holds?

As we will see in the rest of this paper, the answers to Q1 and Q2 both lie in a deeper understanding of the properties of transfer functions.
Intuitively, because each transfer function is defined on a graph, it usually possesses special properties.
The graph-related properties not only allow us to reduce $\C{S}'$ to the minimal set $\C{S}'_i$, but also enable us to identify the defining features of the networks which realize the coupling relations represented by $\C{S}'_i$.

In the derivation of Theorem \ref{th_big_cond_1}, we only consider the precoding matrices defined in Eq. (\ref{eq_v1})-(\ref{eq_v3}).
However, the choices of precoding matrices are not limited to these matrices.
In fact, as we will see in Section \ref{sec_feasibility}, given different $\B{A},\B{B}$, and $\B{C}$, we can derive different precoding matrix $\B{V}_1$ such that Eq. (\ref{eq_v1_align}) is satisfied.
This raises the following interesting question:

\textbf{Q3}: Assume some coupling relation $p_i(\B{x})=\frac{f(\eta(\B{x}))}{g(\eta(\B{x}))}\in \C{S}'_i$ is present in the network. Is it still possible to utilize PBNA via other precoding matrices instead of those defined in Eq. (\ref{eq_v1})-(\ref{eq_v3})?

As we will see in Section \ref{sec_feasibility}, the answer to this question is negative.
The basic idea is that each precoding matrix $\B{V}_1$ that satisfies Eq. (\ref{eq_v1_align}) can be transformed into the precoding matrix in Eq. (\ref{eq_v1}) through a transform equation $\B{V}^*_1 = \B{G}^{-1}\B{V}_1\B{F}^{-1}$, where $\B{G}$ is a diagonal matrix and $\B{F}$ a full-rank matrix (See Lemma \ref{lemma_v1}).
Using this transform equation, we can prove that if the precoding matrices cannot be used due to the presence of a coupling relation, then any precoding matrices cannot be used.

\section{Overview of Main Results \label{sec_main}}

In this section, we state our main results. Proofs are deferred to Sections \ref{sec_feasibility} and \ref{sec_rates}, and Appendices.

\subsection{Sufficient and Necessary Conditions for PBNA to Achieve Symmetrical Rate $\frac{1}{2}$}

Since the construction of $\B{V}_1$ depends on whether $\eta(\B{x})$ is constant, we distinguish  two cases.  

\subsubsection{$\eta(\B{x})$ Is Not Constant \label{subsec_main_eta_nontrivial}}

\begin{theorem}[The Main Theorem]
\label{th_main}
Assume that \blue{all the senders are connected to all the receivers via directed paths}, and $\eta(\B{x})$ is not constant. The three unicast sessions can asymptotically achieve the rate tuple $(\frac{1}{2}, \frac{1}{2}, \frac{1}{2})$ through PBNA if and only if the following conditions are satisfied:
\begin{flalign}
\begin{split}
\label{eq_small_cond_1}
& m_{11}(\B{x}) \neq \frac{m_{13}(\B{x})m_{21}(\B{x})}{m_{23}(\B{x})}, \frac{m_{12}(\B{x})m_{31}(\B{x})}{m_{32}(\B{x})},  \\
& \hspace{1.7cm} \frac{m_{13}(\B{x})m_{21}(\B{x})}{m_{23}(\B{x})} + \frac{m_{12}(\B{x})m_{31}(\B{x})}{m_{32}(\B{x})}
\end{split} 
\end{flalign}
\begin{flalign}
\begin{split}
\label{eq_small_cond_2}
& m_{22}(\B{x}) \neq \frac{m_{12}(\B{x})m_{23}(\B{x})}{m_{13}(\B{x})}, \frac{m_{32}(\B{x})m_{21}(\B{x})}{m_{31}(\B{x})},  \\
& \hspace{1.7cm} \frac{m_{12}(\B{x})m_{23}(\B{x})}{m_{13}(\B{x})} + \frac{m_{32}(\B{x})m_{21}(\B{x})}{m_{31}(\B{x})}
\end{split} 
\end{flalign}
\begin{flalign}
\begin{split}
\label{eq_small_cond_3}
& m_{33}(\B{x}) \neq \frac{m_{23}(\B{x})m_{31}(\B{x})}{m_{21}(\B{x})}, \frac{m_{13}(\B{x})m_{32}(\B{x})}{m_{12}(\B{x})},  \\
& \hspace{1.7cm} \frac{m_{23}(\B{x})m_{31}(\B{x})}{m_{21}(\B{x})} + \frac{m_{13}(\B{x})m_{32}(\B{x})}{m_{12}(\B{x})}
\end{split}
\end{flalign}
\end{theorem}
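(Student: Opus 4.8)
The plan is to reduce the theorem to the conditions already established in Theorem~\ref{th_big_cond_1} and its necessary-condition counterpart, by showing that the abstract membership statement $p_i(\B{x})\notin\C{S}'$ is equivalent to the explicit list of three polynomial inequalities in \eqref{eq_small_cond_1}--\eqref{eq_small_cond_3}. In other words, the heart of the matter is answering question \textbf{Q1}: among all rational functions $\frac{f(\eta(\B{x}))}{g(\eta(\B{x}))}\in\C{S}'$, exactly which ones are realizable as $p_i(\B{x})$, i.e. what is the minimal set $\C{S}'_i$? Concretely, I expect to prove $\C{S}'_i = \{\, \eta(\B{x}),\ 1-\eta(\B{x}),\ \tfrac{\eta(\B{x})}{\eta(\B{x})-1}\,\}$ (up to the sign conventions in characteristic $2$, where $1-\eta=1+\eta$), so that, after translating through the definitions \eqref{eq_p_eta} of $p_i$ and $\eta$ and clearing denominators, the condition $p_1(\B{x})\notin\C{S}'$ becomes precisely the three inequalities in \eqref{eq_small_cond_1}, and symmetrically for $p_2,p_3$ giving \eqref{eq_small_cond_2}--\eqref{eq_small_cond_3}.

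The sufficiency direction is then immediate: if \eqref{eq_small_cond_1}--\eqref{eq_small_cond_3} hold, then $p_i(\B{x})\notin\C{S}'_i$, and since $\C{S}'_i$ already contains every realizable element of $\C{S}'$, the honest membership $p_i(\B{x})\in\C{S}'$ is impossible; hence the hypothesis of Theorem~\ref{th_big_cond_1} is met and the rate tuple $(\tfrac12,\tfrac12,\tfrac12)$ is asymptotically achievable through PBNA. For necessity, I would argue the contrapositive: if one of the inequalities fails, say $m_{11}(\B{x})$ equals one of the three listed expressions, then the corresponding coupling relation $p_1(\B{x})=\frac{f(\eta(\B{x}))}{g(\eta(\B{x}))}$ with $f,g$ of low degree holds in the network, which destroys the rank condition $\SR{B}'_1$ not just for the Cadambe--Jafar precoding matrices but for \emph{every} valid precoding matrix. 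This last step is exactly \textbf{Q3}, and it relies on the transformation lemma mentioned in the excerpt (the claimed Lemma~\ref{lemma_v1}): any $\B{V}_1$ solving the alignment equation \eqref{eq_v1_align} is related to $\B{V}^*_1$ by $\B{V}^*_1=\B{G}^{-1}\B{V}_1\B{F}^{-1}$ with $\B{G}$ diagonal and $\B{F}$ full rank, so rank deficiency is preserved across the change of precoding matrix.

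The main obstacle is the realizability classification, i.e. proving both directions of $\C{S}'_i = \{\eta,\,1+\eta,\,\tfrac{\eta}{\eta+1}\}$. The ``$\supseteq$'' inclusion is a construction problem --- one must exhibit networks (like those in Fig.~\ref{fig_pbna_coupled_relations}) realizing each of the three relations, which is routine. The hard ``$\subseteq$'' inclusion --- that \emph{no other} $\frac{f(\eta)}{g(\eta)}$ is realizable --- is where the degree-counting and graph-theoretic structure of transfer functions must be brought to bear. The degree argument sketched in the excerpt for $p_1=\eta^3$ generalizes: writing $p_i=\frac{u(\B{x})}{v(\B{x})}$ and $\eta=\frac{s(\B{x})}{t(\B{x})}$ in unique (coprime) form, every coding variable $x_{ee'}$ has degree at most $2$ in $u$ and in $v$ (since it has degree $\le 1$ in each transfer function), whereas in $f(\eta)/g(\eta)=\frac{f^*(s,t)}{g^*(s,t)}$ a shared variable of degree $1$ in $s,t$ acquires degree roughly $\max(d_f,d_g)$; so $\deg f,\deg g\le 2$. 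That alone leaves a short finite list of candidates $\frac{f(z)}{g(z)}$ with $\deg\le 2$, coprime, and one must then rule out all of them except the three survivors using finer graph properties of the transfer functions $m_{ij}$ --- in particular the structural relations among numerators and denominators implied by common subpaths/cut-sets (the properties attributed to Han et al.\ and developed in Section~\ref{sec_feasibility}). I expect the bulk of the work, and most of the appendix, to be this case elimination; the reduction of the theorem to it is comparatively short.
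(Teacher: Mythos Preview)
Your high-level architecture matches the paper's: sufficiency by shrinking $\C{S}'$ to a finite minimal set, necessity via the transformation lemma (Lemma~\ref{lemma_v1}). But two concrete things are off.

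First, your claimed minimal set is wrong. Translating \eqref{eq_small_cond_1}--\eqref{eq_small_cond_3} through the definitions gives $\C{S}'_1 = \{1,\,\eta(\B{x}),\,\tfrac{\eta(\B{x})}{1+\eta(\B{x})}\}$ and $\C{S}'_2 = \C{S}'_3 = \{1,\,\eta(\B{x}),\,1+\eta(\B{x})\}$: the constant $1$ belongs to every list (it is the first inequality in each block), and the third element differs between $i=1$ and $i\in\{2,3\}$. Both $1+\eta(\B{x})$ and $\tfrac{\eta(\B{x})}{1+\eta(\B{x})}$ survive the degree and case analysis; the paper's Step~3 then invokes a separate result from \cite{Han2011} to show $p_1(\B{x})\neq 1+\eta(\B{x})$ while $p_2(\B{x}),p_3(\B{x})\neq\tfrac{\eta(\B{x})}{1+\eta(\B{x})}$, producing the asymmetric split you did not anticipate.

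Second, your degree argument is too coarse and its premise unjustified. You assume a shared variable has degree $1$ in the coprime form $s/t$ of $\eta$, but the raw product $m_{13}m_{21}m_{32}$ allows degree up to $3$, and after cancellation there is no a priori control. Even granting your conclusion $d_f,d_g\le 2$, the remaining case list would be large. The paper instead uses the \emph{Linearization Property} (Lemma~\ref{lemma_linearization}): for nonconstant $p_i=\tfrac{u}{v}$ one can specialize all variables but one $x_{ee'}$ so that $(u,v)$ becomes $(c_1x_{ee'}+c_0,\,c_2)$ or its swap. Coupled with Lemma~\ref{lemma_relative_prime_multivar} (so that $\alpha,\beta$ remain coprime after the substitution), this forces $\alpha(x_{ee'})$ linear and $\beta(x_{ee'})$ constant, and a short case analysis on the degrees of $s(x_{ee'}),t(x_{ee'})$ then yields $\max(d_f,d_g)=1$ directly. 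The subsequent elimination of the M\"obius candidates down to $\{1,\eta,1+\eta,\tfrac{\eta}{1+\eta}\}$ hinges on the \emph{Square-Term Property} (Lemma~\ref{lemma_square_term}): the $x_{ee'}^2$ coefficients in $m_{ab}m_{pq}$ and in $m_{aq}m_{pb}$ coincide, which after the same linearization kills each bad candidate by producing an unmatchable quadratic term on one side of the putative identity. Neither of these two graph lemmas appears in your sketch, and they are where the real work lies.
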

\begin{proof}
See Appendix \ref{app_proof_feasibility}.
\end{proof}

Eq. (\ref{eq_small_cond_1})-(\ref{eq_small_cond_3}) can be reformulated into the following equivalent conditions:
\begin{flalign}
& p_1(\B{x}) \notin \C{S}'_1 = \bigg\{1, \eta(\B{x}), \frac{\eta(\B{x})}{1+\eta(\B{x})} \bigg\} \label{eq_small_cond_1_v2} \\
& p_2(\B{x}) \notin \C{S}'_2 = \{1, \eta(\B{x}), 1+\eta(\B{x}) \} \label{eq_small_cond_2_v2} \\
& p_3(\B{x}) \notin \C{S}'_3 = \{1, \eta(\B{x}), 1+\eta(\B{x}) \} \label{eq_small_cond_3_v2}
\end{flalign}

Note that in Theorem \ref{th_main}, we reduce the conditions of Theorem \ref{th_big_cond_1} to its minimal size, such that each $\C{S}'_i$ as defined in Eq. (\ref{eq_small_cond_1_v2})-(\ref{eq_small_cond_3_v2}) represents the minimal set of coupling relations that are realizable.
Moreover, as we will see later, each of these coupling relations has a unique interpretation in terms of the network topology.
The interpretations further provide polynomial-time algorithms to check the existence of these coupling relations.

The conditions of the Main Theorem can be understood from the perspective of the interference channel.
As shown in Section \ref{subsec_pbna}, under linear network coding, the network behaves as a 3-user wireless interference channel, where the channel coefficients $m_{ij}(\B{x})$ are all non-zeros. 
Let $\B{H}$ denote the matrix with the $(i,j)$-element being $m_{ij}(\B{x})$.
It is easy to see that the first two inequalities in Eq. (\ref{eq_small_cond_1})-(\ref{eq_small_cond_3}) can be rewritten as $M_{kl}(\B{H})\neq 0$ for some $k \neq l$, where $M_{kl}(\B{H})$ denotes the $(k,l)$-Minor of $\B{H}$.
For example, $m_{11}(\B{x}) \neq \frac{m_{13}(\B{x})m_{21}(\B{x})}{m_{23}(\B{x})}$ is equivalent to $M_{32}(\B{H})\neq 0$, and $m_{11}(\B{x}) \neq \frac{m_{12}(\B{x})m_{31}(\B{x})}{m_{32}(\B{x})}$ is equivalent to $M_{23}(\B{H})\neq 0$.
Suppose that there exists $M_{kl}(\B{H})=0$ for some $k \neq l$.
For such a channel, it is known that the sum-rate achieved by the three unicast sessions cannot be more than 1 in the information theoretical sense (see Lemma 1 of \cite{separable2009}), i.e., no precoding-based linear scheme can achieve a rate beyond 1/3 per user. 
Therefore, given that all senders are connected to all receivers, the condition $M_{kl}(\B{H})\neq 0$ is information theoretically necessary for achievable rate 1/2 per session.
Hence, the first two inequalities of Eq. (\ref{eq_small_cond_1})-(\ref{eq_small_cond_3}) are simply the information theoretic necessary conditions, so they must hold for any precoding-based linear schemes.

\subsubsection{$\eta(\B{x})$ Is Constant \label{subsec_main_eta_trivial}}

In  this case, we can choose $\B{V}_1$ freely by setting $\B{BA}=\B{C}$.
As stated in the following theorem, each unicast session can achieve one half rate in exactly two time slots.

\begin{theorem}
\label{th_na_eta_trivial}
Assume that \blue{all the senders are connected to all the receivers via directed paths}, and $\eta(\B{x})$ is constant. 
The three unicast sessions can achieve the rate tuple $(\frac{1}{2}, \frac{1}{2}, \frac{1}{2})$ in exactly two time slots through PBNA if and only if the following conditions are satisfied:
\begin{flalign}
\begin{split}
\label{eq_small_cond2_1}
& m_{11}(\B{x}) \neq \frac{m_{13}(\B{x})m_{21}(\B{x})}{m_{23}(\B{x})}
\end{split} 
\end{flalign}
\begin{flalign}
\begin{split}
\label{eq_small_cond2_2}
& m_{22}(\B{x}) \neq \frac{m_{12}(\B{x})m_{23}(\B{x})}{m_{13}(\B{x})}
\end{split} 
\end{flalign}
\begin{flalign}
\begin{split}
\label{eq_small_cond2_3}
& m_{33}(\B{x}) \neq \frac{m_{23}(\B{x})m_{31}(\B{x})}{m_{21}(\B{x})}
\end{split}
\end{flalign}
\end{theorem}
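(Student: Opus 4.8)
The plan is to reduce Theorem~\ref{th_na_eta_trivial} to the sufficiency statement of Theorem~\ref{th_pbna_eta_constant_1} by recasting the three inequalities \eqref{eq_small_cond2_1}--\eqref{eq_small_cond2_3} in terms of the functions $p_i(\B{x})$ of \eqref{eq_p_eta}. Clearing denominators (legitimate because every transfer function is non-zero under the standing assumption), \eqref{eq_small_cond2_1} reads $m_{11}(\B{x})m_{23}(\B{x})\neq m_{13}(\B{x})m_{21}(\B{x})$, i.e. $p_1(\B{x})\neq 1$; similarly \eqref{eq_small_cond2_2}$\,\Leftrightarrow\,p_2(\B{x})\neq 1$ and \eqref{eq_small_cond2_3}$\,\Leftrightarrow\,p_3(\B{x})\neq 1$. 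Moreover each $p_i(\B{x})$ is a quotient of non-zero polynomials of $\BF_2[\B{x}]$ (products of transfer functions, each a sum of path-monomials with coefficient $1$), hence a non-zero element of $\BF_2(\B{x})$; were it constant, comparing the coefficient of any monomial common to its numerator and denominator would force that constant to be $1$. Thus, when $\eta(\B{x})$ is constant, \eqref{eq_small_cond2_1}--\eqref{eq_small_cond2_3} are jointly equivalent to ``$p_i(\B{x})$ is non-constant for $i=1,2,3$.'' The ``if'' direction then follows at once: under \eqref{eq_small_cond2_1}--\eqref{eq_small_cond2_3} all three $p_i(\B{x})$ are non-constant, and Theorem~\ref{th_pbna_eta_constant_1}, whose precoding matrices \eqref{eq_v1_1}--\eqref{eq_v3_1} use exactly $N=2$ symbol extensions, produces a PBNA achieving $(\frac{1}{2},\frac{1}{2},\frac{1}{2})$ in exactly two time slots.

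For the ``only if'' direction, let $\lambda=(\xi,\B{V}_i:1\le i\le 3)$ be a PBNA with $N=2$ symbol extensions that achieves $(\frac{1}{2},\frac{1}{2},\frac{1}{2})$. Since $2n+s=2$ with $s\in\{0,1\}$ and $n\in\BB{Z}_{>0}$, necessarily $n=1$ and $s=0$, so $\B{V}_1,\B{V}_2,\B{V}_3$ are $2\times1$ and $\B{A},\B{B},\B{C}$ are non-zero scalars. I argue the contrapositive: suppose one of \eqref{eq_small_cond2_1}--\eqref{eq_small_cond2_3} fails. Take for instance $p_1(\B{x})=1$, i.e. $\B{M}_{11}=\B{M}_{13}\B{M}_{21}\B{M}_{23}^{-1}$ as diagonal matrices. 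Every PBNA satisfies the alignment conditions, equivalently $\SR{A}'_1$--$\SR{A}'_3$; using $\SR{A}'_3$ (namely $\B{M}_{23}\B{V}_2=\B{M}_{13}\B{V}_1\B{C}$) and the commutativity of diagonal matrices,
\[
\B{M}_{11}\B{V}_1\B{C}=\B{M}_{21}\B{M}_{23}^{-1}(\B{M}_{13}\B{V}_1\B{C})=\B{M}_{21}\B{V}_2 ,
\]
so $\myspan(\B{M}_{11}\B{V}_1)=\myspan(\B{M}_{21}\B{V}_2)$ ($\B{C}$ being a non-zero scalar) and $\myrank(\B{M}_{11}\B{V}_1\ \ \B{M}_{21}\B{V}_2)\le 1$, i.e. $\SR{B}_1$ fails. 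By symmetry, $p_2(\B{x})=1$ gives $\B{M}_{22}\B{V}_2=\B{M}_{12}\B{V}_1\B{C}$ via $\SR{A}'_3$ and hence the failure of $\SR{B}_2$, and $p_3(\B{x})=1$ gives $\B{M}_{33}\B{V}_3=\B{M}_{13}\B{V}_1\B{C}\B{A}^{-1}$ via $\SR{A}'_1$ and $\SR{A}'_3$ and hence the failure of $\SR{B}_3$.

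It remains to see that the failure of $\SR{B}_i$ actually prevents $\lambda$ from achieving the rate, and this is the only place where care is needed, since Lemma~\ref{lemma_na} gives $\SR{B}_1$--$\SR{B}_3$ only as sufficient conditions; I must supply the converse in the regime $N=2$, $n=1$, $s=0$. Here it is elementary: at $d_i$ the received vector lies in $\BF_{2^m}^2$, its two interference columns $\B{M}_{ji}\B{V}_j$ ($j\neq i$) are non-zero (as $\B{M}_{ji}$ is invertible and $\B{V}_j$ has rank $1$) and collinear by $\SR{A}_i$, so $d_i$ can recover its single desired symbol for some realization of $\xi$ only if the $2\times2$ matrix formed by its desired column and one interference column is non-singular over $\BF_{2^m}(\xi)$ --- which is precisely $\SR{B}_i$. (Equivalently, one may invoke the stated equivalence $\SR{B}_i\Leftrightarrow\SR{B}'_i$: with $n=1$, $\SR{B}'_i$ reads $\myrank(\B{V}_1\ \ \B{P}_i\B{V}_1\B{C})=2$, which is impossible once $p_i(\B{x})=1$ forces $\B{P}_i=\B{I}_2$, since then $\B{P}_i\B{V}_1\B{C}$ is a scalar multiple of $\B{V}_1$.) Hence if any of \eqref{eq_small_cond2_1}--\eqref{eq_small_cond2_3} fails, some $d_i$ never decodes, so $P_{succ}=0$ and $\lambda$ does not achieve $(\frac{1}{2},\frac{1}{2},\frac{1}{2})$, which is the contrapositive we wanted. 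The identities in the middle paragraph and the coefficient argument in the first are routine; the substantive step is pinning down the necessity of $\SR{B}_i$.
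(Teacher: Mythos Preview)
Your proof is correct and follows essentially the same route as the paper's: sufficiency via Theorem~\ref{th_pbna_eta_constant_1}, and necessity by observing that $p_i(\B{x})=1$ forces $\B{P}_i=\B{I}$ so that the reformulated rank condition $\SR{B}'_i$ (equivalently $\SR{B}_i$) collapses. You are in fact more careful than the paper on two points it leaves implicit --- the equivalence ``$p_i$ non-constant $\Leftrightarrow p_i\neq 1$'' (needed to match the hypothesis of Theorem~\ref{th_pbna_eta_constant_1} to \eqref{eq_small_cond2_1}--\eqref{eq_small_cond2_3}), and the argument that failure of $\SR{B}_i$ genuinely precludes decoding rather than merely blocking the particular decoder of Lemma~\ref{lemma_na}.
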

\begin{proof}
See Section \ref{subsec_pbna_eta_trivial}.
\end{proof}
%\textcolor{blue}{Chun, you should probable edit the proof of this theorem to reflect the changes to the way the conditions are portrayed. }\\
%\textcolor{blue}{
%Essentially, when $\eta(\B{x})$ is constant, the set of conditions in Theorem  reduces to the six basic conditions shown above.
%}

Eq. (\ref{eq_small_cond2_1})-(\ref{eq_small_cond2_3}) can be reformulated into the following equivalent conditions:
\begin{flalign*}
p_i(\B{x}) \neq 1 \quad \forall 1\le i\le 3
\end{flalign*}

\subsection{Topological Interpretations of the Feasibility Conditions}

As we have seen, the following coupling relations are important for the achievability of PBNA: 1) $\eta(\B{x})=1$; 2) $p_i(\B{x})=1$ and $p_i(\B{x})=\eta(\B{x})$ where $i=1,2,3$; 3) $p_1(\B{x})=\frac{\eta(\B{x})}{1+\eta(\B{x})}$, $p_i(\B{x})=1+\eta(\B{x})$, where $i=2,3$.
As we will see, the networks that realize these coupling relations have special topological properties.
We defer all the proofs to Appendix \ref{app_proof_interpret}.

We assume that all the edges in $E$ are arranged in a topological ordering such that if $\head(e)=\tail(e')$, $e$ must precede $e'$ in this ordering.

\begin{definition}
\label{def_bottleneck}
\blue{Given two subsets of edges $S$ and $D$, we define an edge $e$ as a \textit{bottleneck} between $S$ and $D$ if the removal of $e$ will disconnect every directed path from $S$ to $D$.}
\end{definition}

Given $1\le i, j, k\le 3$, let $\alpha_{ijk}$ denote the last bottleneck between $\sigma_i$ and $\{\tau_j,\tau_k\}$ in this topological ordering, and $\beta_{ijk}$ the first bottleneck between $\{\sigma_j,\alpha_{ijk}\}$ and $\tau_k$.

\begin{figure}[t]
\centering
\subfloat[$\alpha_{213}$ and $\beta_{213}$]{\includegraphics[width=3.5cm]{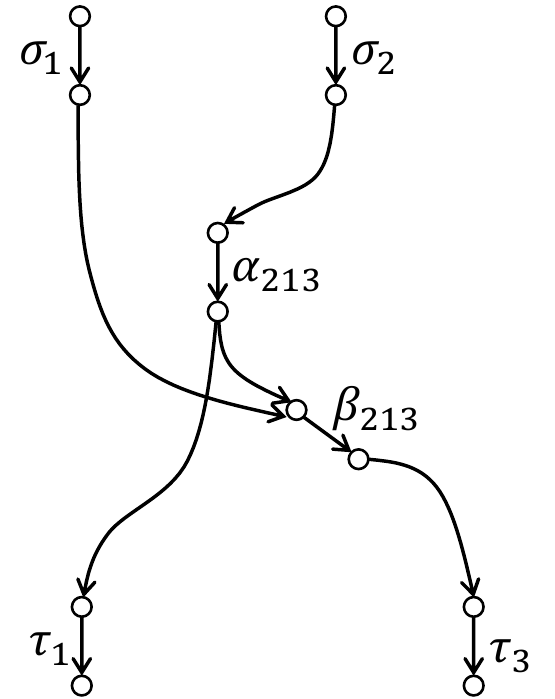}} \hspace{10pt}
\subfloat[$\alpha_{312}$ and $\beta_{312}$]{\includegraphics[width=3.5cm]{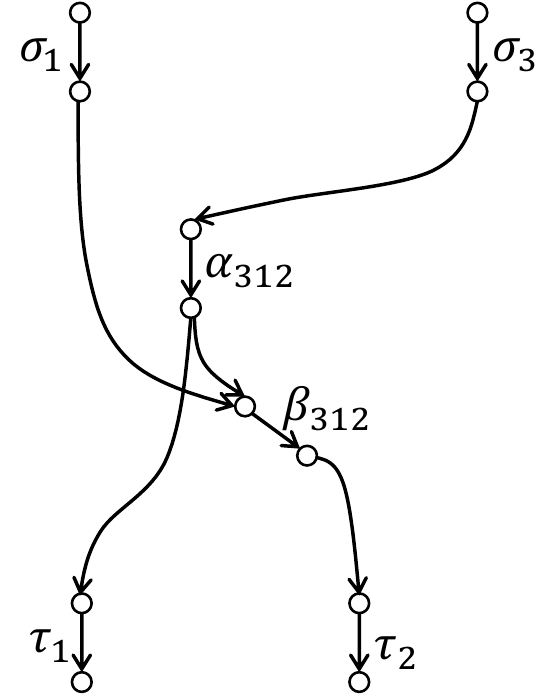}}
\caption{A graphical illustration of the four edges, $\alpha_{213}$, $\beta_{213}$, $\alpha_{312}$, and $\beta_{312}$, which are important in defining the networks that realize $\eta(\B{x})=1$. \label{fig_alpha_beta}}
\end{figure}

As shown below, the four edges, $\alpha_{213}$, $\beta_{213}$, $\alpha_{312}$, and $\beta_{312}$, are important in defining the networks that realize $\eta(\B{x})=1$. 
A graphical illustration of the four edges is shown in Fig. \ref{fig_alpha_beta}.

\begin{theorem}
\label{th_eta_constant}
$\eta(\B{x})=1$ if and only if $\alpha_{213}=\alpha_{312}$ and $\beta_{213}=\beta_{312}$.
\end{theorem}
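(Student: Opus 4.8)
The plan is to analyze the structure of the product $\eta(\B{x}) = \frac{m_{13}(\B{x})m_{21}(\B{x})m_{32}(\B{x})}{m_{12}(\B{x})m_{23}(\B{x})m_{31}(\B{x})}$ using the bottleneck decomposition of each transfer function. The key structural fact is that if $e$ is a bottleneck between $\sigma_i$ and $\{\tau_j,\tau_k\}$, then every path in $\C{P}_{ij}$ and every path in $\C{P}_{ik}$ passes through $e$, so the transfer functions factor as $m_{ij}(\B{x}) = m_{\sigma_i e}(\B{x}) \cdot m_{e \tau_j}(\B{x})$ and similarly for $m_{ik}$, where $m_{\sigma_i e}$ is the ``partial'' transfer function collecting gains from $\sigma_i$ to $e$ and $m_{e\tau_j}$ from $e$ to $\tau_j$. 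I would first establish (or invoke, if it follows from the graph-property lemmas referenced earlier in the paper) this multiplicative factorization at bottlenecks, together with the fact that the partial transfer functions involve disjoint sets of coding variables — variables ``upstream'' of $e$ versus ``downstream'' of $e$ — which is what makes the cancellation arguments rigorous.

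For the ``if'' direction, I would assume $\alpha_{213} = \alpha_{312} =: \alpha$ and $\beta_{213} = \beta_{312} =: \beta$. Using $\alpha$ as a bottleneck between $\sigma_2$ and $\{\tau_1,\tau_3\}$ gives $m_{21} = m_{\sigma_2\alpha} m_{\alpha\tau_1}$ and $m_{23} = m_{\sigma_2\alpha} m_{\alpha\tau_3}$; symmetrically, $\alpha$ as a bottleneck between $\sigma_3$ and $\{\tau_1,\tau_2\}$ gives $m_{31} = m_{\sigma_3\alpha} m_{\alpha\tau_1}$ and $m_{32} = m_{\sigma_3\alpha} m_{\alpha\tau_2}$. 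Next, $\beta$ is a bottleneck between $\{\sigma_2,\alpha\}$ and $\tau_3$, and between $\{\sigma_3,\alpha\}$ and $\tau_2$; I would use this to further factor $m_{\alpha\tau_3}$, $m_{23}$, $m_{\alpha\tau_2}$, $m_{32}$, and also $m_{12}$, $m_{13}$ through the common edges $\alpha$ and $\beta$ appropriately, and then substitute everything into $\eta(\B{x})$. The claim is that all the partial factors cancel in pairs, leaving $\eta(\B{x}) = 1$; this is a bookkeeping computation once the factorizations are in place.

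For the ``only if'' direction, I would argue the contrapositive: if the bottleneck equalities fail, then $\eta(\B{x}) \neq 1$. The natural approach is a degree-counting / variable-isolation argument: suppose, say, $\alpha_{213} \neq \alpha_{312}$. Then there is a coding variable sitting ``between'' these two bottlenecks (or associated with one but not reachable in the way the other requires) that appears in the numerator of $\eta$ with a different multiplicity than in the denominator, or appears in one of $m_{13}, m_{21}, m_{32}$ but cannot be cancelled by any factor of $m_{12}, m_{23}, m_{31}$. I would make this precise by picking the appropriate edge $e$ (the later of $\alpha_{213}, \alpha_{312}$ in the topological order, or an edge witnessing that one is not a bottleneck for the relevant pair), showing a path through it exists in one of the numerator transfer functions, and showing that setting all other variables generically forces $\eta \neq 1$ — essentially because the reduced (coprime) numerator and denominator of $\eta$ would then have different supports in that variable. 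The case where the $\alpha$'s agree but the $\beta$'s differ is handled analogously using the second bottleneck.

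The main obstacle I expect is the ``only if'' direction: turning ``the bottleneck equalities fail'' into a concrete algebraic witness that $\eta \neq 1$. One has to be careful that a variable appearing asymmetrically in the formal product $m_{13}m_{21}m_{32}$ versus $m_{12}m_{23}m_{31}$ doesn't get cancelled after reduction to lowest terms — i.e., one needs the coprime-form statement, not just the raw product — and one must handle the several sub-cases (which of $\alpha_{213} \ne \alpha_{312}$, $\beta_{213} \ne \beta_{312}$, and the relative topological positions) without the argument ballooning. I would lean heavily on the ``unique form'' notion and the graph-related properties of transfer functions promised in the earlier lemmas (e.g., that no coding variable appears to degree $\geq 2$ in any single $m_{ij}$, and the characterization of which variables can divide a transfer function), and I would defer the full case analysis to the appendix as the statement itself indicates.
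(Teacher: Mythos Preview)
Your ``if'' direction is correct and is essentially the computation the paper carries out, though the paper packages it differently: it writes $\eta(\B{x}) = f_{213}(\B{x})/f_{312}(\B{x})$ with $f_{ijk}(\B{x}) := \frac{m_{ij}(\B{x})m_{jk}(\B{x})}{m_{ik}(\B{x})}$, and shows via the bottleneck factorizations you describe that $f_{ijk}$ has the reduced form
\[
f_{ijk}(\B{x})=\frac{m_{\sigma_j,\beta_{ijk}}(\B{x})\, m_{\alpha_{ijk},\tau_j}(\B{x})}{m_{\alpha_{ijk},\beta_{ijk}}(\B{x})},
\]
with numerator and denominator coprime. For $j=1$, equal $\alpha$'s and $\beta$'s make $f_{213}=f_{312}$ on the nose.

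Where your plan diverges is the ``only if'' direction, and there it misses the key shortcut. The paper does not hunt for a witness variable or do a case analysis on relative positions of $\alpha_{213},\alpha_{312}$; instead it uses two structural lemmas you don't invoke: (i) every transfer function $m_{ee'}$ factors uniquely into irreducibles $\prod_l m_{e_l e_{l+1}}$ along the chain of bottlenecks in $\C{C}_{ee'}$ (the Han et al.\ factorization), and (ii) $m_{e_1e_2}=m_{e_3e_4}$ forces $e_1=e_3$ and $e_2=e_4$. Lemma (i) is exactly what certifies that the displayed expression for $f_{ijk}$ is already in lowest terms, so $f_{213}=f_{312}$ forces equality of the denominators $m_{\alpha_{213},\beta_{213}} = m_{\alpha_{312},\beta_{312}}$; then (ii) gives $\alpha_{213}=\alpha_{312}$ and $\beta_{213}=\beta_{312}$ in one line. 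Your degree-counting plan is not wrong in spirit, but to rule out accidental cancellation after reduction you would essentially have to know which irreducibles can appear in numerator and denominator---which is precisely the content of the bottleneck-factorization lemma. Without it, your ``only if'' case analysis is likely to be long and hard to close; with it, that direction becomes immediate.
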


In \cite{Han2011}, the authors independently discovered a similar result.
Consider the example shown in Fig. \ref{fig_coupling_relation_1}.
It is easy to see that in this example, $\alpha_{213}=\alpha_{312}=\beta_{213}=\beta_{312}=e$, and thus $\eta(\B{x})=1$.
In Fig. \ref{fig_eta_constant}, we show another example, where $\alpha_{213}=\alpha_{312}=e_1$, $\beta_{213}=\beta_{312}=e_2$, and thus $\eta(\B{x})=1$.

Given two subsets of edges, $S$ and $D$, a cut-set $C$ between $S$ and $D$ is a subset of edges, the removal of which will disconnect every directed path from $S$ to $D$.
The capacity of cut-set $C$ is defined as the summation of the capacities of the edges contained in $C$.
The minimum cut between $S$ and $D$ is the minimum capacity of all cut-sets between $S$ and $D$.

\begin{theorem}
\label{th_interpret_1}
The following statements hold:
\begin{enumerate}
\item $p_1(\B{x})= 1$ if and only if the minimum cut between $\{\sigma_1,\sigma_2\}$ and $\{\tau_1,\tau_3\}$ equals one; 
$p_1(\B{x})= \eta(\B{x})$ if and only if the minimum cut between $\{\sigma_1,\sigma_3\}$ and $\{\tau_1,\tau_2\}$ equals one.

\item $p_2(\B{x})= 1$ if and only if the minimum cut between $\{\sigma_1,\sigma_2\}$ and $\{\tau_2,\tau_3\}$ equals one;
$p_2(\B{x})= \eta(\B{x})$ if and only if the minimum cut between $\{\sigma_2,\sigma_3\}$ and $\{\tau_1,\tau_2\}$ equals one.

\item $p_3(\B{x})= 1$ if and only if the minimum cut between $\{\sigma_2,\sigma_3\}$ and $\{\tau_1,\tau_3\}$ equals one; 
$p_3(\B{x})= \eta(\B{x})$ if and only if the minimum cut between $\{\sigma_1,\sigma_3\}$ and $\{\tau_2,\tau_3\}$ equals one.
\end{enumerate}
\end{theorem}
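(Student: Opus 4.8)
\textbf{Proof proposal for Theorem \ref{th_interpret_1}.}

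The plan is to prove one representative statement, say $p_1(\B{x})=1 \iff$ the minimum cut between $\{\sigma_1,\sigma_2\}$ and $\{\tau_1,\tau_3\}$ equals one; the other five statements follow by the obvious relabeling of sessions (and, for the ``$p_i(\B{x})=\eta(\B{x})$'' statements, by multiplying through by the appropriate transfer-function ratios so that the claim again reduces to a unit-min-cut condition). Recall from Eq. (\ref{eq_p_eta}) that $p_1(\B{x})=\frac{m_{13}(\B{x})m_{21}(\B{x})}{m_{11}(\B{x})m_{23}(\B{x})}$, so the identity $p_1(\B{x})=1$ is equivalent to the polynomial identity $m_{13}(\B{x})m_{21}(\B{x}) = m_{11}(\B{x})m_{23}(\B{x})$.

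For the ``if'' direction, I would start from the assumption that there is a unit-capacity cut, i.e.\ a single edge $e$ whose removal disconnects both $\sigma_1$ and $\sigma_2$ from both $\tau_1$ and $\tau_3$. Then every path in $\C{P}_{11}$, $\C{P}_{13}$, $\C{P}_{21}$, $\C{P}_{23}$ passes through $e$, so each such path factors uniquely as a prefix from $\sigma_j$ to $e$ concatenated with a suffix from $e$ to $\tau_i$. This gives the factorizations $m_{ji}(\B{x}) = a_j(\B{x})\, b_i(\B{x})$ for $j\in\{1,2\}$, $i\in\{1,3\}$, where $a_j$ collects the (monomials of) path-prefixes $\sigma_j\!\to\! e$ and $b_i$ the path-suffixes $e\!\to\!\tau_i$ (one must be slightly careful about the coding coefficient at $\head(e)$; it can be folded into, say, the suffix consistently). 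Then $m_{13}m_{21} = (a_1 b_3)(a_2 b_1) = (a_1 b_1)(a_2 b_3) = m_{11} m_{23}$, giving $p_1(\B{x})=1$. This direction is essentially bookkeeping on path decompositions through a bottleneck edge.

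The ``only if'' direction is the main obstacle, and I expect it to require the graph-theoretic properties of transfer functions attributed to Han et al.\ \cite{Han2011} and the degree-counting / subgraph-structure tools invoked in the introduction (and presumably developed in the deferred Appendix \ref{app_proof_graph}). The idea is: assume the min-cut between $\{\sigma_1,\sigma_2\}$ and $\{\tau_1,\tau_3\}$ is at least two, and show $m_{13}m_{21} \ne m_{11}m_{23}$ as polynomials. One natural route is to exhibit a monomial (a path-product $t_P$) that appears in $m_{13}m_{21}$ but not in $m_{11}m_{23}$, or to compare the two sides under a carefully chosen specialization of $\B{x}$ that is available precisely because no single edge bottlenecks all four source–sink pairs — e.g.\ zeroing out coefficients to isolate a $\sigma_1\!\to\!\tau_3$ path and a $\sigma_2\!\to\!\tau_1$ path that are edge-disjoint in a way that cannot be matched by a $\sigma_1\!\to\!\tau_1$, $\sigma_2\!\to\!\tau_3$ pair. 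Since all four transfer functions are nonzero (all senders reach all receivers by hypothesis), both sides are nonzero polynomials, so it suffices to defeat the identity at a single point. I would also want to phrase the min-cut condition in the language of the $\alpha_{ijk}$ / $\beta_{ijk}$ bottleneck edges introduced just before Theorem \ref{th_eta_constant}, since the existence of a unit cut between $\{\sigma_1,\sigma_2\}$ and $\{\tau_1,\tau_3\}$ is exactly the statement that a single bottleneck edge is shared across the relevant pairs — this should let me reuse whatever machinery proves Theorem \ref{th_eta_constant}, and reduce the remaining work to the combinatorial claim that a genuine two-edge cut forces a mismatched monomial. The bulk of the effort, and the step most likely to need the full strength of the appendix lemmas, is establishing that this mismatched monomial always exists whenever the cut is not a single edge.
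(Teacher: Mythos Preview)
Your ``if'' direction is fine, but you are dramatically overestimating the ``only if'' direction and missing the clean tool the paper uses. The key observation is that
\[
m_{11}(\B{x})m_{23}(\B{x}) - m_{13}(\B{x})m_{21}(\B{x}) \;=\; \det\begin{pmatrix} m_{11}(\B{x}) & m_{13}(\B{x}) \\ m_{21}(\B{x}) & m_{23}(\B{x}) \end{pmatrix},
\]
which is exactly the transfer matrix from a super-source (connected to $s_1,s_2$) to a super-receiver (connected from $d_1,d_3$). By the standard linear network coding multicast result, this determinant is a nonzero polynomial if and only if the min-cut between $\{\sigma_1,\sigma_2\}$ and $\{\tau_1,\tau_3\}$ is at least two. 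That single observation gives both directions at once, and the paper packages it as Lemma~\ref{lemma_transfer_inequal}; the proof of Theorem~\ref{th_interpret_1} is then literally one line invoking that lemma. No monomial-chasing, no specializations, and none of the $\alpha_{ijk}/\beta_{ijk}$ machinery is needed here --- that apparatus is reserved for Theorems~\ref{th_eta_constant} and~\ref{th_interpret_2}.

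For the $p_i(\B{x})=\eta(\B{x})$ statements, your ``multiply through by the appropriate ratios'' is the right instinct but stated too loosely. The paper simply defines $q_1(\B{x})=\eta(\B{x})/p_1(\B{x})=\frac{m_{11}(\B{x})m_{32}(\B{x})}{m_{12}(\B{x})m_{31}(\B{x})}$, notes that $p_1=\eta$ iff $q_1=1$, and observes that $q_1$ again has the cross-ratio form $\frac{m_{ab}m_{pq}}{m_{aq}m_{pb}}$, so Lemma~\ref{lemma_transfer_inequal} applies with $(a,b,p,q)=(1,1,3,2)$, giving the min-cut condition between $\{\sigma_1,\sigma_3\}$ and $\{\tau_1,\tau_2\}$. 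The remaining four statements are identical after relabeling.
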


\begin{figure}[t]
\centering
\subfloat[$\eta(\B{x})=1$ \label{fig_eta_constant}]{\includegraphics[width=4cm]{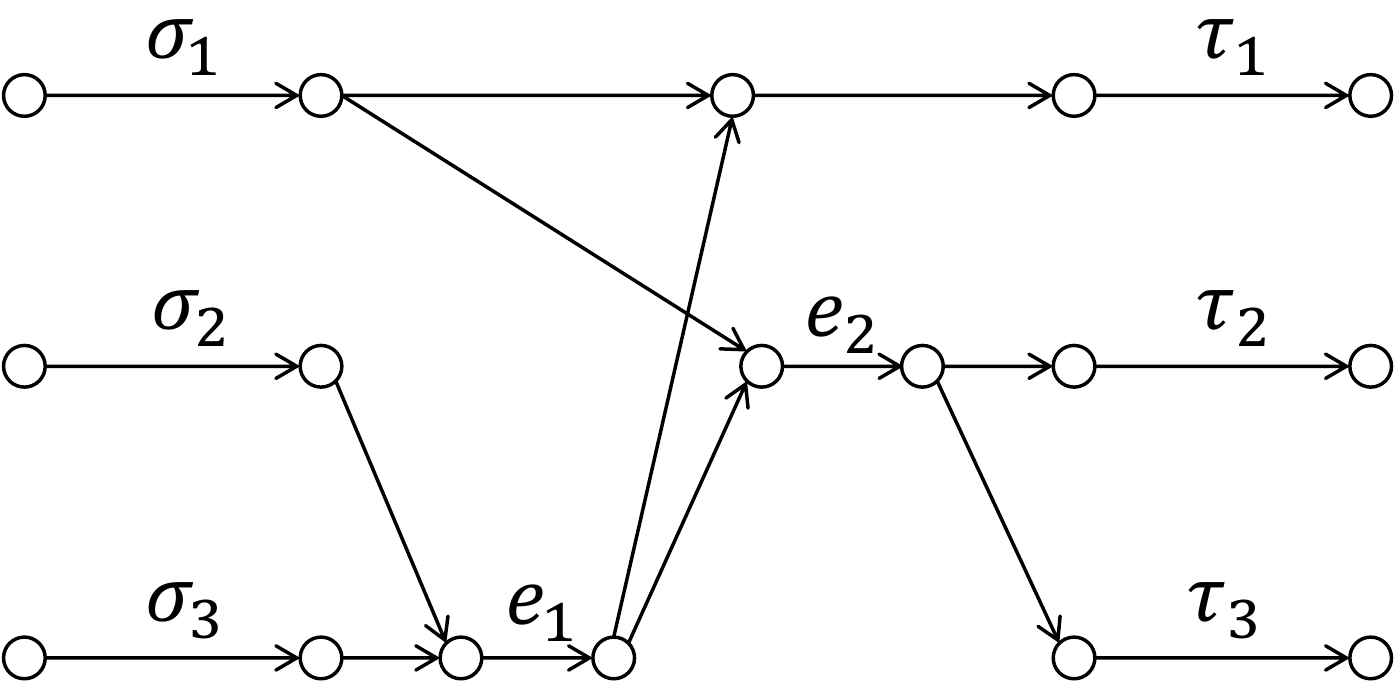}} \hspace*{7pt}
\subfloat[$p_2(\B{x})=\eta(\B{x})$ \label{fig_p2_equal_eta}]{\includegraphics[width=4cm]{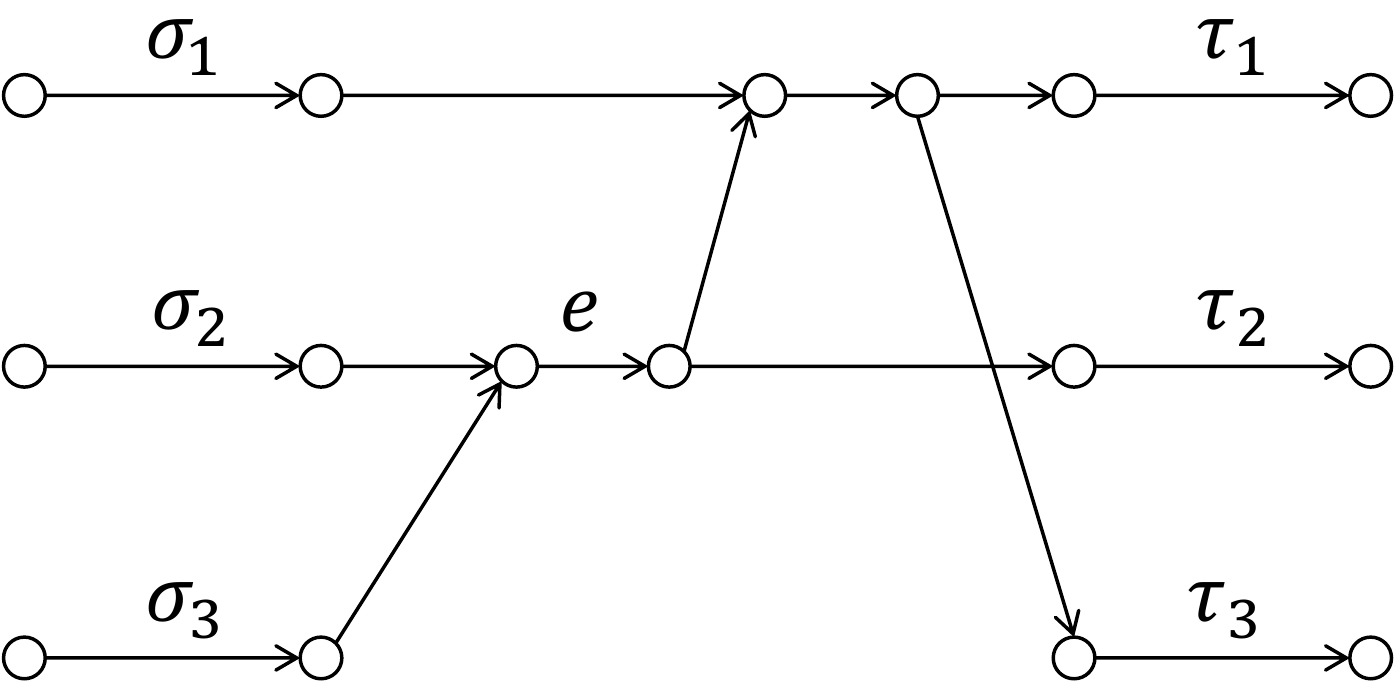}}
\caption{Additional examples of coupling relations}
\end{figure}

For instance, in Fig. \ref{fig_coupling_relation_1}, the cut-set with minimum capacity between $\{\sigma_2,\sigma_3\}$ and $\{\tau_1,\tau_2\}$ contains only one edge $e$, and thus $p_2(\B{x})=\eta(\B{x})$.

Given two edges $e_1$ and $e_2$, we say that they are parallel with each other if there is no directed paths from $e_1$ to $e_2$, or from $e_2$ to $e_1$.
As shown below, two edges are important in defining the networks that realizes the third coupling relation in Eq. (\ref{eq_small_cond_1_v2})-(\ref{eq_small_cond_3_v2}), \eg $\alpha_{213}$ and $\alpha_{312}$ are used to define the networks that realize $p_1(\B{x}) =  \frac{\eta(\B{x})}{1+\eta(\B{x})}$, and so on.

\begin{theorem}
\label{th_interpret_2}
The following statements hold:
\begin{enumerate}
\item $p_1(\B{x}) =  \frac{\eta(\B{x})}{1+\eta(\B{x})}$ if and only if the following conditions are satisfied: a) $\alpha_{312}$ is a bottleneck between $\sigma_1$ and $\tau_2$; b) $\alpha_{213}$ is a bottleneck between $\sigma_1$ and $\tau_3$; c) $\alpha_{312}$ is parallel with $\alpha_{213}$; d) $\{\alpha_{312},\alpha_{213}\}$ forms a cut-set between $\sigma_1$ from $\tau_1$.

\item $p_2(\B{x}) = 1 + \eta(\B{x})$ if and only if the following conditions are satisfied: a) $\alpha_{123}$ is a bottleneck between $\sigma_2$ and $\tau_3$; b) $\alpha_{321}$ is a bottleneck between $\sigma_2$ and $\tau_1$; c) $\alpha_{123}$ is parallel with $\alpha_{321}$; d) $\{\alpha_{123},\alpha_{321}\}$ forms a cut-set between $\sigma_2$ from $\tau_2$.

\item $p_3(\B{x}) = 1 + \eta(\B{x})$ if and only if the following conditions are satisfied: a) $\alpha_{231}$ is a bottleneck between $\sigma_3$ and $\tau_1$; b) $\alpha_{132}$ is a bottleneck between $\sigma_3$ and $\tau_2$; c) $\alpha_{231}$ is parallel with $\alpha_{132}$; d) $\{\alpha_{231},\alpha_{132}\}$ forms a cut-set between $\sigma_3$ from $\tau_3$.
\end{enumerate}
\end{theorem}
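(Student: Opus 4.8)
The plan is to prove part~1 in detail; parts~2 and~3 follow by the symmetry of the problem under simultaneous relabeling of the sender/receiver indices (the same permutation that sends $p_1$ to $p_2$ and to $p_3$ in the statements of Theorems~\ref{th_interpret_1} and~\ref{th_interpret_2}), so I would state that reduction once and then concentrate on $p_1(\B{x})=\frac{\eta(\B{x})}{1+\eta(\B{x})}$. Rewriting this coupling relation using the definitions in Eq.~(\ref{eq_p_eta}), clearing denominators, it becomes equivalent to the polynomial identity
\begin{flalign*}
m_{11}(\B{x})\,m_{23}(\B{x})\,m_{32}(\B{x}) = m_{13}(\B{x})\,m_{21}(\B{x})\,m_{32}(\B{x}) + m_{12}(\B{x})\,m_{31}(\B{x})\,m_{23}(\B{x}),
\end{flalign*}
i.e. to the third inequality of Eq.~(\ref{eq_small_cond_1}) becoming an equality. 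So the task is really: characterize, in topological terms, when $m_{11}$ equals $\frac{m_{13}m_{21}}{m_{23}}+\frac{m_{12}m_{31}}{m_{32}}$.

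For the ``if'' direction I would assume the four conditions (a)--(d) and argue directly about path sets. Since $\alpha_{312}$ is a bottleneck between $\sigma_1$ and $\tau_2$ and $\alpha_{213}$ a bottleneck between $\sigma_1$ and $\tau_3$, every $\sigma_1$--$\tau_2$ path factors through $\alpha_{312}$ and every $\sigma_1$--$\tau_3$ path through $\alpha_{213}$, giving monomial factorizations $m_{12}=m_{1,\alpha_{312}}\cdot m_{\alpha_{312},2}$ and $m_{13}=m_{1,\alpha_{213}}\cdot m_{\alpha_{213},3}$ in the obvious notation for ``transfer function restricted to a sub-path set.'' Condition~(d), that $\{\alpha_{312},\alpha_{213}\}$ is a cut-set between $\sigma_1$ and $\tau_1$, together with (c), that the two edges are parallel, means every $\sigma_1$--$\tau_1$ path passes through exactly one of $\alpha_{312}$, $\alpha_{213}$, so $m_{11}=m_{1,\alpha_{312}}\cdot m_{\alpha_{312},1}+m_{1,\alpha_{213}}\cdot m_{\alpha_{213},1}$. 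One then has to check that the ``tail'' transfer functions appearing here match those coming from $m_{21},m_{31},m_{23},m_{32}$ after cancellation — e.g. that $m_{\alpha_{312},1}=\frac{m_{31}m_{\alpha_{312},2}}{m_{32}}$ as rational functions — which is where the precise placement of $\alpha_{312}$ and $\alpha_{213}$ relative to $\sigma_3,\tau_2,\tau_1$ must be used, invoking the bottleneck structure already set up for the $p_i$-interpretations in Theorem~\ref{th_interpret_1}. Substituting these into the target identity verifies it.

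The ``only if'' direction is where I expect the real difficulty, and the degree-counting / graph-property machinery referred to throughout the paper (the properties of transfer functions from \cite{Han2011} and the ``graph-related properties'' the introduction promises) is what I would lean on. Assuming the polynomial identity $m_{11}m_{23}m_{32}=m_{13}m_{21}m_{32}+m_{12}m_{31}m_{23}$, I would first show it forces the two summands on the right to be ``essentially disjoint'' in their monomial supports, which propagates back to a structural statement: the $\sigma_1$--$\tau_1$ paths split into those sharing an edge with all $\sigma_1$--$\tau_2$ paths and those sharing an edge with all $\sigma_1$--$\tau_3$ paths, with no path in both groups. Identifying the common edge of the first group with $\alpha_{312}$ and of the second with $\alpha_{213}$ — i.e. showing these particular already-defined bottleneck edges are the right ones, not merely that *some* bottleneck edges exist — is the crux; it requires showing $\alpha_{312}$ actually lies on every $\sigma_1$--$\tau_1$ path of the first type (so (a) holds), symmetrically for $\alpha_{213}$ and (b), that the absence of any $\sigma_1$--$\tau_1$ path through both yields parallelism (c), and that exhausting all such paths yields the cut-set property (d). The main obstacle, concretely, is ruling out ``degenerate'' factorizations of the identity that do not come from a genuine two-bottleneck topology — for this I would use the fact (from the $p_i=1$ and $p_i=\eta$ analysis) that a spurious cancellation of this kind would itself force one of the simpler coupling relations $p_1=1$ or $p_1=\eta$ to hold, and then check that those cases are incompatible with $p_1=\frac{\eta}{1+\eta}$ unless $\eta$ is constant, which is excluded by hypothesis. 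Wrapping these case exclusions cleanly is, I expect, the bulk of the work, and is presumably why the proof is deferred to Appendix~\ref{app_proof_interpret}.
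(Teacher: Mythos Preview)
Your ``if'' direction is essentially on track and matches the paper's argument in spirit, though the paper organizes it more cleanly by first introducing the auxiliary functions $f_{ijk}(\B{x})=\frac{m_{ij}(\B{x})m_{jk}(\B{x})}{m_{ik}(\B{x})}$ and rewriting the target identity as $m_{11}(\B{x})=f_{312}(\B{x})+f_{213}(\B{x})$. Conditions (a) and (b) make $f_{312}$ and $f_{213}$ collapse to the polynomials $m_{\sigma_1,\alpha_{312}}m_{\alpha_{312},\tau_1}$ and $m_{\sigma_1,\alpha_{213}}m_{\alpha_{213},\tau_1}$ respectively, and then (c)+(d) plus the cut-partition lemma (Lemma~\ref{lemma_transfer_func_partition}) give exactly $m_{11}$ as their sum. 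Your ``tail matching'' computation is this same cancellation, just not packaged.

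The ``only if'' direction, however, has a genuine gap. Your plan---show the two summands have disjoint monomial supports, then argue that a degenerate cancellation would force $p_1=1$ or $p_1=\eta$---is not how the paper proceeds, and the reduction you sketch (spurious cancellation $\Rightarrow$ simpler coupling relation) is neither stated nor proved anywhere and does not obviously hold. The paper's actual argument is purely algebraic and hinges on a structural formula you never invoke: Corollary~\ref{cor_f_reformulate} shows that each $f_{ijk}$ has the unique form $\frac{m_{\sigma_j,\beta_{ijk}}\,m_{\alpha_{ijk},\tau_j}}{m_{\alpha_{ijk},\beta_{ijk}}}$ with coprime numerator and denominator, so $f_{ijk}$ is a genuine (non-polynomial) rational function precisely when $\alpha_{ijk}\neq\beta_{ijk}$, i.e.\ when $\alpha_{ijk}\notin\C{C}_{jk}$. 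The ``only if'' argument is then: since $m_{11}$ is a polynomial, $f_{312}+f_{213}$ must be a polynomial; if exactly one of $f_{312},f_{213}$ has a non-constant denominator this fails immediately; if both do, one uses $\eta\neq 1$ (hence $m_{\alpha_{312},\beta_{312}}\neq m_{\alpha_{213},\beta_{213}}$ via Corollary~\ref{cor_f_equal}) together with the irreducible factorization of transfer functions (Lemma~\ref{lemma_factorization_transfer_func}) to find an irreducible factor surviving in the denominator of the sum. This forces $\alpha_{312}\in\C{C}_{12}$ and $\alpha_{213}\in\C{C}_{13}$, which are exactly (a) and (b). Conditions (c) and (d) then drop out of Lemma~\ref{lemma_transfer_func_partition}. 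You are missing this polynomial-versus-proper-rational dichotomy for $f_{ijk}$, which is the engine of the whole converse.
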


Consider the network as shown in Fig. \ref{fig_coupling_relation_2}.
It is easy to see that $e_2 = \alpha_{312}$ and $e_1=\alpha_{213}$, and all the conditions in 1) of  Theorem \ref{th_interpret_2} are satisfied.
Therefore, this network realizes the coupling relation $p_1(\B{x}) =  \frac{\eta(\B{x})}{1+\eta(\B{x})}$.
Note that these three coupling relations are mutually exclusive when $\eta(\B{x})$ is not constant. If any two of these coupling relation were to occur in the same network, then it would induce a graph structure that forces $\eta(\B{x})$ to be a constant \cite{Han2011}.

\subsection{Optimal Symmetric Rates Achieved by Precoding-Based Linear Schemes \label{subsec_optimal_rates}}

For SISO scenarios where all senders are connected to all receivers, there are only three possible rates achievable through any precoding-based network coding schemes. 
\blue{\begin{definition}
We classify the networks based on the coupling relations present in the network as follows:
\begin{itemize}
\item $Type\ I $ : Networks in which at least one of the coupling relations, $p_i(\B{x}) = 1$ and $p_i(\B{x}) = \eta(\B{x})$ $(1\le i \le 3)$, is present.
\item $Type\ II $ : Networks in which $p_i(\B{x})\notin \{1,\eta(\B{x})\}$ for $1\le i \le 3$, but one of the three mutually exclusive coupling conditions, $p_1(\B{x}) =  \frac{\eta(\B{x})}{1+\eta(\B{x})}$, $p_2(\B{x}) = 1 + \eta(\B{x})$, and $p_3(\B{x}) = 1 + \eta(\B{x})$, is present.
\item $Type\ III $ : Networks in which none of the above coupling relations is present.
\end{itemize}
\end{definition}}
\begin{theorem}
\label{thm_rates}
\blue{Assume that all the senders are connected to all the receivers via directed paths.}
The following statements hold:
\begin{enumerate}
\item The optimal symmetric rate achieved by precoding-based linear schemes for $Type\ I$ networks is $1/3$ per unicast session.
\item The optimal symmetric rate achieved by precoding-based linear schemes for $Type\ II$ networks is $2/5$ per unicast session.
\item The optimal symmetric rate achieved by precoding-based linear schemes for $Type\ III$ networks is $1/2$ per unicast session.
\end{enumerate}
Moreover, all of the above optimal symmetric rate is achievable through PBNA schemes.
\end{theorem}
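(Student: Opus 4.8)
The plan is to prove the three rate values by a case split on the network type, establishing for each type a converse (no precoding-based linear scheme exceeds the claimed rate) together with a matching PBNA that attains it. A uniform ingredient is an upper bound of $1/2$ on the symmetric rate of any precoding-based linear scheme: since all senders reach all receivers, every $m_{ij}(\B{x})$ is a non-zero polynomial, so revealing $\B{X}_3$ as side information to $d_1$ and $d_2$ leaves $d_1$ decoding $\B{X}_1$ from $\B{M}_{11}\B{V}_1\B{X}_1+\B{M}_{21}\B{V}_2\B{X}_2$; this forces $\myspan(\B{M}_{11}\B{V}_1)$ and $\myspan(\B{M}_{21}\B{V}_2)$ to be complementary in $\BF_{2^m}(\xi)^{N}$, hence $k_1+k_2\le N$ and the symmetric rate is at most $1/2$. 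For $Type\ III$ networks this is matched: if $\eta(\B{x})$ is non-constant, a $Type\ III$ network is exactly one satisfying Eq.~(\ref{eq_small_cond_1})--(\ref{eq_small_cond_3}), so Theorem~\ref{th_main} gives asymptotic achievability of $(\tfrac12,\tfrac12,\tfrac12)$ via PBNA, while if $\eta(\B{x})$ is constant Theorem~\ref{th_na_eta_trivial} gives it exactly in two slots. For $Type\ I$ networks some $p_i(\B{x})\in\{1,\eta(\B{x})\}$, which, as noted after Theorem~\ref{th_main}, is equivalent to the vanishing of an off-diagonal minor of the transfer matrix $\B{H}=(m_{ij}(\B{x}))$; by Lemma~1 of \cite{separable2009} the sum rate of such an interference channel is at most $1$, so no precoding-based linear scheme beats $1/3$, which is attained by the three-slot scheme of the Example ($N=3$, $k_i=1$, generic column precoders). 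This settles $Type\ I$ and $Type\ III$.

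The core of the theorem is $Type\ II$. For the converse I would show that any precoding-based linear scheme for a $Type\ II$ network has symmetric rate at most $2/5$, in two steps. First, a reduction to the aligned form: combining the per-receiver accounting $k_i+\dim(\text{interference at }d_i)\le N$ with the structural reductions behind question~Q3 --- in particular the transform $\B{V}^*_1=\B{G}^{-1}\B{V}_1\B{F}^{-1}$ of Lemma~\ref{lemma_v1} --- one may assume the precoders satisfy the alignment equation~(\ref{eq_v1_align}), so that by Lemma~\ref{lemma_rational_independent} the three rank conditions reduce to the linear independence over $\BF_q$ of a family of $2n+s$ functions, each either a bounded-degree polynomial in $\eta(\B{x})$ or such a polynomial times $p_i(\B{x})$. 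Second, a degree count in $\BF_q[\eta(\B{x})]$: when (say) $p_1(\B{x})=\eta(\B{x})/(1+\eta(\B{x}))$, polynomial division expresses every function $p_1(\B{x})\,h(\eta(\B{x}))$ with $\deg h\le n$ as an element of the $(n+2)$-dimensional space spanned by $1,\eta(\B{x}),\dots,\eta^{n}(\B{x}),(1+\eta(\B{x}))^{-1}$, so at most $n+2$ of the functions entering the rank condition at $d_1$ can be independent; combined with the analogous bounds for $p_2=1+\eta$ and $p_3=1+\eta$ and the complementary full-rank demands at the other two receivers, this pins the best simultaneously feasible rate tuple to a minimum coordinate of $2/5$.

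For $Type\ II$ achievability I would exhibit, for each of the three mutually exclusive coupling relations, an explicit PBNA of five symbol extensions --- e.g.\ one realizing a rate tuple such as $(\tfrac35,\tfrac25,\tfrac25)$, or a concatenation of finite schemes obtained by permuting the roles of the three sessions --- whose precoders are chosen \emph{not} as the Cadambe--Jafar matrices of Eq.~(\ref{eq_v1})--(\ref{eq_v3}), which the coupling relation obstructs beyond $n=1$, but adapted to the active relation; its rank conditions would again be verified through Lemma~\ref{lemma_rational_independent} as an explicit linear-independence check that holds at rate $2/5$ but fails at any larger symmetric rate. Since $1/3$, $2/5$ and $1/2$ then exhaust the three types and each is realized by a PBNA, the classification and the final sentence of the theorem follow.

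The main obstacle is the $Type\ II$ converse, and within it two points: (i) justifying the reduction of an \emph{arbitrary} precoding-based linear scheme --- which a priori performs no interference alignment at all --- to one obeying Eq.~(\ref{eq_v1_align}), so that the algebraic degree count applies; and (ii) making that count tight enough to give exactly $2/5$ rather than a weaker bound, which requires tracking how the $(n+2)$-dimensional deficiency forced at one receiver interacts with the full-rank requirements at the other two. A secondary difficulty is designing the $Type\ II$ achievability scheme so that it lands exactly on $2/5$ and not below, since the natural cyclic Cadambe--Jafar PBNAs are all obstructed at $n\ge 2$, so a genuinely different alignment pattern is needed.
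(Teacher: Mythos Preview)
Your handling of $Type\ I$ and $Type\ III$ networks matches the paper's: the vanishing-minor argument citing \cite{separable2009} for the $1/3$ upper bound, the explicit three-slot scheme for achievability, and Theorems~\ref{th_main} and~\ref{th_na_eta_trivial} for $Type\ III$ achievability (the paper simply invokes the known $1/2$ cap for fully connected SISO channels rather than writing out your side-information argument, but that is cosmetic).

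The $Type\ II$ converse, however, has a genuine gap. Your step~(i) proposes to reduce an \emph{arbitrary} precoding-based linear scheme to one satisfying the alignment equation~(\ref{eq_v1_align}) via Lemma~\ref{lemma_v1}. But Lemma~\ref{lemma_v1} does not do this: it only says that \emph{among} precoders already satisfying~(\ref{eq_v1_align}), every $\B{V}_1$ is of the form $\B{G}\B{V}^*_1\B{F}$. It gives no control over a scheme that does not align its interference, and there is no reason an optimal precoding scheme should satisfy~(\ref{eq_v1_align}) a priori. You flag this yourself as the main obstacle, and it is a real one --- your degree-counting step~(ii) never gets off the ground without it.

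The paper's converse (Theorem~\ref{thm_acs}) avoids this reduction entirely. It rewrites the $Type\ II$ coupling relation in matrix form as $\B{M}_{11}=\B{M}_{31}\B{M}_{32}^{-1}\B{M}_{12}+\B{M}_{21}\B{M}_{23}^{-1}\B{M}_{13}$ and observes (Lemma~\ref{lemma_acs}) that for \emph{any} precoding vectors, if a column $\tilde{v}_1$ of $\B{V}_1$ is aligned with some $\tilde{v}_3$ at $d_2$ and with some $\tilde{v}_2$ at $d_3$, then $\B{M}_{11}\tilde{v}_1$ automatically lies in $\myspan(\B{M}_{21}\tilde{v}_2,\B{M}_{31}\tilde{v}_3)$ at $d_1$. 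Hence the subspaces $V_{12}$ and $V_{13}$ of $\myspan(\B{V}_1)$ that participate in alignment at $d_3$ and $d_2$ respectively must intersect trivially, giving $l_{12}+l_{13}\le d$. Combining with the per-receiver dimension bounds $3d-l_{13}\le n$ and $3d-l_{12}\le n$ yields $5d\le 2n$. This argument, taken from \cite{cadambe2010interference}, works directly on an arbitrary precoding scheme with no need to pass through~(\ref{eq_v1_align}).

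For $Type\ II$ achievability you are also overcomplicating things. The paper (Corollary~\ref{cor_rates}) simply runs the standard Cadambe--Jafar scheme of Eq.~(\ref{eq_v1})--(\ref{eq_v3}) at $n=2$ (five symbol extensions) and observes that, because of the coupling relation, exactly one of the three columns of $\B{M}_{11}\B{V}_1$ falls into the interference span at $d_1$; User~1 discards that column and transmits two symbols instead of three, so every user gets $2/5$. No new alignment pattern is needed.
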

\begin{proof}
See Section \ref{sec_rates}.
\end{proof}

\section{Sufficient and Necessary Conditions for PBNA to Achieve Symmetric Rate $\frac{1}{2}$} \label{sec_feasibility}

In this section, we explain the main ideas behind the proofs of Theorem \ref{th_main} and \ref{th_na_eta_trivial}.
Consistent with Section \ref{sec_main}, we distinguish two cases based on whether $\eta(\B{x})$ is constant.

\subsection{$\eta(\B{x})$ Is Not Constant} 

In this subsection, we first present a simple method to quickly identify a class of networks, for which PBNA can asymptotically achieve symmetric rate $\frac{1}{2}$.
Then, we sketch the outline of the proof for the sufficiency of Theorem \ref{th_main}.
Next, we explain the main idea behind the proof for the necessity of Theorem \ref{th_main}.

\subsubsection{A Simple Method Based on Theorem \ref{th_big_cond_1}}

As shown in Theorem \ref{th_big_cond_1}, \blue{the set $\C{S}'$ contains an exponential number of rational functions}, and thus it is very difficult to check the conditions of Theorem \ref{th_big_cond_1} in practice.
Interestingly, the theorem directly yields a simple method to quickly identify a class of networks for which PBNA is feasible.
The major idea of the method is to exploit the asymmetry between $p_i(\B{x})$ and $\eta(\B{x})$ in terms of effective variables.
Here, given a rational function $f(\B{y})$, we define a variable as an effective variable of $f(\B{y})$ if it appears in the unique form of $f(\B{y})$.
Let $\C{V}(f(\B{y}))$ denote the set of effective variables of $f(\B{y})$.
Intuitively, this asymmetry allows us more freedom to control the values of $p_i(\B{x})$ and $\eta(\B{x})$ such that they can change independently, which makes the network behave more like a wireless channel. 
The formal description of the method is presented below:

\begin{corollary}
\label{cor_simple_check}
Assume all $m_{ij}(\B{x})$'s ($i,j=1,2,3$) are non-zeros, and $\eta(\B{x})$ is not constant. 
Each unicast session can asymptotically achieve one half rate through PBNA if for $i=1,2,3$, $p_i(\B{x})\neq 1$ and  $\C{V}(\eta(\B{x})) \neq \C{V}(p_i(\B{x}))$.
\end{corollary}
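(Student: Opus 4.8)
The plan is to reduce the corollary to Theorem~\ref{th_big_cond_1}: since $\eta(\B{x})$ is non-constant, that theorem tells us it suffices to verify $p_i(\B{x})\notin\C{S}'$ for each $i\in\{1,2,3\}$. So I would argue by contradiction, assuming $p_i(\B{x})\in\C{S}'$ for some $i$, and then produce a violation of one of the two hypotheses $p_i(\B{x})\neq1$ and $\C{V}(\eta(\B{x}))\neq\C{V}(p_i(\B{x}))$. Write the membership as $p_i(\B{x})=f(\eta(\B{x}))/g(\eta(\B{x}))$ with $f,g\in\BF_q[z]$ coprime and $fg\neq0$, and set $h=f/g\in\BF_q(z)$, so $p_i(\B{x})=h(\eta(\B{x}))$. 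The argument then splits on whether $h$ is constant.

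If $h$ is constant, then $p_i(\B{x})=h(\eta(\B{x}))$ is a constant. By Eq.~(\ref{eq_p_eta}), $p_i(\B{x})$ is a ratio of products of transfer functions, each of which lies in $\BF_2[\B{x}]$, so $p_i(\B{x})\in\BF_2(\B{x})$; and since every $m_{jk}(\B{x})$ is non-zero, $p_i(\B{x})\neq0$. As $\BF_2$ is the field of constants of $\BF_2(\B{x})$, a non-zero constant value of $p_i(\B{x})$ must equal $1$, contradicting $p_i(\B{x})\neq1$.

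If $h$ is non-constant, I would invoke the following lemma on effective variables under composition: \emph{if $\eta(\B{x})$ is non-constant and $h\in\BF_q(z)$ is a non-constant univariate rational function, then $\C{V}(h(\eta(\B{x})))=\C{V}(\eta(\B{x}))$}. The inclusion ``$\subseteq$'' is immediate, because $h(\eta(\B{x}))$ is a rational function of $\eta(\B{x})$ (with $g(\eta(\B{x}))\neq0$, as $\eta(\B{x})$ is transcendental over $\BF_q$), hence lies in the subfield generated by the effective variables of $\eta(\B{x})$. For ``$\supseteq$'', suppose some coding variable $x_j\in\C{V}(\eta(\B{x}))$ were not effective for $h(\eta(\B{x}))$, \ie $h(\eta(\B{x}))\in\BF_q(\B{x}\setminus\{x_j\})$. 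The identity $g(\eta(\B{x}))\,h(\eta(\B{x}))-f(\eta(\B{x}))=0$ shows that $\eta(\B{x})$ is a root of $h(\eta(\B{x}))\,g(T)-f(T)\in\BF_q\big(h(\eta(\B{x}))\big)[T]$, a polynomial that is non-zero precisely because $h$ is non-constant and $\eta(\B{x})$ is transcendental over $\BF_q$; hence $\eta(\B{x})$ is algebraic over $\BF_q\big(h(\eta(\B{x}))\big)\subseteq\BF_q(\B{x}\setminus\{x_j\})$. Since $\BF_q(\B{x}\setminus\{x_j\})$ is relatively algebraically closed in the rational function field $\BF_q(\B{x})=\BF_q(\B{x}\setminus\{x_j\})(x_j)$, this forces $\eta(\B{x})\in\BF_q(\B{x}\setminus\{x_j\})$, contradicting $x_j\in\C{V}(\eta(\B{x}))$. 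Granting the lemma, $\C{V}(p_i(\B{x}))=\C{V}(h(\eta(\B{x})))=\C{V}(\eta(\B{x}))$, contradicting $\C{V}(\eta(\B{x}))\neq\C{V}(p_i(\B{x}))$.

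Both cases yield contradictions, so $p_i(\B{x})\notin\C{S}'$ for all $i$, and Theorem~\ref{th_big_cond_1} finishes the proof. I expect the composition lemma --- specifically its ``$\supseteq$'' direction --- to be the main obstacle: it relies on the field-theoretic fact that the coefficient field is relatively algebraically closed in a one-variable rational function extension, plus a careful check that the defining relation of $\eta(\B{x})$ over $\BF_q(h(\eta(\B{x})))$ is genuinely non-zero, which is exactly what makes the case ``$h$ constant'' genuinely different and forces the separate (easy) treatment above. A minor point to record along the way is that ``$x_j$ is an effective variable of $\phi$'' (it appears in the unique form) is equivalent to ``$\phi\notin\BF_q(\B{x}\setminus\{x_j\})$'', which follows from unique factorization in $\BF_q[\B{x}]$ together with the fact that a prime of $\BF_q[\B{x}\setminus\{x_j\}]$ stays prime in $\BF_q[\B{x}]$.
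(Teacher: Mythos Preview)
Your proof is correct and follows the same approach as the paper: reduce to Theorem~\ref{th_big_cond_1} by showing $p_i(\B{x})\notin\C{S}'$ from the effective-variable hypothesis. The paper's own proof is a one-line assertion of this implication, so your field-theoretic treatment of the composition lemma---especially the $\supseteq$ direction via relative algebraic closedness of $\BF_q(\B{x}\setminus\{x_j\})$ in $\BF_q(\B{x})$---rigorously supplies details the paper leaves implicit.
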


\begin{proof}
If the above conditions are satisfied, we must have $p_i(\B{x}) \neq \frac{f(\eta(\B{x}))}{g(\eta(\B{x}))} \in \C{S}'$.
Thus, the theorem holds.
\end{proof}

\begin{figure}[t]
\centering
\subfloat[\label{fig_simple_applicable_1}]{\includegraphics[width=4cm]{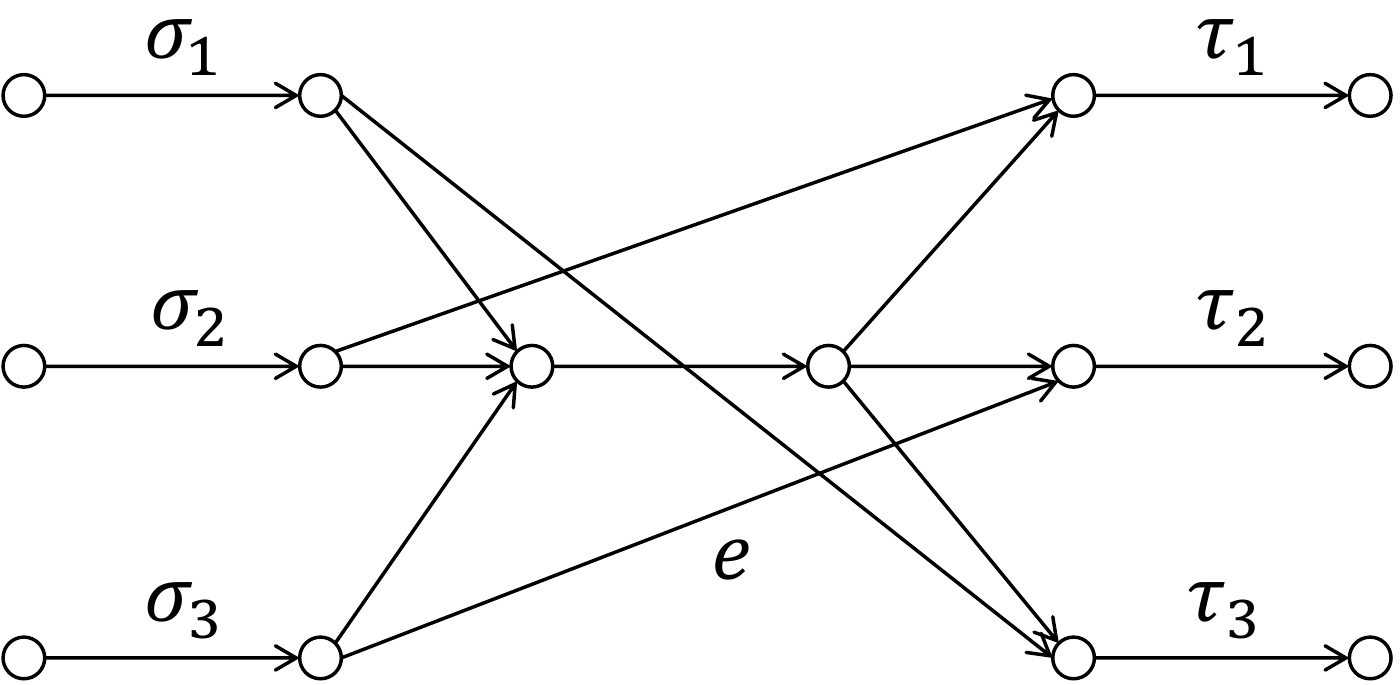}} \hspace{10pt}
\subfloat[\label{fig_simple_applicable_2}]{\includegraphics[width=4cm]{feasible_ex2}} \\
\subfloat[\label{fig_simple_inapplicable}]{\includegraphics[width=2.8cm]{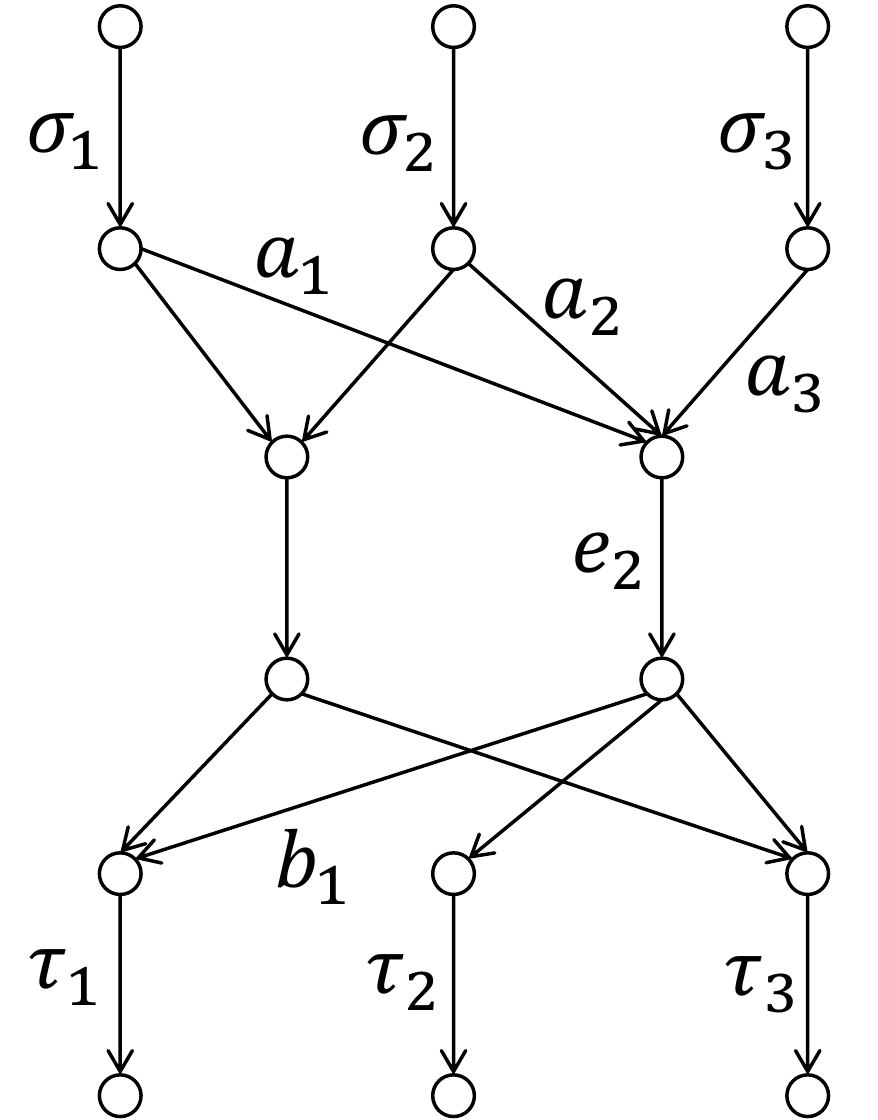}}
\caption{Illustration of type III networks.
(i) It can be seen that for all the three examples, PBNA can achieve one half rate.
(ii) The three examples can be verified by using different methods:  for (a) and (b), due to edge $e$, $\eta(\B{x})$ contains coding variables $x_{\sigma_3e},x_{e\tau_2}$, which are absent in the unique forms of $p_1(\B{x}),p_2(\B{x})$ and $p_3(\B{x})$, and thus Corollary \ref{cor_simple_check} applies to both cases; Corollary \ref{cor_simple_check} doesn't apply to (c), but PBNA can still achieve a symmetric rate $\frac{1}{2}$ for this network according to Theorem \ref{th_main}.
(iii) For both (a) and (b), routing can only achieve a symmetrical rate $\frac{1}{3}$; for (c), PBNA and routing can both achieve a symmetrical rate $\frac{1}{2}$.}
\end{figure}

Consider the networks shown in Fig. \ref{fig_simple_applicable_1} and Fig. \ref{fig_ex0}, which we replicate in Fig. \ref{fig_simple_applicable_2} for easy review.
As shown in these examples, due to edge $e$, $\eta(\B{x})$ contains effective variables $x_{\sigma_3e},x_{e\tau_2}$, which are absent in the unique form of $p_i(\B{x})$ ($i=1,2,3$).
Thus, by Corollary \ref{cor_simple_check}, each unicast session can asymptotically achieve one half rate through PBNA.
However, Corollary \ref{cor_simple_check} doesn't subsume all possible networks for which PBNA can achieve one half rate. 
For instance, in Fig. \ref{fig_simple_inapplicable}, we show a counter example, where $\C{V}(\eta(\B{x}))=\C{V}(p_1(\B{x}))$, and thus Corollary \ref{cor_simple_check} is not applicable.
Nevertheless, it is easy to verify the network satisfies the conditions of Theorem \ref{th_main}, and thus PBNA can still achieve one half rate.

\subsubsection{Sufficiency of Theorem \ref{th_main}}

As shown in Section \ref{sec_pbna}, not all coupling relations $p_i(\B{x})=\frac{f(\eta(\B{x}))}{g(\eta(\B{x}))}\in \C{S}'$ are realizable due to the special properties of transfer functions.
Indeed, since the transfer functions are defined on graphs, they exhibit special properties due to the graph structure.
As we will see, these properties are essential in identifying the minimal sub-set of realizable coupling relations.
In fact, we only need two such properties, namely Linearization Property and Square-Term Property.

The proof consists of three steps.
First, we use Linearization Property and a simple degree-counting technique to reduce $\C{S}'$ to the following set $\C{S}''_1$:
We consider the general form of $p_i(\B{x})$ as below
\begin{flalign}
\label{eq_p_func_general}
h(\B{x}) = \frac{m_{ab}(\B{x})m_{pq}(\B{x})}{m_{aq}(\B{x})m_{pb}(\B{x})}
\end{flalign}
Note that $\C{S}''_1$ only includes a finite number of rational functions.
where $a,b,p,q = 1,2,3$ and $a\neq p, b\neq q$.
Moreover, by the definition of transfer function, the numerator and denominator of $h(\B{x})$ can be expanded respectively as follows:
\begin{flalign*}
\begin{split}
m_{ab}(\B{x})m_{pq}(\B{x})=\sum_{(P_1,P_2)\in \C{P}_{ab}\times\C{P}_{pq}}\nolimits t_{P_1}(\B{x})t_{P_2}(\B{x})\\
m_{aq}(\B{x})m_{pb}(\B{x})=\sum_{(P_3,P_4)\in \C{P}_{aq}\times\C{P}_{pb}}\nolimits t_{P_3}(\B{x})t_{P_4}(\B{x})
\end{split}
\end{flalign*}
Hence, each path pair in $\C{P}_{ab}\times\C{P}_{pq}$ contributes a term in $m_{ab}(\B{x})m_{pq}(\B{x})$, and each path pair in $\C{P}_{aq}\times\C{P}_{pb}$ contributes a term in $m_{aq}(\B{x})m_{pb}(\B{x})$.

The first property, the Linearization Property, is stated in the following lemma.
According to this property, if $p_i(\B{x})\neq 1$, it can be transformed into its simplest non-trivial form, i.e., a linear function or the inverse of a linear function, through a partial assignment of values to $\B{x}$.

\begin{lemma}[The Linearization Property]
\label{lemma_linearization}
Assume $h(\B{x})$ is not constant. 
Let $h(\B{x})=\frac{u(\B{x})}{v(\B{x})}$ such that $\mygcd(u(\B{x}),v(\B{x}))=1$. 
Then, we can assign values to $\B{x}$ other than a variable $x_{ee'}$ such that $u(\B{x})$ and $v(\B{x})$ are transformed into either $u(x_{ee'}) = c_1x_{ee'} + c_0$, $v(x_{ee'})=c_2$ or $u(x_{ee'}) = c_2, v(x_{ee'})=c_1x_{ee'} + c_0$, where $c_0, c_1, c_2$ are constants in $\BF_{2^m}$, and $c_1c_2 \neq 0$.
\end{lemma}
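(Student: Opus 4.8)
\emph{Plan of proof.} The plan is to exploit two structural features of transfer functions. First, every $m_{ij}(\B{x})$ is square-free: because directed paths in a DAG are simple, each coding variable occurs with degree at most one in $m_{ij}(\B{x})$, so $m_{ij}(\B{x})$ is affine in any single variable. Second, the coefficient of a given variable in a transfer function factors through the two edges that variable sits on. Writing $h(\B{x})=m_{ab}(\B{x})m_{pq}(\B{x})/\big(m_{aq}(\B{x})m_{pb}(\B{x})\big)$ as in \eqref{eq_p_func_general}, I would first single out a coding variable $x_{e'e}$ that is effective for $h$ (one exists because $h$ is non-constant), then show that, viewed as polynomials in $x_{e'e}$ with the remaining coordinates as parameters, the numerator and denominator of $h$ have the \emph{same} leading coefficient, and finally choose a carefully tuned, deliberately non-generic, assignment of the remaining coordinates that collapses one of the two sides to a nonzero constant while leaving the other affine and non-constant in $x_{e'e}$.

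For the leading-coefficient step, let $e',e$ be the consecutive edge pair carrying $x_{e'e}$, so that $\head(e')=\tail(e)$. Let $A_1,A_2$ be the transfer functions from $\sigma_a,\sigma_p$ to edge $e'$ (sums over directed paths whose last edge is $e'$) and $B_1,B_2$ the transfer functions from edge $e$ to $\tau_b,\tau_q$ (sums over paths whose first edge is $e$). A path that uses $x_{e'e}$ is exactly one traversing $e'$ immediately followed by $e$; splitting such a path at that junction and using acyclicity to check that the two halves share no vertex other than $\head(e')$, one obtains $m_{ab}(\B{x})=A_1B_1\,x_{e'e}+m_{ab}^{0}(\B{x})$, where $m_{ab}^{0}$ collects the paths not using $x_{e'e}$; similarly the coefficient of $x_{e'e}$ in $m_{pq},m_{aq},m_{pb}$ is $A_2B_2$, $A_1B_2$, $A_2B_1$ respectively. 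Hence the coefficient of $x_{e'e}^2$ is $A_1A_2B_1B_2$ in $m_{ab}m_{pq}$ and also $A_1A_2B_1B_2$ in $m_{aq}m_{pb}$: the two leading coefficients coincide, so $h$ has the form $\big(G\,x_{e'e}^2+\cdots\big)/\big(G\,x_{e'e}^2+\cdots\big)$ with $G=A_1A_2B_1B_2$.

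To conclude, I would look for a partial assignment of the remaining coordinates that makes exactly one of $A_1,A_2$ and exactly one of $B_1,B_2$ vanish while keeping the other two nonzero, and also keeps $m_{pq}^{0},m_{aq}^{0},m_{pb}^{0}$ nonzero. If, say, $A_2=B_2=0$ and $A_1,B_1\neq0$, then $G=0$, the denominator $m_{aq}m_{pb}$ loses its dependence on $x_{e'e}$ and evaluates to a nonzero constant $c_2$, the numerator becomes $\big(A_1B_1m_{pq}^{0}\big)x_{e'e}+m_{ab}^{0}m_{pq}^{0}$ with nonzero leading coefficient, and after dividing through by the (scalar) greatest common divisor we are left with $u=c_1x_{e'e}+c_0$, $v=c_2$ and $c_1c_2\neq0$ — the first alternative. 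The four choices of which of $A_1,A_2$ and which of $B_1,B_2$ to kill each yield one of the two alternatives of the lemma (two of them the first, two the second), so it suffices to prove that at least one of these four specializations is realizable.

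Realizability is the crux, and I expect essentially all of the difficulty to be there. A generic assignment is useless, since it leaves $h$ a ratio of genuine quadratics; the DAG structure must be used. My plan is to pick $x_{e'e}$ extremal in a fixed topological order of $E$ among the effective pairs of $h$, which should force the ``input side'' $\{A_1,A_2\}$ and the ``output side'' $\{B_1,B_2\}$ to depend on disjoint, separately controllable sets of coding variables; then $A_2$ and $B_2$ can be zeroed — by setting a single coding variable on a suitable edge to $0$ — without killing $A_1,B_1$ or the three ``avoiding'' polynomials (a nonzero polynomial still has a nonzero evaluation since $\BF_{2^m}$ is large). The delicate sub-case is when no choice of effective variable decouples the two input sides, \ie when every path from $\sigma_a$ to $e'$ meets every path from $\sigma_p$ to $e'$: I would rule this out by showing it forces a bottleneck/cut-set configuration between $\{\sigma_a,\sigma_p\}$ and the edges of interest that in turn forces $h$ to be constant, contradicting the hypothesis; the output side is handled symmetrically.
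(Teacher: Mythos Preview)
Your algebraic setup is correct and in fact reproduces the paper's Square-Term Property: the factorization $m_{ab}=A_1B_1\,x_{e'e}+m_{ab}^{0}$ (and its three companions) together with the equality of the $x_{e'e}^{2}$-coefficients $A_1A_2B_1B_2$ on both sides is precisely Lemma~\ref{lemma_square_term}. However, your route to the Linearization Property diverges from the paper's, and the realizability step you flag as ``the crux'' is a genuine gap.

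The paper does not attempt to zero $A_2$ and $B_2$ by a clever partial assignment. It first invokes a graph fact (the multicast characterization, Lemma~\ref{lemma_transfer_inequal}): $h$ non-constant is equivalent to the existence of an edge-disjoint pair $(P_1,P_2)\in\C{P}_{ab}\times\C{P}_{pq}$ or $(P_3,P_4)\in\C{P}_{aq}\times\C{P}_{pb}$. Given disjoint $P_1,P_2$ and an arbitrary $(P_3',P_4')$ from the other product, it splices path segments to build $P_3'',P_4''$ and restricts to the subgraph $H=P_1\cup P_2\cup P_3''\cup P_4''$. A short case analysis on the topological order of four crossing edges then exhibits an entire segment of $P_1$ or $P_2$ whose coding variables are \emph{absent} from the denominator $m_{aq}m_{pb}\big|_H$ (or, symmetrically, from the numerator). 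Any variable on that segment serves as $x_{ee'}$; a generic assignment to the remaining $H$-variables, chosen to keep $d(\B{x})=\gcd(h_1,h_2)$ nonzero, finishes.

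Your argument does not close as written. The appeal to ``a nonzero polynomial still has a nonzero evaluation'' is a \emph{free-choice} principle, but you are not choosing freely: you first impose the constraints $A_2=0$, $B_2=0$ and only then ask that $A_1,B_1$ and the three ``avoiding'' polynomials survive. Concretely, suppose $e'$ happens to be a bottleneck between $\sigma_p$ and $\tau_b$; then $m_{pb}=A_2\cdot m_{e'\tau_b}$, so $m_{pb}^{0}$ already carries $A_2$ as a factor, and any specialization with $A_2=0$ forces $m_{pb}^{0}=0$, killing $c_2$. Similar bottlenecks obstruct each of the four cases, and one can arrange that for a given effective $x_{e'e}$ all four fail simultaneously while $h$ still depends on $x_{e'e}$. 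Your fallback claim---that if no effective variable decouples the input sides then a bottleneck between $\{\sigma_a,\sigma_p\}$ and $e'$ forces $h$ to be constant---is likewise unjustified: such a bottleneck only makes the ratio $m_{ab}/m_{pb}$ independent of what happens between the bottleneck and $e'$, not $h$ globally. What actually guarantees realizability is precisely the disjoint-path pair the paper starts from; once you have it, the subgraph $H$ is exactly the ``deliberately non-generic'' specialization you are looking for, with the variable isolation built in by construction rather than hoped for.
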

\begin{proof}
See Appendix \ref{app_proof_graph}.
\end{proof}

The second property, namely the Square-Term Property, is presented in the following lemma.
According to this property, the coefficient of $x^2_{ee'}$ in the numerator of $h(\B{x})$ equals its counter-part in the denominator of $h(\B{x})$. 
Thus, if $x^2_{ee'}$ appears in the numerator of $h(\B{x})$ under some assignment to $\B{x}$, it must also appear in the denominator of $h(\B{x})$, and vice versa.

\begin{lemma}[The Square-Term Property]
\label{lemma_square_term}
Given a coding variable $x_{ee'}$, let $f_1(\B{x})$ and $f_2(\B{x})$ be the coefficients of $x^2_{ee'}$ in $m_{ab}(\B{x})m_{pq}(\B{x})$ and $m_{aq}(\B{x})m_{pb}(\B{x})$ respectively. Then $f_1(\B{x})=f_2(\B{x})$.
\end{lemma}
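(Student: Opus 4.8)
The statement to prove is the Square-Term Property (Lemma~\ref{lemma_square_term}): for a coding variable $x_{ee'}$, the coefficient of $x^2_{ee'}$ in $m_{ab}(\B{x})m_{pq}(\B{x})$ equals the coefficient of $x^2_{ee'}$ in $m_{aq}(\B{x})m_{pb}(\B{x})$, where $a\neq p$ and $b\neq q$.

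\textbf{Overall approach.} The plan is to expand each product of transfer functions as a sum over pairs of directed paths, recall that each transfer function is \emph{linear} in every individual coding variable $x_{ee'}$ (a monomial $t_P(\B{x})$ uses $x_{ee'}$ at most once, precisely when the consecutive edge-pair $(e,e')$ lies on $P$), and conclude that $x^2_{ee'}$ can only arise from a path \emph{pair} in which \emph{both} paths traverse the junction $e\to e'$. So the coefficient of $x^2_{ee'}$ in $m_{ab}m_{pq}$ is $\big(\sum_{P_1\in\C{P}_{ab},\,(e,e')\in P_1} t_{P_1}/x_{ee'}\big)\cdot\big(\sum_{P_2\in\C{P}_{pq},\,(e,e')\in P_2} t_{P_2}/x_{ee'}\big)$, and similarly for the denominator with $(aq),(pb)$. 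The key is then to set up a bijection between the relevant path pairs for the two sides that preserves the associated monomial.

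\textbf{Key steps, in order.} First I would introduce notation: for edges $f,g$ with $\head(f)=\tail(g)$, and for a path $P$ containing the junction $f\to g$, write $P = P^{-}\cdot P^{+}$ where $P^{-}$ runs from the source-side up to $f$ and $P^{+}$ runs from $g$ to the sink-side; then $t_P(\B{x}) = x_{fg}\,t_{P^-}(\B{x})\,t_{P^+}(\B{x})$. Second, I would observe that a path from $\sigma_a$ to $\tau_b$ passing through $e\to e'$ decomposes as a path from $\sigma_a$ to $e$ followed by a path from $e'$ to $\tau_b$; denote by $A_a := \sum_{Q\in\C{P}_{\sigma_a e}} t_Q(\B{x})$ the transfer function from $\sigma_a$ to $e$, and by $B_b := \sum_{Q\in\C{P}_{e'\tau_b}} t_Q(\B{x})$ the transfer function from $e'$ to $\tau_b$. (Here I'm using that $\C{P}_{\sigma_a e}$ and $\C{P}_{e'\tau_b}$ are independent choices — any source-to-$e$ path concatenated with any $e'$-to-sink path is a valid path, since the graph is a DAG and $e$ precedes $e'$.) Third, with this notation the coefficient of $x^2_{ee'}$ in $m_{ab}(\B{x})m_{pq}(\B{x})$ equals $A_a B_b \cdot A_p B_q$, and the coefficient of $x^2_{ee'}$ in $m_{aq}(\B{x})m_{pb}(\B{x})$ equals $A_a B_q \cdot A_p B_b$. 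Fourth, these two products are manifestly equal (both are $A_a A_p B_b B_q$), which is exactly the claim $f_1(\B{x}) = f_2(\B{x})$.

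\textbf{Anticipated main obstacle.} The substance of the argument is not any calculation but the careful justification that the path sets factor as claimed. The delicate point is that $m_{ab}m_{pq}$ and $m_{aq}m_{pb}$ involve \emph{pairs} of paths, and in each pair \emph{both} members must pass through $e\to e'$ for the monomial to contain $x_{ee'}^2$ (if only one does, the power is $1$ or, with the other contributing via a different occurrence — impossible since each path is simple in edges, so it uses $x_{ee'}$ at most once — the power stays $\le 1$). I must also confirm that there is no ``accidental'' second occurrence of $x_{ee'}$ along a single path: since $x_{ee'}$ is associated to the unique ordered edge-pair $(e,e')$, and a directed path visits that pair at most once, $t_P$ is genuinely linear (degree $\le 1$) in $x_{ee'}$. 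Handling the boundary/degenerate cases cleanly — e.g.\ $e = \sigma_a$ so that $A_a$ is the empty product $1$, or $e' = \tau_b$ so that $B_b = 1$, and ensuring the decomposition still makes sense — is the only place where care is genuinely needed; once that bookkeeping is pinned down, the equality $A_a A_p B_b B_q = A_a A_p B_b B_q$ closes the proof immediately.
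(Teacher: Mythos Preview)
Your proposal is correct and is essentially the same argument as the paper's: the paper constructs an explicit bijection $\phi:\C{Q}_1\to\C{Q}_2$ between path pairs by swapping the tails of $P_1,P_2$ after the common junction $(e,e')$, which is precisely the combinatorial content of your factorization identity $A_aB_b\cdot A_pB_q = A_aB_q\cdot A_pB_b$. Your formulation via the factored coefficients $A_a,A_p,B_b,B_q$ is a slightly slicker packaging of the same idea.
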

\begin{proof}
See Appendix \ref{app_proof_graph}
\end{proof}

Now, we sketch the outline for the proof of the sufficiency of Theorem \ref{th_main}.
The proof consists of three steps:

First, we use the Linearization Property and a simple degree-counting technique to reduce $\C{S}'$ to the following set $\C{S}''_1$:
\begin{flalign}
\C{S}''_1 = \bigg\{\frac{a_0+a_1\eta(\B{x})}{b_0+b_1\eta(\B{x})} \in \C{S}' : a_0,a_1,b_0,b_1\in \BF_q \bigg\}
\end{flalign}

Next, we iterate through all possible configurations of $a_0,a_1,b_0,b_1$, and utilize the Linearization Property and the Square-Term Property to further reduce $\C{S}''_1$ to just four rational functions:
\begin{flalign}
\C{S}''_2 = \bigg\{1, \eta(\B{x}), 1+\eta(\B{x}), \frac{\eta(\B{x})}{1+\eta(\B{x})} \bigg\}
\end{flalign}

Finally, we use a recent result from \cite{Han2011} to rule out the fourth redundant rational function in $\C{S}''_2$, resulting in the minimal set $\C{S}'_i$ defined in Theorem \ref{th_main}.
The detailed proof is deferred to Appendix \ref{app_proof_feasibility}.

\subsubsection{Necessity of the Conditions of Theorem \ref{th_main}}

We first show how to get a precoding matrix $\B{V}_1$ that satisfies Eq. (\ref{eq_v1}).
The construction of $\B{V}_1$ involves solving a system of linear equations defined on $\BF_{2^m}(\xi)(z)$:
\begin{flalign}
\label{eq_v1_align_2}
\B{r}(z)(z\B{C}-\B{BA})=0
\end{flalign}
In the above equation, $\B{r}(z)=(r_1(z),\cdots,r_{n+s}(z))$, where $r_i(z) \in \BF_{2^m}(\xi)(z)$ for $1\le i \le n+s$. 
Assume $\B{r}_0(z)$ is a non-zero solution to Eq. (\ref{eq_v1_align_2}). 
Substitute $z$ with $\eta(\B{x})$, and we have $\eta(\B{x})\B{r}_0(\eta(\B{x}))\B{C} = \B{r}_0(\eta(\B{x}))\B{BA}$.
Finally, construct the following precoding matrix 
\begin{flalign}
\B{V}^T_1=(\B{r}^T_0(\eta(\B{x}^{(1)})) \quad \B{r}^T_0(\eta(\B{x}^{(2)})) \quad \cdots \quad \B{r}^T_0(\eta(\B{x}^{(2n+s)})))
\end{flalign}
Apparently, $\B{V}_1$ satisfies Eq. (\ref{eq_v1_align}). 
Hence, each non-zero solution to Eq. (\ref{eq_v1_align_2}) corresponds to a row of $\B{V}_1$ satisfying Eq. (\ref{eq_v1_align}). 
Conversely, it is straightforward to see that each row of $\B{V}_1$ satisfying Eq. (\ref{eq_v1_align}) corresponds to a solution to Eq. (\ref{eq_v1_align_2}).

As we will prove in Appendix \ref{app_proof_feasibility}, $\myrank(z\B{C}-\B{BA})=n$.
If $s=0$, $z\B{C}-\B{BA}$ becomes an invertible square matrix, and Eq. (\ref{eq_v1_align_2}) only has zero solution.
Thus, in order for Eq. (\ref{eq_v1}) to have a non-zero solution, $s$ must equal 1.

As an example, consider the case where $s=1$, $n=2$, and $2^m=4$. 
Let $\alpha$ be the primitive element of $\BF_4$ such that $\alpha^3=1$ and $\alpha^2+\alpha+1=0$. Moreover, let $\B{A}=\B{I}_2$ and
\begin{flalign*}
\B{C}=\begin{pmatrix}
1 & \alpha \\
\alpha & 1 \\
\alpha^2 & 1
\end{pmatrix} \hspace*{10pt}
\B{B}=\begin{pmatrix}
\alpha^2 & \alpha \\
1 & 1 \\
1 & \alpha
\end{pmatrix}
\end{flalign*}
It's easy to verify that $\B{r}(z)=(\alpha^2z^2+\alpha, z+\alpha, z^2+\alpha z+\alpha^2)$ satisfies Eq. (\ref{eq_v1_align_2}). 
Thus, we substitute $z$ with $\eta(\B{x}^j)$ and construct $\B{V}^T_1=(\B{r}^T(\eta(\B{x}^1)) \hspace*{8pt} \B{r}^T(\eta(\B{x}^2)) \hspace*{6pt}\cdots\hspace*{6pt} \B{r}^T(\eta(\B{x}^5)))$. 
Apparently, Eq. (\ref{eq_v1_align}) is satisfied. 
From this example, we can see that given different $\B{A},\B{B},\B{C}$, we can construct different precoding matrix $\B{V}_1$, and thus the choices of precoding matrices are not limited to those defined in Eq. (\ref{eq_v1})-(\ref{eq_v3}).
An interesting observation is that the above precoding matrix $\B{V}_1$ is closely related to Eq. (\ref{eq_v1}) through a transform equation: $\B{V}_1=\B{V}^*_1\B{F}$, where 
\begin{flalign*}
\B{F}=\begin{pmatrix}
\alpha & \alpha & \alpha^2 \\
0 & 1 & \alpha \\
\alpha^2 & 0 & 1
\end{pmatrix}
\end{flalign*}
Actually, this observation can be generalized to the following Lemma.

\begin{lemma}
\label{lemma_v1}
Assume $s=1$. 
Any $\B{V}_1$ satisfying Eq. (\ref{eq_v1_align}) is related to $\B{V}^*_1$ through the following transform equation
\begin{flalign}
\label{eq_transform_equation}
\B{V}_1=\B{G}\B{V}^*_1\B{F}
\end{flalign}
where $\B{V}^*_1$ is defined in Eq. (\ref{eq_v1}), $\B{F}$ is an $(n+1)\times(n+1)$ matrix, and $\B{G}$ is a $(2n+1)\times (2n+1)$ diagonal matrix, with the $(i,i)$ element being $f_i(\eta(\B{x}^i))$, where $f_i(z)$ is an arbitrary non-zero rational function in $\BF_{2^m}(\xi)(z)$. 
Moreover, the $(n+1)$th row of $\B{FC}$ and the 1st row of $\B{FBA}$ are both zero vectors.
\end{lemma}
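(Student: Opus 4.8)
\emph{Proof proposal.} The plan is to analyze Eq.~(\ref{eq_v1_align}) one row at a time. Write $\B{v}_i$ for the $i$th row of $\B{V}_1$; since $\B{T}$ is diagonal with $(i,i)$ entry $\eta(\B{x}^{(i)})$, Eq.~(\ref{eq_v1_align}) is equivalent to saying that, for every $i$, $\B{v}_i\bigl(\eta(\B{x}^{(i)})\B{C}-\B{BA}\bigr)=\B{0}$, i.e., $\B{v}_i$ lies in the left null space of the polynomial matrix $z\B{C}-\B{BA}$ specialized at $z=\eta(\B{x}^{(i)})$. So I would first describe the left null space of $z\B{C}-\B{BA}$ over $\BF_{2^m}(\xi)(z)$, then transfer that description to the specializations $z=\eta(\B{x}^{(i)})$ in order to read off $\B{F}$ and $\B{G}$, and finally obtain the two ``moreover'' statements by comparing the extreme powers of $z$ in the null-space identity.

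For the first step, Appendix~\ref{app_proof_feasibility} supplies $\myrank(z\B{C}-\B{BA})=n$, so (the matrix has $n+1$ rows) its left null space over $\BF_{2^m}(\xi)(z)$ is one-dimensional, spanned by the cofactor vector $\tilde{\B{r}}(z)$ whose $j$th entry is, up to a sign that is immaterial over $\BF_{2^m}$, the determinant of the submatrix obtained by deleting row $j$. Every entry of $z\B{C}-\B{BA}$ is affine in $z$, so each of these $n\times n$ determinants is a polynomial in $z$ of degree at most $n$; hence $\tilde{\B{r}}(z)=(1\ \ z\ \ \cdots\ \ z^n)\,\B{F}$ for a unique $(n+1)\times(n+1)$ matrix $\B{F}$ over $\BF_{2^m}(\xi)$ — determined by $\B{A},\B{B},\B{C}$ — that collects these coefficients. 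By Eq.~(\ref{eq_v1}) the $i$th row of $\B{V}^*_1$ is exactly $(1\ \ \eta(\B{x}^{(i)})\ \ \cdots\ \ \eta^n(\B{x}^{(i)}))$, so evaluating at $z=\eta(\B{x}^{(i)})$ gives $\tilde{\B{r}}(\eta(\B{x}^{(i)}))=\bigl(\text{$i$th row of }\B{V}^*_1\bigr)\B{F}$.

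Next I would argue the specialization $z=\eta(\B{x}^{(i)})$ is non-degenerate. Because $\eta(\B{x})$ is non-constant, $\eta(\B{x}^{(i)})$ is transcendental over $\BF_{2^m}$, so no nonzero polynomial in $z$ vanishes upon this substitution; applied to a nonzero $n\times n$ minor of $z\B{C}-\B{BA}$ this yields $\myrank(\eta(\B{x}^{(i)})\B{C}-\B{BA})=n$, and applied to a nonzero component of $\tilde{\B{r}}(z)$ it yields $\tilde{\B{r}}(\eta(\B{x}^{(i)}))\neq\B{0}$. Hence the left null space of $\eta(\B{x}^{(i)})\B{C}-\B{BA}$ over $\BF_{2^m}(\xi)$ is again one-dimensional and spanned by $\tilde{\B{r}}(\eta(\B{x}^{(i)}))$, so $\B{v}_i=f_i\,\tilde{\B{r}}(\eta(\B{x}^{(i)}))=f_i\cdot\bigl(\text{$i$th row of }\B{V}^*_1\bigr)\B{F}$ for some scalar $f_i\in\BF_{2^m}(\xi)$, nonzero precisely when $\B{v}_i\neq\B{0}$. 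Stacking the rows and putting $\B{G}=\mathrm{diag}(f_1,\dots,f_{2n+1})$ — with each $f_i$ regarded as the constant rational function $f_i(z)\equiv f_i$, so that the $(i,i)$ entry reads $f_i(\eta(\B{x}^{(i)}))$ — gives $\B{V}_1=\B{G}\B{V}^*_1\B{F}$.

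Finally, $\tilde{\B{r}}(z)$ satisfies $z\,\tilde{\B{r}}(z)\B{C}=\tilde{\B{r}}(z)\B{BA}$ as an identity in $\BF_{2^m}(\xi)[z]$. Substituting $\tilde{\B{r}}(z)=(1\ \ z\ \ \cdots\ \ z^n)\B{F}$ and using $z\,(1\ \ z\ \ \cdots\ \ z^n)=(z\ \ z^2\ \ \cdots\ \ z^{n+1})$, the left side has constant term $\B{0}$ and coefficient of $z^{n+1}$ equal to the $(n+1)$th row of $\B{FC}$, while the right side has degree at most $n$ and constant term equal to the 1st row of $\B{FBA}$. Matching the coefficients of $z^{n+1}$ and of $z^0$ forces the $(n+1)$th row of $\B{FC}$ and the 1st row of $\B{FBA}$ both to vanish, which is the ``moreover'' claim. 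The one delicate step — the rest being routine linear algebra — is the non-degeneracy argument of the third paragraph: one must invoke the non-constancy of $\eta(\B{x})$ to be certain that $z=\eta(\B{x}^{(i)})$ avoids the finitely many values at which the rank of $z\B{C}-\B{BA}$, or the cofactor vector $\tilde{\B{r}}(z)$, could collapse, and this is exactly where the standing hypothesis that $\eta(\B{x})$ is not constant enters.
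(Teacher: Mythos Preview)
Your argument is correct and follows essentially the same route as the paper's: reduce to the one-dimensional left null space of $z\B{C}-\B{BA}$ (Lemma~\ref{lemma_rank_n}), produce a polynomial spanning vector of degree at most $n$ in $z$ (you use cofactors, the paper uses Cramer's rule in Corollary~\ref{cor_get_v1_1}), specialize row by row to $z=\eta(\B{x}^{(i)})$, and read off the ``moreover'' claim by matching the coefficients of $z^0$ and $z^{n+1}$. If anything, your treatment of the specialization step is more explicit than the paper's, which simply cites Corollary~\ref{cor_get_v1_1} without discussing why the null-space dimension is preserved under $z\mapsto\eta(\B{x}^{(i)})$.
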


\begin{proof}
See Appendix \ref{app_proof_feasibility}.
\end{proof}

Using Lemma \ref{lemma_v1}, we can prove that if a coupling relation $p_i(\B{x})=\frac{f(\eta(\B{x}))}{g(\eta(\B{x}))}\in \C{S}'$ is present in the network, any PBNA cannot achieve one half rate per unicast session.
This implies that the conditions of Theorem \ref{th_main} are also necessary for PBNA to achieve one half rate per unicast session.
We defer the detailed proof to Appendix \ref{app_proof_feasibility}.

\subsection{$\eta(\B{x})$ Is Constant} \label{subsec_pbna_eta_trivial}

\begin{proof}[Proof of Theorem \ref{th_na_eta_trivial}]
In the proof of Theorem \ref{th_pbna_eta_constant_1}, we've proved the sufficiency of Theorem \ref{th_na_eta_trivial}.
If $p_i(\B{x})=1$, $\B{P}_i$ becomes an identity matrix.
We will show that it is impossible for PBNA to achieve one half rate for each unicast session.
We only prove the case for $i=1$.
The other cases $i=2,3$ can be proved similarly, and are omitted.
The matrix in the reformulated rank condition $\C{B}'_1$ becomes $(\B{V}_1 \quad \B{V}_1\B{C})$.
Since $\myrank(\B{V}_1\B{C})=n$, there are $n$ columns in $\B{V}_1$ that are linearly dependent of the columns in $\B{V}_1\B{C}$.
Thus, it is impossible for PBNA to achieve one half rate for $\omega_1$.
\end{proof}

\begin{figure}[t]
\centering
\includegraphics[width=5cm]{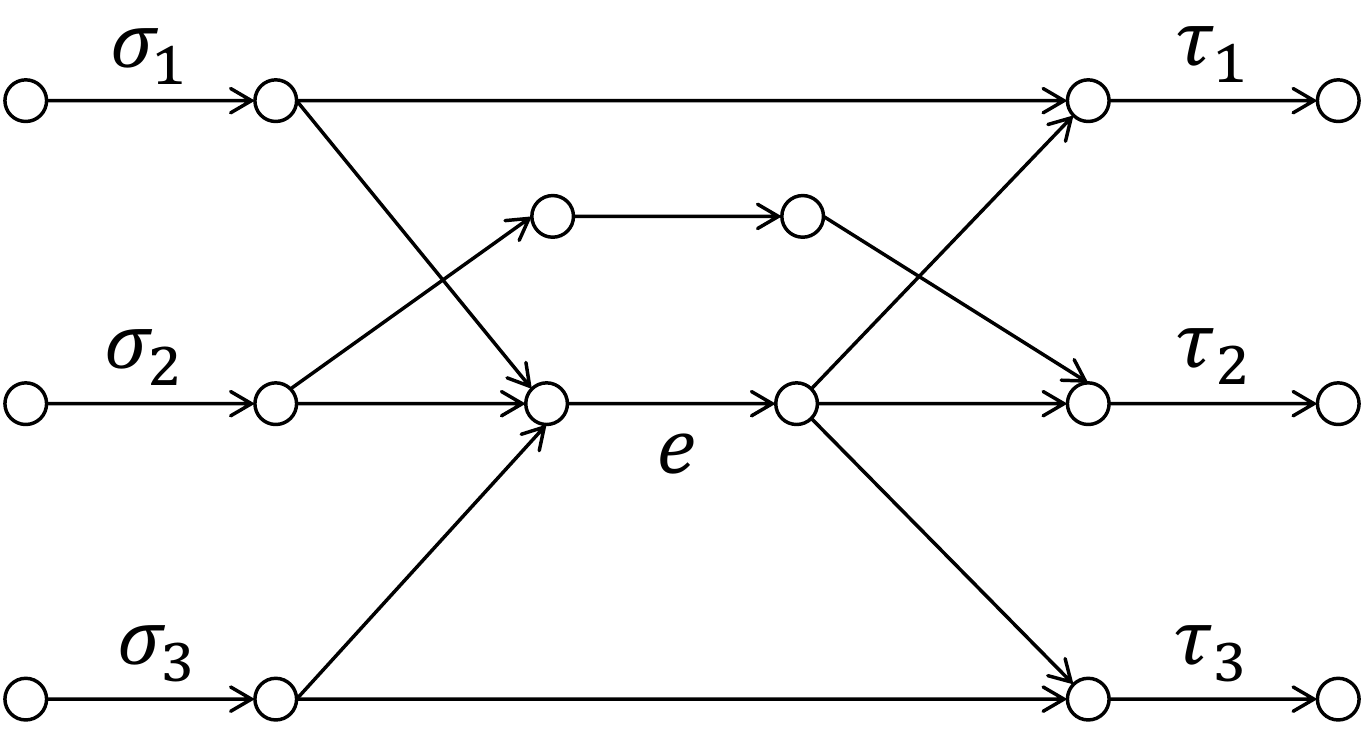}
\caption{An example where $\eta(\B{x})=1$ and $p_i(\B{x})\neq 1$ for $i\in \{1,2,3\}$, and thus each unicast session can achieve one half rate in exactly two time slots due to Theorem \ref{th_na_eta_trivial}. \blue{For this example, routing achieves symmetric rate of one} \label{fig_eta_trivial}}
\end{figure}

In Fig. \ref{fig_eta_trivial}, we show an example of this case.
\blue{Note that the network in Fig. \ref{fig_eta_trivial} has rich connectivity such that each sender is connected to its corresponding receiver via a disjoint directed path.
Thus, there is no coding opportunity that can be exploited, and routing is sufficient to achieve rate 1 per unicast session, which is the maximum symmetric rate achieved by any network coding schemes.
Hence, this class of networks is of less significance than the class of networks considered in Theorem \ref{th_main}.}

\blue{\subsection{Some $s_i$ Is Disconnected from Some $d_j$ ($i\neq j$) \label{subsec_feasibility_zero_interference}}}

In this case, since the number of interfering signals is reduced, at least one alignment condition can be removed, and thus the restriction on $\B{V}_1$ imposed by Eq. (\ref{eq_v1_align}) vanishes. 
Therefore, we can choose $\B{V}_1$ freely, and the feasibility conditions of PBNA can be greatly simplified.
For example, assume $m_{21}(\B{x})=0$ and all other transfer functions are non-zeros.
Hence, the alignment condition for the first unicast session vanishes.
Using a scheme similar to above, we set $\B{V}_1=(\theta_1 \quad \theta_2)^T$, $\B{V}_2 = \B{M}_{13}\B{M}^{-1}_{23}(\theta_1 \quad \theta_2)^T$ and $\B{V}_3 = \B{M}_{12}\B{M}^{-1}_{32}(\theta_1 \quad \theta_2)^T$, and thus the interferences at $\tau_2$ and $\tau_3$ are all perfectly aligned.
It is easy to see that $(\frac{1}{2}, \frac{1}{2}, \frac{1}{2})$ is feasible through PBNA if and only if $p_i(\B{x})$ is not constant for every $i=1,2,3$.
Using similar arguments, we can discuss other cases.

\section{Checking the Achievability Conditions of PBNA} \label{sec_check}

In this section, we propose a polynomial-time algorithm to check the feasibility conditions of PBNA.
We use $\C{C}_{e_1e_2}$ to denote the set of bottlenecks between two edges $e_1$ and $e_2$, and use  $\C{C}_{ij}$ to represent $\C{C}_{\sigma_i\tau_j}$.
Using this notation, it can be easily seen that $\alpha_{ijk}$ is the last edge of the topological ordering of the edges in $\C{C}_{ij} \cap \C{C}_{ik}$, and $\beta_{ijk}$ is the first edge of the topological ordering of the edges in $\C{C}_{jk} \cap \C{C}_{\alpha_{ijk},\tau_k}$.

We assume $\C{G}$ is stored as an adjacency list, i.e., for each node $v\in V'$, we associate it with the set of its incoming edges and the set of its outgoing edges.
Moreover, we assume all the edges in $\C{G}$ have been arranged in topological order.

The checking process consists of the following steps: 
1) Check if $\eta(\B{x})=1$;
2) if $\eta(\B{x})=1$, check the conditions of Theorem \ref{th_na_eta_trivial};
3) otherwise, check the conditions of Theorem \ref{th_main}.
In the following discussion, we present the building blocks involved in these steps.

\subsubsection{Calculating $\C{C}_{ee'}$} 

\begin{algorithm}[t]
\small{
Use BFS \blue{(Breadth First Search)} algorithm to calculate the set of edges reachable from $e$, denoted by $E_1$\;
Use reverse BFS algorithm to calculate the set of edges which is connected to $e'$, denoted by $E_2$\;
$E_{ee'} \leftarrow E_1 \cap E_2$\;
\blue{$\C{C}_{ee'} \leftarrow \{e\}$, $\C{C} \leftarrow \{e\}$}\;
\For{\textnormal{\textbf{each} $e_1 \in E_{ee'}$ in the topological order}}{
	$\C{C} \leftarrow \C{C} - \{e_1\}$\;
	\For{\textnormal{\textbf{each} $e_2$ such that $head(e_1)=tail(e_2)$}}{
		\lIf{$e_2\in E_{ee'}$}{ $\C{C} \leftarrow \C{C} \cup \{e_2\}$ }
	}
	\lIf{\textnormal{$\C{C}$ contains one edge}}{ $\C{C}_{ee'} \leftarrow \C{C}_{ee'} \cup \C{C}$ }
}
}
\caption{Calculate $\C{C}_{ee'}$ \label{alg_bottleneck}}
\end{algorithm}

We use Algorithm \ref{alg_bottleneck} to calculate the set of bottlenecks $\C{C}_{ee'}$ which separates $e$ from $e'$.
The algorithm consists of two steps:
1) Lines 1-3 are used to calculate the set of edges traversed by the paths in $P_{ee'}$, denoted by $E_{ee'}$.
Note that in the reverse BFS algorithm, we start from $e'$ and move upwards by following the incoming edges associated with each node.
2) Lines 4-11 are used to calculate $\C{C}_{ee'}$.
In this step, we iterate through each edge $e\in E_{ee'}$ in the topological order.
In each iteration, we calculate $\C{C}$, which forms a cut separating $e$ from $e'$.
If $\C{C}$ contains only one edge, we then incorporate $\C{C}$ into $\C{C}_{ee'}$.
The running time of the algorithm is $O(h|E|)$, where $h$ is the maximum in-degree of nodes in $G'$.

\subsubsection{Checking if $\eta(\B{x})=1$}

Using algorithm \ref{alg_bottleneck} and Theorem \ref{th_eta_constant}, we can easily check whether this coupling relation holds.
First, we calculate $\C{C}_{31}\cap \C{C}_{32}$, $\C{C}_{21}\cap \C{C}_{23}$, from which we get the two edges $\alpha_{312}$ and $\alpha_{213}$.
Then, we calculate $\C{C}_{12}\cap \C{C}_{\alpha_{312},\tau_2}$, $\C{C}_{13}\cap \C{C}_{\alpha_{213},\tau_3}$, from which we get $\beta_{312}$ and $\beta_{213}$.
Finally, we use Theorem \ref{th_eta_constant} to check if $\eta(\B{x})=1$ by checking whether $\alpha_{312}=\beta_{312}$ and $\alpha_{213}=\beta_{213}$.

\subsubsection{Checking if $p_i(\B{x})=1$ or $p_i(\B{x})=\eta(\B{x})$}

Due to Theorem \ref{th_interpret_1}, we use Ford-Fulkerson Algorithm to check these coupling relations.
For example, in order to check whether $p_1(\B{x})=1$, we add a super sender node $s'$, which is connected to $s_1$ and $s_2$ via two directed edges of capacity one, and a super receiver node $d'$, to which $d_1$ and $d_3$ are connected via two directed edges of capacity one.
We then use Ford-Fulkerson Algorithm to calculate the maximum flow from $s'$ to $d'$, which is identical to \blue{the minimum capacity of cut-sets between $\{s_1,s_2\}$ and $\{d_1,d_2\}$, denoted by $C_{12,13}$}.
Thus, by checking whether $C_{12,13}=1$, we can identify whether $p_1(\B{x})=1$.
Similarly, we can check other coupling relations.

\subsubsection{Checking if $p_1(\B{x})=\frac{\eta(\B{x})}{1+\eta(\B{x})}$ or $p_2(\B{x}),p_3(\B{x})=1+\eta(\B{x})$}

We use Algorithm \ref{alg_third} to check if $p_1(\B{x})=\frac{\eta(\B{x})}{1+\eta(\B{x})}$.
The other two coupling relations can be checked similarly.
Note that Line 4 consists of two steps: First, we start from $\alpha_{312}$ and use BFS to check if $\alpha_{213}$ is reachable from $\alpha_{312}$; then we start from $\alpha_{213}$ and use BFS to check if $\alpha_{312}$ is reachable from $\alpha_{213}$.
The running time of the algorithm is $O(h|E|)$.

\begin{algorithm}[t]
\small{
$\alpha_{312} \leftarrow $ the last edge of $\C{C}_{31} \cap \C{C}_{32}$\;
$\alpha_{213} \leftarrow $ the last edge of $\C{C}_{21} \cap \C{C}_{23}$\;
\lIf{\textnormal{$\alpha_{312} \notin \C{C}_{12}$ \textbf{or} $\alpha_{213} \notin \C{C}_{13}$}}{ \Return \textbf{false} }
Use BFS algorithm to check whether $\alpha_{312}$ is connected with $\alpha_{213}$ by a directed path\; 
\lIf{\textnormal{$\alpha_{312}$ is connected with $\alpha_{213}$}}{\Return \textbf{false}} 
Let $G_1$ denote the subgraph of $G'$ induced by $E'-\{\alpha_{312}, \alpha_{213}\}$\;
Use BFS algorithm to check whether $\tau_1$ is connected to $\sigma_1$ in $G_1$\;
\lIf{\textnormal{$\tau_1$ is connected to $\sigma_1$ in $G_1$}}{
	\Return \textbf{false}
}
\lElse{
	\Return \textbf{true}
}
}
\caption{Check if $p_1(\B{x})=\frac{\eta(\B{x})}{1+\eta(\B{x})}$ \label{alg_third}}
\end{algorithm}

\section{Optimal Linear Precoding-Based Rates} \label{sec_rates}

In this section, we prove that for SISO scenarios \blue{where all the senders are connected to all the receivers via directed paths}, there are only three possible symmetric rates achieved by any precoding-based linear schemes.
We'll also show that \blue{PBNA can achieve the optimal symmetric rate achieved by precoding-based linear schemes.}
In order to show this, we first prove that for the networks that violate one of the following three conditions: $p_1(\B{x}) \ne  \frac{\eta(\B{x})}{1+\eta(\B{x})}$, $p_2(\B{x}) \ne 1 + \eta(\B{x})$, and $p_3(\B{x}) \ne 1 + \eta(\B{x})$, it is not possible to achieve a symmetric rate of more than 2/5 per user, through any precoding-based scheme (the proof  follows  from \cite{cadambe2010interference}). We also show that this outer bound of $2/5$ is achievable through our PBNA scheme and thus it is tight.

Consider any precoding-based linear scheme over $N$ channel uses. Let $\tilde{v}_1,\tilde{v}_2, \tilde{v}_3$ be vectors from the spaces $\myspan(\B{V}_1)$, $\myspan(\B{V}_2)$, and $\myspan(\B{V}_3)$, respectively. Consider a $Type\ II$ network, without loss of generality , we assume that the network realizes  $p_1(\B{x}) =  \frac{\eta(\B{x})}{1+\eta(\B{x})}$ (see Fig. \ref{fig_coupling_relation_2}). This relation  can  be equivalently represented in matrix form as 
\begin{flalign}
& \B{M}_{11} = \B{M}_{31}{\B{M}_{32}}^{-1}\B{M}_{12} + \B{M}_{21}{\B{M}_{23}}^{-1}\B{M}_{13} \label{cond3mat}
\end{flalign}

\begin{lemma}
\label{lemma_acs}
If $\tilde{v}_1$  aligns with $\tilde{v}_3$ at $d_2$ and with $\tilde{v}_2$ at $d_3$, then $\tilde{v}_1$ must align in the space spanned by $\tilde{v}_2$ and $\tilde{v}_3$ at $d_1$.
\end{lemma}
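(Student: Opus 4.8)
The plan is to derive the conclusion by a direct substitution into the matrix identity (\ref{cond3mat}), which is exactly the matrix form of the coupling relation $p_1(\B{x})=\frac{\eta(\B{x})}{1+\eta(\B{x})}$ realized by the network. The only structural facts needed are that each $\B{M}_{ij}$ is diagonal with non-zero diagonal entries, hence invertible over $\BF_{2^m}(\xi)$, and that these diagonal matrices all commute with one another; both are immediate from Eq. (\ref{eq_M}) together with the standing assumption that every sender reaches every receiver, so that each $m_{ij}(\B{x})$ is a non-zero polynomial. Everything takes place over the function field $\BF_{2^m}(\xi)$, where the $\tilde{v}_i$ live.

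First I would dispose of the degenerate case $\tilde{v}_1=\B{0}$, where $\B{M}_{11}\tilde{v}_1=\B{0}$ lies in every subspace and there is nothing to prove. Assuming $\tilde{v}_1\neq\B{0}$, I would translate the two alignment hypotheses into scalar equations. Since $\B{M}_{12}$ is invertible, $\B{M}_{12}\tilde{v}_1\neq\B{0}$, so ``$\tilde{v}_1$ aligns with $\tilde{v}_3$ at $d_2$'' --- i.e.\ $\B{M}_{12}\tilde{v}_1$ and $\B{M}_{32}\tilde{v}_3$ span the same line --- yields a non-zero scalar $a\in\BF_{2^m}(\xi)$ with $\B{M}_{32}\tilde{v}_3=a\,\B{M}_{12}\tilde{v}_1$, equivalently $\B{M}_{12}\tilde{v}_1=a^{-1}\B{M}_{32}\tilde{v}_3$. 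Likewise ``$\tilde{v}_1$ aligns with $\tilde{v}_2$ at $d_3$'' yields a non-zero scalar $b$ with $\B{M}_{13}\tilde{v}_1=b^{-1}\B{M}_{23}\tilde{v}_2$. Multiplying the first of these by $\B{M}_{31}\B{M}_{32}^{-1}$ and the second by $\B{M}_{21}\B{M}_{23}^{-1}$, and using commutativity, I obtain $\B{M}_{31}\B{M}_{32}^{-1}\B{M}_{12}\tilde{v}_1=a^{-1}\B{M}_{31}\tilde{v}_3$ and $\B{M}_{21}\B{M}_{23}^{-1}\B{M}_{13}\tilde{v}_1=b^{-1}\B{M}_{21}\tilde{v}_2$. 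Substituting both into (\ref{cond3mat}) applied to $\tilde{v}_1$ gives
\begin{flalign*}
\B{M}_{11}\tilde{v}_1 = \B{M}_{31}\B{M}_{32}^{-1}\B{M}_{12}\tilde{v}_1 + \B{M}_{21}\B{M}_{23}^{-1}\B{M}_{13}\tilde{v}_1 = a^{-1}\B{M}_{31}\tilde{v}_3 + b^{-1}\B{M}_{21}\tilde{v}_2,
\end{flalign*}
so $\B{M}_{11}\tilde{v}_1\in\myspan(\B{M}_{21}\tilde{v}_2,\B{M}_{31}\tilde{v}_3)$, which is precisely the statement that $\tilde{v}_1$ aligns inside the space spanned by $\tilde{v}_2$ and $\tilde{v}_3$ at $d_1$. (The analogous statements for the other two $Type\ II$ coupling relations, $p_2(\B{x})=1+\eta(\B{x})$ and $p_3(\B{x})=1+\eta(\B{x})$, follow by the same computation after the obvious relabeling of indices.)

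The proof is short because all the work is already carried by the matrix form (\ref{cond3mat}) of the coupling relation; no degree-counting or graph-theoretic property of transfer functions is needed at this stage. The only place that requires a bit of care --- and the main obstacle, such as it is --- is the non-degeneracy bookkeeping: one must check that the blocks being inverted really are invertible over $\BF_{2^m}(\xi)$ and that the alignment scalars $a$ and $b$ are non-zero, which amounts to noting that the aligned images $\B{M}_{12}\tilde{v}_1$, $\B{M}_{32}\tilde{v}_3$, $\B{M}_{13}\tilde{v}_1$, $\B{M}_{23}\tilde{v}_2$ are all non-zero whenever $\tilde{v}_1\neq\B{0}$. Once this is in place, the displayed identity is a one-line computation.
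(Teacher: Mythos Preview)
Your proof is correct and follows essentially the same route as the paper's: translate the two alignment hypotheses into scalar relations, then substitute directly into the matrix identity (\ref{cond3mat}) to express $\B{M}_{11}\tilde{v}_1$ as a linear combination of $\B{M}_{21}\tilde{v}_2$ and $\B{M}_{31}\tilde{v}_3$. The only difference is cosmetic (your scalars are the inverses of the paper's), and you are in fact a bit more careful than the paper about the degenerate case $\tilde{v}_1=\B{0}$ and about why the relevant diagonal blocks are invertible.
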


\begin{proof}
Since $\tilde{v}_1$  aligns with $\tilde{v}_3$ at $d_2$ and with $\tilde{v}_2$ at $d_3$, it follows that,
\begin{flalign}
& d_2: \hspace{8pt} \B{M}_{12}\tilde{v}_1 = a\  \B{M}_{32}\tilde{v}_3 \label{rx2align} \\
& d_3: \hspace{8pt} \B{M}_{13}\tilde{v}_1 = b\  \B{M}_{23}\tilde{v}_3 \label{rx3align}
\end{flalign}
where $a, b$ are scalars. At $d_1$, we see the vector $\B{M}_{11}\tilde{v}_1$. Using \eqref{cond3mat}, \eqref{rx2align} and \eqref{rx3align}  we get,
\begin{flalign*}
 \B{M}_{11} \tilde{v}_1 &= \B{M}_{31}{\B{M}_{32}}^{-1}\B{M}_{12} \tilde{v}_1+ \B{M}_{21}{\B{M}_{23}}^{-1}\B{M}_{13} \tilde{v}_1 \\
 &= a\ \B{M}_{31} \tilde{v}_2 + b\ \B{M}_{21} \tilde{v}_2
\end{flalign*} 
This shows that the desired vector at $d_1$ aligns with the space spanned by the interference.
\end{proof}

\begin{theorem}
\label{thm_acs}
For a $Type\ II$ network the symmetric rate achievable per user through any precoding-based scheme cannot be more than $2/5$. 
\end{theorem}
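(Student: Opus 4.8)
The plan is to upgrade the single-vector alignment chain of Lemma~\ref{lemma_acs} into a subspace dimension count over the rational-function field $\BF_{2^m}(\xi)$, following the outer-bound philosophy of \cite{cadambe2010interference}. Fix any precoding-based linear scheme over $N$ channel uses that achieves a symmetric rate $R$, realized by precoding matrices $\B{V}_i$ of rank $k$ with $k/N=R$ (so $k_1=k_2=k_3=k$). Since the scheme achieves the rate, the Schwartz--Zippel argument used in Lemma~\ref{lemma_na} shows that the decodability conditions must already hold generically, i.e.\ over $\BF_{2^m}(\xi)$: for each $i$, $\B{M}_{ii}\B{V}_i$ has full column rank $k$ and $\myspan(\B{M}_{ii}\B{V}_i)$ has trivial intersection with the interference space $\myspan(\B{M}_{ji}\B{V}_j)+\myspan(\B{M}_{li}\B{V}_l)$, where $\{j,l\}=\{1,2,3\}\setminus\{i\}$. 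Because every sender reaches every receiver, each diagonal matrix $\B{M}_{ij}$ is invertible over $\BF_{2^m}(\xi)$, so every $\myspan(\B{M}_{ij}\B{V}_j)$ has dimension $k$; from here on all spans and dimensions are taken over $\BF_{2^m}(\xi)$.

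First I would quantify the alignment that decodability \emph{forces} at $d_2$ and $d_3$. Decodability at $d_2$ requires the $k$-dimensional space $\myspan(\B{M}_{22}\B{V}_2)$ to be disjoint from $\myspan(\B{M}_{12}\B{V}_1)+\myspan(\B{M}_{32}\B{V}_3)$, so the latter has dimension at most $N-k$; inclusion--exclusion then gives $\dim\big(\myspan(\B{M}_{12}\B{V}_1)\cap\myspan(\B{M}_{32}\B{V}_3)\big)\ge 3k-N$. Pulling this intersection back through the invertible $\B{M}_{12}$ yields a subspace $\C{T}_2=\B{M}_{12}^{-1}\big(\myspan(\B{M}_{12}\B{V}_1)\cap\myspan(\B{M}_{32}\B{V}_3)\big)\subseteq\myspan(\B{V}_1)$ with $\dim\C{T}_2\ge 3k-N$, and by construction every $\tilde v_1\in\C{T}_2$ satisfies $\B{M}_{12}\tilde v_1=\B{M}_{32}\tilde v_3$ for $\tilde v_3=\B{M}_{32}^{-1}\B{M}_{12}\tilde v_1\in\myspan(\B{V}_3)$, i.e.\ $\tilde v_1$ aligns with $\tilde v_3$ at $d_2$. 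The identical argument at $d_3$ produces $\C{T}_3\subseteq\myspan(\B{V}_1)$ with $\dim\C{T}_3\ge 3k-N$ such that every $\tilde v_1\in\C{T}_3$ aligns at $d_3$ with some $\tilde v_2=\B{M}_{23}^{-1}\B{M}_{13}\tilde v_1\in\myspan(\B{V}_2)$.

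Next I would intersect these inside the $k$-dimensional space $\myspan(\B{V}_1)$: the subspace $\C{T}=\C{T}_2\cap\C{T}_3$ has $\dim\C{T}\ge(3k-N)+(3k-N)-k=5k-2N$. For every $\tilde v_1\in\C{T}$ both alignment relations hold at once, so Lemma~\ref{lemma_acs} --- which is precisely where the Type~II coupling relation Eq.~(\ref{cond3mat}) enters --- applies and gives $\B{M}_{11}\tilde v_1\in\myspan(\B{M}_{31}\B{V}_3)+\myspan(\B{M}_{21}\B{V}_2)$, the interference space at $d_1$. Hence $\B{M}_{11}\C{T}\subseteq\myspan(\B{M}_{11}\B{V}_1)\cap\big(\myspan(\B{M}_{21}\B{V}_2)+\myspan(\B{M}_{31}\B{V}_3)\big)=\{0\}$ by decodability at $d_1$, and since $\B{M}_{11}$ is invertible this forces $\dim\C{T}=0$. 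Therefore $5k-2N\le 0$, i.e.\ $R=k/N\le 2/5$. Running the same four steps with possibly unequal session sizes $k_1,k_2,k_3$ gives the cleaner inequality $k_1+2k_2+2k_3\le 2N$, and for an asymptotically achievable symmetric rate one applies this bound along the achieving sequence and passes to the limit.

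The inclusion--exclusion estimates and the pull-back through an invertible diagonal matrix are routine; the one step needing genuine care is the \emph{uniform} use of Lemma~\ref{lemma_acs} over the entire subspace $\C{T}$ --- one must verify that for each $\tilde v_1\in\C{T}$ the required alignment partners $\tilde v_2\in\myspan(\B{V}_2)$ and $\tilde v_3\in\myspan(\B{V}_3)$ actually exist (they do, as $\tilde v_3=\B{M}_{32}^{-1}\B{M}_{12}\tilde v_1$ and $\tilde v_2=\B{M}_{23}^{-1}\B{M}_{13}\tilde v_1$ by the very definition of $\C{T}_2$ and $\C{T}_3$). The only other point beyond linear algebra is the translation ``the scheme achieves rate $R$'' $\Rightarrow$ ``the decodability conditions hold over $\BF_{2^m}(\xi)$,'' which is exactly the Schwartz--Zippel equivalence already exploited in Lemma~\ref{lemma_na}; I expect this to be the main (though mild) obstacle, the rest being bookkeeping.
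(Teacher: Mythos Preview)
Your proposal is correct and follows essentially the same approach as the paper's proof: both arguments lower-bound the dimensions of the two ``alignment subspaces'' of $\myspan(\B{V}_1)$ (your $\C{T}_2,\C{T}_3$; the paper's $V_{13},V_{12}$ of dimensions $l_{13},l_{12}$) via decodability at $d_2,d_3$, then use Lemma~\ref{lemma_acs} together with decodability at $d_1$ to force their intersection to be trivial, yielding $5k\le 2N$. Your write-up is more explicit about working over $\BF_{2^m}(\xi)$ and about the inclusion--exclusion steps, but the logical skeleton is identical to the paper's.
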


\begin{proof}
Suppose every sender sends $d$ symbols over $n$ dimensions, through any linear precoding scheme. Consider $\omega_1$ , lets use $l_{12}$ and $l_{13}$ to represent the number of dimensions of signal space of $d_1$   that  align with $\omega_2$ at $d_3$ and $\omega_3$ at  $d_2$ respectively, and $V_{12}$ and $V_{13}$ to represent their corresponding spaces. From Lemma \ref{lemma_acs}, we know that $V_{12}$ and $V_{13}$ must have no intersection, otherwise the intersection part will contain vectors that will  align with interference at $d_1$. Therefore, we must have $l_{12}+l_{13} \le d$. Now consider $\omega_2$, we already know that there is a $l_{13}$ dimensional space where interference from $\omega_1$ and $\omega_3$ are aligned. So the number of interference dimension is given as $(d+d-l_{13})=2d-l_{13}$. The number of desired dimensions at $d_2$ is $d$, and this $d$ dimensional desired signal space should remain resolvable from the interference  space, so we we have $3d-l_{13} \le n$. Similarly, consider User $3$ to obtain another inequality : $3d-l_{12} \le n$. Combining these inequalities we get
$ 6d - (l_{13} +l_{12}) \le n$.
But we know $l_{12}+l_{13} \le d$, so $ 6d - d \le 2n$ $ \Rightarrow d/n \le 2/5$, which implies it is not possible to achieve a symmetric rate more than $2/5$ per user.
\end{proof}

\begin{corollary}
\label{cor_rates}
For $Type\ II$ networks, it is possible to achieve a rate of $2/5$ per user through through a finite time-slot precoding based network alignment scheme, \emph{i.e.}, the outer bound is tight.
\end{corollary}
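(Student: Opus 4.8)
The outer bound is already in hand (Theorem \ref{thm_acs}), so the plan is purely one of achievability: to exhibit an \emph{explicit} precoding-based linear scheme that uses a \emph{fixed} number of channel uses, $N=5$, with $k_1=k_2=k_3=2$, so that the symmetric rate $\tfrac{2}{5}$ is attained with equality. By relabeling the sessions I would assume the $Type\ II$ coupling in the matrix form \eqref{cond3mat}, $\B{M}_{11} = \B{M}_{31}\B{M}_{32}^{-1}\B{M}_{12} + \B{M}_{21}\B{M}_{23}^{-1}\B{M}_{13}$, where each $\B{M}_{ij}$ is a $5\times 5$ diagonal matrix with non-zero diagonal entries (all senders are connected to all receivers). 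The shape of the scheme is dictated by the equality case of the counting argument in Theorem \ref{thm_acs}: one dimension of $\B{V}_1$ should align with $\omega_3$ at $d_2$, a second, different dimension of $\B{V}_1$ with $\omega_2$ at $d_3$, and one dimension of $\omega_2$ should align with one dimension of $\omega_3$ at $d_1$. After these three one-dimensional collapses, every receiver sees a $3$-dimensional interference subspace plus a $2$-dimensional desired subspace inside its $5$-dimensional observation.

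Concretely, I would introduce three column vectors $\B{a},\B{b},\B{c}\in\BF_{2^m}^{5}$ of fresh encoding variables (adjoined to $\xi$) and set
\begin{flalign*}
& \B{V}_1 = (\B{a}\ \ \B{b}), \quad \B{V}_2 = (\B{M}_{23}^{-1}\B{M}_{13}\B{b}\ \ \B{c}), & \\
& \B{V}_3 = (\B{M}_{32}^{-1}\B{M}_{12}\B{a}\ \ \B{M}_{31}^{-1}\B{M}_{21}\B{c}), &
\end{flalign*}
whose entries are rational functions in $\BF_{2^m}(\xi)$, as precoding matrices are allowed to be. With this choice $\B{M}_{12}\B{a}$ is common to $\myspan(\B{M}_{12}\B{V}_1)$ and $\myspan(\B{M}_{32}\B{V}_3)$, $\B{M}_{13}\B{b}$ is common to $\myspan(\B{M}_{13}\B{V}_1)$ and $\myspan(\B{M}_{23}\B{V}_2)$, and $\B{M}_{21}\B{c}$ is common to $\myspan(\B{M}_{21}\B{V}_2)$ and $\myspan(\B{M}_{31}\B{V}_3)$, so the interference at each of $d_1,d_2,d_3$ occupies at most $3$ dimensions. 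It then remains to prove decodability, i.e., that at each $d_i$ the $2$ desired columns, together with a basis of its (at most) $3$-dimensional interference subspace, form a $5\times 5$ matrix of full rank over $\BF_{2^m}(\xi)$. Following the template of Lemma \ref{lemma_na}, I would show that each of the three resulting $5\times 5$ determinants is a non-zero element of $\BF_{2^m}(\xi)$ — equivalently, by Lemma \ref{lemma_rational_independent}, that the relevant rational functions are linearly independent — and then invoke Theorem \ref{th_schwartz} to conclude that a uniformly random realization of $\xi$ makes all three matrices invertible (and keeps the precoding denominators non-zero) with probability tending to $1$ as $m\to\infty$; this gives a genuine finite-time-slot scheme over a sufficiently large $\BF_{2^m}$, establishing the tightness claimed in Corollary \ref{cor_rates}.

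The crux of the argument, and the place where the $Type\ II$ hypothesis actually bites, is the decodability check at $d_1$. Because of \eqref{cond3mat}, the desired vectors decompose as $\B{M}_{11}\B{a} = \B{M}_{31}(\B{M}_{32}^{-1}\B{M}_{12}\B{a}) + \B{M}_{21}(\B{M}_{23}^{-1}\B{M}_{13}\B{a})$ and $\B{M}_{11}\B{b} = \B{M}_{31}(\B{M}_{32}^{-1}\B{M}_{12}\B{b}) + \B{M}_{21}(\B{M}_{23}^{-1}\B{M}_{13}\B{b})$, in which the first summand of $\B{M}_{11}\B{a}$ is exactly the aligned column of $\B{M}_{31}\B{V}_3$ and the second summand of $\B{M}_{11}\B{b}$ is exactly the aligned column of $\B{M}_{21}\B{V}_2$; one must verify that this does not drag the desired subspace into the interference subspace, which is precisely the mechanism Lemma \ref{lemma_acs} exploits for the converse. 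The point to be made is that the remaining summands, $\B{M}_{21}\B{M}_{23}^{-1}\B{M}_{13}\B{a}$ and $\B{M}_{31}\B{M}_{32}^{-1}\B{M}_{12}\B{b}$, lie \emph{outside} the $3$-dimensional interference span for generic $\B{a},\B{b},\B{c}$ — intuitively because the relevant columns of $\B{V}_2,\B{V}_3$ are built from the \emph{different} fresh vectors, and because $\B{M}_{21}\B{M}_{23}^{-1}\B{M}_{13}$ and $\B{M}_{31}\B{M}_{32}^{-1}\B{M}_{12}$ act by genuinely distinct diagonal scalings (their diagonal quotient is $\eta(\B{x})$, which is non-constant throughout the $Type\ II$ regime). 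Making this genericity rigorous — exhibiting one explicit (possibly partial) evaluation of $\xi$ and of $\B{a},\B{b},\B{c}$ at which all three $5\times5$ determinants are non-zero, exactly as in the degree-counting verification used for the rate-$\tfrac13$ scheme right after Theorem \ref{th_schwartz} — is the main piece of work; the $d_2$ and $d_3$ checks are analogous but strictly easier (no coupling intervenes there), and everything else is bookkeeping. Combined with Theorem \ref{thm_acs}, this shows the outer bound $2/5$ is tight.
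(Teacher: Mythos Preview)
Your plan is sound but takes a genuinely different route from the paper. The paper does \emph{not} build a new symmetric $5\times 2$ scheme; instead it reuses the standard $n=2$ PBNA of Eq.~(\ref{eq_v1})--(\ref{eq_v3}), with $\B{V}_1=(\B{w}\ \B{T}\B{w}\ \B{T}^2\B{w})$ of size $5\times 3$ and $\B{V}_2,\B{V}_3$ of size $5\times 2$. Because the network is $Type\ II$, we have $p_2\notin\C{S}'_2$ and $p_3\notin\C{S}'_3$, so the rank conditions $\SR{B}'_2,\SR{B}'_3$ hold automatically by Lemma~\ref{lemma_big_cond_0}; nothing new is needed at $d_2,d_3$. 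At $d_1$, the paper rewrites \eqref{cond3mat} as $\B{M}_{11}=\B{M}_{31}\B{M}_{32}^{-1}\B{M}_{12}(\B{I}+\B{T})$ and checks directly that exactly the middle column $\B{M}_{11}\B{T}\B{w}$ of $\B{M}_{11}\B{V}_1$ lies in the $2$-dimensional interference span $\B{M}_{31}\B{V}_3$, while the outer two columns remain independent of it. User~1 then transmits only along those two columns, and every user gets $2/5$.

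What your approach buys is a construction that mirrors the equality case of the converse in Theorem~\ref{thm_acs} exactly (one alignment per receiver, all $k_i=2$), which is conceptually pleasing. What it costs is that you must verify three fresh $5\times 5$ rank conditions, each depending on a different piece of the $Type\ II$ hypothesis ($p_2\neq 1$, $p_2\neq\eta$, $p_3\neq 1$, $p_3\neq\eta$, and $\eta$ non-constant at $d_1$), whereas the paper gets $d_2$ and $d_3$ for free from the already-proved sufficiency direction. One further point worth noting: your scheme does \emph{not} satisfy the alignment conditions $\SR{A}_1,\SR{A}_2,\SR{A}_3$ of Definition~\ref{def_pbna} (e.g.\ $\myspan(\B{M}_{32}\B{V}_3)\not\subseteq\myspan(\B{M}_{12}\B{V}_1)$, since $\B{M}_{32}\B{M}_{31}^{-1}\B{M}_{21}\B{c}$ involves $\B{c}$), so it is a precoding-based linear scheme with partial alignment rather than a PBNA in the paper's formal sense. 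That is fine for the corollary as stated, but the paper later uses this corollary inside Theorem~\ref{thm_rates} to assert that \emph{PBNA} achieves the optimal symmetric rate, for which the paper's construction is the cleaner fit.
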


\begin{proof}
Without loss of generality, assume the $Type\ II$ networks has a coupling relation $p_1(\B{x}) =  \frac{\eta(\B{x})}{1+\eta(\B{x})}$. This scheme can be easily modified to fit the other coupling relations too. Suppose we use a $2n+1 = 5$ symbol extension, then according to the PBNA scheme in  Section \ref{sec_pbna} we have precoding vectors $\B{V}_1 = ( \B{w} \; \B{T}\B{w} \; \B{T}^{2}\B{w} )$, $ \B{V}_2 = ( \B{w} \; \B{T}\B{w} )$ and $\B{V}_3 = ( \B{T}\B{w} \; \B{T}^{2}\B{w})$. The given coupling relation only affects User $1$, so the rates at Receiver $2$ and $3$ will remain unaffected. The matrix equivalent of the coupling relation is given in \eqref{cond3mat}, which can be rewritten as,
 \begin{flalign}
 \B{M}_{11} &= \B{M}_{31}{\B{M}_{32}}^{-1}\B{M}_{12} + \B{M}_{31}{\B{M}_{32}}^{-1}\B{M}_{12} \B{T} \label{cor1}
\end{flalign}
At Receiver $1$, the desired signal space is given $\B{M}_{11}\B{V}_1$ and the interference space is given by $\B{M}_{31}\B{V}_3$ ( Note: The interference from transmitter $2$ and $3$ are aligned, \emph{i.e.}, $\B{M}_{21}\B{V}_2$ = $\B{M}_{31}\B{V}_3$). Substituting the alignment equation from Receiver $2$ for $\B{V}_3$ we get, 
\begin{flalign}
\B{M}_{31} \B{V}_3 &= \B{M}_{31}{\B{M}_{32}}^{-1}\B{M}_{12} (\B{T}\B{w} \; \B{T}^{2}\B{w} ) \label{cor2}
\end{flalign}
From \eqref{cor1} and \eqref{cor2}, it can be seen that the second column of the desired signal space ($\B{M}_{11}\B{V}_1$) can be written as a linear combination of the two columns of the interference space. The other two columns of the desired space are linearly independent of the column of interference space. User $1$ could use these two dimension to send its signal without interference. In other words, each user would be able to achieve a rate of $2/5$
\end{proof}

\begin{proof}[Proof of Theorem \ref{thm_rates}]
$Type\ I$ networks fail to satisfy certain conditions which are information theoretically necessary to achieve any rate more than $1/3$ user per session, this was explained in a remark under Theorem \ref{th_main} in Section \ref{sec_main}. The outer bound for $Type\ II$ networks was derived in Theorem \ref{thm_acs} and the achievability was shown in Corollary \ref{cor_rates}. $Type\ III$ networks were the main focus of this paper, previous sections discussed in detail about schemes and their feasibility for achieving $1/2$ rate per user in detail and it is a well known fact that it is not possible to achieve more than $1/2$ per user for SISO scenarios in fully connected networks  \cite{Cadambe2008}. 
\end{proof}

\section{Conclusion and Future Directions} \label{sec_conclusion}

In this paper, we consider the problem of network coding for the SISO scenarios with three unicast sessions.
We consider a network model, in which the middle of the network performs random linear network coding.
We apply precoding-based interference alignment \cite{Cadambe2008} to this network setting.
We show that network topology may introduce algebraic dependence (``coupling relations'') between different transfer functions, which can potentially affect the rate achieved by PBNA.
Using two graph-related properties and a recent result from \cite{Han2011}, we identify the minimal set of coupling relations that are realizable in networks.
Moreover, we show that each of these coupling relations has a unique interpretation in terms of network topology.
Based on these interpretations, we present a polynomial-time algorithm to check the existence of these coupling relations.

%Clearly, the scenario considered in this paper (SISO scenarios with three unicast sessions) is the simplest meaningful scenario in this setting but there still many problems that remain to be solved regarding applying interference alignment techniques to the network setting. One important problem is the complexity of PBNA, which arises in two aspects, i.e., precoding matrix and field size, and is inherent in the framework of PBNA. Another important issue is how much benefits interference alignment provides compared with routing schemes. One direction for future work is to apply other alignment techniques (with lower complexity to the network setting. The extensions to other networks scenarios beyond SISO with more than three unicast sessions are highly non-trivial. Other extensions of the current framework to more practical scenarios include networks with delays and errors. \textcolor{red}{Finally, the current paper applied precoding at the sources only, while intermediate nodes performed simply random network coding. Another direction for future work is alignment by network code design in the middle of the network.}

This work is limited to three unicast sessions in the SISO scenario (i.e., with min-cut one per session) and following a precoding-based approach (all precoding is performed at the end nodes, while intermediate nodes perform random network coding). This is the simplest, yet highly non-trivial instance of the general problem of network coding across multiple unicasts. Apart from being of interest on its own right, we hope that it can be used as a building block and provide insight into the general problem.

There are still many problems that remain to be solved regarding applying interference alignment techniques to the network setting.
For example, one important problem is the complexity of PBNA, which arises in two aspects, i.e., precoding matrix and field size, and is inherent in the framework of PBNA.  
%Another important issue is how much benefits interference alignment provides compared with routing schemes.
One direction for future work is to apply other alignment techniques (with lower complexity) to the network setting. 
The extensions to other network scenarios beyond SISO with more than three unicast sessions are highly non-trivial. 
 %Other extensions of the current framework to more practical scenarios include networks with delays and errors.
Finally, the current paper applies precoding at the sources only, while intermediate nodes performed simply random network coding; an open direction for future work is alignment by network code design in the middle of the network as well. 

\bibliographystyle{IEEEtran}
% Generated by IEEEtran.bst, version: 1.13 (2008/09/30)

\appendices

\section{Proofs of Graph-Related Properties}\label{app_proof_graph}

\subsection{Linearization Property and Square-Term Property}
The following lemma plays an important role in the proof of Linearization Property and the interpretation of the coupled relations, $p_i(\B{x})=1$ and $p_i(\B{x})=\eta(\B{x})$.
The basic idea of this lemma is that we can multicast two symbols from two senders to two receivers via network coding if and only if the minimum cut separating the senders from the receivers is greater than one.

\begin{lemma}
\label{lemma_transfer_inequal}
$m_{ab}(\B{x})m_{pq}(\B{x}) \neq m_{aq}(\B{x})m_{pb}(\B{x})$ if and only if there is disjoint path pair $(P_1,P_2)\in \C{P}_{ab}\times \C{P}_{pq}$ or $(P_3,P_4)\in \C{P}_{aq}\times \C{P}_{pb}$.
\end{lemma}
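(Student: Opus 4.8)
The plan is to translate the polynomial inequality into a statement about edge-disjoint paths, via the determinant of a $2\times 2$ transfer matrix. First I would observe that, since $\mathrm{char}\,\BF_2 = 2$, we have $m_{ab}(\B{x})m_{pq}(\B{x}) \neq m_{aq}(\B{x})m_{pb}(\B{x})$ as polynomials if and only if $m_{ab}(\B{x})m_{pq}(\B{x}) + m_{aq}(\B{x})m_{pb}(\B{x})$ --- the determinant of the transfer matrix from $\{\sigma_a,\sigma_p\}$ to $\{\tau_b,\tau_q\}$ --- is not the zero polynomial. I would then introduce the auxiliary graph $\C{G}^*$ obtained from $\C{G}$ by adding a super-source $s^*$ with unit-capacity edges to $s_a,s_p$ and a super-sink $t^*$ with unit-capacity edges from $d_b,d_q$. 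Because each sender (resp.\ receiver) has the single incident edge $\sigma_i$ (resp.\ $\tau_i$), a minimum $s^*$--$t^*$ cut of value $\ge 2$ in $\C{G}^*$ corresponds, via Menger's theorem, to two edge-disjoint paths --- one issuing from $\sigma_a$, one from $\sigma_p$, ending at $\tau_b$ and $\tau_q$ --- i.e.\ exactly to an edge-disjoint pair in $\C{P}_{ab}\times\C{P}_{pq}$ or in $\C{P}_{aq}\times\C{P}_{pb}$. So it suffices to show that the determinant is a nonzero polynomial if and only if this minimum cut is at least $2$.

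For the ``if'' direction I would take an edge-disjoint pair, say $(P_1,P_2)\in\C{P}_{ab}\times\C{P}_{pq}$ (the other case is symmetric), and exhibit an explicit specialization of $\B{x}$ over $\BF_2$ on which the determinant evaluates to $1$: set $x_{e'e}=1$ whenever $(e',e)$ is a consecutive edge pair of $P_1$ or of $P_2$, and $x_{e'e}=0$ otherwise. Edge-disjointness ensures there is no conflict at a shared vertex, and an induction along a topological order of $E$ shows that every edge of $P_1$ carries $\hat X_a$, every edge of $P_2$ carries $\hat X_p$, and every other edge carries $0$; since $P_1$ ends at $\tau_b$, $P_2$ ends at $\tau_q$, and a receiver edge is never internal to a path, the resulting transfer matrix is the identity, so its determinant is $1\neq 0$. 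I would deliberately use an explicit $\{0,1\}$-valued specialization here, rather than appeal to a large field, since over $\BF_2$ a nonzero polynomial may vanish on all of $\BF_2$.

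For the ``only if'' direction I would argue the contrapositive. If no edge-disjoint pair of either type exists, the minimum $s^*$--$t^*$ cut of $\C{G}^*$ is at most $1$. If it is $0$, or if the single cut edge blocks all paths out of one of $\sigma_a,\sigma_p$ or into one of $\tau_b,\tau_q$, then one of $m_{ab},m_{aq},m_{pb},m_{pq}$ is the zero polynomial and the identity holds trivially. Otherwise the cut is a single edge $e^*$ of $\C{G}$ that lies on every path in $\C{P}_{ab}\cup\C{P}_{aq}\cup\C{P}_{pb}\cup\C{P}_{pq}$; splitting each such path at $e^*$ and using acyclicity to see that the prefix (from $\sigma_i$ to $e^*$) and the suffix (from $e^*$ to $\tau_j$) overlap only in $e^*$ and can be chosen independently, I would obtain the factorizations $m_{ij}(\B{x}) = g_i(\B{x})\,h_j(\B{x})$ for $i\in\{a,p\}$, $j\in\{b,q\}$, where $g_i$ is the transfer function from $\sigma_i$ to $e^*$ and $h_j$ the transfer function from $e^*$ to $\tau_j$. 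Then $m_{ab}m_{pq} = g_a h_b g_p h_q = m_{aq}m_{pb}$, so the determinant is the zero polynomial.

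The main obstacle --- and where I would spend most care --- is the ``only if'' direction: enumerating the degenerate subcases where a transfer function is identically zero, and justifying the bottleneck factorization, in particular the claim that the sum over all paths through $e^*$ factors as (sum over $\sigma_i$-to-$e^*$ paths) $\times$ (sum over $e^*$-to-$\tau_j$ paths). This factorization is precisely where acyclicity of $\C{G}$ enters: it guarantees that no edge can lie both strictly before and strictly after $e^*$, so concatenating any prefix with any suffix yields a genuine repetition-free path and the correspondence is a bijection. The ``if'' direction, by contrast, is routine once one commits to the explicit routing specialization.
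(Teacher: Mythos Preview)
Your argument is correct and follows the same construction as the paper: introduce a super-source and super-sink, recognize the polynomial difference as the determinant of the $2\times 2$ transfer matrix, and reduce to whether the min-cut is at least two. The paper's proof stops there and simply invokes the single-multicast theorem of \cite{Koetter2003,Li2003} as a black box to conclude that $\det(\B{M})\neq 0$ if and only if the min-cut from the super-source to the super-sink is at least two. You instead re-prove this equivalence by hand, via an explicit $\{0,1\}$ routing specialization for the ``if'' direction and a bottleneck factorization $m_{ij}=g_ih_j$ for the ``only if'' direction. This makes your proof self-contained at the cost of length; the paper's version is a one-liner but relies on the cited result.

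One small point of exposition in your degenerate subcase: saying ``one of $m_{ab},m_{aq},m_{pb},m_{pq}$ is the zero polynomial and the identity holds trivially'' is not quite enough on its face---a single vanishing factor kills only one of the two products. What actually happens in each of those subcases (cut edge disconnects one sender from both receivers, or one receiver from both senders) is that \emph{two} of the four transfer functions vanish, one from each product, so both sides are zero; you should say so explicitly.
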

\begin{proof}
We add a super sender $s$ and connect it to $s'_a$ and $s'_p$ via two edges of unit capacity, and a super receiver $d$, to which we connect $d'_b$ and $d'_q$ via two edges of unit capacity.
Thus, the transfer matrix at $d$ is
\begin{flalign*}
\B{M} = \begin{pmatrix}
m_{ab}(\B{x}) & m_{aq}(\B{x}) \\
m_{pb}(\B{x}) & m_{pq}(\B{x})
\end{pmatrix}
\end{flalign*}
It is easy to see $\det(\B{M})=m_{ab}(\B{x})m_{pq}(\B{x}) - m_{aq}(\B{x})m_{pb}(\B{x})$.
Hence, we can multicast two symbols from $s$ to $d$, i.e., $\det(\B{M})\neq 0$, if and only if the minimum cut separating $s$ from $d$ is at least two, or equivalently there is a disjoint path pair $(P_1,P_2)\in \C{P}_{ab}\times \C{P}_{pq}$ or $(P_3,P_4)\in \C{P}_{aq}\times \C{P}_{pb}$.
\end{proof}

\begin{figure}[t]
\centering
\subfloat[$o(e_2)>o(e_3)$ and $o(e_1)<o(e_4)$ \label{fig_linear_ex1}]{\includegraphics[width=2.8cm]{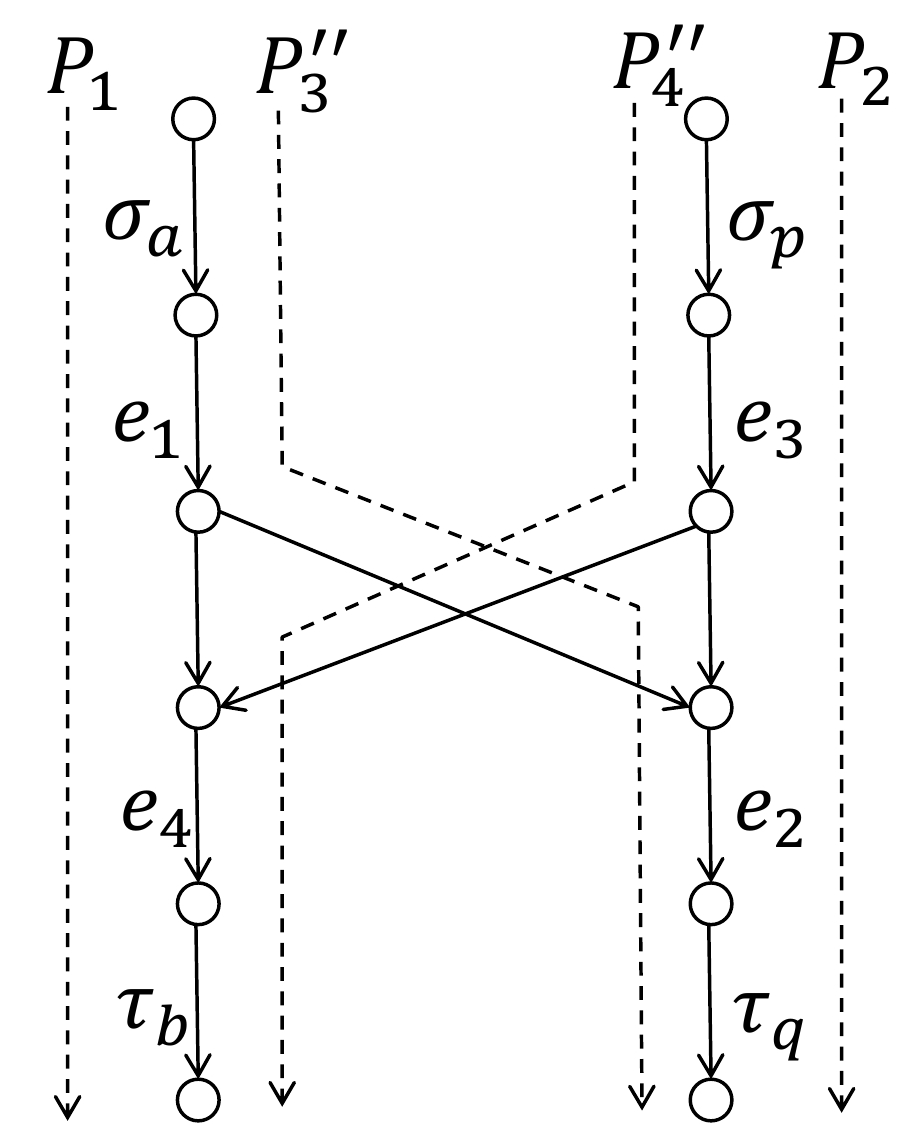}} \hspace*{4pt}
\subfloat[$o(e_2)>o(e_3)$ and $o(e_1)>o(e_4)$ \label{fig_linear_ex2}]{\includegraphics[width=2.8cm]{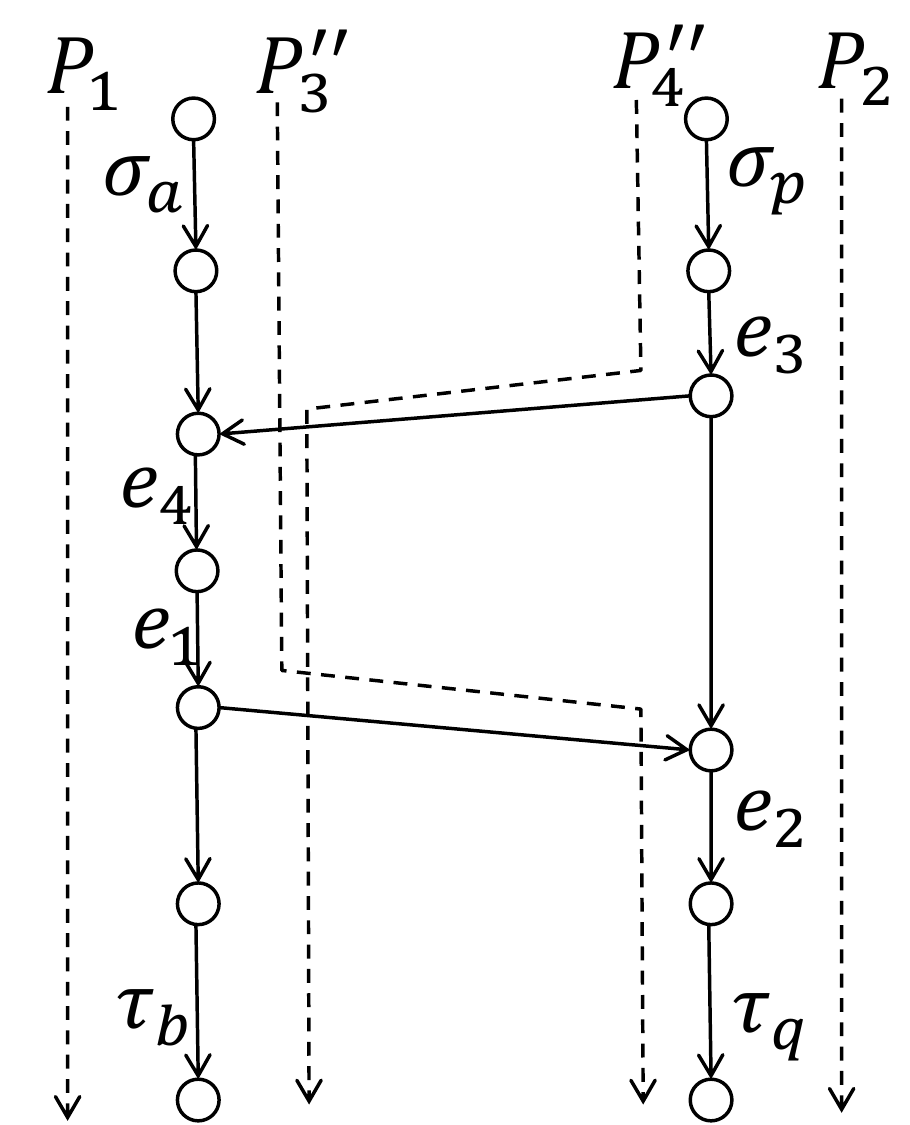}} \hspace*{4pt}
\subfloat[$o(e_2)<o(e_3)$ \label{fig_linear_ex3}]{\includegraphics[width=2.8cm]{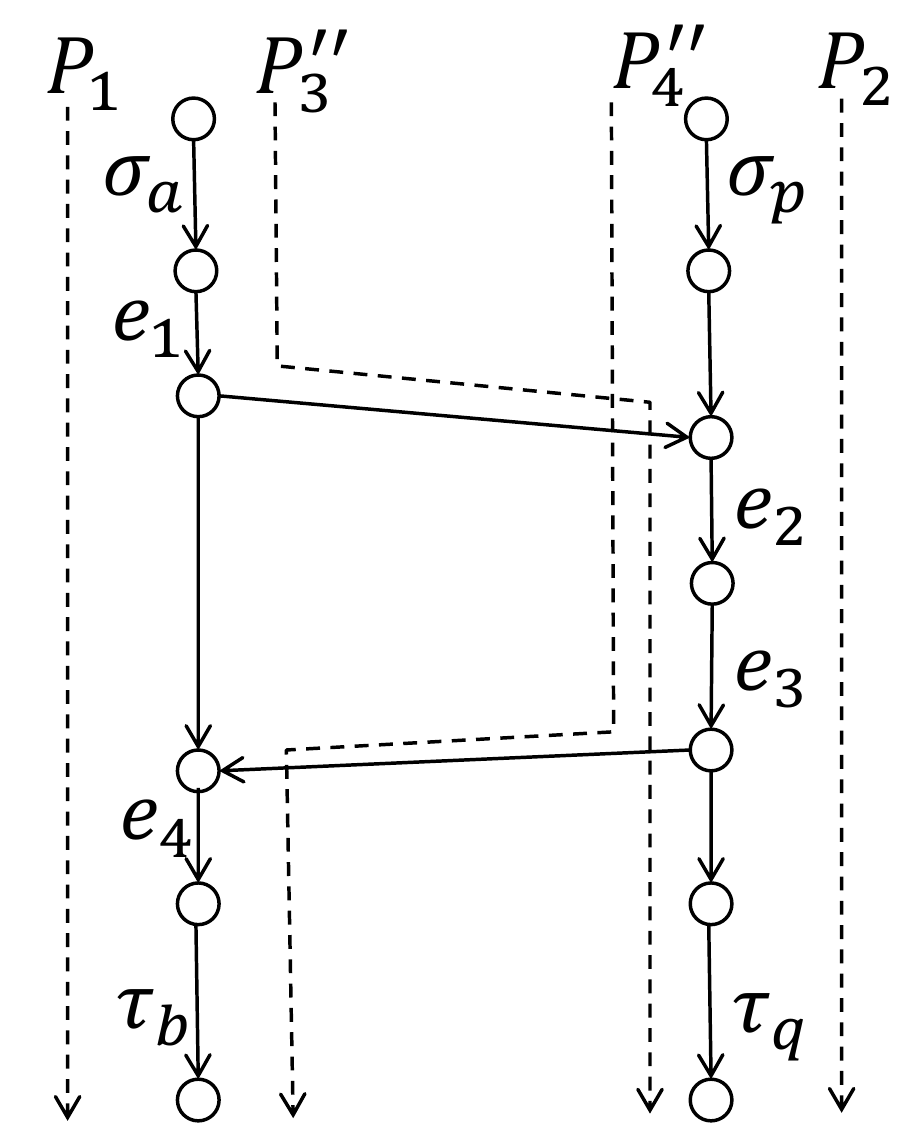}}
\caption{\small{The construction of $H$ (in the proof of the Linearization Property) enabled by Lemma \ref{lemma_transfer_inequal} ($P_1$ is disjoint with $P_2$)} \label{fig_linear}}
\end{figure}

The proof of Lemma \ref{lemma_linearization} involves finding a subgraph $H$ such that some coding variable appears exclusively in the denominator or numerator of $h(\B{x}_H)$, i.e., $h(\B{x}_H)$ restricted to $H$.
In fact, due to Lemma \ref{lemma_transfer_inequal}, such subgraph $H$ always exists, if $h(\B{x})$ is not constant.

\begin{proof}[Proof of Lemma \ref{lemma_linearization}]
In this proof, given a path $P$, let  $P[e:e']$ denote the path segment of $P$ between two edges $e$ and $e'$, including $e,e'$. 
We arrange the edges of $G'$ in topological order, and for $e\in E'$, let $o(e)$ denote $e$'s position in this ordering. 
Moreover, denote $h_1(\B{x})=m_{ab}(\B{x})m_{pq}(\B{x})$, $h_2(\B{x})=m_{aq}(\B{x})m_{pb}(\B{x})$ and $d(\B{x})=\gcd(h_1(\B{x}),h_2(\B{x}))$. 
Let $s_1(\B{x})=\frac{h_1(\B{x})}{d(\B{x})}$ and $s_2(\B{x})=\frac{h_2(\B{x})}{d(\B{x})}$. 
Hence $\gcd(s_1(\B{x}),s_2(\B{x}))=1$. 
It follows $u(\B{x}) = cs_1(\B{x}), v(\B{x}) = cs_2(\B{x})$, where $c$ is a non-zero constant in $\BF_{2^m}$. 
By Lemma \ref{lemma_transfer_inequal}, there exists disjoint path pair $(P_1,P_2)\in \C{P}_{ab}\times \C{P}_{pq}$ or $(P_3,P_4)\in \C{P}_{aq}\times \C{P}_{pb}$. 
Now we consider the first case.

We arbitrarily select another path pair $(P'_3,P'_4)\in \C{P}_{aq}\times \C{P}_{pb}$. 
Since $P_1,P'_3$ both originate at $\sigma_a$, and $P_2, P'_3$ both terminate at $\tau_q$, there exist $e_1 \in P_1 \cap P'_3$ and $e_2 \in P_2 \cap P'_3$ such that the path segment along $P'_3$ between $e_1$ and $e_2$ is disjoint with $P_1\cup P_2$. 
Similarly, there exist $e_3 \in P_2 \cap P'_4$ and $e_4 \in P_1 \cap P'_4$ such that the path segment between $e_3$ and $e_4$ along $P'_4$ is disjoint with $P_1\cup P_2$. 
Construct the following two paths: $P''_3=P_1[\sigma_a:e_1] \cup P'_3[e_1:e_2] \cup P_2[e_2:\tau_q]$ and $P''_4=P_2[\sigma_p:e_3]\cup P'_4[e_3:e_4]\cup P_1[e_4:\tau_b]$ (see Fig. \ref{fig_linear}). 
Let $H$ denote the subgraph of $G'$ induced by $P_1\cup P_2 \cup P''_3 \cup P''_4$. 

We then prove that the theorem holds for $H$. 
If $o(e_2)>o(e_3)$ (Fig. \ref{fig_linear_ex1} and \ref{fig_linear_ex2}), the variables along $P_2[e_3:e_2]$ are absent in $h_2(\B{x}_H)$.
We then arbitrarily select a variable $x_{ee'}$ from $P_2[e_3:e_2]$, and write $h_1(\B{x}_H)$ as $f(\B{x}'_H)x_{ee'} + g(\B{x}'_H)$, where $\B{x}'_H$ includes all the variables in $\B{x}_H$ other than $x_{ee'}$. 
Meanwhile, $h_2(\B{x}_H)$ can be written as $h_2(\B{x}'_H)$. 
Clearly, $x_{ee'}$ will not show up in $d(\B{x}_H)$ and thus it can also be written as $d(\B{x}'_H)$. 
We then find values for $\B{x}'_H$, denoted by $\B{r}$, such that $f(\B{r})h_2(\B{r})d(\B{r}) \neq 0$. 
Finally, denote $c_0=cg(\B{r})d^{-1}(\B{r})$, $c_1=cf(\B{r})d^{-1}(\B{r})$ and $c_2=ch_2(\B{r})d^{-1}(\B{r})$ and the theorem holds.
On the other hand, if $o(e_2)<o(e_3)$ (see Fig. \ref{fig_linear_ex3}), the variables along $P_1[e_1:e_4]$ are absent in $h_2(\B{x}_H)$. 
We then select a variable $x_{ee'}$ from $P_1[e_1:e_4]$. 
Similar to above, it's easy to see that $u(\B{x})$ and $v(\B{x})$ can be transformed into $c_1x_{ee'}+c_0$ and $c_2$ respectively.

For the case where $(P_3,P_4)\in \C{P}_{aq}\times\C{P}_{pb}$ is a disjoint path pair, we can show that $u(\B{x})$ and $v(\B{x})$ can be transformed into $c_2$ and $c_1x_{ee'}+c_0$ respectively.
\end{proof}

\begin{figure}[t]
\centering
\includegraphics[width=5cm]{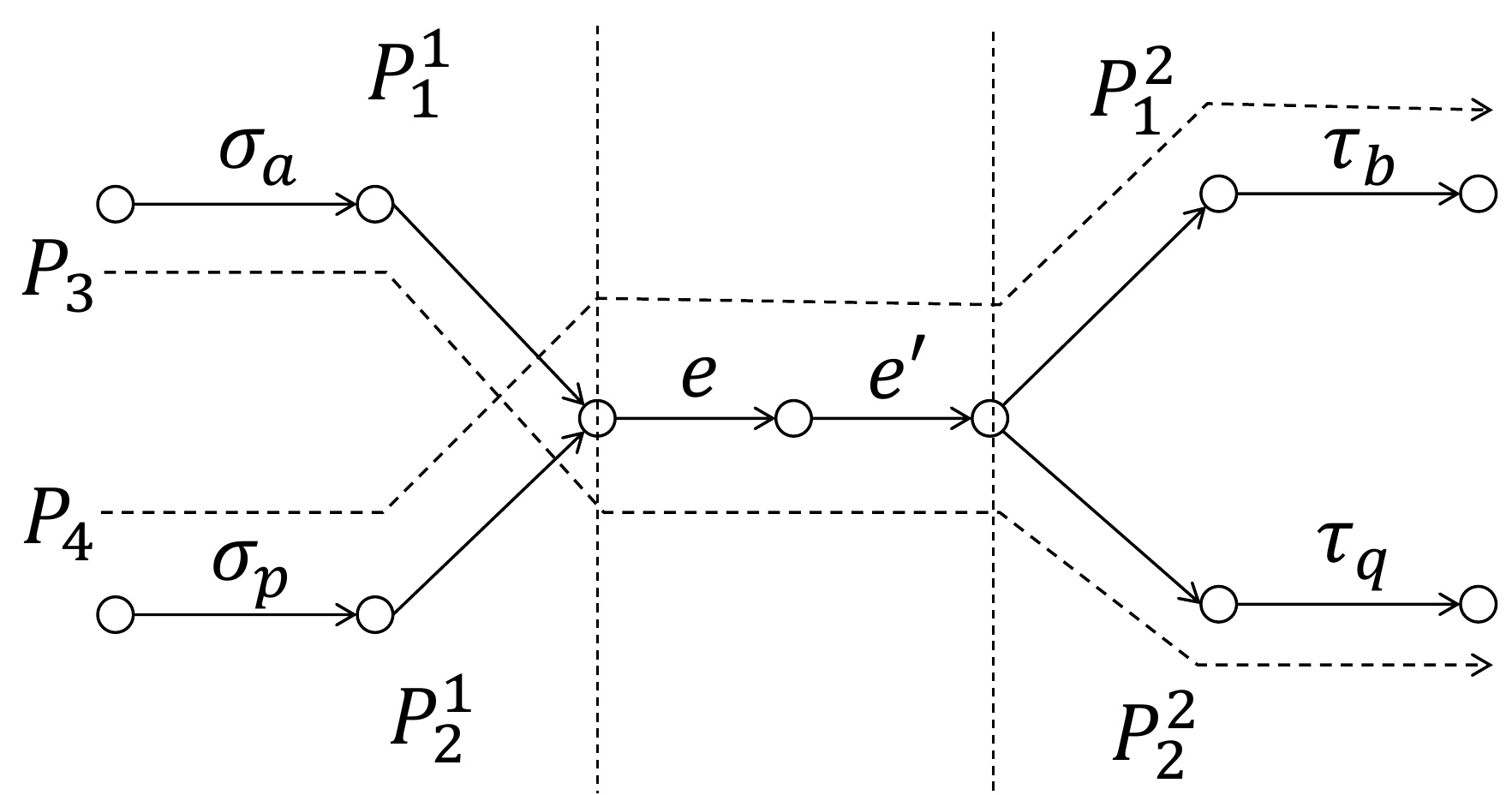}
\caption{Illustration of Square-Term Property. A term with $x^2_{ee'}$ introduced by $(P_1,P_2)$ in the numerator of $h(\B{x})$ equals another term introduced by $(P_3,P_4)$ in the denominator of $h(\B{x})$. \label{fig_square_term}}
\end{figure}

The basic idea of Lemma \ref{lemma_square_term} is to construct a one-to-one mapping between the square terms in the numerator of $h(\B{x})$ and those in the denominator of $h(\B{x})$.

\begin{proof}[Proof of Lemma \ref{lemma_square_term}]
First, we define two sets $\C{Q}_1 = \{(P_1,P_2)\in \C{P}_{ab}\times \C{P}_{pq}: x^2_{ee'} \mid t_{P_1}(\B{x})t_{P_2}(\B{x})\}$ and $\C{Q}_2 = \{(P_3,P_4)\in \C{P}_{aq}\times \C{P}_{pb}: x^2_{ee'} \mid t_{P_3}(\B{x})t_{P_4}(\B{x})\}$.
Consider a path pair $(P_1,P_2)\in \C{Q}_1$. 
Since the degree of $x_{ee'}$ in $t_{P_1}(\B{x})$ and $t_{P_2}(\B{x})$ is at most one, we must have $x_{ee'} \mid t_{P_1}(\B{x})$ and $x_{ee'} \mid t_{P_2}(\B{x})$. 
Thus $e,e'\in P_1\cap P_2$. 
Let $P^1_1, P^2_1$ be the parts of $P_1$ before $e$ and after $e'$ respectively. 
Similarly, define $P^1_2$ and $P^2_2$. 
Then construct two new paths: $P_3=P^1_1 \cup \{e,e'\} \cup P^2_2$ and $P_4=P^1_2\cup \{e,e'\} \cup P^2_1$ (see Fig. \ref{fig_square_term}). 
Clearly, $t_{P_1}(\B{x})t_{P_2}(\B{x})=t_{P_3}(\B{x})t_{P_4}(\B{x})$, and thus $(P_3,P_4) \in \C{Q}_2$. 
The above method establishes a one-to-one mapping $\phi:\C{Q}_1\rightarrow\C{Q}_2$, such that for $\phi((P_1, P_2))=(P_3,P_4)$, $t_{P_1}(\B{x})t_{P_2}(\B{x})=t_{P_3}(\B{x})t_{P_4}(\B{x})$. 
Hence, $f_1(\B{x})=\frac{1}{x^2_{ee'}}\sum_{(P_1,P_2)\in\C{Q}_1}t_{P_1}(\B{x})t_{P_2}(\B{x})=\frac{1}{x^2_{ee'}}\sum_{(P_3,P_4)\in \C{Q}_2}t_{P_3}(\B{x})t_{P_4}(\B{x})=f_2(\B{x})$.
\end{proof}

\subsection{Other Graph-Related Properties}

In this section, we present other graph-related properties, which reveal more microscopic structures of transfer functions, and are to be used in the proofs of Theorems \ref{th_eta_constant} and \ref{th_interpret_2}.
Before proceeding, we first extend the concept of transfer function to any two edges $e,e'\in E'$, i.e., $m_{ee'}(\B{x})=\sum_{P\in \C{P}_{ee'}}t_P(\B{x})$, where $\C{P}_{ee'}$ is the set of paths from $e$ to $e'$.

The following lemma states that any transfer function $m_{ee'}(\B{x})$ is fully determined by the two edges $e,e'$.

\begin{lemma}
\label{lemma_transfer_func_equal}
Consider two transfer functions $m_{e_1e_2}(\B{x})$ and $m_{e_3e_4}(\B{x})$.
Then $m_{e_1e_2}(\B{x})=m_{e_3e_4}(\B{x})$ if and only if $e_1=e_3$ and $e_2=e_4$.
\end{lemma}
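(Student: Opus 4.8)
The "if" direction is immediate, so the plan is to establish the "only if" direction: $m_{e_1e_2}(\B x)=m_{e_3e_4}(\B x)$ forces $e_1=e_3$ and $e_2=e_4$. Throughout I will assume, as is the case wherever this lemma is invoked (in the proofs of Theorems~\ref{th_eta_constant} and~\ref{th_interpret_2}), that the two transfer functions are nonzero and non-constant; in a DAG the only other possibilities are $m_{ee'}\equiv 0$ (no directed path from $e$ to $e'$) and $m_{ee'}\equiv 1$ (which happens only when $e=e'$), and these degenerate cases do not arise here.

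The crux is the following \emph{endpoint-recovery} observation: any single monomial of positive degree occurring in a transfer function already pins down its two end-edges. Indeed, let $P=\{f_1,\dots,f_k\}\in\C{P}_{e_1e_2}$ with $k\ge 2$, so $f_1=e_1$ and $f_k=e_2$. Since $\C{G}$ is acyclic, the directed path $P$ visits distinct vertices and hence consists of $k$ distinct edges, so the monomial $t_P(\B x)=\prod_{i=1}^{k-1}x_{f_if_{i+1}}$ is square-free, and the set of variables occurring in it recovers the set of consecutive pairs $\{(f_i,f_{i+1}):1\le i\le k-1\}$. In that set, $f_1$ is the unique edge that appears as a first subscript but never as a second subscript, and $f_k$ is the unique edge that appears as a second subscript but never as a first (both because the $f_i$ are pairwise distinct). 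Hence $t_P(\B x)$ determines $e_1=f_1$ and $e_2=f_k$ unambiguously. With a little more work one sees that $P\mapsto t_P(\B x)$ is in fact injective on paths of length $\ge 2$, but only the endpoints are needed below.

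The lemma now follows in a few lines. Since $m_{e_1e_2}(\B x)$ is nonzero and non-constant, it has a monomial $\mu$ of positive degree occurring with nonzero (hence, over $\BF_2$, equal to $1$) coefficient; write $\mu=t_{P}(\B x)$ for some $P\in\C{P}_{e_1e_2}$ of length $\ge 2$, whose end-edges are $e_1,e_2$. As $m_{e_1e_2}(\B x)=m_{e_3e_4}(\B x)$ as polynomials, $\mu$ also occurs with nonzero coefficient in $m_{e_3e_4}(\B x)=\sum_{P'\in\C{P}_{e_3e_4}}t_{P'}(\B x)$, so $\mu=t_{P'}(\B x)$ for some $P'\in\C{P}_{e_3e_4}$, whose end-edges are $e_3,e_4$. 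Applying the endpoint-recovery observation to the common monomial $\mu=t_P(\B x)=t_{P'}(\B x)$ yields $e_1=e_3$ and $e_2=e_4$.

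I expect the only genuine obstacle to be the endpoint-recovery step, specifically the need to argue carefully that a directed path in a DAG uses pairwise distinct edges (which makes $t_P(\B x)$ square-free and makes the "first-but-not-second subscript" characterization of $e_1$ well defined), and to isolate the positive-degree monomial — which is precisely why the zero and constant-$1$ transfer functions must be set aside at the outset.
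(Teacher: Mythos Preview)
Your proof is correct and takes a genuinely different route from the paper's. The paper argues the contrapositive in a single sentence: if $e_1\neq e_3$ or $e_2\neq e_4$, then ``there must be some edge which appears exclusively in $\C{P}_{e_1e_2}$ or $\C{P}_{e_3e_4}$,'' hence the polynomials differ. This is terse to the point of hand-waving --- it does not spell out why such an exclusive edge must exist, nor why its presence forces a discrepancy after possible $\BF_2$ cancellations. Your direct argument via monomial endpoint-recovery is more explicit and more robust: you pick a surviving positive-degree monomial, observe that the multiset of coding variables it contains reconstructs the start and end edges of the underlying path (via the ``first-subscript-only'' and ``second-subscript-only'' characterization, which is well-defined precisely because a directed path in a DAG has distinct edges), and conclude that any path contributing the same monomial on the other side must share those endpoints. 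This sidesteps the exclusivity question entirely and handles the $\BF_2$ cancellation concern cleanly (nonzero coefficient means an odd number of contributing paths, hence at least one).

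You are also right to isolate the degenerate cases: as literally stated, the lemma fails when both transfer functions are $0$ (disconnected pairs) or both are $1$ (i.e., $e_1=e_2$ and $e_3=e_4$ with $e_1\neq e_3$). The paper's one-line proof tacitly assumes these do not occur; your explicit restriction to nonzero, non-constant transfer functions is the honest fix, and it is indeed satisfied in the downstream applications you name.
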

\begin{proof}
Apparently, the ``if'' part holds trivially.
Now assume $e_1\neq e_3$ or $e_2\neq e_4$.
Then, there must be some edge which appears exclusively in $\C{P}_{e_1e_2}$ or $\C{P}_{e_3e_4}$, implying $m_{e_1e_2}(\B{x})\neq m_{e_3e_4}(\B{x})$.
Thus, the lemma holds.
\end{proof}

The following result was first proved by Han et al. \cite{Han2011}.
It states that each transfer function $m_{ee'}(\B{x})$ can be uniquely factorized into a product of irreducible polynomials according to the bottlenecks between $e$ and $e'$.

\begin{lemma}
\label{lemma_factorization_transfer_func}
We arrange the bottlenecks in $\C{C}_{ee'}$ in topological order: $e_1, e_2, \cdots, e_k$, such that $e=e_1$, $e'=e_k$.
Then, $m_{ee'}(\B{x})$ can be factorized as $m_{ee'}(\B{x})=\prod^{k-1}_{i=1}m_{e_ie_{i+1}}(\B{x})$, where $m_{e_ie_{i+1}}(\B{x})$ is an irreducible polynomial.
\end{lemma}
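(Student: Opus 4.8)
The plan is to split the claim into two independent parts: first, the multiplicative decomposition $m_{ee'}(\B{x}) = \prod_{i=1}^{k-1} m_{e_i e_{i+1}}(\B{x})$, which is a purely combinatorial statement about path counting; second, the irreducibility of each factor $m_{e_i e_{i+1}}(\B{x})$, which is an algebraic statement exploiting the fact that transfer polynomials are multilinear.

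For the decomposition I would argue as follows. Every directed path $P\in\C{P}_{ee'}$ must traverse every bottleneck $e_i$, since by Definition~\ref{def_bottleneck} deleting $e_i$ disconnects $e$ from $e'$; moreover, because $\C{G}$ is acyclic, the edges of $P$ appear in an order consistent with the fixed topological ordering, so $P$ meets $e_1,e_2,\dots,e_k$ in exactly that order. Hence $P$ splits uniquely as a concatenation $P=P^{(1)}\cup\cdots\cup P^{(k-1)}$ with $P^{(i)}\in\C{P}_{e_ie_{i+1}}$, and no element of $\C{C}_{e_ie_{i+1}}$ lies strictly between $e_i$ and $e_{i+1}$ (any such bottleneck of the pair $(e_i,e_{i+1})$ would also be a bottleneck of $(e,e')$ and hence already appear in the list). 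Conversely, given any tuple $(P^{(1)},\dots,P^{(k-1)})$ with $P^{(i)}\in\C{P}_{e_ie_{i+1}}$, its concatenation is again a directed path --- a repeated vertex would produce a directed closed walk and hence a cycle, impossible in a DAG --- so the assignment $P\mapsto(P^{(i)})_i$ is a bijection $\C{P}_{ee'}\cong\prod_i\C{P}_{e_ie_{i+1}}$ under which $t_P(\B{x})=\prod_i t_{P^{(i)}}(\B{x})$. Summing over all paths and distributing the product then gives $m_{ee'}(\B{x})=\sum_{(P^{(i)})_i}\prod_i t_{P^{(i)}}(\B{x})=\prod_i\bigl(\sum_{P^{(i)}}t_{P^{(i)}}(\B{x})\bigr)=\prod_i m_{e_ie_{i+1}}(\B{x})$.

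For irreducibility it suffices to treat $k=2$, i.e.\ to show that when there is no bottleneck strictly between $e$ and $e'$, $m_{ee'}(\B{x})$ is irreducible. Since each coding variable occurs to degree at most one in $m_{ee'}(\B{x})$, any factorization $m_{ee'}=fg$ must have $f$ and $g$ supported on disjoint variable sets $A$ and $B$: if some variable $x$ appeared in both, writing $f=xf_1+f_0$, $g=xg_1+g_0$ with $f_1,g_1\neq0$ not involving $x$ would force an $x^2$ term $f_1g_1\neq0$ in $fg$ (the polynomial ring over $\BF_2$ is a domain), contradicting multilinearity. The plan is then to show that such a split forces an intermediate bottleneck: fixing a single path $P^*$ from $e$ to $e'$ and walking along it, one locates the last edge along $P^*$ after which all coding variables of $P^*$ lie in one class and before which some lie in the other, and then shows --- using that variables on alternative subpaths are pinned to the same class, in the spirit of the disjoint-path-pair reasoning of Lemma~\ref{lemma_transfer_inequal} --- that the associated intermediate edge cannot be bypassed, contradicting the hypothesis. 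Alternatively one simply invokes the statement as established in \cite{Han2011}.

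I expect the irreducibility half to be the main obstacle: the decomposition is essentially bookkeeping, but converting the graph-theoretic hypothesis ``no intermediate bottleneck'' into an algebraic non-factorization statement is delicate, particularly in the boundary cases where one factor carries a constant term over $\BF_2$ (so some monomials of $m_{ee'}$ ``use only'' one factor) and where the $A$- and $B$-variables interleave along a single path. Leaning on the already-known result of \cite{Han2011} for this factor is the safe route, while giving the combinatorial decomposition in full.
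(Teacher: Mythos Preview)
The paper does not prove this lemma at all: it simply states the result and attributes it to Han et al.\ \cite{Han2011} (``The following result was first proved by Han et al.\ \cite{Han2011}''). So your proposal already goes well beyond what the paper offers.

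Your decomposition argument is correct and complete: the bijection between $\C{P}_{ee'}$ and $\prod_i \C{P}_{e_ie_{i+1}}$ via concatenation at the bottlenecks is exactly the right mechanism, and the observation that any bottleneck between consecutive $e_i,e_{i+1}$ would already appear in the list is the key to ensuring the factors are ``minimal.'' For the irreducibility half, your sketch has the right starting point (disjoint variable supports forced by multilinearity), but as you yourself note, turning this into a bottleneck is the delicate part and you do not actually carry it out. Your fallback of invoking \cite{Han2011} is precisely what the paper does, so there is no gap relative to the paper --- you have matched its level of rigor on the hard half and exceeded it on the easy half.
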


In addition, as shown below, any transfer function $m_{ee'}(\B{x})$ can be partitioned into a summation of products of transfer functions according to a cut between $e$ and $e'$.

\begin{lemma}
\label{lemma_transfer_func_partition}
Assume $\C{U}=\{e_1,e_2,\cdots,e_k\}$ is a cut which separates $e$ from $e'$.
If $e_i || e_j$ for $e_i\neq e_j\in \C{U}$, we have $m_{ee'}(\B{x}) = \sum^k_{i=1} m_{ee_i}(\B{x})m_{e_ie'}(\B{x})$.
Otherwise, the above equality doesn't hold.
\end{lemma}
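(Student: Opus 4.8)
The plan is to expand every transfer function as a sum of path monomials and then compare coefficients, using acyclicity of $\C{G}$ throughout. Step one is the elementary observation that the assignment $P\mapsto t_P(\B{x})$ is injective on directed paths: since a path is a sequence of distinct edges, its consecutive edge–pairs are distinct, so $t_P(\B{x})$ is a square-free monomial whose variable set reconstructs the line graph of $P$, hence $P$ itself. Consequently $m_{ee'}(\B{x})=\sum_{P\in\C{P}_{ee'}}t_P(\B{x})$ with all coefficients equal to $1$, and likewise for every $m_{ee_i}(\B{x})$ and $m_{e_ie'}(\B{x})$ (here I use the extension of transfer functions to arbitrary edge pairs introduced before Lemma~\ref{lemma_transfer_func_equal}).

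Step two expands the right-hand side. For each $e_i\in\C{U}$, a pair $(P_1,P_2)\in\C{P}_{ee_i}\times\C{P}_{e_ie'}$ concatenates to a walk from $e$ to $e'$ through $e_i$; acyclicity forces $P_1\cap P_2=\{e_i\}$ (a common edge $f\neq e_i$ would yield directed paths $f\to e_i$ and $e_i\to f$, a cycle), so the walk is a simple path $P\in\C{P}_{ee'}$ with $e_i\in P$ and $t_{P_1}t_{P_2}=t_P$, and conversely splitting such a $P$ at $e_i$ recovers the pair. Hence $m_{ee_i}(\B{x})m_{e_ie'}(\B{x})=\sum_{P\in\C{P}_{ee'}:\,e_i\in P}t_P(\B{x})$, and summing over $i$,
\[
\sum_{i=1}^{k}m_{ee_i}(\B{x})m_{e_ie'}(\B{x})=\sum_{P\in\C{P}_{ee'}}\bigl(\#\{i:e_i\in P\}\bigr)\,t_P(\B{x}).
\]
Since $\C{U}$ is a cut, $\#\{i:e_i\in P\}\ge 1$ for every $P$, and since the coefficients live in $\BF_{2^m}$, the claimed identity holds if and only if every $P\in\C{P}_{ee'}$ meets an \emph{odd} number of edges of $\C{U}$.

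Step three handles the two directions. If the edges of $\C{U}$ are pairwise parallel, each $P\in\C{P}_{ee'}$ meets exactly one of them — at least one because $\C{U}$ is a cut, at most one because two cut edges on a common path would be comparable, contradicting parallelism — so the count is $1$ and the identity holds. For the converse, suppose some $e_i,e_j\in\C{U}$ admit a directed path from $e_i$ to $e_j$; I would produce a single path $P^*\in\C{P}_{ee'}$ meeting $\C{U}$ in exactly $\{e_i,e_j\}$, so that the display gives coefficient $2=0$ for $t_{P^*}$ on the right but $1$ on the left. To build $P^*$ I would discard from $\C{U}$ any edge lying on no directed path from $e$ to $e'$ (this alters neither $\C{P}_{ee'}$ nor the relevant products), pass to a minimal sub-cut, choose a pair $e_i\to e_j$ with no third cut edge on any directed path from $e_i$ to $e_j$ (such a ``closest'' pair exists since replacing $(e_i,e_j)$ by $(e_i,e_k)$ for an intervening cut edge $e_k$ strictly lowers the topological index of the second coordinate), and glue three segments: a path $e\to e_i$ meeting $\C{U}$ only at $e_i$ and a path $e_j\to e'$ meeting $\C{U}$ only at $e_j$ (both available by minimality of the cut), together with a path $e_i\to e_j$ that by the closest-pair choice meets no other cut edge. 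Topological ordering makes the three segments pairwise disjoint except at the gluing edges, so their concatenation is a simple path meeting $\C{U}$ exactly in $\{e_i,e_j\}$; if needed, Lemma~\ref{lemma_transfer_inequal} or a Menger-type argument certifies the existence of the $e_i$-to-$e_j$ segment.

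The main obstacle is precisely this converse construction: over $\BF_{2^m}$ a path meeting $3,5,\dots$ cut edges would \emph{not} break the identity, so an \emph{even} count is genuinely required, which in turn forces the cut hypothesis to be read as a minimal cut (equivalently, every cut edge carrying flow) — indeed for the non-minimal cut $\{e_1,e_2,e_3\}$ along a single chain the identity does hold. I expect the closest-pair selection together with the acyclicity/topological bookkeeping to settle this cleanly; Step one and the ``if'' direction are routine, and the bulk of the write-up will be the combinatorial construction of $P^*$.
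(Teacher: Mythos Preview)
Your ``if'' direction is exactly the paper's argument: partition $\C{P}_{ee'}$ according to which cut edge a path meets, identify $m_{ee_i}m_{e_ie'}$ with $\sum_{P\in\C{P}^i_{ee'}}t_P$, and use pairwise parallelism to make the partition disjoint. Your treatment is more careful than the paper's (you spell out the injectivity of $P\mapsto t_P$ and the explicit bijection between pairs $(P_1,P_2)$ and paths through $e_i$), but the strategy is identical.

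For the ``otherwise'' direction you go well beyond the paper. The paper's entire argument is one line: if $e_i\prec e_j$ then $\C{P}^i_{ee'}\cap\C{P}^j_{ee'}\neq\emptyset$, hence $m_{ee'}\neq\sum_i\sum_{P\in\C{P}^i_{ee'}}t_P$. You are right that over $\BF_{2^m}$ this last inference is not automatic---a path meeting an \emph{odd} number of cut edges still contributes coefficient $1$ on both sides---and your chain example ($e\to e_1\to e_2\to e_3\to e'$ with $\C{U}=\{e_1,e_2,e_3\}$, every path meeting all three edges, so the sum is $3\cdot t_P=t_P$) genuinely breaks the converse as stated. So the paper's one-line proof of the converse has a gap in characteristic $2$, and the lemma as written needs either a minimality hypothesis on $\C{U}$ or restriction to small $k$.

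That said, the paper only invokes the converse in the proof of Theorem~\ref{th_interpret_2}, where the cut is $\{\alpha_{312},\alpha_{213}\}$, i.e.\ $k=2$. In that case any path through both edges contributes $2=0$ on the right and $1$ on the left, so the converse is immediate and your elaborate construction of $P^*$ is unnecessary. For general $k$ your proposed fix---pass to a minimal sub-cut and pick a closest comparable pair---is the right idea, but the step where the $e\to e_i$ and $e_j\to e'$ segments avoid all other cut edges is not quite what minimality gives you (minimality says each cut edge lies on \emph{some} $e$--$e'$ path avoiding the others, not that every prefix or suffix through a given cut edge can be so chosen). The cleanest repair is simply to record the lemma for $k=2$, which is all the paper actually uses.
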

\begin{proof}
For $e_i\in \C{U}$, let $\C{P}^i_{ee'}$ denote the set of paths in $\C{P}_{ee'}$ which pass through $e_i$.
Because $e_i || e_j$ for $e_i\neq e_j \in \C{U}$, $\C{P}^i_{ee'}$ is disjoint with $\C{P}^j_{ee'}$.
Hence, $m_{ee'}(\B{x}) = \sum^k_{i=1}\sum_{P\in \C{P}^i_{ee'}}t_P(\B{x})$.
Note that $m_{ee_i}(\B{x})m_{e_ie'}(\B{x}) =\sum_{(P_1,P_2)\in \C{P}_{ee_i} \times \C{P}_{e_ie'}} t_{P_1}(\B{x})t_{P_2}(\B{x})$.
Moreover, each monomial $t_P(\B{x})$ in $m_{ee'}(\B{x})$ corresponds to a monomial $t_{P_1}(\B{x})t_{P_2}(\B{x})$ in $m_{ee_i}(\B{x})m_{e_ie'}(\B{x})$.
Hence, $m_{ee_i}(\B{x})m_{e_ie'}(\B{x}) = \sum_{P\in \C{P}^i_{ee'}}t_P(\B{x})$, and the lemma holds. 
On the other hand, if some $e_i$ is upstream of $e_j$, $P^i_{ee'} \cap P^j_{ee'} \neq \emptyset$, and thus $m_{ee'}(\B{x}) \neq \sum^k_{i=1}\sum_{P\in \C{P}^i_{ee'}}t_P(\B{x})$, indicating that the lemma doesn't hold.
\end{proof}

\section{Proofs of Feasibility Conditions of PBNA} \label{app_proof_feasibility}

\subsection{Reducing $\C{S}'$ to $\C{S}'_i$}
In order to utilize the degree-counting technique, we use the following lemma.
Basically, it allows us to reformulate each $\frac{f(\eta(\B{x}))}{g(\eta(\B{x}))}\in \C{S}'$ to its unique form $\frac{\alpha(\B{x})}{\beta(\B{x})}$, such that we can compare the degrees of a coding variable in $\alpha(\B{x})$ and $\beta(\B{x})$ with its degrees in the numerator and denominator of $p_i(\B{x})$ respectively. 

\begin{lemma}
\label{lemma_relative_prime_multivar}
Let $\BF$ be a field. $z$ is a variable and $\B{y}=(y_1,y_2,\cdots,y_k)$ is a vector of variables. 
Consider four non-zero polynomials $f(z),g(z)\in \BF[z]$  and $s(\B{y}), t(\B{y})\in \BF[\B{y}]$, such that $\mygcd(f(z),g(z))=1$ and $\mygcd(s(\B{y}), t(\B{y}))=1$. 
Denote $d=\max\{d_f,d_g\}$. 
Define two polynomials in $\BF[\B{y}]$: $\alpha(\B{y})=f\big( \frac{s(\B{y})}{t(\B{y})} \big)t^d(\B{y})$ and $\beta(\B{y})=g\big( \frac{s(\B{y})}{t(\B{y})} \big)t^d(\B{y})$. 
Then $\mygcd(\alpha(\B{y}),\beta(\B{y}))=1$.
\end{lemma}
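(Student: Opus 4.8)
The plan is to show that no irreducible polynomial $p(\B{y})\in\BF[\B{y}]$ divides both $\alpha(\B{y})$ and $\beta(\B{y})$; since $\BF[\B{y}]$ is a UFD and (as I verify below) $\alpha$ and $\beta$ are not both zero, this gives $\mygcd(\alpha(\B{y}),\beta(\B{y}))=1$. First I would record two preliminary facts. Writing $f(z)=\sum_{i=0}^{d_f}f_iz^i$ and $g(z)=\sum_{i=0}^{d_g}g_iz^i$ with $f_{d_f},g_{d_g}\neq 0$, clearing denominators gives $\alpha(\B{y})=\sum_{i=0}^{d_f}f_i\,s(\B{y})^i\,t(\B{y})^{d-i}$ and $\beta(\B{y})=\sum_{i=0}^{d_g}g_i\,s(\B{y})^i\,t(\B{y})^{d-i}$, and since $d=\max\{d_f,d_g\}\ge i$ in every term, these are genuine polynomials. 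Also, because $\mygcd(f(z),g(z))=1$ in $\BF[z]$, B\'ezout's identity supplies $a(z),b(z)\in\BF[z]$ with $a(z)f(z)+b(z)g(z)=1$, so $f$ and $g$ have no common root in any field extension of $\BF$; in particular $f(s/t)$ and $g(s/t)$ cannot both vanish in $\BF(\B{y})$, so $\alpha$ and $\beta$ are not both zero.

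Next I would assume, for contradiction, that an irreducible $p(\B{y})$ divides both $\alpha$ and $\beta$, and distinguish two cases according to whether $p\mid t$. If $p\mid t$, then $p\nmid s$ since $\mygcd(s,t)=1$; as $d=\max\{d_f,d_g\}$, at least one of $d=d_f$ or $d=d_g$ holds, say $d=d_f$ (the other case is symmetric, with $\beta$ in place of $\alpha$). Then $\alpha=f_{d_f}s^{d_f}+t\cdot(\cdots)$, so $\alpha\equiv f_{d_f}s^{d_f}\pmod p$; since $p$ is prime and divides neither the nonzero constant $f_{d_f}$ nor $s$, we get $p\nmid\alpha$, a contradiction. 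If instead $p\nmid t$, let $R=\BF[\B{y}]/(p)$, an integral domain (as $(p)$ is prime) with fraction field $K$, into which $\BF$ embeds because the irreducible $p$ is non-constant. Writing $\bar s,\bar t\in R$ for the images of $s,t$, we have $\bar t\neq 0$, and reducing the identity $\alpha=\sum_i f_i s^i t^{d-i}$ modulo $p$ and then dividing by $\bar t^{\,d}$ in $K$ yields $f(\bar s/\bar t)=0$; similarly $g(\bar s/\bar t)=0$. Hence $\bar s/\bar t\in K$ is a common root of $f$ and $g$ in an extension of $\BF$, contradicting the B\'ezout observation. Since both cases are impossible, no irreducible divides both $\alpha$ and $\beta$, and the lemma follows.

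The step I expect to be the main obstacle — or at least the one demanding the most care — is the bookkeeping in the case $p\mid t$: one must be sure the top-degree-in-$z$ term survives modulo $t$, which is exactly why the exponent $d=\max\{d_f,d_g\}$ (rather than, say, $d_f$) is used, and why the split into ``$d=d_f$'' and ``$d=d_g$'' is needed, since choosing the wrong one of $\alpha,\beta$ would leave the whole expression divisible by a positive power of $t$ and stall the argument. A clean alternative would be to pass to resultants — relating $\mathrm{Res}_z$ of the degree-$d$ homogenizations of $f$ and $g$ evaluated at $(s(\B{y}),t(\B{y}))$ to the nonzero scalar $\mathrm{Res}_z(f,g)$ — but the divisor-by-divisor argument above avoids the degree subtleties inherent in resultant computations and handles the degenerate cases (e.g. $f$ or $g$ constant, or $s,t$ constant) uniformly.
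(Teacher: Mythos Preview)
Your proof is correct, and it takes a genuinely different route from the paper's. The paper first establishes the univariate case (its Lemma~\ref{lemma_relative_prime_univar}) by essentially the same root-in-an-extension argument you use in your Case~2, but then spends an entire subsection building machinery (Lemmas on partial divisibility, Theorem~\ref{th_multivar_div}, Theorem~\ref{th_multivar_gcd}) whose sole purpose is to show that $\mygcd(\alpha(\B{y}),\beta(\B{y}))=1$ in $\BF[\B{y}]$ is equivalent to $\mygcd_1(\alpha(y_i),\beta(y_i))=1$ in $\BF(\B{y}_i)[y_i]$ for every $i$, thereby reducing the multivariate statement to the univariate one. You bypass all of that by working directly in the UFD $\BF[\B{y}]$: picking an irreducible common divisor $p$, and either (i) when $p\mid t$, reading off the surviving top term $f_{d_f}s^{d_f}$ (or $g_{d_g}s^{d_g}$) modulo $p$, or (ii) when $p\nmid t$, passing to the fraction field of $\BF[\B{y}]/(p)$ to produce a common root of $f$ and $g$. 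Your argument is shorter and more conceptual; the paper's detour through $\BF(\B{y}_i)[y_i]$ buys nothing extra here, since the reduction theorems are not reused elsewhere. The one thing worth making explicit in your write-up is that $\BF$ genuinely embeds in $K=\mathrm{Frac}\big(\BF[\B{y}]/(p)\big)$ because the irreducible $p$ is necessarily non-constant, so that the B\'ezout identity $a(z)f(z)+b(z)g(z)=1$ really does transfer to $K[z]$; you do say this, but it is the hinge of Case~2 and deserves emphasis.
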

\begin{proof}
See Appendix \ref{app_proof_polynomial}.
\end{proof}

We use the following three steps to reduce $\C{S}'$ to $\C{S}'_i$.

\textit{Step 1}: $\C{S}' \Rightarrow \C{S}''_1=\{\frac{a_0+a_1\eta(\B{x})}{b_0+b_1\eta(\B{x})}: a_0,a_1,b_0,b_1\in \BF_{2^m}\}$. 
Assume $p_i(\B{x})=\frac{f(\eta(\B{x}))}{g(\eta(\B{x}))} \in \C{S}'$.
We will prove that $d=\max\{d_f,d_g\}=1$.
Let $p_i(\B{x})=\frac{u(\B{x})}{v(\B{x})}$, $\eta(\B{x})=\frac{s(\B{x})}{t(\B{x})}$ denote the unique forms of $p_i(\B{x})$ and $\eta(\B{x})$ respectively.
Without loss of generality, let $f(z)=\sum^k_{j=0}a_jz^j$, $g(z)=\sum^l_{j=0}b_jz^j$ where $a_kb_l \neq 0$.
We first consider the case where $l\le k$ and thus $d=k$.
Define the following two polynomials:
\begin{flalign*}
& \alpha(\B{x}) = f(\eta(\B{x}))t^k(\B{x}) = \sum^k_{j=0}\nolimits a_jt^{k-j}(\B{x})s^j(\B{x}) \\
& \beta(\B{x}) = g(\eta(\B{x}))t^k(\B{x}) = \sum^l_{j=0}\nolimits b_jt^{k-j}(\B{x})s^j(\B{x})
\end{flalign*} 
Due to Lemma \ref{lemma_relative_prime_multivar}, we have $\alpha(\B{x}) = cu(\B{x}), \beta(\B{x}) = cv(\B{x})$, where $c$ in a non-zero constant in $\BF_q$.
Moreover, according to Lemma \ref{lemma_linearization}, we assign values to $\B{x}$ other than a coding variable $x_{ee'}$ such that $u(\B{x})$ and $v(\B{x})$ are transformed into:
\begin{flalign*}
& u(x_{ee'}) = c_1x_{ee'} + c_0 \quad v(x_{ee'}) = c_2 \\
\text{ or } & u(x_{ee'}) = c_2 \quad v(x_{ee'}) = c_1x_{ee'} + c_0
\end{flalign*}
where $c_0,c_1,c_2\in \BF_q$ and $c_1c_2\neq 0$.
We only consider the first case. 
The proof for the other case is similar. 
In this case, $\alpha(\B{x})$ and $\beta(\B{x})$ are transformed into $\alpha(x_{ee'}) = cc_1x_{ee'}+cc_0$ and $\beta(x_{ee'}) = cc_2$ respectively.

By contradiction, assume $d\ge 2$. 
We first consider the case where $l\le k$ and thus $d=k$. 
In this case, we have 
\begin{align*}
&\alpha(x_{ee'}) = \sum^k_{j=0}\nolimits a_jt^{k-j}(x_{ee'})s^j(x_{ee'}) = cc_1x_{ee'}+cc_0 \\
&\beta(x_{ee'}) = \sum^l_{j=0}\nolimits b_jt^{k-j}(x_{ee'})s^j(x_{ee'}) = cc_2
\end{align*}
Assume $s(x_{ee'})=\sum^r_{j=0}s_jx^j_{ee'}$ and $t(x_{ee'})=\sum^{r'}_{j=0}t_jx^j_{ee'}$, where $s_rt_{r'} \neq 0$. 
Thus $\max\{r,r'\}\ge 1$. 
Note that the degree of $x_{ee'}$ in $t^{k-j}(x_{ee'})s^j(x_{ee'})$ is $kr'+j(r-r')$. 
We consider the following two cases:

Case I: $r \neq r'$. 
If $r > r'$, $d_{\alpha}=kr \ge 2$, contradicting that $d_{\alpha}=1$. 
Now assume $r<r'$. Let $l_1$ and $l_2$ be the minimum exponents of $z$ in $f(z)$ and $g(z)$ respectively. 
It follows that $d_{\alpha} = kr'-l_1(r'-r) = 1$ and $d_{\beta} = kr'-l_2(r'-r) = 0$. 
Clearly, $l_2>0$ due to $d_{\beta}=0$. 
If $r>0$, $kr'-l_2(r'-r) > kr'-l_2r' \ge 0$, contradicting $d_{\beta} = 0$. 
Hence, $r=0$, and $l_2=k$ due to $d_{\beta}=0$. 
Meanwhile, $d_{\alpha} = (k-l_1)r' = 1$, which implies that $l_1=k-1$ and $r'=1$. 
Thus, $z^{k-1}$ is a common divisor of $f(z)$ and $g(z)$, contradicting $\gcd(f(z),g(z))=1$.

Case II: $r=r'$. Since $d_{\alpha}=1$ and $d_{\beta}=0$, all the terms in $\alpha(x_{ee'})$ and $\beta(x_{ee'})$ containing $x^{kr}_{ee'}$ must be cancelled out, implying that 
\begin{align*}
& \sum^k_{j=0} a_jt^{k-j}_rs^j_r = t^k_r \sum^k_{j=0} a_j\left(\frac{s_r}{t_r}\right)^j = t^k_r f\left(\frac{s_r}{t_r}\right) = 0 \\
& \sum^l_{j=0} b_jt^{k-j}_rs^j_r = t^k_r \sum^l_{j=0} b_j\left(\frac{s_r}{t_r}\right)^j = t^k_r g\left(\frac{s_r}{t_r}\right) = 0
\end{align*}
Hence $z-\frac{s_r}{t_r}$ is a common divisor of $f(z)$ and $g(z)$, contradicting $\gcd(f(z),g(z))=1$. 

Therefore, we have proved $d=1$ when $l\le k$. 
Using similar technique, we can prove that $d=1$ when $l\ge k$.
This implies that $\frac{f(\eta(\B{x}))}{g(\eta(\B{x}))}$ can only be of the form $\frac{a_0+a_1\eta(\B{x})}{b_0+b_1\eta(\B{x})}$.
Hence, we have reduced $\C{S}'$ to $\C{S}''_1$.

\textit{Step 2}: $\C{S}''_1 \Rightarrow \C{S}''_2=\{1,\eta(\B{x}),1+\eta(\B{x}),\frac{\eta(\B{x})}{1+\eta(\B{x})}\}$.
We consider the coupling relation $p_1(\B{x})=\frac{f(\eta(\B{x}))}{g(\eta(\B{x}))}$.
The coupling relations $p_2(\B{x})=\frac{f(\eta(\B{x}))}{g(\eta(\B{x}))}$ and $p_3(\B{x})=\frac{f(\eta(\B{x}))}{g(\eta(\B{x}))}$ can be dealt with similarly.
Define $q_1(\B{x})=\frac{\eta(\B{x})}{p_1(\B{x})}=\frac{m_{11}(\B{x})m_{23}(\B{x})}{m_{13}(\B{x})m_{21}(\B{x})}$. 
Assume the characteristic of $\BF_q$ is $p$.
Given an integer $m$, let $m_p$ denote the remainder of $m$ divided by $p$.
Since $\C{S}''_1$ only consists of a finite number of rational functions, we iterate all possible configurations of $a_0,a_1,b_0,b_1$ as follows:

Case I: $\frac{f(z)}{g(z)}=\frac{a_0+a_1z}{b_0+b_1z}$, where $a_1a_0b_1b_0 \neq 0$, and $a_0b_1 \neq a_1b_0$. 
For this case, we have $p_1(x_{ee'}) = \frac{a_0+a_1p_1(x_{ee'})q_1(x_{ee'})}{b_0+b_1p_1(x_{ee'})q_1(x_{ee'})}$. It immediately follows
\begin{flalign*}
q_1(x_{ee'}) = \frac{a_0c^2_2-b_0c_0c_2-b_0c_1c_2x_{ee'}}{b_1c^2_1x^2_{ee'}+(2_pb_1c_0c_1-a_1c_1c_2)x_{ee'}+b_1c^2_0-a_1c_0c_2}
\end{flalign*}
Let $u_1(x_{ee'}), v_1(x_{ee'})$ denote the numerator and denominator of the above equation respectively. 
Assume $u_1(x_{ee'}) \mid v_1(x_{ee'})$ and thus $x_{ee'}=\frac{a_0c_2-b_0c_0}{b_0c_1}$ is a solution to $v_1(x_{ee'})=0$. 
However, $v_1(\frac{a_0c_2-b_0c_0}{b_0c_1}) = \frac{a_0c^2_2}{b^2_0}(a_0b_1 - a_1b_0) \neq 0$. 
Hence, $u_1(x_{ee'}) \nmid v_1(x_{ee'})$. Thus, by the definition of $q_1(\B{x})$ and Lemma \ref{lemma_square_term}, $x^2_{ee'}$ must appear in $u_1(x_{ee'})$, which contradicts the formulation of $u_1(x_{ee'})$.

Case II: $\frac{f(z)}{g(z)}=\frac{a_0+a_1z}{b_1z}$, where $a_0a_1b_0 \neq 0$. 
Similar to Case I, we can derive 
\begin{align*}
q_1(x_{ee'}) = \frac{a_0c^2_2}{b_1c^2_1x^2_{ee'}+(2_pb_1c_0c_1-a_1c_1c_2)x_{ee'}+b_1c^2_0-a_1c_0c_2}
\end{align*}
which contradicts Lemma \ref{lemma_square_term}.

Case III: $\frac{f(z)}{g(z)}=\frac{a_1z}{b_0+b_1z}$, where $a_1b_0b_1 \neq 0$. Thus $\frac{1}{p_1(\B{x})} =  \frac{b_0}{a_1}\frac{1}{\eta(\B{x})} + \frac{b_1}{a_1}$.
Since the coefficient of each monomial in the denominators and numerators of $p_1(\B{x})$ and $\eta(\B{x})$ equals one, it follows $\frac{a_0}{b_1} = \frac{b_1}{a_1} =1$. 
This indicates that $p_1(\B{x})=\frac{\eta(\B{x})}{\eta(\B{x})+1}$.

Case IV: $\frac{f(z)}{g(z)}=\frac{a_0}{b_0+b_1z}$, where $a_0b_0b_1\neq 0$. 
It follows that 
\begin{align*}
q_1(x_{ee'}) = \frac{a_0c^2_2-b_0c_0c_2-b_0c_1c_2x_{ee'}}{b_1c^2_0+2_pb_1c_0c_1x_{ee'}+b_1c^2_1x^2_{ee'}}
\end{align*}
Similar to Case I, this also contradicts Lemma \ref{lemma_square_term}.

Case V: $\frac{f(z)}{g(z)}=\frac{a_0}{z}$, where $a_0\neq 0$. 
Hence, $q_1(x_{ee'})=\frac{a_0c^2_2}{c^2_1x^2_{ee'}+2_pc_0c_1x_{ee'}+c^2_0}$, contradicting Lemma \ref{lemma_square_term}.

Case VI: $\frac{f(z)}{g(z)}=a_0+a_1z$, where $a_0a_1 \neq 0$. 
Thus, it follows $p_1(\B{x})=a_0 + a_1\eta(\B{x})$.
Similar to Case III, $a_1=a_0=1$, implying that $p_1(\B{x})=1+\eta(\B{x})$.

Case VII: $\frac{f(z)}{g(z)}=a_1z$, where $a_1\neq 0$. 
Similar to Case III, $a_1=1$ and hence $p_1(\B{x})=\eta(\B{x})$.

Therefore, we have proved that $\frac{f(\eta(\B{x}))}{g(\eta(\B{x}))}$ can only take the form of the four rational functions in $\C{S}''_2$.
Thus, we have reduced $\C{S}''_1$ to $\C{S}''_2$.

\textit{Step 3}: $\C{S}''_2 \Rightarrow \C{S}'_i$.
We note that in Proposition 3 of \cite{Han2011}, it was proved that $p_1(\B{x})\neq 1+\eta(\B{x})$, $p_2(\B{x})\neq \frac{\eta(\B{x})}{1+\eta(\B{x})}$ and $p_3(\B{x})\neq \frac{\eta(\B{x})}{1+\eta(\B{x})}$. 
Combined with the above results, we have reduced $\C{S}''_2$ to $\C{S}'_i$.

In summary, according to Theorem \ref{th_big_cond_1}, if the conditions of Theorem \ref{th_main} are satisfied, the three unicast sessions can asymptotically achieve the rate tuple $(\frac{1}{2},\frac{1}{2}, \frac{1}{2})$ through PBNA.

\subsection{Necessity of the Conditions in Theorem \ref{th_main}}

As shown previously, each row of $\B{V}_1$ satisfying the alignment conditions corresponds to a non-zero solution to Eq. (\ref{eq_v1_align_2}).

\begin{lemma}
\label{lemma_rank_n}
$\myrank(z\B{C}-\B{BA}) = n$.
\end{lemma}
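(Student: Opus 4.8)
\emph{Proof proposal.} The plan is to exhibit an $n\times n$ submatrix of $z\B{C}-\B{BA}$ whose determinant is a nonzero element of $\BF_{2^m}(\xi)(z)$. Since $z\B{C}-\B{BA}$ has only $n$ columns, its rank is at most $n$, so producing a single nonsingular $n\times n$ submatrix forces the rank to equal $n$; this also recovers the remark in the main text that when $s=0$ the square matrix $z\B{C}-\B{BA}$ is invertible.

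First I would use the hypothesis that $\B{C}$ has rank $n$. As an $(n+s)\times n$ matrix of rank $n$, $\B{C}$ has $n$ linearly independent rows; let $R$ index such a set of rows (for $s=0$, $R=\{1,\dots,n\}$), let $\tilde{\B{C}}$ be the $n\times n$ submatrix of $\B{C}$ on the rows in $R$ — which is then invertible — and let $\tilde{\B{B}}$ be the $n\times n$ submatrix of $\B{BA}$ on the same rows. The submatrix of $z\B{C}-\B{BA}$ on the rows in $R$ is exactly $z\tilde{\B{C}}-\tilde{\B{B}}$, and
\begin{flalign*}
\det(z\tilde{\B{C}}-\tilde{\B{B}}) = \det(\tilde{\B{C}})\,\det(z\B{I}_n - \tilde{\B{C}}^{-1}\tilde{\B{B}}).
\end{flalign*}
The second factor is the characteristic polynomial (in $z$) of $\tilde{\B{C}}^{-1}\tilde{\B{B}}$, a matrix that does not involve $z$; hence it is monic of degree $n\ge 1$ and in particular a nonzero element of $\BF_{2^m}(\xi)[z]$, the characteristic of the field being irrelevant. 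Since $\det(\tilde{\B{C}})\neq 0$ as well, $z\tilde{\B{C}}-\tilde{\B{B}}$ is nonsingular over $\BF_{2^m}(\xi)(z)$, so $\myrank(z\B{C}-\B{BA})\ge n$, and therefore $\myrank(z\B{C}-\B{BA}) = n$.

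I do not anticipate a genuine obstacle here; the only point needing care is to invoke full column rank of $\B{C}$ \emph{before} forming the difference with $\B{BA}$, since a naive entrywise argument fails (individual entries $z c_{ij}-(\B{BA})_{ij}$ may vanish). As an alternative I would mention a coordinate-free version of the same idea: if $\B{w}(z)\in\BF_{2^m}(\xi)(z)^n$ satisfies $(z\B{C}-\B{BA})\B{w}(z)=0$, clear denominators so that $\B{w}(z)$ becomes a nonzero polynomial vector whose top power of $z$, say $z^{d}$, has nonzero coefficient $\B{w}_d$; comparing coefficients of $z^{d+1}$ gives $\B{C}\B{w}_d=0$, and injectivity of $\B{C}$ then forces $\B{w}_d=0$, a contradiction — so the $n$ columns of $z\B{C}-\B{BA}$ are linearly independent over $\BF_{2^m}(\xi)(z)$.
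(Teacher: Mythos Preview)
Your proposal is correct. Your main argument via the characteristic polynomial of $\tilde{\B{C}}^{-1}\tilde{\B{B}}$ is a genuinely different (and shorter) route than the paper's. The paper proceeds essentially along the lines of your ``alternative'': it assumes a nontrivial $\BF_{2^m}(\xi)(z)$-linear combination $\sum_i f_i(z)(z\B{c}_i-\B{d}_i)=0$ of the columns of $z\B{C}-\B{BA}$, clears denominators to obtain polynomial coefficients $g_i(z)=\sum_l a_{l,i}z^l$, expands, and equates coefficients of each power of $z$; the top-degree term gives $\sum_i a_{k,i}\B{c}_i=0$, and full column rank of $\B{C}$ then forces $a_{k,i}=0$, after which a descending cascade (the paper also records $\sum_i a_{0,i}\B{d}_i=0$ and invokes the rank of $\B{BA}$, though this is redundant once the cascade from the top is run) yields $a_{l,i}=0$ for all $l,i$. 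Your determinant/characteristic-polynomial argument avoids the induction entirely and makes the $s=0$ invertibility a one-liner; the paper's coefficient-comparison argument, on the other hand, does not require singling out a row set $R$ or inverting $\tilde{\B{C}}$.
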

\begin{proof}
Denote $\B{D}=\B{BA}$. 
Let $\B{c}_i$ and $\B{d}_i$ denote the $i$th column of $\B{C}$ and $\B{D}$ respectively. 
Hence, $\B{c}_1,\cdots,\B{c}_n$ are linearly independent and so are $\B{d}_1,\cdots,\B{d}_n$. 
Assume there exist $f_1(z),\cdots,f_n(z)\in \BF_{2^m}(\xi)(z)$ such that $\sum^n_{i=1}f_i(z)(z\B{c}_i-\B{d}_i)=0$. 
Without loss of generality, assume $f_i(z)=\frac{g_i(z)}{h(z)}$ for $i\in \{1,2,\cdots,n\}$, where $g_i(z),h(z)\in \BF_{2^m}(\xi)[z]$. 
Thus, $\sum^n_{i=1}g_i(z)(z\B{c}_i-\B{d}_i)=0$. 
Let $k=\max_{i\in \{1,2,\cdots,n\}}\{d_{g_i}\}$ and assume $g_i(z)=\sum^k_{l=0}a_{l,i}(\xi)z^l$, where $a_{l,i}(\xi)\in \BF_{2^m}(\xi)$. 
Then, it follows
\begin{flalign*}
& \sum^n_{i=1}g_i(z)(z\B{c}_i-\B{d}_i) = \sum^k_{l=0}\sum^n_{i=1}(a_{l,i}(\xi)z^{l+1}\B{c}_i-a_{l,i}(\xi)z^l\B{d}_i) \\
= & z^{k+1}\sum^n_{i=1}a_{k,i}(\xi)\B{c}_i + \sum^{k-1}_{l=0}z^{l+1}\sum^n_{i=1}(a_{l,i}(\xi)\B{c}_i-a_{l+1,i}(\xi)\B{d}_i) \\
& \hspace*{10pt} - \sum^n_{i=1}a_{0,i}(\xi)\B{d}_i = \B{0}
\end{flalign*}
Therefore, the following equations must hold:
\begin{flalign*}
& \sum^n_{i=1}a_{k,i}(\xi)\B{c}_i=0 \hspace*{10pt} \sum^n_{i=1}a_{0,i}(\xi)\B{d}_i=0 \\ 
& \sum^n_{i=1}(a_{l,i}(\xi)\B{c}_i-a_{l+1,i}(\xi)\B{d}_i)=0 \hspace*{15pt} \forall l\in \{0,\cdots,k-1\}
\end{flalign*}
Thus $a_{l,i}(\xi)=0$ for any $i\in \{1,\cdots,n\},l\in \{0,\cdots,k\}$, implying $f_i(z)=0$. 
Hence, $\myrank(z\B{C}-\B{D})=n$. 
\end{proof}

The following lemma reveals that any non-zero solution to Eq. (\ref{eq_v1_align_2}) is linearly dependent on the particular vector $(1, z, z^2, \cdots, z^n)$, which forms each row of the precoding matrix $\B{V}^*_1$.

\begin{corollary}
\label{cor_get_v1_1}
Eq. (\ref{eq_v1_align_2}) has a non-zero solution if and only if $s=1$.
Moreover, when $s=1$, Eq. (\ref{eq_v1_align_2}) has a non-zero solution in the form of $\B{r}(z)=(1, z, z^2, \cdots, z^n)\B{F}$, where $\B{F}$ is an $(n+1)\times (n+1)$ matrix over $\BF_{2^m}(\xi)$. 
Moreover, any solution to Eq. (\ref{eq_v1_align_2}) is linearly dependent on $(1,z,\cdots,z^n)\B{F}$.
\end{corollary}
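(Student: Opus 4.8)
The plan is to read off both assertions from Lemma~\ref{lemma_rank_n} together with the standard ``maximal-minor'' construction of a left null vector. Work over the field $K=\BF_{2^m}(\xi)(z)$ and write $\B{N}(z)=z\B{C}-\B{BA}$, an $(n+s)\times n$ matrix over $K$. By Lemma~\ref{lemma_rank_n}, $\myrank(\B{N}(z))=n$, so the left null space of $\B{N}(z)$, which is exactly the solution set of Eq.~(\ref{eq_v1_align_2}), has dimension $(n+s)-n=s$ over $K$. Since $s\in\{0,1\}$, this already gives the first assertion: Eq.~(\ref{eq_v1_align_2}) has a non-zero solution iff $s=1$, and when $s=1$ the solution space is a one-dimensional $K$-subspace, so every solution is a $K$-multiple of any fixed non-zero solution. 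It thus remains only to exhibit one non-zero solution of the claimed shape.

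Assume $s=1$, so $\B{N}(z)$ is $(n+1)\times n$. For $1\le i\le n+1$ let $M_i(z)$ be the determinant of the $n\times n$ submatrix of $\B{N}(z)$ obtained by deleting its $i$-th row, and set $\B{r}_0(z)=\big((-1)^{1}M_1(z),\,(-1)^{2}M_2(z),\,\dots,\,(-1)^{n+1}M_{n+1}(z)\big)$. I claim $\B{r}_0(z)$ solves Eq.~(\ref{eq_v1_align_2}). Indeed, for each column index $j$, the scalar $\sum_{i=1}^{n+1}(-1)^i M_i(z)\,N_{ij}(z)$ is, up to an overall sign, the Laplace expansion along the first column of the $(n+1)\times(n+1)$ matrix obtained from $\B{N}(z)$ by prepending its $j$-th column as a new first column; that matrix has two equal columns, hence determinant zero, so $\B{r}_0(z)\B{N}(z)=\B{0}$. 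Moreover $\B{r}_0(z)\neq\B{0}$, because $\myrank(\B{N}(z))=n$ forces at least one maximal minor $M_i(z)$ to be non-zero.

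Finally, each entry of $\B{N}(z)$ is affine in $z$ with coefficients in $\BF_{2^m}(\xi)$, so each $M_i(z)$ is a polynomial in $z$ of degree at most $n$ whose coefficients lie in $\BF_{2^m}(\xi)$. Collecting coefficients, write $(-1)^i M_i(z)=\sum_{l=0}^{n} f_{l,i}(\xi)\,z^{l}$, and let $\B{F}$ be the $(n+1)\times(n+1)$ matrix over $\BF_{2^m}(\xi)$ with $(l+1,i)$-entry $f_{l,i}(\xi)$. Then $\B{r}_0(z)=(1,z,z^2,\dots,z^n)\B{F}$ is the desired non-zero solution, and by the one-dimensionality established in the first paragraph every solution of Eq.~(\ref{eq_v1_align_2}) is linearly dependent on $(1,z,\dots,z^n)\B{F}$.

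The main obstacles I anticipate are bookkeeping rather than conceptual: (i) stating the cofactor-expansion identity carefully enough that the signs and the ``repeated-column vanishes'' step are unambiguous, and (ii) spelling out that the maximal minors have $z$-degree at most $n$ with coefficients free of $z$ — it is precisely this that lets $\B{F}$ be taken over $\BF_{2^m}(\xi)$ rather than over $\BF_{2^m}(\xi)(z)$, which is the whole point of the normalized form used in Lemma~\ref{lemma_v1}. Both steps are routine, but they are the crux of the argument.
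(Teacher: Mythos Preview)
Your proposal is correct and follows essentially the same approach as the paper: both use Lemma~\ref{lemma_rank_n} for the rank/nullity count, construct an explicit non-zero solution from the $n\times n$ minors of $z\B{C}-\B{BA}$, and then observe that these minors are polynomials in $z$ of degree at most $n$ to extract $\B{F}$. The only cosmetic difference is that the paper first fixes one invertible $n\times n$ submatrix and applies Cramer's rule (normalizing one coordinate to $-1$), whereas you use the more symmetric signed-cofactor vector directly; the resulting solutions coincide up to scaling.
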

\begin{proof}
We first prove the ``only if'' part.
If $s=0$, $z\B{C}-\B{BA}$ is an invertible square matrix.
Thus, Eq. (\ref{eq_v1_align_2}) has only zero solution.
Hence, if Eq. (\ref{eq_v1_align_2}) has only non-zero solution, it must be that $s=1$.

We then prove the ``if'' part.
Assume $s=1$. 
We will construct a non-zero solution to Eq. (\ref{eq_v1_align_2}) as follows.
There must be an $n\times n$ invertible submatrix in $z\B{C}-\B{D}$. 
Without loss of generality, assume this submatrix consists of the top $n$ rows of $z\B{C}-\B{D}$ and denote this submatrix by $\B{E}_{n+1}$. 
Let $\B{b}$ denote the $(n+1)$th row of $z\B{C}-\B{D}$. 
In order to get a non-zero solution to equation (\ref{eq_v1_align_2}), we first fix $r_{n+1}(z)=-1$. 
Therefore, equation (\ref{eq_v1_align_2}) is transformed into $(r_1(z),\cdots,r_n(z))\B{E}_{n+1}=\B{b}$.
Let $\B{E}_i$ denote the submatrix acquired by replacing the $i$th row of $\B{E}_{n+1}$ with $\B{b}$. 
Hence, we get a non-zero solution to (\ref{eq_v1_align_2}), $\B{r}(z)=(\frac{\det\B{E_1}}{\det\B{E}_{n+1}}, \cdots, \frac{\det\B{E}_n}{\det\B{E}_{n+1}},-1)$.
Moreover, $\bar{\B{r}}(z)=(\det\B{E}_1,\cdots,\det\B{E}_n,-\det\B{E}_{n+1})$ is also a solution. 
Note that the degree of $z$ in each $\det\B{E}_i$ is at most $n$. 
Thus, $\bar{\B{r}}(z)$ can be formulated as $(1,z,\cdots,z^n)\B{F}$, where $\B{F}$ is an $(n+1)\times (n+1)$ matrix. 
Since $\myrank(z\B{C}-\B{D})=n$, all the solutions to equation (\ref{eq_v1_align_2}) form a one-dimensional linear space. 
Thus, all solutions must be linearly dependent on $\bar{\B{r}}(z)$.
\end{proof}

Based on Corollary \ref{cor_get_v1_1}, we can easily derive that each $\B{V}_1$ satisfying Eq. (\ref{eq_v1_align}) is related to $\B{V}^*_1$ through a transform equation, as defined in Lemma \ref{lemma_v1}.

\begin{proof}[Proof of Lemma \ref{lemma_v1}]
Let $\B{r}_i$ be the $i$th row of $\B{V}_1$, which satisfies Eq. (\ref{eq_v1_align}). 
According to Corollary \ref{cor_get_v1_1}, $\B{r}_i$ must have the form $f_i(\eta(\B{x}^i))(1,\eta(\B{x}^i),\cdots,\eta^n(\B{x}^i))\B{F}$, where $f_i(z)$ is a non-zero rational function in $\BF_{2^m}(\xi)(z)$. 
Hence, $\B{V}_1$ can be written as $\B{G}\B{V}^*_1\B{F}$. 
Moreover, Eq. (\ref{eq_v1_align_2}) can be rewritten as follows:
\begin{flalign*}
(z,z^2,\cdots,z^{n+1})\B{FC} = (1,z,\cdots,z^n)\B{FBA}
\end{flalign*}
The right side of the above equation contains no $z^{n+1}$, and thus the $(n+1)$th row of $\B{FC}$ must be zero. 
Similarly, there is no constant term on the left side of the above equation, implying that the 1st row of $\B{FBA}$ is zero.
\end{proof}

In the followings, we will prove the necessity of the conditions in Theorem \ref{th_main}.
Assume a coupling relation $p_i(\B{x})=\frac{f(\eta(\B{x}))}{g(\eta(\B{x}))} \in \C{S}'_i$ is present in the network.
Without loss of generality, assume $f(z)=\sum^p_{k=0}a_kz^k$ and $g(z)=\sum^q_{k=0}b_kz^k$, where $a_p\neq 0$ and $b_q\neq 0$.
We'll prove that it is impossible for $\omega_i$ to asymptotically achieve one half rate by using any PBNA.
We only consider the case $i=1$.
The other cases $i=2,3$ can be proved similarly, and are omitted.

Consider a PBNA $\lambda=(\xi,\B{V}_i:1\le i\le 3)$ with $2n+s$ symbol extensions, where $n>\max\{p,q\}+1$.
According to Corollary \ref{cor_get_v1_1}, $s$ must equal 1, and thus $\B{V}_1$ is a $(2n+1) \times (n+1)$ matrix.
By Lemma \ref{lemma_v1}, $\B{V}_1=\B{G}\B{V}^*_1\B{F}$, where $\B{F}$ is an $(n+1)\times (n+1)$ invertible matrix. 
The $j$th row of $\B{V}_1$ is $\B{r}_j=f_j(\eta(\B{x}^{(j)}))(1\; \eta(\B{x}^{(j)})\; \cdots\; \eta^n(\B{x}^{(j)}))\B{F}$. 
Since the $(n+1)$th row of $\B{FC}$ is zero, we have 
\begin{flalign}
\label{eq_rj}
\B{r}_j\B{C} = f_j(\eta(\B{x}^{(j)}))(1\; \eta(\B{x}^{(j)})\; \cdots\; \eta^{n-1} (\B{x}^{(j)}))\B{H}
\end{flalign}
where $\B{H}$ consists of the top $n$ rows of $\B{FC}$ and $\myrank(\B{H})=n$. 
For $0\le l\le n-p-1$, define the following vector:
\begin{flalign*}
\B{a}_l = (\overbrace{0 \; \cdots \; 0}^l \; a_0 \; \cdots \; a_p \; \overbrace{0 \; \cdots \; 0}^{n-p-l})^T \\
\B{b}_l = (\overbrace{0 \; \cdots \; 0}^l \; b_0 \; \cdots \; b_q \; \overbrace{0 \; \cdots \; 0}^{n-p-l-1})^T
\end{flalign*}
It follows that
\begin{flalign}
& f(\eta(\B{x}^{(j)}))\eta^l(\B{x}^{(j)}) = (1 \; \eta(\B{x}^{(j)}) \; \cdots \; \eta^n(\B{x}^{(j)})) \B{a}_l \label{eq_f1} \\
& g(\eta(\B{x}^{(j)}))\eta^l(\B{x}^{(j)}) = (1 \; \eta(\B{x}^{(j)}) \; \cdots \; \eta^{n-1}(\B{x}^{(j)})) \B{b}_l \label{eq_g1}
\end{flalign}
Define $\B{a}'_l=\B{F}^{-1}\B{a}_l$ and $\B{b}'_l=\B{H}^{-1}\B{b}_l$. 
We can derive:
\begin{flalign*}
\B{r}_j\B{a}'_l &= f_j(\eta(\B{x}^{(j)}))(1 \; \eta(\B{x}^{(j)})\; \cdots \; \eta^n(\B{x}^{(j)}))\B{F}\B{a}'_l \\
&= f_j(\eta(\B{x}^{(j)}))(1 \; \eta(\B{x}^{(j)}) \; \cdots \; \eta^n(\B{x}^{(j)}))\B{a}_l \\
&\overset{(a)}{=} f_j(\eta(\B{x}^{(j)}))f(\eta(\B{x}^{(j)}))\eta^l(\B{x}^{(j)}) \\
&\overset{(b)}{=} f_j(\eta(\B{x}^{(j)}))p_1(\B{x}^{(j)})g(\eta(\B{x}^{(j)}))\eta^l(\B{x}^{(j)}) \\
&\overset{(c)}{=} p_i(\B{x}^{(j)})f_j(\eta(\B{x}^{(j)}))(1 \; \eta(\B{x}^{(j)}) \; \cdots \; \eta^{n-1}(\B{x}^{(j)}))\B{b}_l \\
&= p_i(\B{x}^{(j)})f_j(\B{x}^{(j)}))(1\; \eta(\B{x}^{(j)})\; \cdots\; \eta^{n-1}(\B{x}^{(j)}))\B{Hb}'_l \\
&\overset{(d)}{=} p_i(\B{x}^{(j)})\B{r}_j\B{Cb}'_l
\end{flalign*}
where $(a)$ follows from Eq. (\ref{eq_f1}); $(b)$ follows because $p_1(\B{x}^{(j)})=\frac{f(\eta(\B{x}^{(j)}))}{g(\eta(\B{x}^{(j)}))} = \frac{f(\eta(\B{x}^{(j)}))\eta^l(\B{x}^{(j)})}{g(\eta(\B{x}^{(j)}))\eta^l(\B{x}^{(j)})}$; $(c)$ is due to Eq. (\ref{eq_g1}); $(d)$ follows from Eq. (\ref{eq_rj}).
Let $\B{H}_1=(\B{V}_1 \quad \B{P}_1\B{V}_1\B{C})$ denote the matrix in the reformulated rank condition $\SR{B}'_1$.
Since $\B{a}_0,\cdots,\B{a}_{n-p-1}$ are linearly independent, the above equation means that there are at most $n+1-(n-p)=p+1$ columns in $\B{V}_1$ that are linearly independent of the columns in $\B{P}_1\B{V}_1\B{C}$.
Therefore, $d_1$ can decode at most $p+1$ source symbols.
This means that it is impossible for $\omega_1$ to achieve one half rate by using any PBNA. 
\myqed

\section{Proofs of Interpretations of Coupling Relations \label{app_proof_interpret}}

\subsection{$\eta(\B{x})=1$}

First, note that $\eta(\B{x})$ can be rewritten as a ratio of two rational functions $\eta(\B{x})=\frac{f_{213}(\B{x})}{f_{312}(\B{x})}$, where $f_{ijk}(\B{x}) \DEF \frac{m_{ij}(\B{x})m_{jk}(\B{x})}{m_{ik}(\B{x})}$.
Hence, in order to interpret $\eta(\B{x})=1$, we first study the properties of $f_{ijk}(\B{x})$.

The following lemma is to be used to derive the general structure of $f_{ijk}(\B{x})$.
Basically, it provides an easy method to calculate the greatest common divisor of two transfer functions with one common starting edge or ending edge.

\begin{lemma}
\label{lemma_transfer_func_gcd}
The following statements hold:
\begin{enumerate}
\item For $e_1,e_2, e_3\in E'$ such that $e_2,e_3$ are both downstream of $e_1$.
Let $e$ be the last edge of the topological ordering of the edges in $\C{C}_{e_1e_2} \cap \C{C}_{e_1e_3}$.
Then $m_{e_1e}(\B{x})=\mygcd(m_{e_1e_2}(\B{x}), m_{e_1e_3}(\B{x}))$.

\item For $e_1,e_2, e_3\in E'$ such that $e_1,e_2$ are both upstream of $e_3$.
Let $e$ be the first edge of the topological ordering of the edges in $\C{C}_{e_1e_3} \cap \C{C}_{e_2e_3}$.
Then $m_{ee_3}(\B{x})=\mygcd(m_{e_1e_3}(\B{x}), m_{e_2e_3}(\B{x}))$.
\end{enumerate}
\end{lemma}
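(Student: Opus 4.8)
The plan is to prove part (1) directly from the factorization of transfer functions in Lemma~\ref{lemma_factorization_transfer_func}, and to deduce part (2) by applying part (1) to the reverse graph $\C{G}^R$ obtained by reversing every edge: in $\C{G}^R$ a directed path $e\to e'$ becomes a path $e'\to e$ with the same product of coding coefficients, so $m^R_{e'e}(\B{x})=m_{ee'}(\B{x})$; bottleneck sets satisfy $\C{C}^R_{e'e''}=\C{C}_{e''e'}$; ``upstream'' and ``downstream'' are interchanged; and ``first in topological order'' becomes ``last in the reversed topological order.'' Hence part (2), with $e_1,e_2$ upstream of $e_3$, is exactly part (1) carried out in $\C{G}^R$ with $e_3$ playing the role of the common source edge. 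Before attacking part (1) I would record two elementary facts about bottlenecks of a single flow: \emph{(a)} for fixed $e_1,e_2$ the set $\C{C}_{e_1e_2}$ is totally ordered by the upstream/downstream relation (any two of its edges lie on every $e_1\!\to\! e_2$ path, and acyclicity makes their order path-independent), giving the chain $e_1=g_1\prec g_2\prec\cdots\prec g_k=e_2$ of Lemma~\ref{lemma_factorization_transfer_func}, consistent with the global topological order; \emph{(b)} if $c\in\C{C}_{e_1e_2}$ then $\C{C}_{e_1c}=\{\,g_i:g_i\preceq c\,\}$, because every path $e_1\to c$ extends through a path $c\to e_2$ (which exists since $c$ is a bottleneck) to a path $e_1\to e_2$, hence meets every $g_i$, and an upstream $g_i$ must sit on the $e_1\to c$ segment by acyclicity.

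The core step is to show that the common bottleneck set $\C{C}_{e_1e_2}\cap\C{C}_{e_1e_3}$ is a common \emph{prefix} of both chains $(g_i)$ and $(h_j)$, where $e_1=h_1\prec\cdots\prec h_l=e_3$. Downward-closedness in the $g$-chain is the key point: if $c$ is common and $g_i\preceq c$, then since $c$ lies on every $e_1\to e_3$ path, fact (b) applied to the flow $e_1\to e_3$ puts $g_i$ on every $e_1\to e_3$ path, so $g_i$ is itself common. Therefore the common set equals $\{g_1,\dots,g_r\}$ with $g_r$ its maximum, which --- chain order agreeing with topological order --- is precisely the edge $e$ in the statement; by the symmetric argument it also equals $\{h_1,\dots,h_{r'}\}$ with $h_{r'}=e$, and equality of these two totally ordered sets forces $r=r'$ and $g_i=h_i$ for $i\le r$. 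Fact (b) together with $\C{C}_{e_1e}\subseteq\C{C}_{e_1e_2}$ also yields $\C{C}_{e_1e}=\{g_1,\dots,g_r\}$, so $m_{e_1e}(\B{x})=\prod_{i=1}^{r-1}m_{g_ig_{i+1}}(\B{x})$ by Lemma~\ref{lemma_factorization_transfer_func}.

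It then remains to finish in the UFD $\BF_2[\B{x}]$. By Lemma~\ref{lemma_factorization_transfer_func},
\begin{gather*}
m_{e_1e_2}(\B{x})=\Big(\prod_{i=1}^{r-1}m_{g_ig_{i+1}}(\B{x})\Big)\Big(\prod_{i=r}^{k-1}m_{g_ig_{i+1}}(\B{x})\Big),\\
m_{e_1e_3}(\B{x})=\Big(\prod_{i=1}^{r-1}m_{g_ig_{i+1}}(\B{x})\Big)\Big(\prod_{i=r}^{l-1}m_{h_ih_{i+1}}(\B{x})\Big),
\end{gather*}
with all factors irreducible and the two leading products identical and equal to $m_{e_1e}(\B{x})$. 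The trailing products are coprime: a shared irreducible factor would give $m_{g_ig_{i+1}}(\B{x})=m_{h_jh_{j+1}}(\B{x})$ with $i,j\ge r$, hence $g_{i+1}=h_{j+1}$ by Lemma~\ref{lemma_transfer_func_equal}, making $g_{i+1}$ a common bottleneck of index $i+1\le r$, contradicting $i\ge r$. Hence $\mygcd(m_{e_1e_2}(\B{x}),m_{e_1e_3}(\B{x}))=m_{e_1e}(\B{x})$, which is part (1); part (2) follows from the reverse-graph reduction above.

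The step I expect to be the main obstacle is the core claim: showing cleanly that the common bottlenecks are a prefix of each chain and that this prefix terminates exactly at the edge $e$ named in the statement (which needs facts (a) and (b) and a little care with acyclicity and the ``topological order = chain order'' identification). Once that is in place, the passage to gcd's is routine bookkeeping with Lemma~\ref{lemma_factorization_transfer_func}, Lemma~\ref{lemma_transfer_func_equal}, and unique factorization.
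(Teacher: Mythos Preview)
Your proposal is correct and follows essentially the same approach as the paper: factor each transfer function through the common bottleneck $e$ via Lemma~\ref{lemma_factorization_transfer_func}, then argue that any further common irreducible factor would (via Lemma~\ref{lemma_transfer_func_equal}) produce a common bottleneck strictly downstream of $e$, contradicting maximality. The paper's version is simply terser---it splits directly as $m_{e_1e_2}=m_{e_1e}m_{ee_2}$ and $m_{e_1e_3}=m_{e_1e}m_{ee_3}$ and shows $\mygcd(m_{ee_2},m_{ee_3})=1$ by the same contradiction, leaving your facts (a), (b), and the prefix structure implicit; your reverse-graph reduction for part~(2) is equivalent to the paper's ``similarly.''
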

\begin{proof}
First, consider the first statement.
By Lemma \ref{lemma_factorization_transfer_func}, the following equations hold: $m_{e_1e_2}(\B{x}) = m_{e_1e}(\B{x})m_{ee_2}(\B{x})$ and $m_{e_1e_3}(\B{x}) = m_{e_1e}(\B{x})m_{ee_3}(\B{x})$.
Thus $m_{e_1e}(\B{x}) \mid \mygcd(m_{e_1e_2}(\B{x}), m_{e_1e_3}(\B{x}))$.
Assume $\mygcd(m_{ee_2}(\B{x}), m_{ee_3}(\B{x})) \neq 1$. 
By Lemma \ref{lemma_factorization_transfer_func}, there exists bottlenecks $e_4,e_5$ such that $m_{e_4e_5}(\B{x}) \mid \mygcd(m_{ee_2}(\B{x}), m_{ee_3}(\B{x}))$.
Clearly, $e_5 \in \C{C}_{e_1e_2} \cap \C{C}_{e_1e_3}$ and $e_5$ is downstream of $e$, which contradicts that $e$ is the last edge of the topological ordering of $\C{C}_{e_1e_2} \cap \C{C}_{e_1e_3}$.
Hence, we have proved that $\mygcd(m_{ee_2}(\B{x}), m_{ee_3}(\B{x})) = 1$, which in turn implies that $m_{e_1e}(\B{x})=\mygcd(m_{e_1e_2}(\B{x}), m_{e_1e_3}(\B{x}))$.
Similarly, we can prove the other statement.
\end{proof}

Using the above lemma, $f_{ijk}(\B{x})$ can be reformulated as a fraction of two coprime polynomials, as shown below. 

\begin{corollary}
\label{cor_f_reformulate}
$f_{ijk}(\B{x})$ can be formulated as 
\begin{flalign}
\label{eq_f_reformulate}
f_{ijk}(\B{x}) = \frac{m_{\sigma_j,\beta_{ijk}}(\B{x}) m_{\alpha_{ijk},\tau_j}(\B{x})}{m_{\alpha_{ijk},\beta_{ijk}}(\B{x})}
\end{flalign}
where $\mygcd(m_{\sigma_j,\beta_{ijk}}(\B{x}) m_{\alpha_{ijk},\tau_j}(\B{x}), m_{\alpha_{ijk},\beta_{ijk}}(\B{x})) = 1$.
\end{corollary}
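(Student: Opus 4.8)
The plan is to feed the multiplicative factorization of Lemma~\ref{lemma_factorization_transfer_func} into the definition $f_{ijk}(\B{x})=\frac{m_{ij}(\B{x})m_{jk}(\B{x})}{m_{ik}(\B{x})}$ twice, cancelling a common factor each time, and then read off the coprimality from Lemma~\ref{lemma_transfer_func_gcd}.

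First I would record a two-factor version of Lemma~\ref{lemma_factorization_transfer_func}: if $e$ is any bottleneck between $e_1$ and $e_2$, then $m_{e_1e_2}(\B{x})=m_{e_1e}(\B{x})\,m_{ee_2}(\B{x})$ (group the consecutive-bottleneck product before and after $e$). Now $\alpha_{ijk}$ is, by definition, the last edge of $\C{C}_{ij}\cap\C{C}_{ik}$, hence a bottleneck of both $(\sigma_i,\tau_j)$ and $(\sigma_i,\tau_k)$; this gives $m_{ij}(\B{x})=m_{\sigma_i,\alpha_{ijk}}(\B{x})\,m_{\alpha_{ijk},\tau_j}(\B{x})$ and $m_{ik}(\B{x})=m_{\sigma_i,\alpha_{ijk}}(\B{x})\,m_{\alpha_{ijk},\tau_k}(\B{x})$, so cancelling $m_{\sigma_i,\alpha_{ijk}}(\B{x})$ yields $f_{ijk}(\B{x})=\frac{m_{\alpha_{ijk},\tau_j}(\B{x})\,m_{jk}(\B{x})}{m_{\alpha_{ijk},\tau_k}(\B{x})}$. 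Next, $\beta_{ijk}$ is the first edge of $\C{C}_{jk}\cap\C{C}_{\alpha_{ijk},\tau_k}$, hence a bottleneck of both $(\sigma_j,\tau_k)$ and $(\alpha_{ijk},\tau_k)$; the same factorization gives $m_{jk}(\B{x})=m_{\sigma_j,\beta_{ijk}}(\B{x})\,m_{\beta_{ijk},\tau_k}(\B{x})$ and $m_{\alpha_{ijk},\tau_k}(\B{x})=m_{\alpha_{ijk},\beta_{ijk}}(\B{x})\,m_{\beta_{ijk},\tau_k}(\B{x})$, and cancelling $m_{\beta_{ijk},\tau_k}(\B{x})$ produces exactly Eq.~(\ref{eq_f_reformulate}). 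All factorizations apply because $m_{ij},m_{ik},m_{jk}$ are non-zero, so all the relevant senders and receivers are connected and the cut-set intersections are non-empty and totally ordered.

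For the coprimality of numerator and denominator I would use the two parts of Lemma~\ref{lemma_transfer_func_gcd}, and, crucially, the intermediate fact established inside that lemma, namely that the two ``tail'' transfer functions beyond the extremal common bottleneck are themselves coprime. Part~1 with $(e_1,e_2,e_3)=(\sigma_i,\tau_j,\tau_k)$ and $e=\alpha_{ijk}$ gives $\mygcd(m_{\alpha_{ijk},\tau_j}(\B{x}),m_{\alpha_{ijk},\tau_k}(\B{x}))=1$, and since $m_{\alpha_{ijk},\beta_{ijk}}(\B{x})\mid m_{\alpha_{ijk},\tau_k}(\B{x})$ this forces $\mygcd(m_{\alpha_{ijk},\tau_j}(\B{x}),m_{\alpha_{ijk},\beta_{ijk}}(\B{x}))=1$. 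Part~2 with $(e_1,e_2,e_3)=(\sigma_j,\alpha_{ijk},\tau_k)$ and $e=\beta_{ijk}$ gives $\mygcd(m_{\sigma_j,\beta_{ijk}}(\B{x}),m_{\alpha_{ijk},\beta_{ijk}}(\B{x}))=1$. Since $\BF_2[\B{x}]$ is a unique factorization domain, being coprime to two polynomials implies being coprime to their product, so $\mygcd\big(m_{\sigma_j,\beta_{ijk}}(\B{x})\,m_{\alpha_{ijk},\tau_j}(\B{x}),\,m_{\alpha_{ijk},\beta_{ijk}}(\B{x})\big)=1$, which is the claim.

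I do not expect a real obstacle: the argument is essentially bookkeeping on top of Lemmas~\ref{lemma_factorization_transfer_func} and~\ref{lemma_transfer_func_gcd}. The one place requiring care is that the coprimality half relies on a by-product of the \emph{proof} of Lemma~\ref{lemma_transfer_func_gcd} (the line ``$\mygcd(m_{ee_2}(\B{x}),m_{ee_3}(\B{x}))=1$'' for Part~1, and its analogue for Part~2) rather than on the lemma's headline statement, so I would state that by-product as a small auxiliary claim before invoking it; a secondary point is to confirm that $\alpha_{ijk}$ is upstream of $\tau_k$ and $\sigma_j$ is upstream of $\tau_k$, which is immediate from $\alpha_{ijk}\in\C{C}_{ik}$ and $m_{jk}\neq 0$, so that Lemma~\ref{lemma_transfer_func_gcd}(2) applies.
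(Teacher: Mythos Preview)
Your proposal is correct and follows essentially the same route as the paper: two successive factor-and-cancel steps using Lemma~\ref{lemma_factorization_transfer_func} at $\alpha_{ijk}$ and $\beta_{ijk}$, followed by the same two coprimality facts (with the denominator against $m_{\alpha_{ijk},\tau_j}$ via the $\alpha$-side and against $m_{\sigma_j,\beta_{ijk}}$ via the $\beta$-side) combined multiplicatively. Your remark that the coprimality step leans on the intermediate conclusion inside the proof of Lemma~\ref{lemma_transfer_func_gcd} rather than its headline statement is apt; the paper glosses over this, so explicitly recording that auxiliary claim, as you propose, is a worthwhile addition.
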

\begin{proof}
$f_{ijk}(\B{x})$ can be calculated as
\begin{flalign*}
f_{ijk}(\B{x}) &= \frac{m_{\sigma_i,\alpha_{ijk}}(\B{x}) m_{\alpha_{ijk},\tau_j}(\B{x}) m_{jk}(\B{x})}{m_{\sigma_i,\alpha_{ijk}}(\B{x}) m_{\alpha_{ijk},\tau_k}(\B{x})} \\
&= \frac{m_{\alpha_{ijk},\tau_j}(\B{x}) m_{jk}(\B{x})}{m_{\alpha_{ijk},\tau_k}(\B{x})} \\
&= \frac{m_{\alpha_{ijk},\tau_j}(\B{x}) m_{\sigma_j,\beta_{ijk}}(\B{x}) m_{\beta_{ijk},\tau_k}(\B{x})}{m_{\alpha_{ijk},\beta_{ijk}}(\B{x}) m_{\beta_{ijk},\tau_k}(\B{x})} \\
&= \frac{m_{\sigma_j,\beta_{ijk}}(\B{x}) m_{\alpha_{ijk},\tau_j}(\B{x})}{m_{\alpha_{ijk},\beta_{ijk}}(\B{x})}
\end{flalign*}
By Lemma \label{lemma_transfer_func_gcd}, $\mygcd(m_{\alpha_{ijk},\tau_k}(\B{x}), m_{\alpha_{ijk},\tau_j}(\B{x})) = 1$ and thus $\mygcd(m_{\alpha_{ijk},\beta_{ijk}}(\B{x}), m_{\alpha_{ijk},\tau_j}(\B{x}))=1$.
Meanwhile, $\mygcd(m_{\alpha_{ijk},\beta_{ijk}}(\B{x})$, $m_{\sigma_j,\beta_{ijk}}(\B{x}))=1$.
Hence, we must have $\mygcd(m_{\sigma_j,\beta_{ijk}}(\B{x}) m_{\alpha_{ijk},\tau_j}(\B{x}), m_{\alpha_{ijk},\beta_{ijk}}(\B{x})) = 1$.
\end{proof}

\begin{figure}[t]
\centering
\subfloat[$\alpha_{ijk}\neq \beta_{ijk}$ \label{fig_f_func_structure_1}]{\includegraphics[width=3cm]{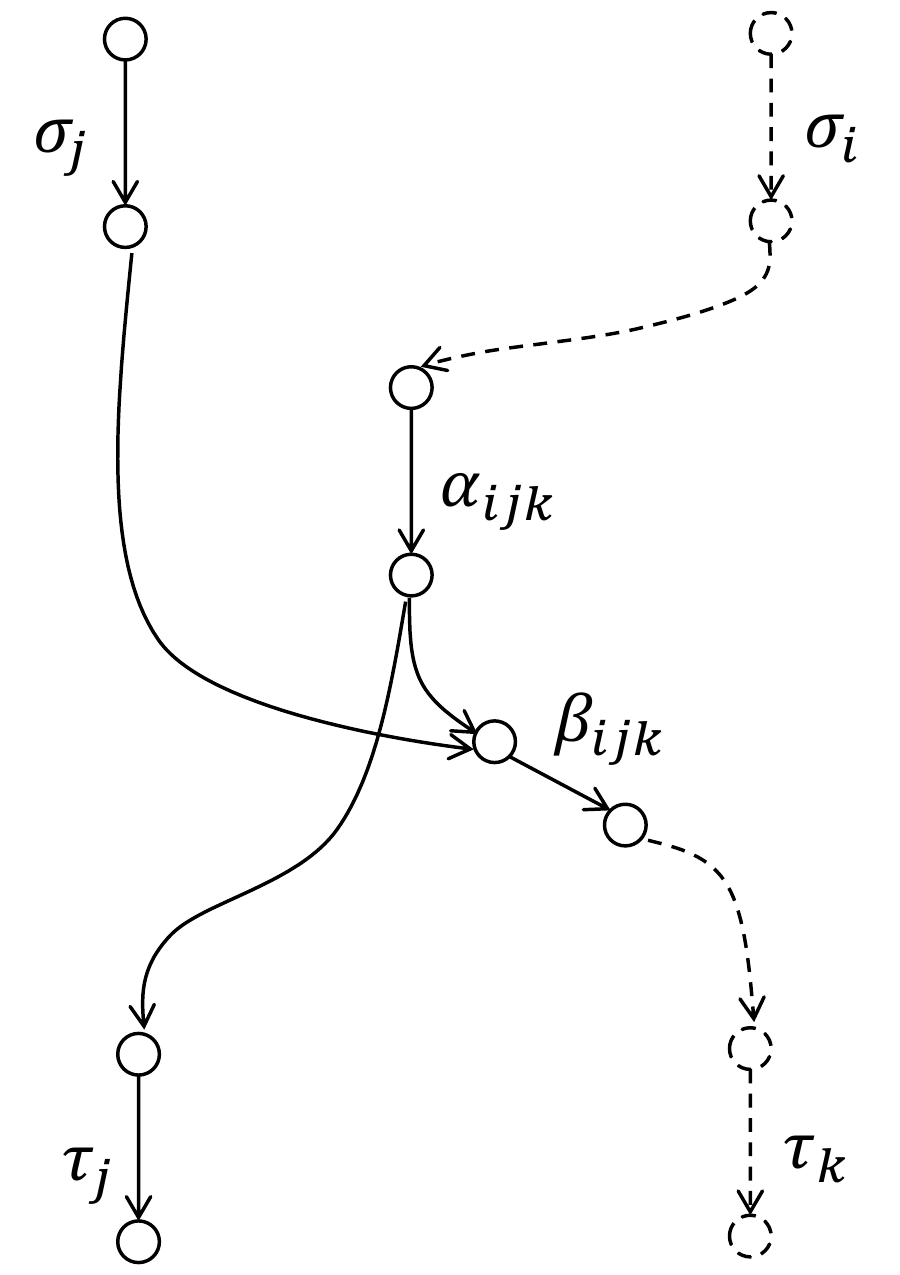}} \hspace*{10pt}
\subfloat[$\alpha_{ijk}= \beta_{ijk}$ \label{fig_f_func_structure_2}]{\includegraphics[width=3cm]{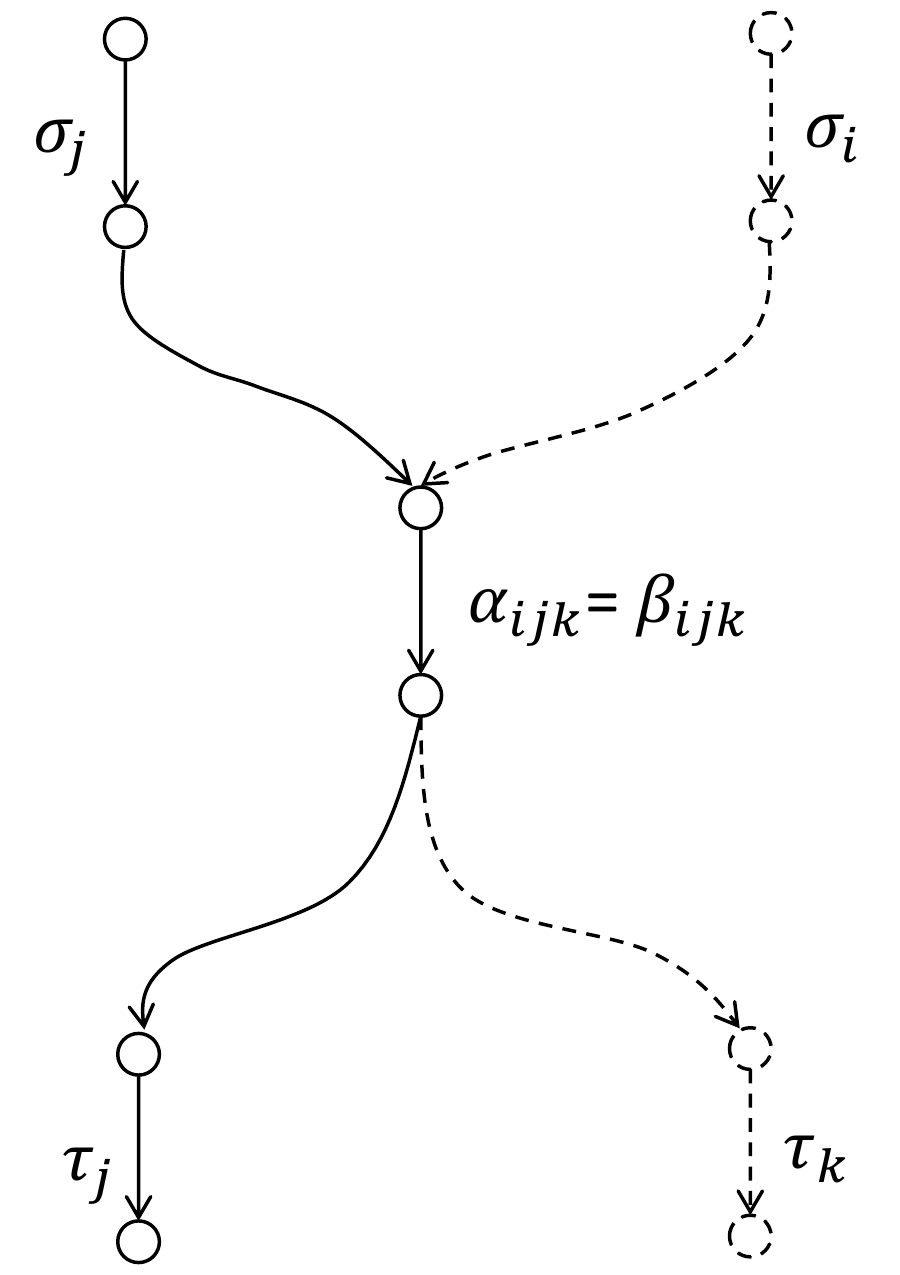}}
\caption{The structure of $f_{ijk}(\B{x})$ can be classified into two types: 1) $\alpha_{ijk}\neq \beta_{ijk}$ such that $f_{ijk}(\B{x})$ is a rational function with non-constant denominator; 2) $\alpha_{ijk}= \beta_{ijk}$ such that $f_{ijk}(\B{x})$ is a polynomial. \label{fig_f_func_structure}}
\end{figure}

According to Corollary \ref{cor_f_reformulate}, the structure of $f_{ijk}(\B{x})$ must fall into one of the two types, as shown in Fig. \ref{fig_f_func_structure}.
In Fig. \ref{fig_f_func_structure_1}, $\alpha_{ijk}\neq \beta_{ijk}$ and  $f_{ijk}(\B{x})$ is a rational function, the denominator of which is a non-constant polynomial $m_{\alpha_{ijk},\beta_{ijk}}(\B{x})$.
On the other hand, when $\alpha_{ijk} \in \C{C}_{jk}$ and thus $\alpha_{ijk}=\beta_{ijk}$, as shown in Fig. \ref{fig_f_func_structure_2}, $f_{ijk}(\B{x})$ becomes a polynomial $m_{\sigma_j,\alpha_{ijk}}(\B{x}) m_{\alpha_{ijk},\tau_j}(\B{x})$.

Moreover, using Corollary \ref{cor_f_reformulate}, we can easily check whether two $f_{ijk}(\B{x})$'s are equivalent, as shown in the next corollary.
It is easy to see that Theorem \ref{th_eta_constant} is just a special case of this corollary.

\begin{corollary}
\label{cor_f_equal}
Assume $i,j,k,i',k'\in \{1,2,3\}$ such that $i\neq j, j\neq k$ and $i'\neq j, j\neq k'$.
$f_{ijk}(\B{x})=f_{i'jk'}(\B{x})$ if and only if $\alpha_{ijk}=\alpha_{i'jk'}$ and $\beta_{ijk}=\beta_{i'jk'}$.
\end{corollary}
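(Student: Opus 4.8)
The plan is to derive everything from the unique-form expression of Corollary~\ref{cor_f_reformulate}, combined with the rigidity statement Lemma~\ref{lemma_transfer_func_equal} and the bottleneck factorization Lemma~\ref{lemma_factorization_transfer_func}. The ``if'' direction is immediate: by Corollary~\ref{cor_f_reformulate}, $f_{ijk}(\B{x})$ equals the rational function $m_{\sigma_j,\beta_{ijk}}(\B{x})\,m_{\alpha_{ijk},\tau_j}(\B{x})/m_{\alpha_{ijk},\beta_{ijk}}(\B{x})$, which depends only on $j$ and on the two edges $\alpha_{ijk},\beta_{ijk}$; hence $\alpha_{ijk}=\alpha_{i'jk'}$ and $\beta_{ijk}=\beta_{i'jk'}$ force $f_{ijk}(\B{x})=f_{i'jk'}(\B{x})$. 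The content is the ``only if'' direction.

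For ``only if'', I would first observe that Corollary~\ref{cor_f_reformulate} writes each $f$ in lowest terms, and that a transfer function $m_{ee'}(\B{x})$ with $e\neq e'$ is a sum of monomials of positive total degree with no cancellation, so a nonzero scalar relating two such polynomials (or products of such) must be $1$. Therefore $f_{ijk}(\B{x})=f_{i'jk'}(\B{x})$, after clearing denominators and using coprimality of numerator and denominator on each side, yields the two polynomial identities
$$ m_{\alpha_{ijk},\beta_{ijk}}(\B{x}) = m_{\alpha_{i'jk'},\beta_{i'jk'}}(\B{x}), \qquad m_{\sigma_j,\beta_{ijk}}(\B{x})\,m_{\alpha_{ijk},\tau_j}(\B{x}) = m_{\sigma_j,\beta_{i'jk'}}(\B{x})\,m_{\alpha_{i'jk'},\tau_j}(\B{x}). $$
If $\alpha_{ijk}\neq\beta_{ijk}$, then $m_{\alpha_{ijk},\beta_{ijk}}$ is a non-constant polynomial; the first identity and Lemma~\ref{lemma_transfer_func_equal} give $\alpha_{ijk}=\alpha_{i'jk'}$ and $\beta_{ijk}=\beta_{i'jk'}$ at once (and, incidentally, $\alpha_{i'jk'}\neq\beta_{i'jk'}$, since a non-constant polynomial cannot equal the constant $1$). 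This subcase is essentially Theorem~\ref{th_eta_constant}, and it is the one relevant to $\eta(\B{x})\neq 1$.

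The remaining subcase is $\alpha_{ijk}=\beta_{ijk}$, where $m_{\alpha_{ijk},\beta_{ijk}}=1$; the first identity then forces $m_{\alpha_{i'jk'},\beta_{i'jk'}}=1$, hence $\alpha_{i'jk'}=\beta_{i'jk'}$, so both $f$'s are polynomials and the second identity reads $m_{\sigma_j,\alpha_{ijk}}\,m_{\alpha_{ijk},\tau_j}=m_{\sigma_j,\alpha_{i'jk'}}\,m_{\alpha_{i'jk'},\tau_j}$. The plan here is to factor each side into irreducibles using Lemma~\ref{lemma_factorization_transfer_func}; because the graph is acyclic, no irreducible factor of $m_{\sigma_j,\alpha_{ijk}}$ can coincide with one of $m_{\alpha_{ijk},\tau_j}$ (a shared bottleneck-segment would produce a directed cycle through $\alpha_{ijk}$), so the irreducible factorization of $f_{ijk}$ is the disjoint concatenation of the bottleneck chain from $\sigma_j$ to $\alpha_{ijk}$ with that from $\alpha_{ijk}$ to $\tau_j$, and likewise for $f_{i'jk'}$. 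Matching the two factorizations in the UFD $\BF_{2^m}[\B{x}]$ and reading off endpoints via Lemma~\ref{lemma_transfer_func_equal} shows that the two edge-chains from $\sigma_j$ to $\tau_j$ coincide as ordered sequences; one then identifies the split point with $\alpha_{ijk}$ (resp.\ $\alpha_{i'jk'}$) using that $\alpha_{ijk}$ is the last edge of $\C{C}_{ij}\cap\C{C}_{ik}$ and $\beta_{ijk}$ the first edge of $\C{C}_{jk}\cap\C{C}_{\alpha_{ijk},\tau_k}$.

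I expect this last step of the degenerate subcase to be the main obstacle: the numerator polynomial alone only recovers the bottleneck chain, not the distinguished split point, so the argument must genuinely exploit the extremality in the definitions (the $\alpha$'s are pushed maximally downstream inside the relevant cut-set intersections while the $\beta$'s are pushed maximally upstream) to force the two split points to be the same edge. By contrast, the non-degenerate subcase is a one-line consequence of Corollary~\ref{cor_f_reformulate} and Lemma~\ref{lemma_transfer_func_equal}, and the reduction to the two polynomial identities is routine.
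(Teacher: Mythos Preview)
Your approach matches the paper's: apply Corollary~\ref{cor_f_reformulate} for the ``if'' direction, and for ``only if'' equate the reduced denominators $m_{\alpha_{ijk},\beta_{ijk}}=m_{\alpha_{i'jk'},\beta_{i'jk'}}$ and invoke Lemma~\ref{lemma_transfer_func_equal} to conclude $\alpha_{ijk}=\alpha_{i'jk'}$ and $\beta_{ijk}=\beta_{i'jk'}$. The paper's proof is precisely this two-line argument and does not single out the degenerate subcase $\alpha=\beta$ (denominator equal to $1$) that you flag; your additional analysis of that subcase is more careful than what the paper itself provides.
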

\begin{proof}
By Corollary \ref{cor_f_reformulate}, if $\alpha_{ijk}=\alpha_{i'jk'}$ and $\beta_{ijk}=\beta_{i'jk'}$, we must have $f_{ijk}(\B{x})=f_{i'jk'}(\B{x})$.
Conversely, if $f_{ijk}(\B{x})=f_{i'jk'}(\B{x})$, $m_{\alpha_{ijk},\beta_{ijk}}(\B{x}) = m_{\alpha_{i'jk'},\beta_{i'jk'}}(\B{x})$. Thus $\alpha_{ijk}=\alpha_{i'jk'}$ and $\beta_{ijk}=\beta_{i'jk'}$ by Lemma \ref{lemma_transfer_func_equal}.
\end{proof}

\subsection{$p_i(\B{x})=1$ and $p_i(\B{x})=\eta(\B{x})$}

Using Lemma \ref{lemma_transfer_inequal}, we can easily prove Theorem \ref{th_interpret_1}, as shown below.

\begin{proof}[Proof of Theorem \ref{th_interpret_1}]
Apparently, by Lemma \ref{lemma_transfer_inequal} and the definition of $p_1(\B{x})$, $p_1(\B{x})=1$ if and only if the minimum cut separating $\sigma_1,\sigma_2$ from $\tau_1$ and $\tau_3$ is one, i.e., $C_{12,13}=1$.
In order to interpret $p_1(\B{x})=\eta(\B{x})$, we consider $q_1(\B{x})=\frac{\eta(\B{x})}{p_1(\B{x})}=\frac{m_{11}(\B{x})m_{32}(\B{x})}{m_{12}(\B{x})m_{31}(\B{x})}$.
Hence $p_1(\B{x})=\eta(\B{x})$ is equivalent to $q_1(\B{x})=1$.
Similarly, using Lemma \ref{lemma_transfer_inequal}, it is easy to see that $p_1(\B{x})=\eta(\B{x})$ if and only if the minimum cut separating $\sigma_1,\sigma_3$ from $\tau_1,\tau_2$ is one, i.e., $C_{13,12}=1$.
\end{proof}

\subsection{$p_1(\B{x})=\frac{\eta(\B{x})}{1+\eta(\B{x})}$ and $p_2(\B{x}), p_3(\B{x})=1+\eta(\B{x})$}

Note that the three coupling relations can be respectively  reformulated in terms of $f_{ijk}(\B{x})$ as follows:
\begin{flalign*}
m_{11}(\B{x}) = f_{312}(\B{x}) + f_{213}(\B{x}) \\
m_{22}(\B{x}) = f_{123}(\B{x}) + f_{321}(\B{x}) \\
m_{33}(\B{x}) = f_{231}(\B{x}) + f_{132}(\B{x}) 
\end{flalign*}
Thus, as shown below, the three coupling relations can also be interpreted by using the properties of $f_{ijk}(\B{x})$.

\begin{proof}[Proof of Theorem \ref{th_interpret_2}]
We only prove statement 1).
The other statements can be proved similarly. 
First, we prove the ``if'' part.
Due to $\alpha_{312} \in \C{C}_{12}$ and $\alpha_{213}\in \C{C}_{13}$, $f_{312}(\B{x}) = m_{\sigma_1,\alpha_{312}}(\B{x})m_{\alpha_{312}, \tau_1}(\B{x})$ and $f_{213}(\B{x}) = m_{\sigma_1,\alpha_{213}}(\B{x})m_{\alpha_{213},\tau_1}(\B{x})$. 
Hence, $f_{312}(\B{x}) + f_{213}(\B{x})$ $= m_{\sigma_1,\alpha_{312}}(\B{x})m_{\alpha_{312}, \tau_1}(\B{x}) +  m_{\sigma_1,\alpha_{213}}(\B{x})m_{\alpha_{213},\tau_1}(\B{x})$.
On the other hand, because $\alpha_{312} || \alpha_{213}$ and $\{\alpha_{312},\alpha_{213}\}$ forms a cut which separates $\sigma_1$ from $\tau_1$, $m_{11}(\B{x}) =  m_{\sigma_1,\alpha_{312}}(\B{x})m_{\alpha_{312}, \tau_1}(\B{x}) +  m_{\sigma_1,\alpha_{213}}(\B{x})m_{\alpha_{213},\tau_1}(\B{x})$ by Lemma \ref{lemma_transfer_func_partition}.
Therefore, $m_{11}(\B{x}) = f_{312}(\B{x}) + f_{213}(\B{x})$.

Next we prove the ``only if'' part. 
Assume $m_{11}(\B{x}) = f_{312}(\B{x}) + f_{213}(\B{x})$.
If $\alpha_{312}\notin \C{C}_{12}$ but $\alpha_{213}\in \C{C}_{13}$, $f_{312}(\B{x})$ is a rational function whose denominator is a non-constant polynomial, while $f_{213}(\B{x})$ is a polynomial.
Hence $f_{312}(\B{x}) + f_{213}(\B{x})$ must be a rational function with non-constant denominator, and thus $m_{11}(\B{x}) \neq f_{312}(\B{x}) + f_{213}(\B{x})$.
Similarly, if $\alpha_{312}\in \C{C}_{12}$ but $\alpha_{213}\notin \C{C}_{13}$, we can also prove that $m_{11}(\B{x}) \neq f_{312}(\B{x}) + f_{213}(\B{x})$.

Now assume $\alpha_{312}\notin \C{C}_{12}$ and $\alpha_{213}\notin \C{C}_{13}$.
It follows that $f_{312}(\B{x})=\frac{m_{\sigma_1,\beta_{312}}(\B{x}) m_{\alpha_{312},\tau_1}(\B{x})}{m_{\alpha_{312},\beta_{312}}(\B{x})}$ and $f_{213}(\B{x})=\frac{m_{\sigma_1,\beta_{213}}(\B{x}) m_{\alpha_{213},\tau_1}(\B{x})}{m_{\alpha_{213},\beta_{213}}(\B{x})}$.
Because $\eta(\B{x})\neq 1$, we have $f_{312}(\B{x}) \neq f_{213}(\B{x})$, which indicates that $\alpha_{312}\neq \alpha_{213}$ or $\beta_{312}\neq \beta_{213}$ by Corollary \ref{cor_f_equal}, and $m_{\alpha_{312},\beta_{312}}(\B{x}) \neq m_{\alpha_{213},\beta_{213}}(\B{x})$.
Therefore, by Lemma \ref{lemma_factorization_transfer_func}, one of the following cases must hold:
1) There exists an irreducible polynomial $m_{ee'}(\B{x})$ such that $m_{ee'}(\B{x}) \mid m_{\alpha_{312},\beta_{312}}(\B{x})$ but $m_{ee'}(\B{x}) \nmid m_{\alpha_{213},\beta_{213}}(\B{x})$;
2) there exists an irreducible polynomial $m_{ee'}(\B{x})$ such that $m_{ee'}(\B{x}) \nmid m_{\alpha_{312},\beta_{312}}(\B{x})$ but $m_{ee'}(\B{x}) \mid m_{\alpha_{213},\beta_{213}}(\B{x})$.

Consider case 1).
Define the following polynomials: $f(\B{x})=\mylcm(m_{\alpha_{312},\beta_{312}}(\B{x}), m_{\alpha_{213},\beta_{213}}(\B{x}))$ \footnote{We use $\mylcm(f(\B{x}),g(\B{x}))$ to denote the least common multiple of two polynomials $f(\B{x})$ and $g(\B{x})$.} and $f_1(\B{x})=f(\B{x}) / m_{\alpha_{312},\beta_{312}}(\B{x})$ and $f_2(\B{x})=f(\B{x})/ m_{\alpha_{213},\beta_{213}}(\B{x})$.
Hence, we have $m_{ee'}(\B{x}) \nmid f_1(\B{x})$, $m_{ee'}(\B{x}) \mid f_2(\B{x})$, and $f_{312}(\B{x}) + f_{213}(\B{x}) = [ m_{\sigma_1,\beta_{312}}(\B{x}) m_{\alpha_{312},\tau_1}(\B{x})f_1(\B{x}) + m_{\sigma_1,\beta_{213}}(\B{x}) m_{\alpha_{213},\tau_1}(\B{x})f_2(\B{x})] / f(\B{x})$.
Moreover, due to $\mygcd(m_{\alpha_{312},\beta_{312}}(\B{x})$, $m_{\sigma_1,\beta_{312}}(\B{x}) m_{\alpha_{312},\tau_1}(\B{x}))=1$,  it follows that $m_{ee'}(\B{x}) \nmid $ $m_{\sigma_1,\beta_{312}}(\B{x}) m_{\alpha_{312},\tau_1}(\B{x})$.
This implies that $m_{ee'}(\B{x}) \nmid m_{\sigma_1,\beta_{312}}(\B{x}) m_{\alpha_{312},\tau_1}(\B{x})f_1(\B{x}) + m_{\sigma_1,\beta_{213}}(\B{x}) m_{\alpha_{213},\tau_1}(\B{x})f_2(\B{x})$.
However, $m_{ee'}(\B{x}) \mid f(\B{x})$.
This indicates that $f_{312}(\B{x}) + f_{213}(\B{x})$ is a rational function with non-constant denominator.
Thus $m_{11}(\B{x}) \neq f_{312}(\B{x}) + f_{213}(\B{x})$.
Similarly, for case 2), we can also prove that $m_{11}(\B{x}) \neq f_{312}(\B{x}) + f_{213}(\B{x})$.

Thus, we have proved that $\alpha_{312}\in \C{C}_{12}$ and $\alpha_{213}\in \C{C}_{13}$.
It immediately follows that $m_{11}(\B{x}) = m_{\sigma_1,\alpha_{312}}(\B{x})m_{\alpha_{312}, \tau_1}(\B{x}) +  m_{\sigma_1,\alpha_{213}}(\B{x})m_{\alpha_{213},\tau_1}(\B{x})$.
Hence each path $P$ in $\C{P}_{\sigma_1\tau_1}$ either pass through $\alpha_{312}$ or $\alpha_{213}$,
implying that $\{\alpha_{312}, \alpha_{213}\}$ forms a cut separating $\sigma_1$ from $\tau_1$.
Moreover, according to Lemma \ref{lemma_transfer_func_partition}, $\alpha_{312} || \alpha_{213}$.
\end{proof}

\section{Proofs of Lemmas on Multivariate Polynomials}\label{app_proof_polynomial}

In this section, we present the proof of Lemma \ref{lemma_relative_prime_multivar}.
We first prove that Lemma \ref{lemma_relative_prime_multivar} holds for the case where $s(\B{x})$ and $t(\B{x})$ are both univariate polynomials.
In order to extend this result to multivariate polynomials, we employ a simple idea that each multivariate polynomial can be viewed as an equivalent univariate polynomial on a field of rational functions.
Specifically, we prove that the problem of checking if two multivariate polynomials are co-prime is equivalent to checking if their equivalent univariate polynomials are co-prime.
Finally, based on this result, we prove that Lemma \ref{lemma_relative_prime_multivar} also holds for the multivariate case.

\subsection{The Univariate Case}

In the following lemma, we show that Lemma \ref{lemma_relative_prime_multivar} holds for the univariate case.

\begin{lemma}
\label{lemma_relative_prime_univar}
Let $\BF$ be a field, and $z, y$ are two variables. 
Consider four non-zero polynomials $f(z),g(z)\in \BF[z]$  and $s(y), t(y)\in \BF[y]$, such that $\mygcd(f(z),g(z))=1$ and $\mygcd(s(y), t(y))=1$. 
Denote $d=\max\{d_f,d_g\}$. 
Define two polynomials $\alpha(y)=f(\frac{s(y)}{t(y)})t^d(y)$ and $\beta(y)=g(\frac{s(y)}{t(y)})t^d(y)$. 
Then $\mygcd(\alpha(y),\beta(y))=1$.
\end{lemma}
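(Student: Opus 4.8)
The target is to show that $\alpha(y)$ and $\beta(y)$ have no non-constant common factor in $\BF[y]$. First I would record the harmless reductions. Write $f(z)=\sum_{i=0}^{d_f}a_iz^i$ and $g(z)=\sum_{j=0}^{d_g}b_jz^j$ with $a_{d_f}b_{d_g}\neq 0$; since $d\ge d_f,d_g$, the expressions $\alpha(y)=\sum_{i=0}^{d_f}a_is(y)^it(y)^{d-i}$ and $\beta(y)=\sum_{j=0}^{d_g}b_js(y)^jt(y)^{d-j}$ are genuine polynomials. If $d=0$ then $f,g$ are nonzero constants, $\alpha=f$, $\beta=g$, and $\mygcd(\alpha,\beta)=1$ trivially. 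Also, one of $\alpha,\beta$ can vanish identically only if $s/t$ is a constant $c\in\BF$ with $f(c)=0$ or $g(c)=0$; since $\mygcd(f,g)=1$ forbids $f(c)=g(c)=0$, at most one of them vanishes, and then the other is a nonzero constant, so again $\mygcd(\alpha,\beta)=1$. Hence I may assume $d\ge 1$ (and, if convenient, $\alpha,\beta\neq 0$, though the argument below does not actually need this). Now suppose, for contradiction, that some irreducible $p(y)\in\BF[y]$ divides both $\alpha$ and $\beta$, and split on whether $p\mid t$.

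\emph{Case $p\nmid t$.} Work in the field $K:=\BF[y]/(p(y))$ and let $\bar h$ denote the image of $h\in\BF[y]$; since $p\nmid t$ we have $\bar t\in K^{\times}$. Reducing the identities for $\alpha,\beta$ modulo $p$ and dividing by $\bar t^{\,d}$ gives $f(\bar s\,\bar t^{-1})=\bar\alpha\,\bar t^{-d}=0$ and $g(\bar s\,\bar t^{-1})=\bar\beta\,\bar t^{-d}=0$ in $K$, i.e.\ $\bar s\,\bar t^{-1}$ is a common root of $f$ and $g$ in $K$. But $\mygcd(f,g)=1$ in $\BF[z]$ yields a Bézout identity $u(z)f(z)+v(z)g(z)=1$ with $u,v\in\BF[z]$, which persists over $K$; evaluating at $\bar s\,\bar t^{-1}$ forces $0=1$ in $K$, a contradiction.

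\emph{Case $p\mid t$.} Then $\deg t\ge 1$, and since $\mygcd(s,t)=1$ we get $p\nmid s$. Choose $h\in\{f,g\}$ with $d_h=d$ (possible because $d=\max\{d_f,d_g\}$) and let $\gamma\in\{\alpha,\beta\}$ be the corresponding polynomial; write $h=\sum_{i=0}^{d}c_iz^i$ with $c_d\neq 0$, so $\gamma=\sum_{i=0}^{d}c_is(y)^it(y)^{d-i}$. Modulo $p$ every term with $i<d$ dies because $p\mid t$ and $d-i\ge 1$, so $\gamma\equiv c_d\,s(y)^d\pmod{p}$. As $p$ is irreducible, $p\nmid s$, and $c_d\in\BF^{\times}$, this shows $p\nmid\gamma$, contradicting $p\mid\alpha$ and $p\mid\beta$. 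Both cases being impossible, no such $p$ exists and $\mygcd(\alpha,\beta)=1$.

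\textbf{Main obstacle.} There is no deep difficulty here; the care lies in (i) verifying that reduction modulo $p$ commutes with the substitution $z\mapsto s/t$ and that coprimality of $f$ and $g$ survives passage to the extension field $K$ (both routine via the Bézout identity), and (ii) the bookkeeping in the case $p\mid t$, where one must work with whichever of $f,g$ attains the top degree $d$ so that the surviving term $c_ds^d$ is genuinely nonzero mod $p$ — taking the ``wrong'' one can leave a polynomial actually divisible by $t$. This univariate lemma is then the base case for the multivariate Lemma \ref{lemma_relative_prime_multivar}, obtained by regarding a multivariate polynomial as a univariate one over the field of rational functions in the remaining variables; that reduction I would carry out separately.
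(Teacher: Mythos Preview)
Your proof is correct and follows essentially the same approach as the paper: both argue by contradiction, assume a non-trivial common factor, and split on whether it divides $t$ --- you reduce modulo an irreducible $p(y)$ and work in the residue field $\BF[y]/(p)$, while the paper picks a root $x_0$ of the common factor in an extension field $\bar\BF$, which is the same move in slightly different language. Your version is a bit more explicit about why coprimality of $f,g$ persists over the extension (via B\'ezout) and about the degenerate edge cases, but the core idea and case analysis match the paper's proof exactly.
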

\begin{proof}
Assume $w(x)=\mygcd(\alpha(x),\beta(x))$ is non-trivial. 
Thus we can find an extension field $\bar{\BF}$ of $\BF$ such that there exists $x_0 \in \bar{\BF}$ which satisfies $w(x_0)=0$ and hence $\alpha(x_0)=\beta(x_0)=0$. 
In the rest of this proof, we restrict our discussion in $\bar{\BF}$. 
Note that $\mygcd(f(z),g(z))=1$ and $\mygcd(s(x),t(x))=1$ also hold for $\bar{\BF}$.  
Assume $t(x_0)= 0$ and thus $x-x_0 \mid t(x)$. 
Since $\mygcd(s(x),t(x))=1$, it follows that $x-x_0 \nmid s(x)$ and thus $s(x_0)\neq 0$. 
Hence, either $\alpha(x_0)\neq 0$ or $\beta(x_0)\neq 0$, contradicting that $\alpha(x_0), \beta(x_0)$ are both zeros. 
Hence, we have proved that $t(x_0)\neq 0$. 
Then we have $f\big(\frac{s(x_0)}{t(x_0)}\big) = \frac{\alpha(x_0)}{t^d(x_0)} = 0$ and $g\big(\frac{s(x_0)}{t(x_0)}\big) = \frac{\beta(x_0)}{t^d(x_0)} = 0$, which implies that $z-\frac{s(x_0)}{t(x_0)}$ is a common divisor of $f(z)$ and $g(z)$, contradicting $\mygcd(f(z),g(z))=1$.
Thus, we have proved that $\mygcd(\alpha(y),\beta(y))=1$.
\end{proof}

\subsection{Viewing Multivariate as Univariate}

In order to extend Lemma \ref{lemma_relative_prime_univar} to the multivariate case, we first show that each multivariate polynomial can be viewed as an equivalent univariate polynomial on a field of rational functions.
Let $\B{y}=(y_1,y_2,\cdots,y_k)$ be a vector of variables. 
For any $i\in \{1,2,\cdots,k\}$, define $\B{y}_i=(y_1,\cdots,y_{i-1},y_{i+1},\cdots,y_k)$, i.e., the vector consisting of all variables in $\B{y}$ other than $y_i$. 
Note that any polynomial $f(\B{y}) \in \BF[\B{y}]$ can be formulated as $f(\B{y}) = f_0(\B{y}_i) + f_1(\B{y}_i)y_i + \cdots + f_p(\B{y}_i)y^p_i$, where each $f_j(\B{y}_i)$ is a polynomial in $\BF[\B{y}_i]$. 
Because $\BF[\B{y}_i]$ is a subset of $\BF(\B{y}_i)$, $f(\B{y})$ can also be viewed as a univariate polynomial in $\BF(\B{y}_i)[y_i]$. 
We use $f(y_i)$ to denote $f(\B{y})$'s equivalent counterpart in $\BF(\B{y}_i)[y_i]$.
To differentiate these two concepts, we reserve the notations, such as ``$\mid$'', ``$\mygcd$'' and ``$\mylcm$'' for field $\BF$, and append ``1'' as a subscript to these notations to suggest they are specific to field $\BF(\B{y}_i)$. 
For example, for $f(\B{y}),g(\B{y})\in \BF[\B{y}]$ and $u(y_i),v(y_i)\in \BF(\B{y}_i)[y_i]$, $g(\B{y}) \mid f(\B{y})$ means that there exists $h(\B{y})\in \BF[\B{y}]$ such that $f(\B{y})=h(\B{y})g(\B{y})$, and $u(y_i)\mid_1 v(y_i)$ means that there exists $w(y_i)\in \BF[\B{y}_i](y_i)$ such that $v(y_i)=w(y_i)u(y_i)$. 

\begin{lemma}
\label{lemma_partial_div}
Assume $g(\B{y}_i)\in \BF[\B{y}_i]$ and $f(\B{y}) \in \BF[\B{y}]$ is of the form $f(\B{y})=\sum^p_{j=0}f_j(\B{y}_i)y^j_i$, where $f_j(\B{y}_i)\in \BF[\B{y}_i]$. 
Then $g(\B{y}_i) \mid f(\B{y})$ if and only if $g(\B{y}_i) \mid f_j(\B{y}_i)$ for each $j\in \{0,1,\cdots,p\}$.
\end{lemma}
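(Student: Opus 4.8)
The plan is to exploit the canonical identification $\BF[\B{y}] = \BF[\B{y}_i][y_i]$, i.e. to regard every polynomial in $\B{y}$ as a univariate polynomial in $y_i$ whose coefficients lie in the ring $R \DEF \BF[\B{y}_i]$. Once this identification is in place the statement reduces to the elementary fact that a constant (degree-zero in $y_i$) element $g \in R$ divides a polynomial $\sum_j f_j y_i^j \in R[y_i]$ if and only if $g$ divides every coefficient $f_j$ in $R$.

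First I would dispatch the ``if'' direction, which is immediate: if $g(\B{y}_i) \mid f_j(\B{y}_i)$ in $\BF[\B{y}_i]$ for every $j$, write $f_j(\B{y}_i) = g(\B{y}_i) h_j(\B{y}_i)$ with $h_j \in \BF[\B{y}_i]$, put $h(\B{y}) \DEF \sum_{j=0}^{p} h_j(\B{y}_i) y_i^j \in \BF[\B{y}_i][y_i] = \BF[\B{y}]$, and observe that $f(\B{y}) = \sum_j \big(g(\B{y}_i) h_j(\B{y}_i)\big) y_i^j = g(\B{y}_i) h(\B{y})$, so $g(\B{y}_i) \mid f(\B{y})$.

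For the ``only if'' direction, suppose $f(\B{y}) = g(\B{y}_i) h(\B{y})$ for some $h(\B{y}) \in \BF[\B{y}]$. Since $g$ does not involve $y_i$, I would expand $h(\B{y}) = \sum_{j=0}^{q} h_j(\B{y}_i) y_i^j$ with $h_j \in \BF[\B{y}_i]$ and compute $g(\B{y}_i) h(\B{y}) = \sum_j \big(g(\B{y}_i) h_j(\B{y}_i)\big) y_i^j$. The key step is to equate this with $f(\B{y}) = \sum_{j=0}^{p} f_j(\B{y}_i) y_i^j$ and invoke uniqueness of coefficients in the polynomial ring $\BF[\B{y}_i][y_i]$: this forces $f_j(\B{y}_i) = g(\B{y}_i) h_j(\B{y}_i)$ for $0 \le j \le p$ (and the higher $g h_j$ to vanish, i.e. $h_j = 0$ for $j > p$ since $\BF[\B{y}_i]$ is a domain), hence $g(\B{y}_i) \mid f_j(\B{y}_i)$ for each $j$.

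The only point requiring genuine care — and the closest thing to an obstacle in an otherwise routine argument — is the legitimacy of treating $\BF[\B{y}]$ as the univariate ring $\BF[\B{y}_i][y_i]$, and relatedly making sure that in the ``only if'' direction the witness $h$ really has $y_i$-coefficients in $\BF[\B{y}_i]$ rather than merely in the fraction field $\BF(\B{y}_i)$. This is settled by the standard ring isomorphism $\BF[y_1,\dots,y_k] \cong \big(\BF[\B{y}_i]\big)[y_i]$ together with the fact that $\BF[\B{y}]$ is an integral domain, so multiplication by the fixed nonzero element $g(\B{y}_i)$ is injective and cancellation of coefficients is valid. I expect the writeup to consist almost entirely of setting up this identification cleanly, with the two divisibility verifications above each occupying a single line.
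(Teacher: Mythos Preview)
Your proposal is correct and follows essentially the same approach as the paper: both directions are handled by viewing $\BF[\B{y}]$ as $\BF[\B{y}_i][y_i]$, expanding the quotient $h(\B{y})$ in powers of $y_i$, and matching coefficients. If anything, you spell out the identification and the coefficient-comparison step more carefully than the paper does.
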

\begin{proof}
Apparently, if $g(\B{y}_i) \mid f_j(\B{y}_j)$ for any $j\in \{0,1,\cdots,p\}$, $g(\B{y}_i) \mid f(\B{y})$. 
Now assume $g(\B{y}_i) \mid f(\B{y})$. 
Thus there exists $h(\B{y})\in \BF[\B{y}]$ such that $f(\B{y})=g(\B{y}_i)h(\B{y})$. 
Let $h(\B{y})=\sum^p_{j=0}h_j(\B{y}_i)y^j_i$. 
Hence, it follows that $f_j(\B{y}_i)=h_j(\B{y}_i)g(\B{y}_i)$ and thus $g(\B{y}_i) \mid f_j(\B{y}_i)$.
\end{proof}

The following result follows immediately from Lemma \ref{lemma_partial_div}.
\begin{corollary}
\label{cor_partial_div}
Let $g(\B{y}_i)$ and $f(\B{y})$ be defined as Lemma \ref{lemma_partial_div}. 
Then $\gcd(g(\B{y}_i), f(\B{y})) = \gcd(g(\B{y}_i), f_0(\B{y}_i), \cdots, f_p(\B{y}_i))$.
\end{corollary}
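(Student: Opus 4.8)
The plan is to prove Corollary~\ref{cor_partial_div} by showing that the two greatest common divisors divide one another in the unique factorization domain $\BF[\B{y}]=\BF[\B{y}_i][y_i]$, so that they agree up to a nonzero scalar — which is the intended meaning of the stated equality, since a $\gcd$ in $\BF[\B{y}]$ is only defined up to a unit of that ring, namely a nonzero constant. Throughout I take $g(\B{y}_i)\neq 0$, which is the case in every application of the corollary; this nonvanishing is implicitly needed, since for $g=0$ the identity can fail (e.g.\ with $f=1+y_i$, $f_0=f_1=1$, one has $\gcd(0,1+y_i)=1+y_i$ but $\gcd(0,1,1)=1$). Write $d_1=\gcd(g(\B{y}_i),f(\B{y}))$ and $d_2=\gcd(g(\B{y}_i),f_0(\B{y}_i),\dots,f_p(\B{y}_i))$, both of which exist.

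For $d_2\mid d_1$: since $d_2$ divides $g(\B{y}_i)$ and each coefficient $f_j(\B{y}_i)$, it divides $f(\B{y})=\sum_{j=0}^{p} f_j(\B{y}_i)y_i^j$, so $d_2$ is a common divisor of $g(\B{y}_i)$ and $f(\B{y})$ and hence $d_2\mid d_1$. For the reverse inclusion I first observe that $d_1$ lies in $\BF[\B{y}_i]$: from $d_1\mid g(\B{y}_i)$, $\deg_{y_i} g=0$, and additivity of $\deg_{y_i}$ over products of nonzero polynomials, we get $\deg_{y_i} d_1=0$. Now apply Lemma~\ref{lemma_partial_div} with the role of its ``$g(\B{y}_i)$'' played by $d_1$ (legitimate since $d_1\in\BF[\B{y}_i]$) and its ``$f(\B{y})$'' our $f$: from $d_1\mid f(\B{y})$ the lemma yields $d_1\mid f_j(\B{y}_i)$ for every $j\in\{0,\dots,p\}$. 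Together with $d_1\mid g(\B{y}_i)$, this makes $d_1$ a common divisor of $g(\B{y}_i),f_0(\B{y}_i),\dots,f_p(\B{y}_i)$, so $d_1\mid d_2$. Mutual divisibility then gives $d_1=d_2$ up to a nonzero constant, which is the asserted equality.

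There is no genuine obstacle here: as the surrounding text indicates, the corollary is immediate from Lemma~\ref{lemma_partial_div}. The only points that need a word of care are (i) reading the equality of $\gcd$'s correctly, namely up to a unit of $\BF[\B{y}]$; (ii) the elementary step, used in the reverse divisibility, that any divisor of the $y_i$-free polynomial $g(\B{y}_i)$ is itself $y_i$-free, which is exactly what permits Lemma~\ref{lemma_partial_div} to be invoked for $d_1$; and (iii) recording the implicit assumption $g\neq 0$. An equivalent phrasing is available in Gauss-lemma language — $\gcd(f_0,\dots,f_p)$ is the content of $f$ regarded as an element of $\BF[\B{y}_i][y_i]$, and any common divisor of $g$ (a constant in $y_i$) and $f$ must divide that content — but the mutual-divisibility route above is the most direct and needs nothing beyond Lemma~\ref{lemma_partial_div}.
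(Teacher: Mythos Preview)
Your proof is correct and follows essentially the same approach as the paper: both arguments establish mutual divisibility of the two gcd's, using the observation that any divisor of $g(\B{y}_i)$ lies in $\BF[\B{y}_i]$ and then invoking Lemma~\ref{lemma_partial_div} to pass from $d_1\mid f(\B{y})$ to $d_1\mid f_j(\B{y}_i)$ for each $j$. Your version is slightly more explicit about why $d_1$ is $y_i$-free and about the implicit hypothesis $g\neq 0$, but the substance is identical.
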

\begin{proof}
Note that any divisor of $g(\B{y}_i)$ must be a polynomial in $\BF[\B{y}_i]$. 
Let $d(\B{y}_i)=\gcd(g(\B{y}_i),f(\B{y}))$ and $d'(\B{y}_i)=\gcd(g(\B{y}_i), f_0(\B{y}_i), \cdots, f_p(\B{y}_i))$. 
By Lemma \ref{lemma_partial_div}, $d(\B{y}_i)\mid f_j(\B{y}_i)$ for any $j\in \{0,1,\cdots,p\}$, implying that $d(\B{y}_i) \mid d'(\B{y}_i)$. 
On the other hand, $d'(\B{y}_i) \mid f(\B{y})$, and thus $d'(\B{y}_i) \mid d(\B{y}_i)$. 
Hence, $d(\B{y}_i)=d'(\B{y}_i)$.
\end{proof}

\begin{corollary}
\label{cor_partial_div_2}
For $t\in \{1,2,\cdots,s\}$, let $f_t(\B{y})\in \BF[\B{y}]$ be defined as $f_t(\B{y})=\sum^{p_t}_{j=0}f_{tj}(\B{y}_i)y^j_i$, where $f_{tj}(\B{y}_i) \in \BF[\B{y}_i]$. 
Let $g(\B{y}_i)\in \BF[\B{y}_i]$. 
It follows
\begin{flalign*}
& \mygcd(g(\B{y}_i), f_1(\B{y}), \cdots, f_t(\B{y})) \\
=& \mygcd(g(\B{y}_i), f_{10}(\B{y}_i), \cdots, f_{1p_1}(\B{y}_i), \cdots,\\ 
& \hspace*{40pt} f_{s0}(\B{y}_i), \cdots, f_{sp_s}(\B{y}_i))
\end{flalign*}
\end{corollary}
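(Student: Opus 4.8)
The plan is to reduce the general case to Corollary~\ref{cor_partial_div} by induction on $s$, using only the associativity of $\mygcd$ over the unique factorization domain $\BF[\B{y}]$, i.e. $\mygcd(a,b,c)=\mygcd(\mygcd(a,b),c)$, and the fact that $\mygcd(g(\B{y}_i),\dots)$ is always divisible-into $g(\B{y}_i)$ and hence lies in $\BF[\B{y}_i]$.

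\emph{Base case $s=1$.} This is exactly the statement of Corollary~\ref{cor_partial_div}: $\mygcd(g(\B{y}_i),f_1(\B{y}))=\mygcd(g(\B{y}_i),f_{10}(\B{y}_i),\dots,f_{1p_1}(\B{y}_i))$.

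\emph{Inductive step.} Suppose the claim holds for $s-1$ polynomials. First I would write, by associativity of $\mygcd$,
\[
d \DEF \mygcd\bigl(g(\B{y}_i),f_1(\B{y}),\dots,f_s(\B{y})\bigr)
   =\mygcd\Bigl(\mygcd\bigl(g(\B{y}_i),f_1(\B{y}),\dots,f_{s-1}(\B{y})\bigr),\,f_s(\B{y})\Bigr).
\]
Applying the induction hypothesis to the inner $\mygcd$, it equals some $g'(\B{y}_i)\DEF\mygcd\bigl(g(\B{y}_i),f_{10}(\B{y}_i),\dots,f_{(s-1)p_{s-1}}(\B{y}_i)\bigr)$, which is a polynomial in $\BF[\B{y}_i]$ (it divides $g(\B{y}_i)$, so it involves no $y_i$). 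Now I can invoke Corollary~\ref{cor_partial_div} with $g'(\B{y}_i)$ playing the role of the $\B{y}_i$-polynomial and $f_s(\B{y})=\sum_{j=0}^{p_s}f_{sj}(\B{y}_i)y_i^j$ playing the role of $f(\B{y})$, obtaining
\[
d=\mygcd\bigl(g'(\B{y}_i),f_{s0}(\B{y}_i),\dots,f_{sp_s}(\B{y}_i)\bigr).
\]
Expanding $g'(\B{y}_i)$ and using associativity once more collapses this into the single flat $\mygcd$ of $g(\B{y}_i)$ together with all coefficients $f_{tj}(\B{y}_i)$, $1\le t\le s$, $0\le j\le p_t$, which is the asserted right-hand side. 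This closes the induction.

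\emph{Expected main obstacle.} There is no deep difficulty here; the only point requiring care is the bookkeeping that the intermediate $\mygcd$ remains a member of $\BF[\B{y}_i]$ so that Corollary~\ref{cor_partial_div} is legitimately applicable at each step, and that $\mygcd$ is genuinely associative (well-defined up to units) in the multivariate polynomial ring $\BF[\B{y}]$, which is a UFD. Both are standard, so the bulk of the work is simply organizing the induction and the flattening of nested $\mygcd$'s.
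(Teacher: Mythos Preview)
Your proof is correct and rests on the same two ingredients as the paper's argument: associativity of $\mygcd$ in the UFD $\BF[\B{y}]$ and Corollary~\ref{cor_partial_div}. The only organizational difference is that you proceed by induction on $s$, peeling off one $f_t$ at a time and noting the intermediate gcd lies in $\BF[\B{y}_i]$, whereas the paper handles all $s$ polynomials in parallel by first duplicating $g(\B{y}_i)$ (using $\mygcd(g,f_1,\dots,f_s)=\mygcd(g,f_1,g,f_2,\dots,g,f_s)$), regrouping into $\mygcd(\mygcd(g,f_1),\dots,\mygcd(g,f_s))$, and applying Corollary~\ref{cor_partial_div} to each inner pair simultaneously. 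Both routes are equally short and elementary; your inductive version has the small advantage of making explicit why Corollary~\ref{cor_partial_div} remains applicable at each step (the intermediate gcd divides $g(\B{y}_i)$ and hence has $y_i$-degree zero), which the paper leaves implicit.
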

\begin{proof}
We have the following equations
\begin{flalign*}
& \mygcd(g(\B{y}_i), f_1(\B{y}), \cdots, f_t(\B{y})) \\
=& \mygcd(g(\B{y}_i), f_1(\B{y}), \cdots, g(\B{y}_i), f_t(\B{y})) \\
=& \mygcd(\mygcd(g(\B{y}_i), f_1(\B{y})), \cdots, \mygcd(g(\B{y}_i), f_s(\B{y}))) \\
=& \mygcd(g(\B{y}_i), f_{10}(\B{y}_i), \cdots, f_{1p_1}(\B{y}_i), \cdots, \\
& \hspace*{40pt} g(\B{y}_i), f_{s0}(\B{y}_i), \cdots, f_{sp_s}(\B{y}_i)) \\
=& \mygcd(g(\B{y}_i), f_{10}(\B{y}_i), \cdots, f_{1p_1}(\B{y}_i), \cdots,\\ 
& \hspace*{40pt} f_{s0}(\B{y}_i), \cdots, f_{sp_s}(\B{y}_i))
\end{flalign*}
\end{proof}

\begin{lemma}
\label{lemma_relative_prime_1}
For $t\in \{1,2,\cdots,s\}$, let $a_t(\B{y}),b_t(\B{y})\in \BF[\B{y}]$ such that $b_t(\B{y})\neq 0$ and $\mygcd(a_t(\B{y}), b_t(\B{y}))=1$.
For $t\in \{1,2,\cdots,s\}$, let $v_t(\B{y})=\mylcm(b_1(\B{y}),\cdots,b_t(\B{y}))$.
Then we have
\begin{flalign*}
\mygcd\left(a_1(\B{y})\frac{v_s(\B{y})}{b_1(\B{y})}, \cdots, a_s(\B{y})\frac{v_s(\B{y})}{b_s(\B{y})}, v_s(\B{y})\right) = 1
\end{flalign*}
\end{lemma}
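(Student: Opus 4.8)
The plan is to argue by contradiction at the level of irreducible factors, using that $\BF[\B{y}]$ is a unique factorization domain. First I would note that each ratio $v_s(\B{y})/b_t(\B{y})$ is genuinely a polynomial in $\BF[\B{y}]$, since $b_t(\B{y})$ divides $\mylcm(b_1(\B{y}),\ldots,b_s(\B{y})) = v_s(\B{y})$; hence all the arguments of the outer $\mygcd$ lie in $\BF[\B{y}]$ and the claim is well posed. Because the units of $\BF[\B{y}]$ are exactly the nonzero constants, showing the $\mygcd$ equals $1$ is the same as showing that no irreducible polynomial $q(\B{y}) \in \BF[\B{y}]$ simultaneously divides $v_s(\B{y})$ and each of $a_1(\B{y})\frac{v_s(\B{y})}{b_1(\B{y})}, \ldots, a_s(\B{y})\frac{v_s(\B{y})}{b_s(\B{y})}$.

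So suppose such a $q$ exists, and let $\nu_q(\cdot)$ denote the exponent of $q$ in the factorization of a nonzero polynomial. From $q \mid v_s(\B{y})$ we have $\nu_q(v_s(\B{y})) \ge 1$, and since the least common multiple takes the maximum $q$-exponent, $\nu_q(v_s(\B{y})) = \max_{1 \le t \le s} \nu_q(b_t(\B{y}))$. Choose an index $t_0$ attaining this maximum and put $e = \nu_q(b_{t_0}(\B{y})) = \nu_q(v_s(\B{y})) \ge 1$. Then $\nu_q(v_s(\B{y})/b_{t_0}(\B{y})) = e - e = 0$, so $q$ does not divide $v_s(\B{y})/b_{t_0}(\B{y})$; moreover $q \mid b_{t_0}(\B{y})$ (as $e \ge 1$), which together with $\mygcd(a_{t_0}(\B{y}), b_{t_0}(\B{y})) = 1$ forces $q \nmid a_{t_0}(\B{y})$. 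Hence $q \nmid a_{t_0}(\B{y})\frac{v_s(\B{y})}{b_{t_0}(\B{y})}$, contradicting the assumption that $q$ divides every generator. Therefore no such $q$ exists, and the $\mygcd$ is $1$.

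I do not expect a genuine obstacle: the whole argument is elementary once phrased through $q$-adic valuations. The two points requiring care are (i) verifying that $v_s(\B{y})/b_t(\B{y}) \in \BF[\B{y}]$, so that the outer $\mygcd$ is taken among polynomials, and (ii) ensuring that the index $t_0$ maximizing $\nu_q(b_t(\B{y}))$ is precisely the index of the ``bad'' generator we test --- it is essential to divide the $t_0$-th generator, not an arbitrary one. A valuation-free version is also possible, iterating the identity $\mygcd(a c, b) = \mygcd(c, b)$ valid when $\mygcd(a,b) = 1$, but the valuation phrasing is cleanest. This lemma then feeds into the proof of Lemma \ref{lemma_relative_prime_multivar}, in combination with Corollaries \ref{cor_partial_div} and \ref{cor_partial_div_2}, which reduce a multivariate gcd to gcd's of coefficient polynomials in fewer variables.
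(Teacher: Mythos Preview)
Your proof is correct and takes a genuinely different route from the paper. The paper argues by induction on $s$: the base case is just $\mygcd(a_1,b_1)=1$, and the inductive step unwinds the outer $\mygcd$ through a chain of identities such as $\mygcd(a_s c, b_s c)=\mygcd(a_s,b_s)c$, $v_s=\mylcm(v_{s-1},b_s)=v_{s-1}b_s/\mygcd(v_{s-1},b_s)$, and $\frac{v_s}{b_s}\mid v_{s-1}$, eventually reducing to the inductive hypothesis on $s-1$ terms. Your argument instead works directly in the UFD $\BF[\B{y}]$ at the level of a single irreducible $q$: you pick the index $t_0$ where $\nu_q(b_{t_0})$ attains the maximum $\nu_q(v_s)$, so that $v_s/b_{t_0}$ is $q$-free, and then coprimality of $a_{t_0}$ and $b_{t_0}$ kills $q$ in the $t_0$-th generator. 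This is shorter and conceptually cleaner; it also makes transparent exactly which generator witnesses the failure of any putative common irreducible factor, whereas the paper's induction hides this in the algebra. The paper's approach, on the other hand, stays entirely within gcd/lcm identities and never invokes unique factorization explicitly, which is stylistically consistent with the surrounding lemmas. One minor remark on your closing paragraph: in the paper this lemma is consumed by Corollary \ref{cor_relative_prime_2}, which in turn underlies Theorems \ref{th_multivar_div} and \ref{th_multivar_gcd}; Lemma \ref{lemma_relative_prime_multivar} is then proved via Theorem \ref{th_multivar_gcd} rather than directly via Corollaries \ref{cor_partial_div} and \ref{cor_partial_div_2}.
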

\begin{proof}
We use induction on $s$ to prove this lemma. Apparently, the lemma holds for $s=1$ due to $\mygcd(a_1(\B{y}),b_1(\B{y}))=1$. Assume it holds for $s-1$. Thus it follows
\begin{flalign*}
& \mygcd\bigg(a_1(\B{y})\frac{v_s(\B{y})}{b_1(\B{y})}, \cdots, a_s(\B{y})\frac{v_s(\B{y})}{b_s(\B{y})}, v_s(\B{y})\bigg) \\
=& \mygcd\bigg(a_1(\B{y})\frac{v_s(\B{y})}{b_1(\B{y})}, \cdots, a_s(\B{y})\frac{v_s(\B{y})}{b_s(\B{y})}, b_s(\B{y})\frac{v_s(\B{y})}{b_s(\B{y})}\bigg) \\
=& \mygcd\bigg(a_1(\B{y})\frac{v_s(\B{y})}{b_1(\B{y})}, \cdots,\mygcd(a_s(\B{y}), b_s(\B{y})) \frac{v_s(\B{y})}{b_s(\B{y})}\bigg) \\
\overset{(a)}{=} & \mygcd\bigg(a_1(\B{y})\frac{v_s(\B{y})}{b_1(\B{y})}, \cdots, a_{s-1}(\B{y})\frac{v_s(\B{y})}{b_{s-1}(\B{y})}, \frac{v_s(\B{y})}{b_s(\B{y})}\bigg) \\
\overset{(b)}{=} & \mygcd\bigg(a_1(\B{y})\frac{v_s(\B{y})}{b_1(\B{y})}, \cdots, a_{s-1}(\B{y})\frac{v_s(\B{y})}{b_{s-1}(\B{y})},  \\
& \hspace*{3cm} \mygcd\bigg(v_{s-1}(\B{y}),\frac{v_s(\B{y})}{b_s(\B{y})}\bigg) \bigg) \\
=& \mygcd\bigg(a_1(\B{y})\frac{v_s(\B{y})}{b_1(\B{y})}, \cdots, a_{s-1}(\B{y})\frac{v_s(\B{y})}{b_{s-1}(\B{y})}, v_{s-1}(\B{y}),\frac{v_s(\B{y})}{b_s(\B{y})}\bigg) \\
=& \mygcd\bigg( \frac{v_s(\B{y})}{v_{s-1}(\B{y})}\mygcd\bigg(a_1(\B{y})\frac{v_{s-1}(\B{y})}{b_1(\B{y})}, \cdots, a_{s-1}(\B{y})\frac{v_{s-1}(\B{y})}{b_{s-1}(\B{y})}\bigg), \\
& \hspace*{3cm} v_{s-1}(\B{y}),\frac{v_s(\B{y})}{b_s(\B{y})}\bigg) \\
\overset{(c)}{=}& \mygcd\bigg(\frac{v_s(\B{y})}{v_{s-1}(\B{y})}, v_{s-1}(\B{y}), \frac{v_s(\B{y})}{b_s(\B{y})}\bigg) \\
\overset{(d)}{=} & \mygcd\bigg(\frac{b_s(\B{y})}{\mygcd(v_{s-1}(\B{y}),b_s(\B{y}))}, v_{s-1}(\B{y}), \frac{v_{s-1}(\B{y})}{\mygcd(v_{s-1}(\B{y}),b_s(\B{y}))} \bigg) \\
=& \mygcd(1, v_{s-1}(\B{y})) = 1
\end{flalign*}
In the above equations, (a) is due to $\mygcd(a_s(\B{y}),b_s(\B{y}))=1$; 
(b) follows from the fact that $\frac{v_s(\B{y})}{b_s(\B{y})} \mid v_{s-1}(\B{y})$ and thus $\frac{v_s(\B{y})}{b_s(\B{y})} = \mygcd(v_{s-1}(\B{y}), \frac{v_s(\B{y})}{b_s(\B{y})})$; 
(c) follows from the inductive assumption; 
(d) is due to the equality: $v_s(\B{y})=\mylcm(v_{s-1}(\B{y}),b_s(\B{y}))=\frac{v_{s-1}(\B{y})b_s(\B{y})}{\mygcd(v_{s-1}(\B{y}), b_s(\B{y}))}$.
\end{proof}

In general, each polynomial $h(y_i) \in \BF(\B{y}_i)[y_i]$ is of the form $h(y_i) = \frac{a_0(\B{y}_i)}{b_0(\B{y}_i)} + \frac{a_1(\B{y}_i)}{b_1(\B{y}_i)}y_i + \cdots + \frac{a_p(\B{y}_i)}{b_p(\B{y}_i)}y^p_i$, where for each $j\in \{0,1,\cdots,p\}$, $a_j(\B{y}_i),b_j(\B{y}_i)\in \BF[\B{y}_i]$, $b_j(\B{y}_i)\neq 0$, $\mygcd(a_j(\B{y}_i),b_j(\B{y}_i))=1$, and $a_p(\B{y}_i)\neq 0$. 
Note that for each $y^j_i$ which is absent in $h(y_i)$, we let $a_j(\B{y}_i)=0$ and $b_j(\B{y}_i)=1$. Moreover, define the following polynomial $\mu_h(\B{y}_i) = \mylcm(b_0(\B{y}_i), b_1(\B{y}_i), \cdots, b_p(\B{y}_i))$.

\begin{corollary}
\label{cor_relative_prime_2}
For $j\in \{1,2,\cdots,s\}$, let $f_j(y_i) \in \BF(\B{y}_i)[y_i]$. 
Define $v(\B{y}_i)=\mylcm(\mu_{f_1}(\B{y}_i), \cdots, \mu_{f_s}(\B{y}_i))$ and $\bar{f}_j(\B{y})=v(\B{y}_i)f_j(y_i)$. 
Thus  $\mygcd(v(\B{y}_i), \bar{f}_1(\B{y}), \cdots, \bar{f}_s(\B{y}))=1$
\end{corollary}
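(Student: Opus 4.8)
The plan is to reduce the asserted gcd, via Corollary~\ref{cor_partial_div_2}, to one among $\BF[\B{y}_i]$-polynomials that is handled verbatim by Lemma~\ref{lemma_relative_prime_1}. First I would unfold the definitions as in the paragraph preceding the corollary: write each $f_j(y_i)=\sum_{k=0}^{p_j}\frac{a_{jk}(\B{y}_i)}{b_{jk}(\B{y}_i)}\,y_i^{k}$ in lowest terms, with $b_{jk}\neq 0$ and $\mygcd(a_{jk},b_{jk})=1$ (allowing the degenerate pair $(a_{jk},b_{jk})=(0,1)$ for absent powers of $y_i$), so that $\mu_{f_j}(\B{y}_i)=\mylcm(b_{j0},\dots,b_{jp_j})$. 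By associativity of $\mylcm$,
\[
v(\B{y}_i)=\mylcm\big(\mu_{f_1},\dots,\mu_{f_s}\big)=\mylcm\big(\{\,b_{jk}\,:\,1\le j\le s,\ 0\le k\le p_j\,\}\big),
\]
and since $b_{jk}\mid v$ for every pair $(j,k)$, each $\bar{f}_j(\B{y})=v(\B{y}_i)f_j(y_i)=\sum_{k=0}^{p_j} a_{jk}(\B{y}_i)\tfrac{v(\B{y}_i)}{b_{jk}(\B{y}_i)}\,y_i^{k}$ is a genuine polynomial in $\BF[\B{y}]$ whose $y_i^{k}$-coefficient is $a_{jk}(\B{y}_i)\tfrac{v(\B{y}_i)}{b_{jk}(\B{y}_i)}\in\BF[\B{y}_i]$.

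Next I would apply Corollary~\ref{cor_partial_div_2} with $g(\B{y}_i):=v(\B{y}_i)$ and the polynomials $\bar{f}_1,\dots,\bar{f}_s$ in the $y_i$-expanded form above, which collapses the gcd onto $\BF[\B{y}_i]$:
\[
\mygcd\big(v,\bar{f}_1,\dots,\bar{f}_s\big)=\mygcd\Big(v,\ \big\{\,a_{jk}\tfrac{v}{b_{jk}}\,\big\}_{1\le j\le s,\,0\le k\le p_j}\Big).
\]
Now relabel the finitely many pairs $(a_{jk},b_{jk})$ as $(a_1,b_1),\dots,(a_M,b_M)$; each satisfies $b_t\neq 0$ and $\mygcd(a_t,b_t)=1$, and $\mylcm(b_1,\dots,b_M)=v$. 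Lemma~\ref{lemma_relative_prime_1} then applies directly and yields $\mygcd\big(a_1\tfrac{v}{b_1},\dots,a_M\tfrac{v}{b_M},v\big)=1$, which is exactly the right-hand side above. Hence $\mygcd(v,\bar{f}_1,\dots,\bar{f}_s)=1$.

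The only delicate point — essentially a place to be careful rather than a real obstacle — is the bookkeeping: verifying that the nested $\mylcm$'s flatten so that the single $v$ constructed here coincides with the running lcm $v_M$ appearing in Lemma~\ref{lemma_relative_prime_1}, and that the padding pairs $(0,1)$ for missing powers of $y_i$ are harmless in both cited results, since $\mygcd(0,1)=1$ and the corresponding term $0\cdot\tfrac{v}{1}$ drops out of every gcd. Everything else is a straightforward composition of Corollary~\ref{cor_partial_div_2} and Lemma~\ref{lemma_relative_prime_1}.
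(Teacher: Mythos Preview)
Your proof is correct and follows essentially the same approach as the paper: expand each $f_j$ coefficientwise, flatten the nested $\mylcm$'s to identify $v(\B{y}_i)$ with the lcm of all the $b_{jk}$, apply Corollary~\ref{cor_partial_div_2} to strip off the variable $y_i$, and then invoke Lemma~\ref{lemma_relative_prime_1} on the relabeled list. The paper's proof is identical in structure, only omitting your explicit remarks about the flattening of $\mylcm$ and the harmlessness of the $(0,1)$ padding pairs.
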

\begin{proof}
Assume $f_j(y_i)$ has the following form:
\begin{flalign*}
f_j(y_i) = \frac{a_{j0}(\B{y}_i)}{b_{j0}(\B{y}_i)} + \frac{a_{j1}(\B{y}_i)}{b_{j1}(\B{y}_i)}y_i + \cdots + \frac{a_{jp_j}(\B{y}_i)}{b_{jp_j}(\B{y}_i)}y^{p_j}_i
\end{flalign*}
where for any $j\in \{1,2,\cdots,s\}$ and $t\in \{0,1,\cdots,p_j\}$, $a_{jt}(\B{y}_i), b_{jt}(\B{y}_i)\in \BF[\B{y}_i]$, $b_{jt}(\B{y}_i)\neq 0$ and $\mygcd(a_{jt}(\B{y}_i),b_{jt}(\B{y}_i))=1$. 
Apparently, $v(\B{y}_i)$ is the least common multiple of all $b_{jt}(\B{y}_i)$'s. 
Define $u_{jt}(\B{y}_i)=\frac{v(\B{y}_i)}{b_{jt}(\B{y}_i)} \in \BF[\B{y}_i]$. 
Hence, we have $\bar{f}_j(\B{y}) = \sum^{p_j}_{t=0}a_{jt}(\B{y}_i)u_{jt}(\B{y}_i)y^t_i$. 
Then it follows
\begin{flalign*}
&\mygcd(v(\B{y}_i), \bar{f}_1(\B{y})), \cdots, \bar{f}_s(\B{y})) \\
\overset{(a)}{=}& \mygcd(v(\B{y}_i), a_{10}(\B{y}_i)u_{10}(\B{y}_i), \cdots, a_{1p_1}(\B{y}_i)u_{1p_1}(\B{y}_i), \cdots, \\
& \hspace*{30pt} a_{s0}(\B{y}_i)u_{s0}(\B{y}_i), \cdots, a_{sp_s}(\B{y}_i)u_{sp_s}(\B{y}_i)) \\
\overset{(b)}{=}& 1
\end{flalign*}
where (a) is due to Corollary \ref{cor_partial_div_2} and (b) follows from Lemma \ref{lemma_relative_prime_1}.
\end{proof}

Generally, the definitions of division in $\BF[\B{y}]$ and $\BF(\B{y}_i)[y_i]$ are different. 
However, the following theorem reveals the two definitions are closely related.
\begin{theorem}
\label{th_multivar_div}
Consider two polynomials $f(\B{y}),g(\B{y})\in \BF[\B{y}]$, where $g(\B{y})\neq 0$. 
Then $g(\B{y})\mid f(\B{y})$ if and only if $g(y_i) \mid_1 f(y_i)$ for every $i\in \{1,2,\cdots,k\}$.
\end{theorem}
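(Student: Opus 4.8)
The plan is to prove the two implications separately; the ``only if'' direction is immediate and the reverse one carries all the work. For ``only if'', if $g(\B{y})\mid f(\B{y})$ in $\BF[\B{y}]$, pick $h(\B{y})\in\BF[\B{y}]$ with $f(\B{y})=g(\B{y})h(\B{y})$; since $\BF[\B{y}]=\BF[\B{y}_i][y_i]\subseteq\BF(\B{y}_i)[y_i]$, reading the same identity in $\BF(\B{y}_i)[y_i]$ shows $g(y_i)\mid_1 f(y_i)$ for every $i$.

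For the ``if'' direction, the idea is to translate each hypothesis into a divisibility statement over the UFD $\BF[\B{y}]$ and then argue with multiplicities of irreducibles. First I would start from $f(y_i)=g(y_i)h_i(y_i)$ with $h_i(y_i)\in\BF(\B{y}_i)[y_i]$, clear the denominators of the coefficients of $h_i$ (their least common multiple $c_i(\B{y}_i)\in\BF[\B{y}_i]$ is nonzero and does not involve $y_i$), and obtain a genuine polynomial identity $c_i(\B{y}_i)\,f(\B{y})=g(\B{y})\,\tilde h_i(\B{y})$ in $\BF[\B{y}]$ with $\tilde h_i=c_i h_i\in\BF[\B{y}]$; this step can be organised using Lemma~\ref{lemma_partial_div} and Corollary~\ref{cor_partial_div}. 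Thus $g\mid c_i f$ in $\BF[\B{y}]$ for each $i$. Now set $d=\mygcd(f,g)$ and write $g=d\cdot g^{\ast}$; it suffices to show $g^{\ast}$ is a unit. Suppose not, and let $p$ be an irreducible factor of $g^{\ast}$; writing $v_p(\cdot)$ for the multiplicity of $p$, this forces $v_p(g)>v_p(f)$. Since $p$ is non-constant it has positive degree in some variable $y_i$, and then $p$ cannot divide the nonzero polynomial $c_i(\B{y}_i)$, because $c_i$ has degree $0$ in $y_i$ whereas every nonzero multiple of $p$ has degree at least $\deg_{y_i}(p)\ge 1$ in $y_i$; hence $v_p(c_i)=0$. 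Applying $v_p$ to $c_i f=g\tilde h_i$ gives $v_p(f)=v_p(c_i f)=v_p(g\tilde h_i)\ge v_p(g)$, contradicting $v_p(g)>v_p(f)$. Therefore $g^{\ast}$ is a unit and $g\mid f$.

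The main obstacle, and the reason the hypothesis must be imposed for \emph{every} $i$ rather than a single index, is precisely the possibility that an irreducible factor of $g$ lies entirely in $\BF[\B{y}_i]$: such a factor becomes a unit of $\BF(\B{y}_i)[y_i]$, so the relation $g(y_i)\mid_1 f(y_i)$ alone detects nothing about it, and it is recovered only through the index for which that factor has positive $y_i$-degree. The remaining points are routine: a clean justification of the denominator-clearing step, and the elementary observation that an irreducible $p$ with $\deg_{y_i}(p)\ge 1$ divides no nonzero element of $\BF[\B{y}_i]$; both follow from degree comparisons together with the already-developed correspondence between $\BF[\B{y}]$ and $\BF(\B{y}_i)[y_i]$.
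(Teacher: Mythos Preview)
Your argument is correct. Both your proof and the paper's share the same opening move for the ``if'' direction: clear denominators to obtain an identity $c_i(\B{y}_i)\,f(\B{y})=g(\B{y})\,\tilde h_i(\B{y})$ in $\BF[\B{y}]$ with $c_i$ free of $y_i$. The divergence is in how the conclusion is extracted. The paper invokes Corollary~\ref{cor_relative_prime_2} (whose proof rests on the separate inductive Lemma~\ref{lemma_relative_prime_1}) to get $\mygcd(c_i,\tilde h_i)=1$, deduces $c_i\mid g$, and then carries out an explicit chain of equalities showing that $u(\B{y})=g(\B{y})/\mygcd(f(\B{y}),g(\B{y}))$ literally equals $c_i(\B{y}_i)$, hence is free of $y_i$ for every $i$ and so constant. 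Your valuation argument is more direct: pick an irreducible $p$ dividing $g/\mygcd(f,g)$, choose $i$ with $\deg_{y_i}(p)\ge 1$, and read off the contradiction $v_p(f)\ge v_p(g)>v_p(f)$ from $c_i f=g\tilde h_i$ together with $v_p(c_i)=0$. This bypasses Corollary~\ref{cor_relative_prime_2} and Lemma~\ref{lemma_relative_prime_1} entirely, at the cost of invoking the UFD structure of $\BF[\B{y}]$ explicitly. One small remark: the denominator-clearing step is elementary (just multiply through by the lcm of the coefficient denominators) and does not actually require Lemma~\ref{lemma_partial_div} or Corollary~\ref{cor_partial_div}; those results concern divisibility by polynomials in $\BF[\B{y}_i]$ and are not what is being used here.
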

\begin{proof}
The division equation between $f(y_i)$ and $g(y_i)$ is as follows
\begin{flalign}
\label{eq_div}
f(y_i) = h_i(y_i)g(y_i) + r_i(y_i)
\end{flalign}
where $h_i(y_i), r_i(y_i) \in \BF(\B{y}_i)[y_i]$, and either $r_i(y_i)=0$ or $d_{r_i} < d_g$. 
Due to the uniqueness of Equation (\ref{eq_div}), $f(\B{y})\mid g(\B{y})$ immediately implies that for any $i\in \{1,2,\cdots,k\}$, $r_i(y_i)=0$ and thus $g(y_i) \mid_1 f(y_i)$.

Conversely, assume for every $i\in \{1,\cdots,k\}$, $g(y_i) \mid_1 f(y_i)$ and hence $r_i(y_i)= 0$. 
Denote $\bar{h}_i(\B{y})=\mu_{h_i}(\B{y}_i)h_i(y_i)$. 
Clearly, $\bar{h}_i(\B{y}) \in \BF[\B{y}]$. 
Then, the following equation holds
\begin{flalign*}
\mu_{h_i}(\B{y}_i)f(\B{y}) = \bar{h}_i(\B{y})g(\B{y})
\end{flalign*}
By Corollary \ref{cor_relative_prime_2}, $\mygcd(\mu_{h_i}(\B{y}_i), \bar{h}_i(\B{y}))=1$. 
Thus, $\mu_{h_i}(\B{y}_i) \mid g(\B{y})$. 
Define $\bar{g}(\B{y})=\frac{g(\B{y})}{\mu_{h_i}(\B{y}_i)}$. 
By Lemma \ref{lemma_partial_div}, $\bar{g}(\B{y})\in \BF[\B{y}]$. 
Define $u(\B{y})=\frac{g(\B{y})}{\mygcd(f(\B{y}),g(\B{y}))} \in \BF[\B{y}]$. 
It follows that
\begin{flalign*}
u(\B{y}) &= \frac{g(\B{y})}{\mygcd(f(\B{y}),g(\B{y}))} \\
&= \frac{\mu_{h_i}(\B{y}_i)\bar{g}(\B{y})}{\mygcd(\bar{h}_i(\B{y})\bar{g}(\B{y}), \mu_{h_i}(\B{y}_i)\bar{g}(\B{y}))} \\
&= \frac{\mu_{h_i}(\B{y}_i)\bar{g}(\B{y})}{\bar{g}(\B{y})\mygcd(\bar{h}_i(\B{y}), \mu_{h_i}(\B{y}_i))} \\
&= \frac{\mu_{h_i}(\B{y}_i)\bar{g}(\B{y})}{\bar{g}(\B{y})} \\
&= \mu_{h_i}(\B{y}_i)
\end{flalign*}
Note that variable $y_i$ is absent in $u(\B{y})$. 
Because $y_i$ can be any arbitrary variable in $\B{y}$, it immediately follows that all the variables in $\B{y}$ must be absent in $u(\B{y})$, implying that $u(\B{y})$ is a constant in $\BF$. 
Hence $g(\B{y})\mid f(\B{y})$.
\end{proof}

Moreover, in the next theorem, we will prove that checking if two multivariate polynomials are co-prime is equivalent to checking if their equivalent univariate polynomials are co-prime.

\begin{theorem}
\label{th_multivar_gcd}
Let $f(\B{y}),g(\B{y})$ be two non-zero polynomials in $\BF[\B{y}]$. 
Then $\mygcd(f(\B{y}),g(\B{y}))=1$ if and only if $\mygcd_1(f(y_i),g(y_i))=1$ for any $i\in \{1,2,\cdots,k\}$.
\end{theorem}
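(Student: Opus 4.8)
The plan is to prove the two implications separately, reading the statement (as it must be read for it to be true) as: $\mygcd(f(\B{y}),g(\B{y}))=1$ in $\BF[\B{y}]$ if and only if $\mygcd_1(f(y_i),g(y_i))=1$ in $\BF(\B{y}_i)[y_i]$ for \emph{every} $i\in\{1,\dots,k\}$. The natural framework is to view $\BF[\B{y}]=\BF[\B{y}_i][y_i]$ as a UFD whose localization at $\BF[\B{y}_i]\setminus\{0\}$ is $\BF(\B{y}_i)[y_i]$, so that the claim is a Gauss-type comparison of gcd's in these two rings; the machinery of $\mu_h(\B{y}_i)$, Lemma~\ref{lemma_partial_div}, and Corollary~\ref{cor_relative_prime_2} already set up for the proof of Theorem~\ref{th_multivar_div} is exactly what is needed.

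For the ``if'' direction (the easy one), suppose $\mygcd_1(f(y_i),g(y_i))=1$ for every $i$, and let $d(\B{y})=\mygcd(f(\B{y}),g(\B{y}))$ in $\BF[\B{y}]$. Since $d\mid f$ and $d\mid g$ in $\BF[\B{y}]$, Theorem~\ref{th_multivar_div} gives $d(y_i)\mid_1 f(y_i)$ and $d(y_i)\mid_1 g(y_i)$ for every $i$, hence $d(y_i)\mid_1\mygcd_1(f(y_i),g(y_i))=1$, so $d(y_i)$ is a unit of $\BF(\B{y}_i)[y_i]$. This forces $d$ to have $y_i$-degree $0$, i.e.\ $y_i$ does not occur in $d(\B{y})$; letting $i$ range over $\{1,\dots,k\}$ shows $d\in\BF$, so $\mygcd(f(\B{y}),g(\B{y}))=1$.

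For the ``only if'' direction, suppose $\mygcd(f(\B{y}),g(\B{y}))=1$ and fix $i$; we must show $\mygcd_1(f(y_i),g(y_i))=1$. If both $f$ and $g$ have $y_i$-degree $0$, i.e.\ $f,g\in\BF[\B{y}_i]$, they are nonzero elements of the field $\BF(\B{y}_i)$ and hence units of $\BF(\B{y}_i)[y_i]$, so their gcd is $1$ trivially. Otherwise, assume for contradiction that $h(y_i)=\mygcd_1(f(y_i),g(y_i))$ has positive degree $d_h\ge 1$ in $y_i$. Writing $f(y_i)=q_f(y_i)h(y_i)$ and $g(y_i)=q_g(y_i)h(y_i)$ and clearing denominators with $v(\B{y}_i)=\mylcm(\mu_h(\B{y}_i),\mu_{q_f}(\B{y}_i),\mu_{q_g}(\B{y}_i))$, one obtains polynomials $\bar h,\bar q_f,\bar q_g\in\BF[\B{y}]$ and relations in $\BF[\B{y}]$ expressing $f$ and $g$ (times a factor from $\BF[\B{y}_i]$) as products involving $\bar h$. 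Applying Corollary~\ref{cor_relative_prime_2} exactly as in the proof of Theorem~\ref{th_multivar_div} shows that the $\BF[\B{y}_i]$-factors introduced by clearing denominators are coprime to $\bar h$, so they cancel and the primitive-in-$y_i$ part of $\bar h$ --- which still has $y_i$-degree $\ge 1$ and is therefore a non-constant element of $\BF[\B{y}]$ --- divides both $f(\B{y})$ and $g(\B{y})$ in $\BF[\B{y}]$, contradicting $\mygcd(f,g)=1$. The main obstacle is precisely this last step: making rigorous the claim that a common divisor of positive $y_i$-degree over $\BF(\B{y}_i)$ lifts to a genuine non-constant common divisor over $\BF[\B{y}]$, i.e.\ that the denominator polynomials do not conspire to cancel all of $\bar h$; I expect this to go through by the same content/primitive-part bookkeeping (Lemma~\ref{lemma_partial_div} and Corollary~\ref{cor_relative_prime_2}) used to establish Theorem~\ref{th_multivar_div}.
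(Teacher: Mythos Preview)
Your proposal is correct and follows essentially the same route as the paper: the ``if'' direction is identical, and the ``only if'' direction is the same Gauss-type lift of a common divisor from $\BF(\B{y}_i)[y_i]$ back to $\BF[\B{y}]$ using Corollary~\ref{cor_relative_prime_2}. For the gap you flag, the paper's bookkeeping is to first clear the denominators of $v(y_i)=\mygcd_1(f,g)$ to get $w(\B{y})=\mu_v(\B{y}_i)v(y_i)\in\BF[\B{y}]$, then clear the denominators of the cofactors $p,q$ with $s(\B{y}_i)=\mylcm(\mu_p,\mu_q)$, yielding $s\cdot f=w\bar p$ and $s\cdot g=w\bar q$ in $\BF[\B{y}]$; from $s\,\mygcd(f,g)=w\,\mygcd(\bar p,\bar q)$ and $\mygcd(s,\bar p,\bar q)=1$ (Corollary~\ref{cor_relative_prime_2}) one gets $s\mid w$, and since $s\in\BF[\B{y}_i]$ the quotient $w/s$ retains positive $y_i$-degree and is the desired non-constant common divisor.
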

\begin{proof}
First, assume for any $i\in \{1,2,\cdots,k\}$, $\mygcd_1(f(y_i),g(y_i))=1$. 
We use contradiction to prove that $\mygcd(f(\B{y}),g(\B{y}))=1$. 
Assume $u(\B{y})=\mygcd(f(\B{y}),g(\B{y}))$ is not constant. 
Let $y_i$ be a variable which is present in $u(\B{y})$. 
By Theorem \ref{th_multivar_div}, $u(y_i)\mid_1 f(y_i)$ and $u(y_i)\mid_1 g(y_i)$, which contradicts that $\mygcd_1(f(y_i),g(y_i))=1$.

Then, assume $\mygcd(f(\B{y}),g(\B{y}))=1$. 
We also use contradiction to prove that for any $i\in \{1,2,\cdots,k\}$, $\mygcd_1(f(y_i),g(y_i))=1$. 
Assume there exists $i\in \{1,\cdots,k\}$ such that $v(y_i)=\mygcd_1(f(y_i),g(y_i))$ is non-trivial. 
Define $w(\B{y})=\mu_v(\B{y}_i)v(y_i) \in \BF[\B{y}]$. 
Clearly, $w(y_i)\mid_1 f(y_i)$ and $w(y_i)\mid_1 g(y_i)$. 
Thus, there exists $p(y_i),q(y_i) \in \BF(\B{y}_i)[y_i]$ such that
\begin{flalign*}
f(y_i) = w(y_i)p(y_i) \hspace*{20pt} g(y_i) = w(y_i)q(y_i)
\end{flalign*}
Let $s(\B{y}_i)=\mylcm(\mu_p(\B{y}_i), \mu_q(\B{y}_i))$. 
Define $\bar{p}(\B{y})=s(\B{y}_i)p(y_i)$ and $\bar{q}(\B{y})=s(\B{y}_i)q(y_i)$. 
Apparently, $\bar{p}(\B{y}),\bar{q}(\B{y})\in \BF[\B{y}]$. 
It follows that
\begin{flalign*}
s(\B{y}_i)f(\B{y}) = w(\B{y})\bar{p}(\B{y}) \hspace*{20pt} s(\B{y}_i)g(\B{y}) = w(\B{y})\bar{q}(\B{y})
\end{flalign*}
Then the following equation holds
\begin{flalign*}
s(\B{y}_i)\mygcd(f(\B{y}), g(\B{y})) = w(\B{y})\mygcd(\bar{p}(\B{y}), \bar{q}(\B{y}))
\end{flalign*}
Due to Corollary \ref{cor_relative_prime_2}, $\mygcd(s(\B{y}_i), \mygcd(\bar{p}(\B{y}), \bar{q}(\B{y}))) = \mygcd(s(\B{y}_i), \bar{p}(\B{y}), \bar{q}(\B{y})) =1$. 
Hence $s(\B{y}_i) \mid w(\B{y})$. Let $\bar{w}(\B{y})=\frac{w(\B{y})}{s(\B{y}_i)}$. 
According to Lemma \ref{lemma_partial_div}, $\bar{w}(\B{y})$ is a non-trivial polynomial in $\BF[\B{y}]$. 
Thus, $\bar{w}(\B{y}) \mid \mygcd(f(\B{y}), g(\B{y}))$, contradicting $\mygcd(f(\B{y}), g(\B{y}))=1$.
\end{proof}

\subsection{The Multivariate Case}

Now, we are in the place of extending Lemma \ref{lemma_relative_prime_univar} to the multivariate case.

\begin{proof}[Proof of Lemma \ref{lemma_relative_prime_multivar}]
Note that if we substitute $\BF$ with $\BF(\B{y}_i)$ and $\mygcd$ with $\mygcd_1$ in Lemma \ref{lemma_relative_prime_univar}, the lemma also holds. 
Apparently, $f(z),g(z)\in \BF(\B{y}_i)[z]$. We will prove that $\mygcd_1(f(z),g(z))=1$. 
By contradiction, assume $r(z)=\mygcd_1(f(z),g(z))\in \BF(\B{y}_i)[z]$ is non-trivial. Let $\bar{f}(z)=\frac{f(z)}{r(z)}$ and $\bar{g}(z)=\frac{g(z)}{r(z)}$. 
Clearly, $\bar{f}(z)$ and $\bar{g}(z)$ are both non-zero polynomials in $\BF(\B{y}_i)[z]$. 
Then we can find an assignment to $\B{y}_i$, denoted by $\B{y}^*_i$, such that the coefficients of the maximum powers of $z$ in $r(z),\bar{f}(z)$ and $\bar{g}(z)$ are all non-zeros. 
Let $\bar{r}(z)$ denote the univariate polynomial acquired by assigning $\B{y}_i=\B{y}^*_i$ to $r(z)$. 
Clearly, $\bar{r}(z)$ is a common divisor of $f(z)$ and $g(z)$ in $\BF[z]$, contradicting $\mygcd(f(z),g(z))=1$. 
Moreover, due to $\mygcd(s(\B{y}),t(\B{y}))=1$ and Theorem \ref{th_multivar_gcd}, $\mygcd_1(s(y_i),t(y_i))=1$. 
Thus, by Lemma \ref{lemma_relative_prime_univar}, $\mygcd_1(\alpha(y_i),\beta(y_i))=1$. 
Since $i$ can be any integer in $\{1,2,\cdots,k\}$, it follows that $\mygcd(\alpha(\B{y}),\beta(\B{y}))=1$ by Theorem \ref{th_multivar_gcd}.
\end{proof}

\section{PBNA vs. Routing} \label{app_pbna_vs_routing}

In Section VIII, we characterized the optimal rates for different network topologies and under the network model considered in this paper (precoding and RLNC).   In this appendix, we provide a comparison of the rate achieved by PBNA to that achieved by routing.  \footnote{We would like to point out that the two schemes are not directly comparable under the model we consider. Routing involves intelligence inside the network, whereas in our problem setup, the internal nodes have no intelligence and can only perform random linear network coding. Therefore, routing by definition is not included in the problem we study in this paper.}
Depending on the network structure one scheme can perform better than the other.  In Fig. \ref{flowchart}, we provide a taxonomy of the networks based on their structure and we provide the rates achievable by  routing and PBNA.  In particular, we classify networks based on the coupling relations in Section \ref{subsec_optimal_rates},  repeated here for convenience.

%In this short report, we present a comparison of rate achievable through PBNA and the rate achievable through routing.
%We assume that all the senders and all the receivers are connected via directed paths.
%However, note that such comparison is inappropriate and unfair, since they work under different network models: PBNA works under a network model where the middle of the network only performs random linear network coding, \ie there is no intelligence in the middle of the network; routing requires that the middle of the network has intelligence such that each node can forward static traffic along predetermined paths.  

\begin{figure*}[t!]
\centering
\includegraphics[scale=.55]{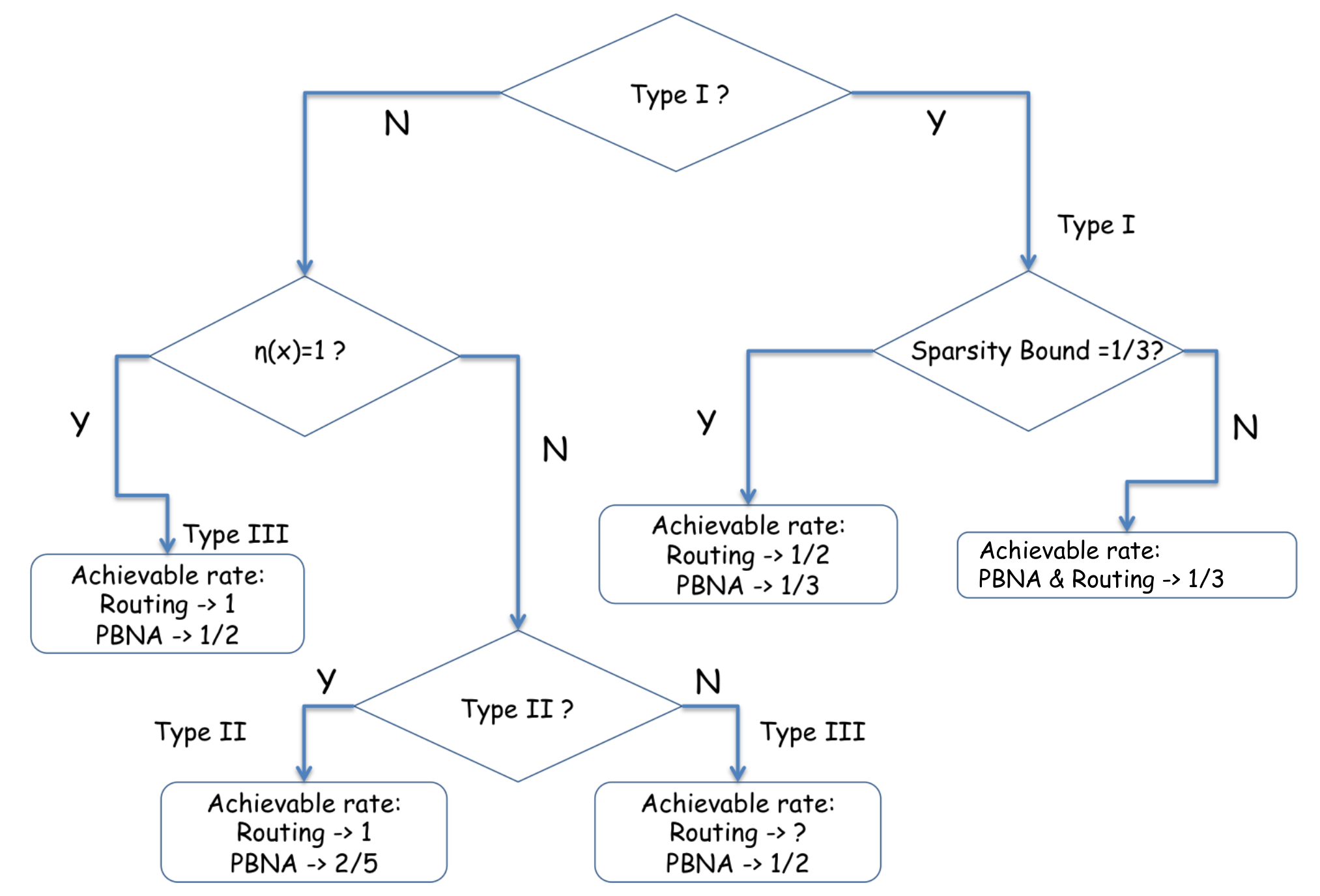}
\caption{A comparison between PBNA and routing in terms of achievable symmetric rate for various types of networks. \label{flowchart}}
\end{figure*}

\begin{itemize}
\item $Type\ I $: Networks in which at least one of the coupling relations, $p_i(\B{x}) = 1$ and $p_i(\B{x}) = \eta(\B{x})$ $(1\le i \le 3)$, is present. This network structure makes it information theoretically impossible for any precoding-based linear schemes to achieve a rate of more that $\frac{1}{3}$ per session, under the considered network model (precoding at the edge and RLNC in the middle).

\item $Type\ II $: Networks in which $p_i(\B{x})\notin \{1,\eta(\B{x})\}$ for $1\le i \le 3$, but one of the three mutually exclusive coupling conditions, $p_1(\B{x}) =  \frac{\eta(\B{x})}{1+\eta(\B{x})}$, $p_2(\B{x}) = 1 + \eta(\B{x})$, and $p_3(\B{x}) = 1 + \eta(\B{x})$, is present. The structure of these networks makes it impossible for any precoding-based linear schemes to achieve rate above $\frac{2}{5}$ per user under the considered network setting.

\item $Type\ III $: Networks in which none of the above coupling relations is present and PBNA achieves rate $\frac{1}{2}$.
\end{itemize}

$Type\ III$ networks, the ones with $\eta(\B{x}) \ne 1$, are the main focus of this paper.
For this type of networks, the performance of routing varies for different networks.
%It is hard to characterize the rate achievable through routing for these networks. 
In contrast, PBNA always achieves a guaranteed rate $\frac{1}{2}$ per session.
%In Fig. \ref{flowchart}, we present a comparison between PBNA and routing in terms of achievable symmetric rate for various types of networks

The following points can be noted from Fig. \ref{flowchart}:
\begin{itemize}
\item $Type\ I$ networks can be further classified into two cases based on the sparsity bound. When the sparsity bound equals $\frac{1}{3}$, both PBNA and routing can only achieve a symmetric rate of $\frac{1}{3}$ per user.
An example of such network is shown in Fig. \ref{figP2EqualEta}a.
When the sparsity bound is greater than $\frac{1}{3}$, routing can achieve a symmetric rate of $\frac{1}{2}$ per user.
However, PBNA can only achieve a symmetric rate of $\frac{1}{3}$ per user, which is the optimal symmetric rate achieved by any precoding-based linear schemes. Fig. \ref{figP2EqualEta}b illustrates such an example.

\begin{figure*}[t]
\centering
\subfloat[A type I network (Sparsity bound$=\frac{1}{3}$)]{\includegraphics[scale=.3]{infeasible_ex1}} \hspace{10pt}
\subfloat[A type I network (Sparsity bound$>\frac{1}{3}$)]{\includegraphics[scale=.3]{p2_equal_eta}} \hspace{10pt}
\subfloat[A type II network ($p_1(\B{x})=\frac{\eta(\B{x})}{\eta(\B{x})+1}$)]{\includegraphics[scale=.3]{infeasible_ex2}} \\
\subfloat[A type III network ($\eta(\B{x})\neq 1$)]{\includegraphics[scale=.3]{feasible_ex2}}  \hspace{10pt}
\subfloat[A type III network ($\eta(\B{x})\neq 1$)]{\includegraphics[scale=.3]{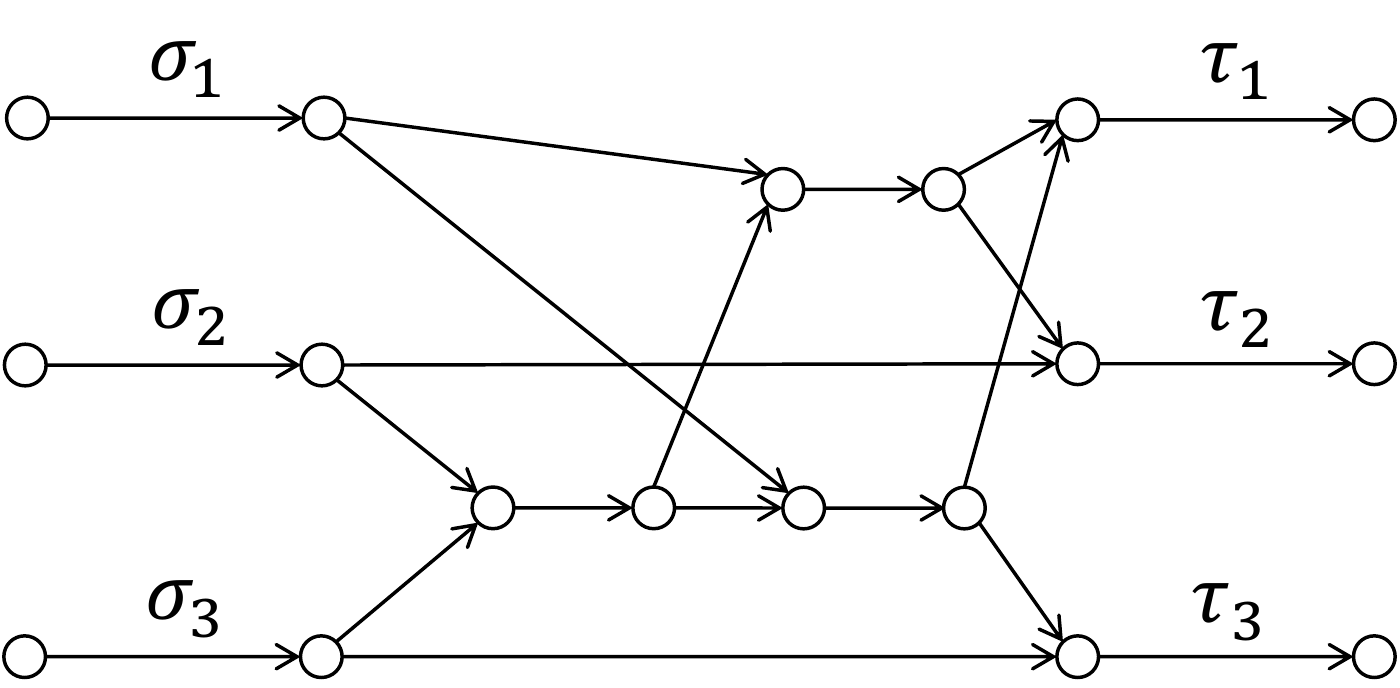}}   \hspace{10pt}
\subfloat[A type III network ($\eta(\B{x})=1$)]{\includegraphics[scale=.3]{eta_trivial}}
\hspace{10pt}
\caption{Example networks. 
(a) shows a Type I network, for which PBNA and routing both achieve symmetric rate $\frac{1}{3}$. 
(b) shows another Type I network, for which routing can achieve symmetric rate $\frac{1}{2}$, and PBNA can only achieve symmetric rate $\frac{1}{3}$.
(c) shows a Type II network, for which routing can achieve symmetric rate one, and PBNA can only achieve symmetric rate $\frac{2}{5}$.
(d) shows a Type III network, for which routing can only achieve symmetric rate $\frac{1}{3}$, and PBNA can achieve symmetric rate $\frac{1}{2}$.
In (e), we show another Type III network, for which routing can achieve symmetric rate one, and PBNA can only achieve symmetric rate $\frac{1}{2}$.
(f) shows a Type III network,  for which routing can always achieve symmetric rate one, and PBNA can only achieve symmetric rate $\frac{1}{2}$.}
\label{figP2EqualEta}
\end{figure*}

\item Type II networks, due to the presence of the coupling relations, $p_1(\B{x}) =  \frac{\eta(\B{x})}{1+\eta(\B{x})}$ or $p_2(\B{x}) = 1 + \eta(\B{x})$ or $p_3(\B{x}) = 1 + \eta(\B{x})$, will have a network structure where each source has a disjoint path to its corresponding receiver, making it possible to achieve a rate of $1$ per user with routing. In contrast, PBNA can only achieve a symmetric rate of $\frac{2}{5}$ for these networks.
An example of such a network is shown in Fig. \ref{figP2EqualEta}c.

\item For Type III networks, PBNA can achieve a symmetric rate of $\frac{1}{2}$.
 Consider the special case of $\eta(\B{x}) =1$, it can be shown (see Subsection \ref{eta_1}) that in these networks, there are disjoint paths from each source to its corresponding receiver, and routing can always achieve a symmetric rate of $1$ per user here.  An example is shown in Fig. \ref{figP2EqualEta}f.
For the less constrained case of $\eta(\B{x})\neq 1$, however,  the performance of routing depends on additional properties. We can see that  there are networks in which routing can only achieve a symmetric rate of $\frac{1}{3}$ (see Fig. \ref{figP2EqualEta}d); and there are also  networks  where routing can achieve a symmetric rate of one due to the rich connectivity in the network (see Fig. \ref{figP2EqualEta}e). 
%For the case $\eta(\B{x})=1$, as we'll prove below, due to the presence of disjoint paths from each source to its corresponding receiver, there is no coding opportunities that can be exploited, and routing can always achieve a symmetric rate one per user.

\end{itemize}

\subsection{Characterizing the Routing Rate for $Type\ III$ Networks with $\eta(\B{x}) =1$ \label{eta_1}}

In this subsection, we prove that for Type III networks with $\eta(\B{x})=1$, routing can always achieve a symmetric rate of one.

We will first define the following polynomials:
\begin{flalign*}
L(\B{x}) = m_{13}(\B{x}) m_{32}(\B{x}) m_{21}(\B{x}) \quad R(\B{x}) = m_{12}(\B{x}) m_{23}(\B{x}) m_{31}(\B{x})
\end{flalign*}

Thus, $\eta(\B{x})=\frac{L(\B{x})}{R(\B{x})}$.
Given two distinct edges/nodes $e_1,e_2$, if there exists a directed path from $e_1$ to $e_2$, we say $e_1$ is upstream of $e_2$ (or $e_1$ is downstream of $e_2$), and denote this relation by $e_1 \prec e_2$.
Similarly, $e_1 \not\prec v_2$ implies that there is no directed path from $e_1$ to $e_2$.

Given two subsets of nodes $S,D\subseteq V$, let $EC(S;D)$ denote the minimum capacity of all the edge cuts separating $S$ from $D$.  
Define the following subsets of edges:
\begin{eqnarray*}
\bar{S}_{i} &\buildrel \Delta \over =&  \{ e \in E - \{\sigma_i\} \; : \; e \in C_{ij} \cap C_{ik}, j\neq k, j,k\in \{1,2,3\}-\{i\} \} \\
\bar{D}_{i} &\buildrel \Delta \over =&  \{ e \in E - \{\tau_i\} \; : \; e \in C_{jj} \cap C_{kj}, j\neq k, j,k\in \{1,2,3\}-\{i\} \}
\end{eqnarray*}

The following proposition was stated in \cite{Han2011}, which gives a graph theoretic  interpretation of the condition $\eta(\B{x})=1$.

\begin{proposition} \label{prop_eta_2}
$L(\B{x}) \equiv R(\B{x})$ if and only if there exists two distinct integers $i,j \in \{1,2,3\} $ such that $\bar{S}_{i} \cap \bar{S}_{j} \neq \emptyset $ and $\bar{D}_{i} \cap \bar{D}_{j} \neq \emptyset $.
\end{proposition}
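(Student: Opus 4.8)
\textbf{Proof Proposal for Proposition~\ref{prop_eta_2}.}

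The plan is to establish both directions by relating the polynomial identity $L(\B{x}) \equiv R(\B{x})$ to the factorization structure of transfer functions given in Lemma~\ref{lemma_factorization_transfer_func} and the $f_{ijk}$ reformulation of Corollary~\ref{cor_f_reformulate}. Recall that $\eta(\B{x}) = L(\B{x})/R(\B{x}) = f_{213}(\B{x})/f_{312}(\B{x})$, so $L(\B{x}) \equiv R(\B{x})$ is equivalent to $f_{213}(\B{x}) = f_{312}(\B{x})$ (after clearing the common factor $m_{\sigma_1,\alpha}m_{13}$-type terms, exactly as in the derivation in Appendix~\ref{app_proof_interpret}). By Corollary~\ref{cor_f_equal}, this holds if and only if $\alpha_{213} = \alpha_{312}$ and $\beta_{213} = \beta_{312}$, which is precisely Theorem~\ref{th_eta_constant}. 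So the real content of the proposition is to translate the edge-equality condition ``$\alpha_{213}=\alpha_{312}$ and $\beta_{213}=\beta_{312}$'' into the set-intersection condition ``$\bar{S}_i \cap \bar{S}_j \neq \emptyset$ and $\bar{D}_i \cap \bar{D}_j \neq \emptyset$ for some distinct $i,j$.''

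First I would unpack the definitions: $\bar{S}_i$ is the set of edges lying on a bottleneck set $\C{C}_{ij}\cap\C{C}_{ik}$ for the two indices $j,k$ other than $i$; by the definition of $\alpha_{ijk}$ (the last such edge in topological order), $\alpha_{ijk} \in \bar{S}_i$, and in fact $\alpha_{i j k}$ is the ``last'' element of a specific $\bar S_i$-component. Similarly $\bar{D}_i$ collects edges that are simultaneously bottlenecks into $\tau_j$ from $\sigma_j$ and from $\sigma_k$; the edge $\beta_{ijk}$ lives in this family. The key observation for the forward direction ($\Rightarrow$): if $\alpha_{213} = \alpha_{312} =: \alpha$, then $\alpha$ is a bottleneck separating $\sigma_2$ from $\{\tau_1,\tau_3\}$ \emph{and} a bottleneck separating $\sigma_3$ from $\{\tau_1,\tau_2\}$; reading off the relevant cut-set memberships shows $\alpha \in \bar{S}_2 \cap \bar{S}_3$ (taking $i=2$, $\{j,k\}=\{1,3\}$ and $i=3$, $\{j,k\}=\{1,2\}$ — note both cases involve $\C{C}_{21},\C{C}_{31}$ among others). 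Symmetrically, $\beta_{213}=\beta_{312}=:\beta$ gives $\beta \in \bar{D}_? \cap \bar{D}_?$ for the corresponding pair of indices. The indices need to be chosen carefully so that the pair $(i,j)$ is the same for the $\bar S$ and the $\bar D$ statement, which is where I would need to be precise about exactly which two of $\{1,2,3\}$ the common bottleneck serves.

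For the reverse direction ($\Leftarrow$), suppose without loss of generality $\bar{S}_1 \cap \bar{S}_2 \neq \emptyset$ and $\bar{D}_1 \cap \bar{D}_2 \neq \emptyset$, witnessed by edges $e^* \in \bar{S}_1\cap\bar{S}_2$ and $e^{**}\in\bar{D}_1\cap\bar{D}_2$. Then $e^*$ is a single-edge cut separating $\sigma_1$ from $\{\tau_2,\tau_3\}$ and separating $\sigma_2$ from $\{\tau_1,\tau_3\}$ (by the definitions, these are the $j\neq k$ choices forced when $i\in\{1,2\}$). I would argue that such an edge forces $\alpha_{123}$, $\alpha_{213}$ (or the relevant $\alpha$'s) to collapse to a common value: since $e^*$ separates $\sigma_1$ from both $\tau_2$ and $\tau_3$ it is a bottleneck in $\C{C}_{12}\cap\C{C}_{13}$, hence lies on the chain whose last element is $\alpha_{123}=\alpha_{132}$, and similarly for $\sigma_2$; combining with the dual statement from $e^{**}\in\bar D_1\cap\bar D_2$ about the $\beta$'s, and invoking Theorem~\ref{th_eta_constant} (in its $f_{ijk}$ form, Corollary~\ref{cor_f_equal}), I get $\eta(\B{x}) = 1$, i.e. $L(\B{x})\equiv R(\B{x})$. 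A cleaner route may be to bypass the $\alpha,\beta$ bookkeeping entirely and argue directly on the monomial expansions: $L(\B{x})$ and $R(\B{x})$ are sums over triples of paths $(\C{P}_{13},\C{P}_{32},\C{P}_{21})$ and $(\C{P}_{12},\C{P}_{23},\C{P}_{31})$ respectively, and the presence of the common single-edge bottlenecks $e^*,e^{**}$ lets one set up a bijection between the two monomial sets preserving each monomial — exactly analogous to the one-to-one mapping technique used in the proof of the Square-Term Property (Lemma~\ref{lemma_square_term}) and in Lemma~\ref{lemma_transfer_inequal}.

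The main obstacle I anticipate is the combinatorial case-tracking in matching up \emph{which} pair $(i,j)$ of indices simultaneously witnesses the $\bar{S}$-intersection and the $\bar{D}$-intersection: the definitions of $\bar{S}_i$ and $\bar{D}_i$ each quantify over ``the two indices other than $i$,'' so a single common bottleneck edge carries information about several of the $\C{C}_{ij}$ sets at once, and one must check that the constraints line up consistently rather than, say, $\bar S_1\cap\bar S_2$ pairing with $\bar D_1\cap\bar D_3$. I expect this to require either a short exhaustive check over the (few) index configurations or a symmetry argument exploiting the cyclic structure of the three sessions. Everything else — the equivalence to Theorem~\ref{th_eta_constant}, the path-bijection argument, the single-edge-cut characterizations — should follow routinely from the lemmas already established in Appendices~\ref{app_proof_graph} and~\ref{app_proof_interpret}.
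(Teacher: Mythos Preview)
The paper does not actually prove this proposition: it is stated with attribution to \cite{Han2011} and used as a black box in Appendix~\ref{app_pbna_vs_routing}. So there is no ``paper's own proof'' to compare your proposal against.

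That said, your approach is the natural one and should work. The forward direction is clean: from Theorem~\ref{th_eta_constant} you get $\alpha_{213}=\alpha_{312}=:\alpha$ and $\beta_{213}=\beta_{312}=:\beta$; then $\alpha\in(\C{C}_{21}\cap\C{C}_{23})\cap(\C{C}_{31}\cap\C{C}_{32})=\bar{S}_2\cap\bar{S}_3$, and since $\beta_{213}\in\C{C}_{13}$ with $\beta_{213}$ downstream of $\alpha_{213}\in\C{C}_{23}$ you get $\beta\in\C{C}_{13}\cap\C{C}_{23}$, and symmetrically $\beta\in\C{C}_{12}\cap\C{C}_{32}$, so $\beta\in\bar{D}_2\cap\bar{D}_3$. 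Thus the \emph{same} pair $(i,j)=(2,3)$ witnesses both intersections, resolving the index-matching concern you flagged for this direction.

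For the reverse direction your sketch is less complete, and the obstacle you anticipate is real but manageable. Given $e^*\in\bar{S}_i\cap\bar{S}_j$ and $e^{**}\in\bar{D}_i\cap\bar{D}_j$ (say $(i,j)=(2,3)$), the edge $e^*$ lies in $\C{C}_{21}\cap\C{C}_{23}\cap\C{C}_{31}\cap\C{C}_{32}$ and $e^{**}$ lies in $\C{C}_{12}\cap\C{C}_{32}\cap\C{C}_{13}\cap\C{C}_{23}$; from this and Lemma~\ref{lemma_factorization_transfer_func} you can factor all six transfer functions through $e^*$ and $e^{**}$ and check $L(\B{x})=R(\B{x})$ directly, which is cleaner than trying to identify $e^*,e^{**}$ with specific $\alpha,\beta$ edges. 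One subtlety you must handle is establishing that $e^*$ and $e^{**}$ can be chosen comparable in the topological order (this is exactly what the paper's Lemma~\ref{lemma_l1}, also cited from \cite{Han2011}, provides); without that, the factorizations through $e^*$ and through $e^{**}$ need not compose. The other index configurations reduce to this one by the cyclic symmetry of $L$ and $R$.
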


\begin{lemma} \label{lemma_l1}
Let $i,j$ be two distinct integers in $\{1,2,3\}$, and $e_2 \in \bar{D}_{i} \cap \bar{D}_{j}$.
If $\bar{S}_{i} \cap \bar{S}_{j} \ne \emptyset$, then there exists  $e_1 \in \bar{S}_{i} \cap \bar{S}_{j}$ such that  $e_1 \prec e_2$ or $e_1=e_2$.
\end{lemma}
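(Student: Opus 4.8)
\textbf{Proof proposal for Lemma \ref{lemma_l1}.}

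The plan is to exploit the structural characterization of the sets $\bar{S}_i$ and $\bar{D}_i$ as intersections of cut-sets, together with the observation that each edge in $\bar{S}_i \cap \bar{S}_j$ must lie on \emph{every} path from $\sigma_i$ (and from $\sigma_j$) to $\tau_k$ (and to $\tau_l$) for the remaining indices, while each edge in $\bar{D}_i \cap \bar{D}_j$ lies on every path into $\tau_k$ and $\tau_l$ from $\sigma_k$ and $\sigma_l$. First I would fix the distinct integers $i, j$ and let $k$ be the third index (so $\{i,j,k\} = \{1,2,3\}$). By definition of $\bar{S}_i$, an element $e_1 \in \bar{S}_i$ is a bottleneck separating $\sigma_i$ from \emph{both} $\tau_j$ and $\tau_k$; being in $\bar{S}_i \cap \bar{S}_j$ makes $e_1$ a bottleneck from $\sigma_i$ to $\{\tau_j, \tau_k\}$ \emph{and} from $\sigma_j$ to $\{\tau_i, \tau_k\}$. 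Dually, $e_2 \in \bar{D}_i \cap \bar{D}_j$ lies on every path from $\sigma_j$ and $\sigma_k$ to $\tau_j$, and on every path from $\sigma_i$ and $\sigma_k$ to $\tau_i$ --- equivalently $e_2$ is a bottleneck into $\tau_j$ from $\{\sigma_i, \sigma_k\}$ viewed appropriately, and similarly into $\tau_i$. The key point I would extract is that both $e_1$ and $e_2$ sit on a common directed path: specifically, since the network is connected with all $m_{pq}(\B{x})$ non-zero, there is a directed path $P$ from $\sigma_k$ to $\tau_k$ (using $m_{kk} \neq 0$), and more usefully a path from $\sigma_i$ to $\tau_j$. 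The edge $e_1$, being a bottleneck from $\sigma_i$ to $\tau_k$, and also (by the $\bar{S}_j$ membership) from $\sigma_j$ to $\tau_k$, must lie on every $\sigma_j$-to-$\tau_k$ path; the edge $e_2$, being in $\bar{D}_i$, lies on every $\sigma_j$-to-$\tau_j$ path and also on every $\sigma_k$-to-$\tau_j$ path.

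The crux is then to locate one path that is forced to contain both edges and to deduce their order. I would argue as follows: pick $e_1 \in \bar{S}_i \cap \bar{S}_j$ that is \emph{last} in the topological ordering among all edges of $\bar{S}_i \cap \bar{S}_j$ (mirroring the $\alpha_{ijk}$ construction). Consider a directed path $Q$ from $\sigma_j$ to $\tau_j$. Since $e_2 \in \bar{D}_i$ (hence a bottleneck into $\tau_j$ on the relevant source side) and $\bar{D}_j$, $e_2$ lies on $Q$. Now I claim $e_1$ also lies on $Q$ or on a closely related path. Here is where I would invoke the cut-set factorization Lemma \ref{lemma_factorization_transfer_func}: $m_{jj}(\B{x}) = m_{jk}(\B{x})$-type factorizations are not directly available, so instead I would use that $e_1$ being a bottleneck from $\sigma_j$ to $\tau_k$ means every $\sigma_j$-$\tau_k$ path goes through $e_1$. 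Taking a $\sigma_j$-$\tau_k$ path and a $\sigma_k$-$\tau_j$ path (both exist by non-zero transfer functions), and noting that $e_2$ lies on the latter as an element of $\bar{D}_i$ while these two paths must share structure at their crossing, I would splice: the initial segment of the $\sigma_j$-$\tau_k$ path up to $e_1$, then --- if $e_1$ reaches $e_2$ or vice versa --- conclude the ordering. If instead $e_1 \not\prec e_2$ and $e_2 \not\prec e_1$ and $e_1 \neq e_2$, the two edges are parallel, and then I would derive a contradiction by constructing a $\sigma_j$-to-$\tau_j$ path that avoids $e_1$ (routing around it via the portion downstream of $e_2$ or upstream of the parallel branch), contradicting the fact that $e_1$ must be crossed on the way to $\tau_k$ combined with $\tau_k$-reachability from a point after $e_2$.

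More carefully, the cleanest route: since $e_1 \in \bar S_j$ it is a bottleneck from $\sigma_j$ to $\tau_k$, and since $e_2 \in \bar D_i$ it is a bottleneck on all $\sigma_k \to \tau_j$ paths; because $m_{k j}(\B{x})$, $m_{j k}(\B{x})$ are non-zero there exist paths realizing these, and I would show that $\head(e_2)$ can reach $\tau_k$ (using $\bar D$ membership to control where $e_2$ sits) while $\sigma_j$ reaches $\tail(e_1)$, and then a path $\sigma_j \rightsquigarrow \tail(e_1) \to \ldots$ combined with reachability forces, via the bottleneck property of $e_1$ on $\sigma_j$-$\tau_k$ paths and the placement of $e_2$, that $e_1 \preceq e_2$ along a common path. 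I expect the \textbf{main obstacle} to be the bookkeeping of which source-receiver pairs each set membership controls and assembling a single concrete path witnessing $e_1 \prec e_2$ (or $e_1 = e_2$) without accidentally using a path on which one of the transfer functions vanishes; handling the degenerate case where $e_1$ and $e_2$ coincide with sender or receiver edges, or where several candidate elements of $\bar{S}_i \cap \bar{S}_j$ exist, will require care, and the parallel-edges contradiction will need Lemma \ref{lemma_transfer_func_partition} or a direct path-surgery argument to close.
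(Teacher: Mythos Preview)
The paper does not actually prove this lemma: its ``proof'' is the single line ``Same as lemma~5 in \cite{Han2011}.'' So there is no in-paper argument to compare your proposal against. That said, your outline is on the right track but misses the one observation that makes the proof short, and as a result you spend effort on a case that cannot occur.

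Unwinding the definitions (and reading the likely-intended $\C{C}_{ji}\cap\C{C}_{ki}$ in place of the paper's typo in $\bar D_i$), any $e'\in\bar S_i\cap\bar S_j$ lies in $\C{C}_{ij}\cap\C{C}_{ji}\cap\C{C}_{ik}\cap\C{C}_{jk}$, while $e_2\in\bar D_i\cap\bar D_j$ lies in $\C{C}_{ij}\cap\C{C}_{ji}\cap\C{C}_{ki}\cap\C{C}_{kj}$. In particular \emph{both $e'$ and $e_2$ belong to $\C{C}_{ij}$}. But $\C{C}_{ij}$ is a chain under $\prec$ (bottlenecks between a fixed source and receiver are totally ordered; this is implicit in Lemma~\ref{lemma_factorization_transfer_func}). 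Hence $e'$ and $e_2$ are automatically comparable, and your ``parallel edges'' case, together with the path-surgery contradiction you were preparing for it, never arises.

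This leaves only the case $e_2\prec e'$. Here the clean closing move---which your outline does not reach---is to show that $e_2$ itself lies in $\bar S_i\cap\bar S_j$, so one may take $e_1=e_2$. Since $e_2,e'\in\C{C}_{ij}$ with $e_2\prec e'$, every $\sigma_i$--$e'$ path must pass through $e_2$ (otherwise concatenating with an $e'$--$\tau_j$ path yields a $\sigma_i$--$\tau_j$ path avoiding $e_2$). As $e'\in\C{C}_{ik}$, every $\sigma_i$--$\tau_k$ path passes through $e'$ and hence through $e_2$, giving $e_2\in\C{C}_{ik}$. The same argument from the $\sigma_j$ side (using $e_2,e'\in\C{C}_{ji}$ and $e'\in\C{C}_{jk}$) gives $e_2\in\C{C}_{jk}$. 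Finally $e_2\neq\sigma_i,\sigma_j$ because $e_2\in\C{C}_{ki}\cap\C{C}_{kj}$ forces $\sigma_k\prec e_2$, while sender edges have no incoming edges. Thus $e_2\in\bar S_i\cap\bar S_j$, completing the proof. Your proposal had the right instinct to exploit the shared cut-sets, but the argument becomes essentially two lines once you notice the comparability is free.
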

\begin{proof}
Same as \emph{lemma} 5 in \cite{Han2011}.
\end{proof}

\begin{lemma} \label{lemma_l2}
For a given  $i,j,k \in \{1,2,3\}$ and $i \ne j \ne k$, if  $\bar{S}_{i} \cap \bar{S}_{j} \ne \emptyset $ ; $\bar{D}_{i} \cap \bar{D}_{j} \ne \emptyset $ and  $EC(\{s_i,s_j\};\{d_i,d_k\}) > 1$, then there exists a path $P'_{ii}$ from $s_i$ to $d_i$ such that for each $e'\in P'_{ii} $, $s_j\not\prec e'$, $s_k\not\prec e'$, and  $e'\not\prec d_j$, $e'\not\prec d_k$.
\end{lemma}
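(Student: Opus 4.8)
\textbf{Proof proposal for Lemma~\ref{lemma_l2}.}

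The plan is to build the desired path $P'_{ii}$ incrementally, starting from $s_i$ and moving downstream, maintaining at every stage the invariant that the current partial path stays ``clean'' with respect to the four forbidden relations ($s_j \not\prec e'$, $s_k \not\prec e'$, $e' \not\prec d_j$, $e' \not\prec d_k$). The starting edge $\sigma_i$ is clean: since all senders have no incoming edges (Assumption~3), $s_j \not\prec \sigma_i$ and $s_k \not\prec \sigma_i$, and if $\sigma_i \prec d_j$ or $\sigma_i \prec d_k$ then $\sigma_i$ would lie on every $\sigma_i$-$\tau_j$ or $\sigma_i$-$\tau_k$ path, which combined with $\bar{S}_i \cap \bar{S}_j \ne \emptyset$ and the hypothesis $EC(\{s_i,s_j\};\{d_i,d_k\}) > 1$ will be shown to force $\eta(\B{x})$ into a degenerate form. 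The core of the argument is the inductive step: given a clean edge $e'$ reaching some node $v$ that is not yet $d_i$, I must produce an outgoing edge $e''$ of $v$ that is (a) still on some $\sigma_i$-$\tau_i$ path and (b) still clean.

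First I would set up the topological machinery. Using Lemma~\ref{lemma_l1}, fix $e_1 \in \bar{S}_i \cap \bar{S}_j$ and $e_2 \in \bar{D}_i \cap \bar{D}_j$ with $e_1 \prec e_2$ or $e_1 = e_2$; these edges pin down where the ``shared bottleneck'' of the $L$-versus-$R$ structure sits. By the definition of $\bar{S}_i$, the edge $e_1$ is a bottleneck between $\sigma_i$ and both $\tau_j$ and $\tau_k$, so every $\sigma_i$-$\tau_j$ and $\sigma_i$-$\tau_k$ path passes through $e_1$; dually, by the definition of $\bar{D}_i$, every $\sigma_j$-$\tau_i$ and $\sigma_k$-$\tau_i$ path passes through $e_2$. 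The key observation is that an edge $e'$ with $s_j \prec e'$ or $e' \prec d_j$ (and similarly for $k$) is constrained relative to $e_1$ and $e_2$: if $s_j \prec e'$ then $e'$ lies downstream of $e_1$ on the portion after the $\sigma_j$-flow merges, and if $e' \prec d_j$ then $e'$ lies upstream of $e_2$. I would use the condition $EC(\{s_i,s_j\};\{d_i,d_k\}) > 1$ precisely to guarantee a second vertex-disjoint route: by Menger's theorem there exist two edge-disjoint paths from $\{s_i,s_j\}$ to $\{d_i,d_k\}$, and by a flow-decomposition / uncrossing argument one of them can be taken from $s_i$ to $d_i$ avoiding the ``bad'' region delimited by $e_1, e_2$. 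This is essentially the same flow-augmentation technique used in the cut-set lemmas of \cite{Han2011}, which the lemma statement is explicitly modeled on.

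Concretely, the inductive step goes as follows. Suppose the clean partial path has reached node $v \ne d_i$. Consider $\inedges(v)$ removed and look at the residual reachability: since $v$ is not separated from $\tau_i$ (we never left a $\sigma_i$-$\tau_i$ path), there is an outgoing edge $e''$ with $e'' \prec \tau_i$ or $\head(e'')=d_i$. If \emph{every} such $e''$ violated cleanliness, then $v$ would be a single-edge bottleneck funneling the $\sigma_i$-$\tau_i$ flow through edges that are simultaneously reachable from $s_j$ (or $s_k$) or reach $d_j$ (or $d_k$); I would show this makes $\{s_i,s_j\}$ separable from $\{d_i,d_k\}$ by a single edge, contradicting $EC > 1$. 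The cases split on whether the offending relation is of ``upstream'' type ($s_j \prec e''$ or $s_k \prec e''$) or ``downstream'' type ($e'' \prec d_j$ or $e'' \prec d_k$), and in each case one invokes the bottleneck structure of $e_1$ respectively $e_2$ together with $\eta(\B{x})=1$ (via Proposition~\ref{prop_eta_2}, $\bar{S}_i \cap \bar{S}_j$ and $\bar{D}_i \cap \bar{D}_j$ are both nonempty).

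The main obstacle I anticipate is the bookkeeping in the inductive step when \emph{both} an upstream-type and a downstream-type violation are threatened at the same vertex $v$, i.e., ruling out the situation where $v$'s only downstream options toward $d_i$ either ``see'' $s_j/s_k$ or ``see'' $d_j/d_k$. Handling this cleanly requires carefully tracking the relative topological order of $v$, $e_1$, and $e_2$, and showing that the hypothesis $EC(\{s_i,s_j\};\{d_i,d_k\}) > 1$ rules out the bad configuration — essentially that a single edge cannot simultaneously serve as the bottleneck funnel for the $\sigma_i$-$\tau_i$ flow and lie in the region forced by the $\eta(\B{x})=1$ structure. I expect the argument to parallel \emph{Lemma}~5 of \cite{Han2011} closely, extending it from the one-relation setting to the four-relation setting, and the bulk of the writing will be verifying that the uncrossing/flow-augmentation step respects all four forbidden relations simultaneously rather than one at a time.
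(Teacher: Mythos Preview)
Your approach is substantially more complicated than necessary, and the ``main obstacle'' you anticipate is self-inflicted. The paper's proof is global rather than incremental: it does not build the path edge by edge while maintaining an invariant. Instead, it observes that the single edge $e_1 \in \bar{S}_i \cap \bar{S}_j$ already cuts \emph{three} of the four pairs in $\{s_i,s_j\}\times\{d_i,d_k\}$ (namely $s_i\to d_k$, $s_j\to d_i$, $s_j\to d_k$, by the very definition of $\bar{S}_i$ and $\bar{S}_j$). Hence if no $s_i$--$d_i$ path avoided $e_1$, the edge $e_1$ alone would separate $\{s_i,s_j\}$ from $\{d_i,d_k\}$, contradicting $EC>1$. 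So one simply \emph{fixes} any $s_i$--$d_i$ path $P'_{ii}$ that misses $e_1$, and then checks the four forbidden relations for an arbitrary edge $e'\in P'_{ii}$ by contradiction: if $e'\prec d_j$ or $e'\prec d_k$ or $s_j\prec e'$, splicing produces a path in $\mathcal{P}_{ij}$, $\mathcal{P}_{ik}$, or $\mathcal{P}_{ji}$ that avoids $e_1$, contradicting $e_1$'s bottleneck role. The remaining case $s_k\prec e'$ uses $e_2$: the spliced $s_k$--$d_i$ path must pass through $e_2$, and from $e_2$ one reaches $d_j$ and $d_k$ without going back through $e_1$, again contradicting the definition of $e_1$.

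The point is that once the path is chosen to avoid $e_1$, all four cleanliness conditions follow \emph{independently} from the bottleneck structure; there is no interaction between the upstream-type and downstream-type violations, and no need for an inductive step, Menger's theorem, or any uncrossing argument. Your incremental construction would force you to argue, at each vertex $v$, that some outgoing edge is simultaneously clean and still connected to $d_i$ --- a local statement that is genuinely harder to establish and for which the hypothesis $EC>1$ is not obviously sufficient at each step. The global argument sidesteps this entirely.
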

\begin{proof}
Without loss of generality, suppose $i=1$, $j=2$ and $k=3$. We can choose two edges $e_1 \in \bar{S}_{1} \cap \bar{S}_{2}$ and $e_2 \in \bar{D}_{1} \cap \bar{D}_{2} $ such that $e_1 \prec e_2$ or $e_1 = e_2$ (from \emph{lemma} 1). Now consider the edge $e_1$, by definition cutting this edge would cut the flows $s_1 \rightarrow d_2$, $s_1 \rightarrow d_3$ , $s_2 \rightarrow d_1$ and $s_2 \rightarrow d_3$. Since we also have $EC(\{s_i,s_j\};\{d_i,d_k\}) > 1$, we can see that there should exist a path $P'_{11} $ such that $e_1 \not\in P'_{11}$. Consider any edge $e' \in P_{11}$, 
\begin{itemize}
\item If this edge $e'$ has  $d_2$ (or $d_3$ ) as a \emph{downstream} node, then there will exist a path $P_{12}$ (or $P_{13}$) such that $e_1 \not\in P_{12}$ ( or $e_1 \not\in P_{13}$), which contradicts the definition of edge $e_1$ (or $e_2$ ). Thus $e' \not\prec  d_2$, $e'\not\prec d_3$. 

\item Similarly, if edge $e'$ is \emph{downstream} of $s_2$, it would result in a path $P_{21}$ such that $e_1 \not\in P_{21}$, which again will contradict the definition of $e_1$. Thus $s_2 \not\prec e'$.

\item If edge $e'$ is \emph{downstream} of $s_3$, it would result in a path $P_{31}$, where $e_1 \not\in P_{31}$. But by definition of $e_2$, $e_2 \in P_{31}$, this in turn would result in paths $P'_{12}$ and $P'_{13}$ that does not go through edge $e_1$. Thus $s_3 \not\prec e'$.
\end{itemize}
\end{proof}

\begin{theorem}
\label{th_trivial_t}
Assume that all the senders are connected to all the receivers via directed paths.
If $\eta(\B{x})=1$ and $p_i(\B{x})\neq 1$ for $1\le i \le 3$, then routing can achieve the rate tuple $(1,1,1)$.
\end{theorem}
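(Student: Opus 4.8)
The plan is to exhibit three pairwise edge-disjoint directed paths, one from $\sigma_i$ to $\tau_i$ for each $i=1,2,3$, and route a single symbol along each; since every edge has unit capacity this realizes the rate tuple $(1,1,1)$.

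First I would invoke Proposition \ref{prop_eta_2}: because $\eta(\B{x})=L(\B{x})/R(\B{x})$, the hypothesis $\eta(\B{x})=1$ says exactly $L(\B{x})\equiv R(\B{x})$, so there are distinct indices $i,j\in\{1,2,3\}$ with $\bar{S}_i\cap\bar{S}_j\neq\emptyset$ and $\bar{D}_i\cap\bar{D}_j\neq\emptyset$; let $k$ denote the remaining index. The crucial remark is that when $\eta(\B{x})$ is the constant $1$ the relations $p_a(\B{x})=\eta(\B{x})$ and $p_a(\B{x})=1$ are the same relation, so each hypothesis $p_a(\B{x})\neq 1$ rules out both. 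Plugging this into both halves of each of the three items of Theorem \ref{th_interpret_1} therefore produces, for every $a$, two edge connectivities that are at least two; taken together these give all six bounds of the form $EC(\{s_a,s_b\};\{d_c,d_e\})\ge 2$. In particular, for the pair $\{i,j\}$ and third index $k$ above we obtain both $EC(\{s_i,s_j\};\{d_i,d_k\})>1$ and $EC(\{s_j,s_i\};\{d_j,d_k\})>1$.

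Next I would apply Lemma \ref{lemma_l2} twice. With the ordered triple $(i,j,k)$ it yields a path $P'_{ii}$ from $s_i$ to $d_i$ such that no edge of $P'_{ii}$ is downstream of $s_j$ or $s_k$, nor upstream of $d_j$ or $d_k$; with $(j,i,k)$ it yields a path $P'_{jj}$ from $s_j$ to $d_j$ with the symmetric property (no edge downstream of $s_i$ or $s_k$, nor upstream of $d_i$ or $d_k$). Both calls are justified since $\bar{S}_i\cap\bar{S}_j$ and $\bar{D}_i\cap\bar{D}_j$ are nonempty and the two edge connectivities just derived exceed one. For the third session I would simply take any directed path $P_{kk}$ from $s_k$ to $d_k$, which exists by the standing assumption that every sender reaches every receiver.

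It remains to verify that $P'_{ii}$, $P'_{jj}$, $P_{kk}$ are pairwise edge-disjoint, and then routing is immediate. Every edge $e$ of $P'_{jj}$ satisfies $s_j\prec e$ (it lies on a directed path out of $s_j$), which contradicts the defining property $s_j\not\prec e$ of the edges of $P'_{ii}$; hence $P'_{ii}\cap P'_{jj}=\emptyset$. Similarly every edge of $P_{kk}$ satisfies $s_k\prec e$, contradicting $s_k\not\prec e$ for edges of $P'_{ii}$ and for edges of $P'_{jj}$, so $P_{kk}$ is disjoint from both. Finally the three paths begin with the distinct sender edges $\sigma_1,\sigma_2,\sigma_3$ and end with the distinct receiver edges, so no clash arises at the endpoints. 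I expect the only genuinely delicate point to be the bookkeeping in the second paragraph — recognizing that under $\eta(\B{x})=1$ each hypothesis $p_a(\B{x})\neq1$ simultaneously forbids $p_a(\B{x})=\eta(\B{x})$ and thus supplies two min-cut bounds, which is precisely what makes Lemma \ref{lemma_l2} applicable regardless of which pair $\{i,j\}$ Proposition \ref{prop_eta_2} hands us; once that is settled, the two applications of Lemma \ref{lemma_l2} and the disjointness check are routine.
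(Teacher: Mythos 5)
Your proof is correct and follows essentially the same route as the paper's: invoke Proposition \ref{prop_eta_2} to obtain the pair $(i,j)$, apply Lemma \ref{lemma_l2} to extract edge-disjoint paths for sessions $i$ and $j$, and route the third session through the remaining network. In fact your write-up is more complete than the paper's, since you explicitly derive the hypothesis $EC(\{s_i,s_j\};\{d_i,d_k\})>1$ of Lemma \ref{lemma_l2} from $p_a(\B{x})\neq 1$ together with $\eta(\B{x})=1$ via Theorem \ref{th_interpret_1}, and you explicitly verify (using the $s_k\not\prec e'$ property) that any $s_k$--$d_k$ path is disjoint from the first two, whereas the paper leaves both of these steps implicit.
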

\begin{proof}
Without loss of generality, suppose $i=1$, $j=2$, $k=3$ and $\bar{S}_{1} \cap \bar{S}_{2} \ne \emptyset $ ; $\bar{D}_{1} \cap \bar{D}_{2} \ne \emptyset $. 
From Lemma \ref{lemma_l2}, we can see that there exist two disjoint paths, $P_1\in P_{11}$ and $P_2\in P_{22}$.
Therefore, $\omega_1$ and $\omega_2$ can transmit one unit flow through $P_1$ and $P_2$ respectively. Meanwhile, $\omega_3$ can route one unit flow through the rest of the network. 
This implies that routing can achieve the rate tuple $(1,1,1)$.
\end{proof}

\end{document}